\newtheorem{theorem}{Theorem}
\newtheorem{corollary}{Corollary}
\newtheorem{lemma}{Lemma}
\newtheorem{conjecture}{Conjecture}
\newtheorem{remark}{Remark}
\newtheorem{proposition}{Proposition}
\DeclareMathOperator{\Var}{Var}
\newcommand{\R}{{\mathbb{R}}}
\renewcommand{\P}{\mathbb{P}}
\newcommand{\mc}{\mathcal}
\newcommand{\bs}{\boldsymbol}
\newcommand{\p}{\varphi}
\newcommand{\Pd}{\mathbb{P}}
\newcommand{\Qd}{\mathbb{Q}}
\newcommand{\pp}{\mathbf{p}}
\newcommand{\de}{d}
\newcommand{\vertiii}[1]{{\left\vert\kern-0.25ex\left\vert\kern-0.25ex\left\vert
#1 
    \right\vert\kern-0.25ex\right\vert\kern-0.25ex\right\vert}}
\newcommand{\blue}[1]{{\color{black}{#1}}}  
\newcommand{\eps}{\varepsilon}
\newcommand{\bigO}{\mc{O}}
\newcommand{\bigOt}{\tilde{\mc{O}}}
\newcommand{\bigOlower}{\Omega}
\newcommand{\fid}[1]{\sqrt{F}\big( #1\big)}
\newcommand{\id}{\mathrm{Id}}
\newcommand{\bigqm}[1][1]{\text{\larger[#1]{\textbf{?}}}}
\DeclareMathOperator*{\argmax}{arg\,max}
\DeclareMathOperator*{\argmin}{arg\,min}
\newcommand\extrafootertext[1]{%
    \bgroup
    \renewcommand\thefootnote{\fnsymbol{footnote}}%
    \renewcommand\thempfootnote{\fnsymbol{mpfootnote}}%
    \footnotetext[0]{#1}%
    \egroup
}
\title{Multi-Armed Bandits and Quantum Channel Oracles} %TODO Please add
\author{Simon Buchholz, Jonas M. K\"ubler, Bernhard Sch\"olkopf}
\affiliation{Max Planck Institute for Intelligent Systems, T\"ubingen, Germany}
\email{\{sbuchholz, jmkuebler, bs\}@tue.mpg.de}
\thanks{JMK is now with Amazon.}
\begin{document}

\maketitle

\begin{abstract}
Multi-armed bandits are one of the theoretical pillars of reinforcement learning. 
% \extrafootertext{Max Planck Institute for Intelligent Systems, T\"ubingen |
% \{sbuchholz, jmkuebler, bs\}@tue.mpg.de}
Recently, the investigation of quantum algorithms for multi-armed bandit problems was started, and it was found that a quadratic speedup (in query complexity) is possible when 
the arms and the randomness of the rewards of the arms can be queried in superposition. 
Here we introduce further bandit models where we only have limited access to the randomness of the rewards, but we can still query the arms in superposition. We show that then the query complexity is the same as for classical algorithms.
This  generalizes the prior result that no speedup is possible 
for unstructured search when the oracle has positive failure probability. 
\end{abstract}

\section{Introduction}
%TODO too generic possibly remove
Quantum computing is a model of computation that is based on quantum properties of matter.
By using superposition and entanglement, it offers potentially large speedups when compared to classical algorithms. 
For some problems exponential speedups have been shown under the assumption of widely believed hardness results for classical computing, the two most prominent examples being Shor's algorithm for factoring integers \cite{shor1997factoring} and the HHL algorithm for sampling from the solution of sparse linear equations \cite{harrow2009quantum}. On the other hand, it was shown that for many problems only polynomial speedups are possible, in particular Grover's algorithm \cite{grover} for the unstructured search problem only offers a quadratic speedup and no greater improvement is possible.

Recently, quantum machine learning has emerged as one potential area of application for quantum computers \cite{quantumML}. It was suggested to use quantum computers for linear algebra subroutines 
but also complete implementations of well-known classical algorithms for supervised learning were 
designed, e.g.,  quantum support vector machines \cite{QSVM}, quantum principal component analysis \cite{Lloyd:2014ud},
and recommender systems \cite{quantum_recommendation}.
There has also been some work on unsupervised learning and reinforcement learning
\cite{clustering, clustering_neurips, quantumRL}.
For a recent review, we refer to \cite{review_ciliberto}.
 
\blue{
 Multi-armed bandit problems are a class  of  problems from machine learning 
 where a decision maker in each time step selects an action from a fixed set of options and then receives a corresponding reward.  The goal of the decision maker is to identify the best possible action. 
The two main problems considered in this field are regret minimization
where the goal is to maximize the total reward received, and best arm identification where we want to identify the best arm in the fewest rounds possible. Bandit problems have a long history, starting among others from the works \cite{thompson,robins52,robins85} and a long list of works established different algorithms and (almost) optimal results for a wide range of settings, and we refer to the next section for additional details and the literature for a complete overview \cite{bubeck_lecture,lattimore2020bandit}.

Recently, the investigation of the best arm identification problem  with quantum computers was initialized in \cite{multi_armed_quantum}. They show that a quadratic speedup compared to the classical algorithm is possible and optimal in their setting. 
Their implementation of the bandit arms is, however, not directly comparable to the classical setting because they assume that the algorithm has control over the internal 
randomness of the bandit arms. In particular, the rewards are obtained by the action of a unitary oracle.

In this work, we will discuss when this multi-armed bandit model is applicable, and we introduce further models of multi-armed bandits that are more suitable in different settings. 
Those models differ in the degree of control that we have over the internal randomness of the rewards of each arm. When we have no access to the randomness of the rewards, a pull of the arms can no longer be described by a unitary map, but instead they are modelled by a non-unitary quantum channel.
This provides a link to the channel discrimination problem, however, there the focus has been on rather different types of quantum channels that are either generic or more related to the transmission of information  \cite{channel_disc}. 
}
Here we show that in this more restricted setting no polynomial speedup compared to classical algorithms is possible, even though the bandits can be queried in superposition. 
Previously, it was shown in \cite{faulty_grover} that no speedup for unstructured search is possible if 
the oracle has a fixed probability of error. Our 
results are an extension to a substantially more general setting. 
Thus, this work highlights that quantum speedups can be impeded by (small) amounts of classical randomness present in the quantum channel used in the algorithm, underscoring again that having parts of the computation routine without error correction poses challenges.
From a technical side, we connect classical methods from quantum information theory with 
coupling techniques from probability theory.

The rest of this paper is structured as follows: In the next section, we introduce different settings of quantum bandits and give an overview of their query complexities.
Then we discuss the main strategies and ingredients 
used in the proofs of the main results in Section~\ref{sec:overview}. 
Full proofs of our results are delegated to the appendices.

\section{Setting and Main Results}\label{sec:setting}
We now discuss the setting for our main results. 
In this section, we review the relevant background on (classical) bandits and then state our main results for quantum bandits.

\subsection{Classical bandits}\label{sec:classical_bandits_setting}
%In this section we discuss the well-known classical multi-armed bandit problem, give the key results and the underlying heuristics, to give the necessary background to readers not familiar with the setting.
%Let us first discuss the well-known classical multi-armed bandit problem, give the key results, the underlying heuristics, and the necessary background to readers not familiar with the setting.
\blue{
To set the scene, we briefly review the multi-armed bandit problem,
which should also provide sufficient background for readers from the quantum side who are less familiar with the setting.
In the   multi-armed bandit problem, an agent can choose in every round $t$ an arm (an action from a finite set) $i_t\in \{1,\ldots, N\}$  and receives a probabilistic reward $r_t$ depending on the chosen action.  
There are a vast number of variants and generalizations of this problem in the classical setting, but here we focus on the simplest model where
the number of arms is finite, and the rewards upon choosing 
arm $i$ are drawn independently from some distribution $\nu_i$ where $\nu_i\in \mc{P}$
is in some class of probability distributions known to the agent. 
Assume that $\nu_i$ has mean $\mu_i$ and we denote the highest mean reward by $\mu^\ast= \max_i \mu_i$.

The goal in bandit theory is generally to identify the action that leads to the highest mean reward $\mu^\ast$. There are two main problems of interest. The first is regret minimization, where we try to minimize the regret which is defined after $T$ rounds with rewards $(r_t)_{1\leq t\leq T}$ by
\begin{align}
    \rho_T = T\mu^\ast - \sum_{t=1}^T r_t.
\end{align}
Equivalently, we want to maximize the total reward received, and the regret measures the reward difference to  the optimal course of actions.  Often, a key goal is to achieve sublinear regret, i.e., $\rho_T\in o(T)$ which means that asymptotically the average reward received converges to the mean of the best arm.
In the context of multi-armed bandits, a much finer understanding of optimal regret bounds is known, namely it was shown
that the uniform confidence bound algorithm is asymptotically optimal and corresponding 
bounds were derived (see, e.g., \cite{robins85,lai,auer}).
The regret minimization problem captures the trade-off between exploration, i.e., finding good, unexplored options, and exploitation, i.e., choosing favorable options, which is essential for reinforcement learning.

A second problem investigated in the  bandit setting is the best arm identification problem. Here the goal is to identify the best arm with a fixed confidence (i.e., up to some given error probability at most $\delta$) in the fewest rounds possible. Another variant is to 
maximize the probability of finding the best arm for a fixed time horizon.
Note that the setting  only features the exploration component and not the exploitation component of the regret minimization problem.
Even though the setting of best arm identification is simpler than regret minimization, it is often a challenging problem and in some settings optimal results were found later than for the regret minimization problem.
In this paper, we focus on the best arm identification problem because this readily generalizes to the quantum setting, while it is not directly obvious how the regret should be defined and interpreted in the quantum setting.

Let us now first briefly review the main results and provide some intuition for the best arm identification problem in the classical multi-armed bandit setting introduced above.}
We restrict our attention to Bernoulli distributions $\nu_i\sim \mathrm{Ber}(p_i)$
which have mass $p$ on 1 and mass $1-p$ on $0$ because this can be easily generalized to the quantum setting where the reward can then be encoded by a single qubit. In this case, the reward distribution is fully characterized by the means $p_i$ of the arms.
For concreteness, we fix a mean reward vectors $\pp=(p_0,\ldots, p_{N-1})\in\R^{N}$, indicating that
arm $i$ has mean reward  $p_i$. Note that unconventional indexing is used here to ensure consistency with our setup later on.
We usually assume that the rewards are ordered, i.e.,  $p_0> p_1\geq \ldots \geq p_{N-1}$. 
We will use the shorthand $\Delta_i=p_0-p_i$ for the difference in mean rewards between
the best arm and arm $i$.
We define the important quantity 
\begin{align}
H(\pp)= \sum_{i\geq 1} \frac{1}{(p_0-p_i)^2}=\sum_{i\geq 1} \Delta_i^{-2}.
\end{align}
It can be shown that $H(\pp)$
governs the optimal query complexity to identify the best
arm (up to logarithmic terms) in a fixed confidence setting.
%(i.e., the error probability of the algorithm is bounded by some $\delta>0$).
Moreover, this is optimal when  $p_i \in [\eta,1-\eta]$ for all $i$ and some $\eta>0$. 
%To make this concrete, we state this as a well-known theorem.
The following well-known theorem provides a formal statement of those results.
\begin{theorem}[Thm.~2 in \cite{dar_best_arm}, Thm. 5 in \cite{mannor_best_arm}]\label{th:classical_bandit_result}
Consider an algorithm that identifies the best arm of a multi-armed bandit with probability at least $1-\delta$
for Bernoulli distributed rewards with reward vector $\pp\in [0,1]^N$
such that $\pp_i\in [\eta,1-\eta]$.
Then this algorithm requires $\bigOlower(H(\pp))$ rounds in the worst case.
On the other hand, there exists such an algorithm 
requiring $\bigOt(H(\pp))$ steps.
\end{theorem}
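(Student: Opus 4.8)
The plan is to prove the two halves separately using standard machinery from the best-arm-identification literature; since both directions appear in the cited works, I will only lay out the architecture.

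For the lower bound I would run a change-of-measure argument. Fix the instance $\pp$ with $p_1>p_2\geq\cdots$ and let $\tau$ be the algorithm's stopping time, $N_i(\tau)$ the number of pulls of arm $i$. For each suboptimal arm $i>1$ build a companion instance $\pp^{(i)}$ that agrees with $\pp$ in every coordinate except the $i$-th, where $p_i$ is raised to some $p_i'\in[\eta,1-\eta]$ with $p_i'>p_1$, so that arm $i$ is the unique best arm under $\pp^{(i)}$; concretely one can take $p_i'=p_1+\Delta_i$ after, if necessary, first restricting to a sub-family of instances whose gaps are small enough that this stays below $1-\eta$ — this is exactly where the two-sided bound $p_i\in[\eta,1-\eta]$ is used. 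Since the algorithm is $(1-\delta)$-correct on both instances, the event ``output arm $1$'' has probability $\geq 1-\delta$ under $\pp$ and $\leq\delta$ under $\pp^{(i)}$, so the Bretagnolle--Huber inequality (equivalently the data-processing inequality for binary $\mathrm{kl}$) forces $\mathrm{KL}(\mathbb{P}_\pp\,\|\,\mathbb{P}_{\pp^{(i)}})\gtrsim\log(1/\delta)$ on the observed history up to $\tau$. The stopping-time chain rule for KL divergence then gives $\mathrm{KL}(\mathbb{P}_\pp\,\|\,\mathbb{P}_{\pp^{(i)}})=\E_\pp[N_i(\tau)]\,\mathrm{kl}(p_i,p_i')$, because the instances differ only in arm $i$; and since $p_i,p_i'$ are bounded away from $0$ and $1$, $\mathrm{kl}(p_i,p_i')\leq C_\eta\Delta_i^2$. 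Hence $\E_\pp[N_i(\tau)]\gtrsim\Delta_i^{-2}\log(1/\delta)$, and summing over $i>1$ gives $\E_\pp[\tau]=\sum_i\E_\pp[N_i(\tau)]=\bigOlower(H(\pp))$ for fixed $\delta$.

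For the upper bound I would analyse a successive- (exponential-gap) elimination algorithm: keep a set of active arms, proceed in phases $t=1,2,\dots$, and in phase $t$ pull every active arm $\Theta\!\big(2^{2t}\log(1/\delta_t)\big)$ times with $\delta_t\asymp\delta\,2^{-t}/N$, so that by Hoeffding all active empirical means are within $2^{-t}$ of their true values; then eliminate any arm whose empirical mean lies more than $2\cdot 2^{-t}$ below the current empirical leader. A union bound over arms and phases (which closes since $\sum_t N\delta_t\leq\delta$) shows that with probability $\geq1-\delta$ the best arm is never eliminated and arm $i$ is removed once $2^{-t}\lesssim\Delta_i$, i.e.\ after $O\!\big(\Delta_i^{-2}(\log(N/\delta)+\log(1/\Delta_i))\big)$ pulls (the geometric per-phase cost is dominated by the last phase). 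Summing over $i>1$ gives total sample complexity $O\!\big(\sum_{i>1}\Delta_i^{-2}(\log(N/\delta)+\log(1/\Delta_i))\big)=\bigOt(H(\pp))$.

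The main obstacle is bookkeeping rather than conceptual. On the lower-bound side it is making the alternative instances legitimate — keeping each $p_i'$ inside $[\eta,1-\eta]$, which is what forces the restriction on the reward range and supplies the constant $C_\eta$ — together with a careful invocation of the chain rule at the random time $\tau$ (a Wald-type identity, valid because $\tau$ is a stopping time of the observation filtration). On the upper-bound side it is choosing the per-phase confidence levels $\delta_t$ so that the union bound over the a priori unbounded number of phases still yields total error $\delta$ while keeping the summed pull counts within a logarithmic factor of $H(\pp)$; these logarithmic factors are precisely what the $\bigOt(\cdot)$ notation absorbs.
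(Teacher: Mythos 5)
Your proposal is correct, but for the lower bound it follows a genuinely different route from the one the paper actually writes out. The paper does not reprove the upper bound at all (it simply cites the literature, and your successive-elimination sketch is the standard argument behind that citation). For the lower bound, the paper's own proof (Theorem~\ref{th:classical} in Appendix~\ref{sec:classical}) uses the same family of one-coordinate perturbations $\pp^0,\pp^1,\ldots,\pp^N$ that you construct, but measures closeness of the induced history distributions by the classical fidelity $\sqrt{F}(p,q)=\sum_x\sqrt{p_xq_x}$ rather than by KL divergence: each pull of arm $j$ costs at most $\Delta_j^2/(4\eta(1-\eta))$ in fidelity, and instead of invoking Wald's identity at the stopping time (your $\mathrm{KL}=\E_\pp[N_i(\tau)]\,\mathrm{kl}(p_i,p_i')$ step) the paper inserts an artificial decay factor $\de^j(z_t)=(1-\Delta_j^2/(4\eta(1-\eta)))^{n_j(z_t)}$ and telescopes, which removes any need to count or concentrate the number of pulls of each arm. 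The paper is explicit about why it takes this detour: your chain-rule step, like the tail-bound step in the classical literature, hinges on the fact that one can attribute each round to a single arm, and that bookkeeping has no analogue once queries are made in superposition; the fidelity-plus-decay-factor argument is precisely the template that is later lifted to the quantum channel setting in Theorem~\ref{th:optimal}. What your approach buys is brevity and the explicit $\log(1/\delta)$ dependence from the Bretagnolle--Huber inequality; what the paper's approach buys is portability to the quantum proof, at the price of a longer argument and weaker constants. Your caveat about keeping $p_i'=p_1+\Delta_i$ inside $[\eta,1-\eta]$ matches the paper's own implicit assumption on $p_0$, so there is no gap there.
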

Let us emphasize that identification of the best arm is not easier if we know the rewards up to a permutation, in fact, the following result
holds.
\begin{theorem}[Thm.~4 in \cite{bubeck_best_arm}]\label{th:classical_bandit_result2}
Let $\pp\in [\eta, 1-\eta]^N$ be a reward vector.
Then any algorithm that identifies the best arm for   
Bernoulli distributed rewards with means
$\pp'$ where $\pp'$ is any permutation of $\pp$ requires at least $\bigOlower(H(\pp))$ rounds and such an algorithm exists (up to logarithmic terms).
\end{theorem}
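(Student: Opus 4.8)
The plan is to transfer the lower and upper bounds of Theorem~\ref{th:classical_bandit_result} to the permutation-invariant setting. The upper bound is immediate: an algorithm that achieves $\bigOt(H(\pp))$ for a known reward vector in particular works when the true vector is some permutation $\pp'$ of a fixed $\pp$, since $H$ is permutation-invariant, $H(\pp') = H(\pp)$; so the existential claim requires no new work. The content is therefore the lower bound: we must show that even an algorithm that is told the \emph{multiset} of means $\{p_1,\dots,p_N\}$ in advance still needs $\bigOlower(H(\pp))$ pulls to identify which arm carries the largest mean $p_1$.

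First I would recall the standard change-of-measure / KL-divergence lower bound technique (Lai--Robbins style, as used for Theorem~\ref{th:classical_bandit_result}). Fix the instance with means $\pp$ in the canonical (sorted) order, so arm $1$ is optimal. For each $i>1$ build a perturbed instance $\pp^{(i)}$ obtained by \emph{swapping} the roles of arms $1$ and $i$: in $\pp^{(i)}$ arm $i$ becomes the unique best arm. Crucially, each $\pp^{(i)}$ is a permutation of $\pp$, so it is a legal instance in the permutation-invariant problem and the algorithm must also be correct on it with probability $\geq 1-\delta$. The usual argument then says: any correct algorithm must, on instance $\pp$, pull arm $i$ enough times to statistically distinguish $\mathrm{Ber}(p_i)$ from $\mathrm{Ber}(p_1)$ — otherwise it cannot tell $\pp$ from $\pp^{(i)}$. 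By the transportation/change-of-measure lemma, the expected number of pulls of arm $i$ under $\pp$ must be at least of order $\mathrm{kl}(p_i,p_1)^{-1}\log(1/\delta)$, where $\mathrm{kl}$ is the Bernoulli KL-divergence. Since $p_i,p_1\in[\eta,1-\eta]$, we have $\mathrm{kl}(p_i,p_1) = \bigOexact(\Delta_i^2)$ with constants depending only on $\eta$, so $\E_{\pp}[T_i] = \bigOlower(\Delta_i^{-2})$. Summing over $i>1$ gives $\E_\pp[\text{total pulls}] = \bigOlower(H(\pp))$, which is the claim.

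The one genuine subtlety — and the step I expect to be the main obstacle — is that in the permutation-invariant setting the perturbed instances $\pp^{(i)}$ differ from $\pp$ in \emph{two} coordinates, not one: swapping arms $1$ and $i$ changes the mean of arm $1$ from $p_1$ to $p_i$ \emph{and} the mean of arm $i$ from $p_i$ to $p_1$. So the divergence between the distributions of the observed trajectory under $\pp$ and under $\pp^{(i)}$ is controlled by $\E_\pp[T_1]\,\mathrm{kl}(p_1,p_i) + \E_\pp[T_i]\,\mathrm{kl}(p_i,p_1)$ rather than a single term, and one must argue that this combined quantity cannot be small for all $i$ simultaneously. The clean way around this is a coupling/averaging argument: if the algorithm is correct, it identifies arm $1$ under $\pp$ w.h.p., hence it does \emph{not} identify arm $i$ under $\pp$, whereas under $\pp^{(i)}$ it must identify arm $i$; the total-variation distance between the two trajectory laws must therefore be $\geq 1-2\delta$, and Pinsker plus the chain rule for KL forces $\E_\pp[T_1]\,\mathrm{kl}(p_1,p_i) + \E_\pp[T_i]\,\mathrm{kl}(p_i,p_1) = \bigOlower(\log(1/\delta))$. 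Now either $\E_\pp[T_i]$ is already $\bigOlower(\Delta_i^{-2})$ and we are done for that $i$, or the burden falls on $\E_\pp[T_1]\,\mathrm{kl}(p_1,p_i)\approx \E_\pp[T_1]\,\Delta_i^2$; but $\E_\pp[T_1]\le \E_\pp[\text{total pulls}] =: T$, so this case yields $T \ge \bigOlower(\Delta_i^{-2})$. Taking, say, the median index or splitting the arms into those of each type and summing, one concludes $T = \bigOlower(\sum_{i>1}\Delta_i^{-2}) = \bigOlower(H(\pp))$ in all cases. (In fact, summing the inequalities over $i>1$ directly gives $T\cdot\sum_{i>1}\mathrm{kl}(p_1,p_i) + \sum_{i>1}\E_\pp[T_i]\,\mathrm{kl}(p_i,p_1) \gtrsim (N-1)\log(1/\delta)$, and a short arithmetic manipulation, using $\sum_i \E_\pp[T_i] = T$ and $\mathrm{kl}\asymp\Delta_i^2$, isolates $T \gtrsim H(\pp)$.) Since this is the argument already invoked for Theorem~\ref{th:classical_bandit_result}, in the write-up I would most likely just cite \cite{bubeck_best_arm} and sketch the two-coordinate modification rather than reprove the whole transportation lemma.
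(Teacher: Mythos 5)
The paper itself does not prove this statement---it is quoted from the literature (Thm.~4 in the cited reference)---so I can only assess your argument on its own terms. The upper bound and the setup of the change-of-measure argument are fine, and you correctly identify the real difficulty: swapping arms $1$ and $i$ changes two coordinates, so the transportation inequality only yields
\begin{align}
\E_\pp[T_1]\,\mathrm{kl}(p_1,p_i)+\E_\pp[T_i]\,\mathrm{kl}(p_i,p_1)\;\gtrsim\;\log(1/\delta),
\qquad\text{i.e.}\qquad \E_\pp[T_1]+\E_\pp[T_i]\;\gtrsim\;\Delta_i^{-2}.
\end{align}
The gap is in how you combine these $N-1$ inequalities. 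For every index $i$ whose ``burden falls on $\E_\pp[T_1]$'' you obtain a lower bound on the \emph{same} quantity $\E_\pp[T_1]\le T$, so these indices contribute only a maximum, not a sum; and the parenthetical ``summing directly'' gives $(N-1)\E_\pp[T_1]+\sum_{i>1}\E_\pp[T_i]\gtrsim H(\pp)$, hence only $T\gtrsim H(\pp)/N$. Concretely, the linear program of minimizing $\sum_j \E_\pp[T_j]$ subject to $\E_\pp[T_1]+\E_\pp[T_i]\ge\Delta_i^{-2}$ for all $i>1$ has optimal value $\max_{i>1}\Delta_i^{-2}=\Delta_2^{-2}$ (put all mass on $\E_\pp[T_1]$), which can be smaller than $H(\pp)$ by a factor of $N$. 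The instance $p_1=\tfrac12+\eps$, $p_2=\dots=p_N=\tfrac12$ shows this is not an artifact: there $H(\pp)=(N-1)\eps^{-2}$ while your inequalities are all satisfied with $T=\bigO(\eps^{-2})$.

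So the swap-alternatives alone cannot certify the $\bigOlower(H(\pp))$ bound, and an additional idea is needed. The standard repair is to change measure to a reference instance \emph{outside} the permutation class---e.g.\ lower arm $1$'s mean to $p_2$ (or use an ``all arms equal'' surrogate)---so that each alternative differs from the reference in a single coordinate; one then observes that under the reference law at most one arm can be declared best with probability exceeding $1/2$, which forces $\E[T_i]\,\mathrm{kl}\gtrsim 1$ separately for the remaining $N-1$ arms and lets the sum go through. (The cited paper's own proof of its Theorem~4 likewise does not rely on pairwise swaps alone.) Without such a step your argument establishes only $T=\bigOlower\bigl(\max_{i>1}\Delta_i^{-2}\bigr)$ plus the contribution of the arms for which the burden happens to fall on $\E_\pp[T_i]$, which in the worst case is $\bigOlower(H(\pp)/N)$.
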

Let us add several remarks to these results. 
There is a long list of results improving upon the two results above by deriving bounds on the logarithmic correction, considering more general reward distributions, and analyzing various algorithms, see e.g.,
\cite{gabillon,garivier,jamieson,chen}.

Let us explain for the readers not so familiar with the literature on 
bandit problems the intuition underlying 
the results mentioned above. The key statistical problem is essentially to decide which of two arms has a higher reward.
Assume that we know the mean reward $p_0$ of the best arm exactly.
Let us first investigate how many pulls $t_i$ of the arm with true mean reward $p_i$ are sufficient to verify with high probability that $p_i<p_1$.
We can bound the difference between the empirical mean $\hat{p}_i$
and $p_i$ using Hoeffding's inequality (see Lemma~\ref{le:hoeffding}) by
\begin{align}
    \P(\hat{p}_i-p_i\geq (p_0-p_i))\leq e^{-2t_i(p_0-p_i)^2}.
\end{align}
This shows that after
\begin{align}\label{eq_condition_t_i}
    t_i \gtrsim (p_0-p_i)^{-2} =\Delta_i^{-2}
\end{align}
pulls we can rule out with high probability that an arm with true reward $p_i$ has
a return higher than $p_0$. On the other hand, $t_i$ of order $\Delta_i^{-2}$ is necessary  when 
$p_i\in [\eta, 1-\eta]$ for some constant $\eta>0$.  
%A $\mathrm{Ber}(p)$ variable has variance $p(1-p)$ which is lower bounded 
%by $\eta(1-\eta)$ if $p_i\in [\eta, 1-\eta]$.  
Indeed, by the central limit theorem
$\hat{p}_i-p_i$ will be typically of order $\sqrt{\Var(\mathrm{Ber}(p_i)/t_i}$ and the variance is lower bounded by $\eta(1-\eta)$ if $p_i\in [\eta, 1-\eta]$.
We conclude
that \eqref{eq_condition_t_i} is necessary to conclude that $p_i<p_0$ with high probability.
Applying this reasoning to all arms $i$ suggests that $H(\pp)$  governs the query complexity of the best arm identification problem. Note that here we ignored the fact that the highest reward $p^\ast$ is unknown, and algorithms overcome this problem by using the optimism principle.

Let us emphasize that the argument for the lower bound  uses crucially that all rewards 
are away from $1$ and $0$. Otherwise, we cannot consider the variance term $p_i(1-p_i)$
 to be a fixed constant in the analysis.
To illustrate this, we consider the particular case that  $p_1=p>0$ and $p_i=0$ for $i>1$.
Then $H(\pp)=N/p^2$ but
only $\bigO(N/p)$ pulls are required to identify the best arm.
To see this, note that it takes with high probability $\bigO(p^{-1})$ pulls to get a single success on an arm with
$\mathrm{Ber}(p)$ distributed rewards.
This is the setting considered in \cite{faulty_grover} and we will come back to it in Section~\ref{sec:channel_oracles}.

\subsection{Quantum bandits}\label{sec:quantum_bandits}
\blue{
In this section, we  discuss how the classical bandit problem can be generalized to a quantum setting. This builds upon the recent works  \cite{casale, multi_armed_quantum} that studied the best arm identification problem
for quantum bandits.
Here, we want to review and extend prior results and definitions in the literature and in particular highlight that different assumptions for the available oracles 
are reasonable in different settings. 

Let us first explain the general implementation of decision problems in the quantum setting,
while we refer to standard textbooks (e.g., \cite{NC00}) for a general introduction to quantum computing. 
Typically, we assume that we are given black box access to an oracle $O$ that is a unitary map on some Hilbert space from a finite set $\mc{O}$ of unitary maps
 and our goal is to identify which oracle we are given using the fewest possible number of invocations of $O$.
To give a concrete example, we consider the arguably most famous example of unstructured search where we are given one of the oracles $O^i$ that mark an element $i\in \{1,\ldots, N\}$, i.e., they act by 
\begin{align}
    O^i \ket{j}\ket{c}= \ket{j}\ket{c \oplus \delta_{ij}}
\end{align}
where $\oplus$ denotes addition modulo 2 and $\ket{j}$ is a basis
of an $N$ dimensional Hilbert space and $\ket{c}$ is a qubit basis state. 
Our goal is now to identify the marked element $i$ by a quantum algorithm. This means that starting from some initial state $\ket{\Omega}$, we 
can apply a sequence of arbitrary unitary maps $U_t$ interleaved by invocations of the oracle $O$ to obtain the state
\begin{align}
  \ket{\p_t}= U_T O U_{T-1}O\ldots O U_1 O  U_0\ket{\Omega}.
\end{align} 
Finally, we perform a measurement on the state $\ket{\p_t}$ and  measurement outcomes are then mapped to an oracle $O^i$. We 
say that the algorithm succeeds if it outputs $i$ when applied with $O=O^i$
with a fixed lower bounded probability for all $i$.
This can then be lifted to a high probability guarantee by repetition.
In the example of unstructured search introduced above, Grover's algorithm can be used to identify $i$ with $\mc{O}(\sqrt{N})$ calls to $O^i$ \cite{grover}. Note that this setting can be interpreted as a bandit problem where a single arm always returns reward 1 and all other arms always return reward 0.

We now introduce our definition of  general bandit problems.
In this case, we need to consider the broader class of decision problems where we can partition the set of oracles $\mc{O}=\mc{O}_1\sqcup \mc{O}_2\sqcup\ldots\sqcup \mc{O}_N$ and we want to identify $n$ such that $O\in \mc{O}_n$ with the fewest number of invocations of the given oracle $O$.
Let us now describe how the oracle $O$ that models a pull of the arms is implemented.
As before, we assume that there are $N$ arms.  To query the arms, we consider a Hilbert space $\mc{H}_A$ for the arms with dimension $|\mc{H}_A|=N$ and we identify the arms with states $\ket{i}$ from a fixed basis.
 Moreover, we assume that the internal randomness of the reward is captured through an additional Hilbert space $\mc{H}_P$ and we fix a basis $\ket{\omega}$.  The reward for a certain arm and state of internal randomness is collected in a two-dimensional Hilbert space $\mc{H}_R$ with basis states $\ket{0}$ and $\ket{1}$. We then assume that in each round we can query the arms  through an oracle $O$ acting
 on $\in\mc{H}_A\otimes \mc{H}_P\otimes \mc{H}_R$.
 It acts on a state a basis state  $\ket{i}\ket{\omega}\ket{c}$
 i.e., arm $i$, internal randomness $\omega$, and initial reward state $c$ 
 by
\begin{align}\label{eq:def_oracle_general}
O\ket{i}\ket{\omega}\ket{c}=\ket{i}\ket{\omega}\ket{c+r_i(\omega)}
\end{align}
where $r_i(\omega)\in \{0,1\}$ denotes the reward for arm $i$ with internal randomness $\omega$
and addition is again modulo 2
(this is very similar to the definition in \cite{regret_quantum_bandits}).
Note that $r_i(\omega)$ are not random for fixed $i$ and $\omega$.
Thus, for each arm we receive for every $\omega$ and any arm $i$ a reward that is either $0$ or $1$ and  averaging over $\omega$ gives
\begin{align}
p_i=|\mc{H}_P|^{-1}\sum_\omega r_i(\omega)
\end{align}
the mean reward for arm $i$.
As before, we collect the mean rewards $p_i$ in a vector
$\pp\in [0,1]^N$ with $\pp_i=p_i$. 
%Then it is sometimes convenient to \
%clarify the mean rewards of the oracle by writing $O^\pp$. Note that this does not
%determine the oracle, as the dependence on $\omega$ is not fixed.
Quantum algorithms then consist of a sequence of unitary operations on $\mc{H}_A\otimes \mc{H}_P\otimes \mc{H}_R\otimes \mc{H}_W$ where $\mc{H}_W$ is a work space interleaved by 
calls to the oracle $O$. We emphasize that the indices are named to reflect the Hilbert spaces for the arms, the internal randomness (probability), the rewards, and the work space.

Let us now compare this setting to  the classical setting and outline a crucial difference.  
In the classical setting the rewards are random variables, e.g., functions from some probability space $\Omega$ to $\R$ but all we observe is the received reward and explicit reference to the probability space is not necessary. In contrast, here the probability space is made explicit through the space $\mc{H}_P$ and a crucial ingredient of the model. Moreover, the rewards for all rounds are given by $r_i(\omega)$
so there is no independence of the reward distributions for different times. This is different from the classical setting. But note that if we consider a setting where we  can in each round query an action $(i,\omega)$ and receive the reward $r_i(\omega)$ 
  the best-arm identification problem is not simpler
  than in the standard formulation (if the dimension of $\mc{H}_P$
is sufficiently large). On the other hand, the regret minimization problem is not meaningful in this setting because it is sufficient to identify a single good realization such that $r_i(\omega)=1$
which can then be selected in all future rounds.
Note that this setup might be a reasonable model for certain classical problems, e.g., when the different 
arms correspond to different exercises and different $\omega$ to different students and the goal is to identify the hardest exercise.
}

We will discuss potential applications in the quantum setting below, but first we consider different variants of the problem above. 
Motivated by the differences from the classical setting, we investigate three different settings that are characterized by the degree of control that we have over the 
space $\mc{H}_P$ determining the internal randomness of the bandits. 
\begin{enumerate}
\item We have full control over the space $\mc{H}_P$, i.e., we can apply arbitrary unitary operators to the system $\mc{H}_A\otimes \mc{H}_P\otimes \mc{H}_R$ and a potential work
space.
\item We can prepare an arbitrary  computational basis state $\omega\in\mc{H}_P$ but have no further access to $\mc{H}_P$ except through the oracle $O$.
\item For each pull of the arm, i.e., for each invocation of $O$
the state of $\mc{H}_P$ is a uniformly random basis state $\ket{\omega}$ which is resampled in each step. Equivalently, each oracle invocation uses a different copy of the maximally mixed state on $\mc{H}_P$.
\end{enumerate}
We now discuss how the restricted control over $\mc{H}_P$
can be related to settings without an explicit space $\mc{H}_P$
which are therefore closer to the standard setup of classical bandits.
% different and  potentially more natural bandit setups that are closely 
% related to the 
% second and third setting described above. 
% These settings eliminate the Hilbert space $\mc{H}_P$ modelling the internal randomness which also connects them tighter to the standard presentation of classical bandits.
We define
for $X\in \{0,1\}^N$ (corresponding to a vector of binary rewards)
the oracle $O_X$ acting  on $\mc{H}_A\otimes \mc{H}_R$ by 
%Let $X\in \R^K$ be a random variable whose coordinates are independent and $X_i$ has distribution $\mathrm{Ber}(p_i)$.
%Then we define the oracle 
\begin{align}\label{eq:def_O_X}
O_X\ket{i}\ket{c}\to \ket{i}\ket{c+X^i}.
\end{align}
Let us assume that $X_t\in \{0,1\}^N$ is a sequence of i.i.d.\ random variables where the coordinates $X_t^i$
are independent and $\mathrm{Ber}(p_i)$ distributed.
Then having access to the sequence of oracles
\begin{align}
%\label{eq:oracle_emp}
%O_0\ket{x}\ket{0} &= \ket{i}(\sqrt{p_i}\ket{1}+\sqrt{1-p_i}\ket{0})
%\\
\label{eq:oracle_col}
O_{X_t}\ket{i}\ket{c} &= \ket{i}\ket{c+X_t^i}, \quad \text{for $t= 1,2,\ldots$}.
\end{align} 
is similar to 
the second  scenario above (by identifying 
$r_i(\omega_t)=X_t^i$ where $\omega_t$ enumerates the basis of $\mc{H}_P$). Note that here we assume for simplicity that the coordinates $X^i_t$ are
independent, while the general oracle definition in \eqref{eq:def_oracle_general} 
can also model correlated rewards for different arms.
In the classical setting, this  is not important because 
we can never observe the rewards of two different arms in the same round.
In the quantum setting, it might matter because we can query the arms in superposition.

We now similarly reformulate the third setting above.
This setting is similar to having access to an oracle acting as a quantum channel on $\mc{H}_A\otimes \mc{H}_R$  by
\begin{align}
\begin{split}\label{eq:oracle_channel}
\mc{E}(\rho) &= \sum_{X \in \{0, 1\}^n} \Pd(X) \; O_X \rho (O_X)^\dagger
\\
&\text{where $\Pd(X)= \prod_i p_i^{X^i}(1-p_i)^{1-X^i}$}.
\end{split}
\end{align} 
Again, the difference to the setting above is the assumed independence of different arms of the variables $X^i$.

Note that the second and the third setting are equivalent 
if we are allowed to use each of the oracles $O_{X_t}$ only once. Multiple invocations
can be useful to uncompute parts of the computation and thereby avoiding decoherence of the system.

\subsection{Models and Main Results}\label{sec:models}
We now motivate the three settings described above in  more detail and discuss our main results. 
Let us emphasize that quantum bandits can only be useful when the reward is given by the observable of a quantum system or the evaluation of a computation on a quantum device, as the acquisition of the data is commonly seen as the expensive part. 
In other words, it appears unlikely that we collect rewards in some trial and then  store this data in, e.g., a QRAM
to query them in superposition to identify the best arm, as this will always be more expensive than classically evaluating the mean of the collected rewards. Thus, we will motivate all three settings from a quantum perspective.
Note that the classical setting roughly corresponds to  the case where only queries of the form $\ket{i}\ket{\omega}$ without superposition 
are allowed, i.e., we can query  a single arm for a single reward. 
\begin{table}[t]
\centering
\begin{tabular}{l  c  c } 
%\begin{tabular}{||l | c | c ||} 
% \midrule
 Oracle & Lower bound & Upper bound \\[1ex] 
\hline \hline
\\ [-1.5ex]
 Classical & $\bigOlower(H(\pp))$ & $\bigOt(H(\pp))$ \\[1ex]  
 \hline 
 \\[-1.5ex]
 ERM (eq. \eqref{eq:def_oracle_general}) & $\bigOlower(\sqrt{H(\pp)})$  (Thm.~\ref{th:lower_emp}) & $\bigOt(\sqrt{H(\pp)})$ (Thm.~1 in \cite{multi_armed_quantum}) \\[1ex]  
 \hline 
 \\[-1.5ex]
 Reusable (eq. \eqref{eq:oracle_col}) & $\bigqm[1]$ & 
 %$\bigOt(H(\pp))$,  $\bigOt(H(\pp))/\sqrt{N}$ 
% $\pp$ dependent 
$\bigOt(\sqrt{\sum_i \Delta_i^{-4}})$  (Thm.~\ref{th:reusable})
\\[1ex]  
 \hline 
 \\[-1.5ex]
One-time (eq. \eqref{eq:oracle_channel}) & $\bigOlower(H(\pp))$  (Thm.~\ref{th:main}) & $\bigOt(H(\pp))$ \\ [1ex] 
 \hline
\end{tabular}
\caption{Overview of query complexity bounds. The upper bound for the one-time oracle
and the reusable oracle follow from the classical result. We conjecture
(see Conjecture~\ref{con:reusable})	 that the upper bound
for the reusable oracle are  optimal.}
\label{table:1}
\end{table}

\subsubsection{Empirical risk minimisation}\label{sec:erm}
As already explained in \cite{multi_armed_quantum}, one setting 
where we have full access to an oracle as in \eqref{eq:def_oracle_general} is empirical risk minimization. 
To make this concrete, assume we have a dataset $(x_j, y_j)\in \mc{X}\times \{1,\ldots, K\}$ and a finite set of candidate functions $f_i$. We
now want to find the index $i_0$ such that 
\begin{align}
i_0=\argmin_i \sum_j \mathbf{1}_{f_i(x_j)\neq y_j}
= \argmax_i \sum_j \mathbf{1}_{f_i(x_j)= y_j},
\end{align}
i.e., for simplicity we consider 0-1 loss in a classification problem
or equivalently, we maximize the accuracy over the functions $f_i$.
If we can access $x_j$  and evaluate $f_i$ this provides us with an oracle acting by
\begin{align}\label{eq:oracle_erm}
\ket{i}\ket{\omega}\ket{c}\to    \ket{i}\ket{\omega}\ket{c\oplus r_i(\omega)},
\end{align}
where the reward for $\omega=(x,y)$ is given by $r_i(\omega)=\mathbf{1}_{y=f_i(x)}$. Now, the problem of best arm identification with respect to this oracle is equivalent to empirical risk minimization.
Moreover, this oracle is exactly of the form introduced in  \eqref{eq:def_oracle_general}.
When considering this setting, it is a reasonable assumption that the dataset can be accessed in arbitrary superposition, i.e., is stored in our computing device, and we can also 
evaluate functions $f_i$ and thus losses in superposition.
\blue{
Note that this setup can also arise in the investigation of quantum systems. 
Suppose that we have 
a quantum system of interest with corresponding Hilbert space $\mc{H}_P$ and
observables $A_i$ acting on this space. We assume for simplicity that $A_i \ket{\omega}= r_i(\omega)\ket{\omega}$ where $r_i(\omega)\in \{0,1\}$, i.e., the observables have eigenvalues 0 and 1 and can be simultaneously diagonalized. We are interested in finding the observable with the largest expectation, i.e.,  
\begin{align}
    \argmax_i \tr(A_i\rho)=\argmax_i \sum_{\omega}\bra{\omega}A_i\ket{\omega},
\end{align}
where $\rho \propto \sum_{\omega} \ket{\omega}\bra{\omega}$ is the completely mixed state.
Then, this reduces to the best arm identification problem if we can 
construct an oracle acting as in \eqref{eq:def_oracle_general} where $r_i(\omega)$
corresponds to the eigenvalue of $A_i$ for the eigenvector $\omega$.
}

Note that  the problem of empirical risk minimization and the relation to bandit problems 
was considered before in \cite{multi_armed_quantum},
 where they consider 
an oracle that acts as 
\begin{align}\label{eq:oracle_previous}
 \ket{i}\ket{0}\to \ket{i}(\sqrt{p_i}\ket{1}+\sqrt{1-p_i}\ket{0}).
\end{align}
The relation to our setting is that when applying our oracle to a uniform superposition 
over the $\omega$ register we obtain
\begin{align}\label{eq:uniform_omega}
\sum_\omega
\ket{i}\ket{\omega}\ket{0}\to   \sqrt{p_i} \ket{i}\ket{v}\ket{1}
+ \sqrt{1-p_i}\ket{i}\ket{u}\ket{0}
\end{align}
where $u$ and $v$ are suitable junk states which can be neglected as argued in \cite{multi_armed_quantum} (at least when restricting attention to query complexity). 
 Note that this superposition  eliminates the statistical randomness of the bandits. 
 Then the following result holds.
 \begin{theorem}[Theorem~1 in \cite{multi_armed_quantum}]
 There is a  quantum  algorithm that identifies
 the best arm of a quantum oracle as in \eqref{eq:oracle_previous}
 for any reward vector $\pp\in [\eta,1-\eta]^N$ 
 with probability $1-\delta$ with query complexity
 \begin{align}
 T \leq \tilde{\mathcal{O}}(\sqrt{H(\pp)}).
 \end{align}
 Moreover, this bound is optimal up to logarithmic terms.
 \end{theorem}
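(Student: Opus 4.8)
The plan is to prove both halves of the statement. For the upper bound, the strategy is to reduce the oracle in \eqref{eq:oracle_previous} to the empirical-risk-minimisation oracle in \eqref{eq:def_oracle_general}, so that Theorem~1 of \cite{multi_armed_quantum} applies essentially verbatim. Concretely, starting from an oracle of the form \eqref{eq:oracle_previous}, one can build a small internal-randomness register $\mc{H}_P$ and implement a unitary so that the map \eqref{eq:uniform_omega} holds, i.e. applying the constructed oracle to a uniform superposition over $\omega$ reproduces the action of the oracle in \eqref{eq:oracle_previous}, with the junk states $u$ and $v$ absorbing the $\omega$-dependence. Conversely, given the oracle \eqref{eq:oracle_previous}, one simply runs the quantum algorithm behind Theorem~1 in \cite{multi_armed_quantum} directly. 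The algorithm there presumably uses a Grover-type/amplitude-estimation subroutine to estimate each $p_i$ to additive accuracy comparable to $\Delta_i$ using $\tilde{\bigO}(\Delta_i^{-1})$ queries, combined with a successive-elimination or median-elimination outer loop; summing over arms gives $\tilde{\bigO}(\sum_i \Delta_i^{-1})$, and a Cauchy--Schwarz step turns this into $\tilde{\bigO}(\sqrt{N \cdot H(\pp)})$, which under $\pp \in [\eta,1-\eta]^N$ is $\tilde{\bigO}(\sqrt{H(\pp)})$ up to constants depending on $\eta$. I would state this reduction carefully and then quote the cited theorem.

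For the lower bound, the approach is a hybrid/adversary argument in the style of the optimality proof in \cite{multi_armed_quantum}, combined with the classical information-theoretic lower bound of Theorem~\ref{th:classical_bandit_result}. One fixes a base reward vector $\pp$ and considers perturbations $\pp'$ that swap which arm is optimal by moving a single coordinate by roughly $\Delta_i$; distinguishing $\pp$ from $\pp'$ is necessary to identify the best arm. The key quantitative input is a bound on how much the oracle state \eqref{eq:oracle_previous} changes under such a perturbation: the overlap between $\sqrt{p_i}\ket{1}+\sqrt{1-p_i}\ket{0}$ and $\sqrt{p_i'}\ket{1}+\sqrt{1-p_i'}\ket{0}$ is $1 - \Theta((p_i - p_i')^2)$ when $p_i, p_i' \in [\eta,1-\eta]$, so $T$ queries can separate the two instances only once $T \gtrsim \Delta_i^{-2}$. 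Making this rigorous via the standard hybrid argument (bounding the trace distance of the output states by $\sum$ of per-query perturbations, i.e. $O(\sqrt{T}\,|p_i - p_i'|)$ after the amplitude square-root) and then taking a worst case / averaging over the family of perturbed instances yields the $\bigOlower(\sqrt{H(\pp)})$ bound. The assumption $\pp \in [\eta,1-\eta]^N$ is exactly what keeps the square-root amplitudes $\sqrt{p_i}$ and $\sqrt{1-p_i}$ bounded away from degeneracy, so that an $O(\Delta_i)$ change in the mean produces an $\Theta(\Delta_i^2)$ (and not smaller) change in fidelity — this parallels the classical remark in the excerpt that the variance $p_i(1-p_i)$ must be bounded below.

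The main obstacle I expect is the lower bound bookkeeping: one has to combine the per-arm hybrid bounds into a single statement about $H(\pp)$ rather than about individual $\Delta_i^{-2}$, which requires choosing the right adversarial ensemble (e.g. a prior that perturbs one randomly chosen arm, or a chain of instances interpolating between ``arm $1$ best'' and ``arm $j$ best'') and carefully tracking that a successful algorithm must, with probability $1-\delta$, succeed on all of them simultaneously. Since this is precisely the optimality half of Theorem~1 in \cite{multi_armed_quantum}, I would lean on that argument and only adapt constants; the reduction in the upper-bound direction is routine. I would also remark that the $\eta$-dependence is unavoidable, pointing to the $p_1 = p$, $p_i = 0$ example from Section~\ref{sec:classical_bandits_setting}, where $\sqrt{H(\pp)} = \sqrt{N}/p$ but Grover-style search finds the best arm in $\bigO(\sqrt{N/p})$ queries, breaking the bound when $\eta \to 0$.
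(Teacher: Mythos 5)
First, note that the paper does not prove this statement at all: it is imported verbatim as Theorem~1 of \cite{multi_armed_quantum}, so there is no internal proof to compare against. Judging your sketch on its own merits, the lower-bound half is essentially the right hybrid argument, but the upper-bound half has a genuine quantitative gap. Estimating each $p_i$ to accuracy $\Delta_i$ by amplitude estimation costs $\bigOt(\Delta_i^{-1})$ per arm, and summing gives $\bigOt(\sum_{i>1}\Delta_i^{-1})$; Cauchy--Schwarz then yields only $\bigOt(\sqrt{N\,H(\pp)})$, and the assumption $\pp\in[\eta,1-\eta]^N$ does \emph{not} absorb the extra $\sqrt{N}$. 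In fact the inequality goes the wrong way: $\sum_{i>1}\Delta_i^{-1}\geq\sqrt{\sum_{i>1}\Delta_i^{-2}}=\sqrt{H(\pp)}$ always, and for the flat instance $p_1>p_2=\dots=p_N$ with gap $\eps$ your route gives $N/\eps=\sqrt{N}\cdot\sqrt{H(\pp)}$, a full $\sqrt{N}$ worse than claimed. The missing idea is precisely what the paper highlights in Section~\ref{sec:reuse} and Appendix~\ref{app:reusable}: the algorithm of \cite{multi_armed_quantum} is built on variable-time amplitude amplification \cite{var_time}, which combines the per-arm costs in an $\ell^2$-averaged sense (paying $\sqrt{N}\cdot t_{\mathrm{av}}$ with $t_{\mathrm{av}}^2=N^{-1}\sum_i\Delta_i^{-2}$, up to logarithms) rather than summing them; a successive-elimination outer loop cannot reach $\sqrt{H(\pp)}$.

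On the lower bound, your per-arm statement that $T$ queries separate the two instances only once the accumulated deviation $\bigO(\sqrt{T}\,|p_i-p_i'|)$ is constant would give only $T\geq\bigOlower(\max_i\Delta_i^{-2})$, which can be far below $\sqrt{H(\pp)}$ (e.g.\ $N$ arms at gap $\eps$ with $N\gg\eps^{-2}$). To get the stated bound you must retain the query-mass factor, i.e.\ bound the deviation for arm $i$ by roughly $|p_i-p_i'|\sqrt{T}\big(\sum_t\lVert P_i\psi_t\rVert^2\big)^{1/2}$ up to $\eta$-dependent constants, and then use $\sum_i\sum_t\lVert P_i\psi_t\rVert^2=T$ to sum over all perturbed instances simultaneously; this is exactly the progress-measure bookkeeping carried out in the paper's own proofs of Theorem~\ref{th:faulty_grover} and Theorem~\ref{th:lower_emp}, which you correctly identify as the obstacle but do not supply.
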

 Thus,   a quadratic speedup compared to the classical setting is achievable.
 They prove the lower bound only for the oracle \eqref{eq:oracle_previous}.
 For completeness, we prove the same lower bound when the more general oracle \eqref{eq:def_oracle_general} is available, i.e., the ability to query arbitrary superpositions of the data points does not allow any speedups compared to always considering  the uniform superposition. 
 \begin{theorem}[Informal version]\label{th:lower_emp}
 Any algorithm that identifies the best arm for any reward vector $\pp\in [\eta,1-\eta]^N$  with confidence $1-\delta$ given access to an oracle as in \eqref{eq:def_oracle_general}
 requires at least $\Omega(\sqrt{H(\pp)})$ calls to the oracle.
 \end{theorem}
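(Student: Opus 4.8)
The plan is to reduce the lower bound for the general empirical-risk oracle \eqref{eq:def_oracle_general} to the lower bound already established for the pre-averaged oracle \eqref{eq:oracle_previous} in Theorem~1 of \cite{multi_armed_quantum}, which is exactly $\Omega(\sqrt{H(\pp)})$ under the hypothesis $\pp\in[\eta,1-\eta]^N$ and whose combinatorial form $H(\pp)=\sum_{i>1}\Delta_i^{-2}$ arises there from combining the $\Omega(1/\Delta_i)$ per-arm distinguishing bounds in a Grover-like fashion. Concretely, it suffices to turn any algorithm $\mc{A}$ that identifies the best arm with confidence $1-\delta$ for every $\pp\in[\eta,1-\eta]^N$ using $T$ queries to \eqref{eq:def_oracle_general} into an algorithm that does the same using $\bigO(T)$ queries to \eqref{eq:oracle_previous}. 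As observed around \eqref{eq:uniform_omega}, \eqref{eq:oracle_previous} is recovered from \eqref{eq:def_oracle_general} by feeding in the uniform superposition over the $\omega$-register, so one direction is immediate; the content of the theorem is precisely that the extra freedom offered by \eqref{eq:def_oracle_general} — arbitrary superpositions over $\omega$ and arbitrary unitaries on $\mc{H}_P$ — cannot be exploited, because a best-arm instance places no structure on the reward functions $r_i$.

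For the reduction I would first use that, since we are after a lower bound, the adversary may fix the reward functions, and take $r_i=\mathbf 1_{A_i}$ where the $A_i\subseteq\mc{H}_P$ are independent uniformly random subsets with $|A_i|=p_i|\mc{H}_P|$ (choosing $|\mc{H}_P|$ large so this is an integer); some deterministic realisation is at least as hard as the average, so we may pretend the $A_i$ are random. The key point is then that an algorithm equipped only with \eqref{eq:oracle_previous} can simulate $\mc{A}$'s interaction with $O^\pp$ for these random rewards. It cannot afford to write down the full $A_i$ — it does not even know the $p_i$ — so instead it maintains a lazily-sampled, ``compressed'' description of the $A_i$ and, whenever $\mc{A}$ issues a query, fills in only the entries that are actually touched: each fresh entry for arm $i$ is produced by applying the controlled rotation \eqref{eq:oracle_previous} to an ancilla and using the measured outcome as a $\mathrm{Ber}(p_i)$ bit. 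A coupling argument shows that, up to an event of negligible probability, $\mc{A}$'s $t$-th query forces only $\bigO(1)$ fresh entries, so the whole simulation costs $\bigO(T)$ calls to \eqref{eq:oracle_previous}, and that the simulated oracle is faithfully an instance of \eqref{eq:def_oracle_general} with rewards distributed as above; hence the simulator identifies the best arm with confidence $1-\delta$, and the lower bound of Theorem~1 of \cite{multi_armed_quantum} forces $T=\Omega(\sqrt{H(\pp)})$.

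The main obstacle is the lazy simulation itself: a random function accessed \emph{in superposition} cannot be sampled naively one point at a time, so keeping the simulated oracle's responses consistent across $\mc{A}$'s $T$ queries, while bounding the number of calls to \eqref{eq:oracle_previous} by $\bigO(T)$, requires a compressed-oracle-type bookkeeping of the sampled database together with a coupling that controls the residual bad events — chiefly the lazily sampled entries for an arm failing to be consistent with a uniformly random subset of the prescribed size $p_i|\mc{H}_P|$, which becomes negligible once $|\mc{H}_P|$ is taken large. This is the point where the quantum-information machinery has to be combined carefully with the probabilistic coupling. What remains — translating the confidence level, the constants, and the logarithmic factors through the reduction so that the inherited bound is genuinely $\Omega(\sqrt{H(\pp)})$ — is routine.
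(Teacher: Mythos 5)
Your proposal takes a genuinely different route from the paper, and its central step is left as a gap that is far from routine. The paper does not reduce to the lower bound of \cite{multi_armed_quantum} at all: it runs a direct hybrid argument on the fidelity $\sqrt{F}(\rho_T^i,\rho_T^0)$ of the \emph{averaged} states $\rho_t^i=\mathbb{E}_i\,\rho_t^{r(i)}$, using (i) a monotone coupling $p_i(r(i),r(0))$ of the random reward functions under which $r(0)$ and $r(i)$ agree everywhere except on arm $i$, where they differ at each $\omega$ only with probability $\bigO(\Delta_i/\eta)$, (ii) strong concavity of the fidelity to pass to the coupled pairs, and (iii) Lemma~\ref{le:fidelity_simple}, which bounds the per-query fidelity loss by $2\sqrt{\tr(P^{r(i),r(0)}\rho_t^{r(i)})\tr(P^{r(i),r(0)}\rho_t^{r(0)})}$ where $P^{r(i),r(0)}$ projects onto the (small) set of points where the two reward functions disagree. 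A Cauchy--Schwarz over $t$ and $i$ then gives $T\geq c\,\eta\sqrt{H(\pp)}$ directly, with no appeal to any prior lower bound. Your idea of randomizing the reward functions and exploiting that the two instances differ only on a $\Delta_i$-fraction of arm $i$'s entries is the same starting intuition, but the mechanism you propose for cashing it in is entirely different.

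The gap in your proposal is the simulation itself, which you correctly identify as ``the main obstacle'' but do not resolve. Concretely: (a) as written, ``applying the controlled rotation \eqref{eq:oracle_previous} to an ancilla and using the \emph{measured} outcome as a $\mathrm{Ber}(p_i)$ bit'' destroys coherence across the $\omega$-register and so does not faithfully simulate the unitary $O^r$ on superposition queries; (b) the coherent fix you gesture at -- a compressed-oracle/lazy-sampling simulation -- is only standard for uniformly random functions whose value distribution the simulator \emph{knows}, whereas here each database entry must be initialized to $\sqrt{1-p_i}\ket{0}+\sqrt{p_i}\ket{1}$ with $p_i$ unknown and accessible only through black-box calls to $O_{p_i}$ and $O_{p_i}^\dagger$; establishing that the (de)compression steps can be implemented exactly with $\bigO(1)$ such calls per query, and that the resulting simulation is perfectly indistinguishable from a single fixed draw of $r$ queried $T$ times consistently, is a substantial piece of work, not bookkeeping; and (c) the exact-weight constraint $|A_i|=p_i|\mc{H}_P|$ correlates the entries, and the total-variation error to the i.i.d.\ model must be controlled uniformly over superposition queries, not just over the $\bigO(T)$ ``touched'' points of a classical transcript. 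Until (a)--(c) are carried out, the reduction does not establish the theorem; the paper's coupling-plus-fidelity argument avoids all of this by never attempting to simulate one oracle with the other.
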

 The proof and a formal statement of this result are in Appendix~\ref{app:lower_emp}.
 Note that while it is intuitively clear that it is optimal to query over the uniform
 mixture of $\omega$ as in \eqref{eq:uniform_omega} a rigorous proof requires a 
 careful tracking of the classical randomness of the oracle and its interaction with the quantum algorithm.

\subsubsection{Reusable oracles}\label{sec:reuse}
We now consider oracles as in \eqref{eq:oracle_col}, i.e., 
we can query arms in superposition, but we can only retrieve the reward for one chosen realization of the randomness.
A similar type of oracle was considered in \cite{quantum_hedging}. They show that hedging algorithms can be implemented using 
these oracles which have runtime $O(\sqrt{N})$ for $N$ arms, thus offering a quadratic speedup compared to the classical algorithm. Their setting is not directly comparable to the  best arm identification
problem for multi-armed bandits considered here. We try to identify the best arm, while in their setting an $\eps$-optimal arm is sufficient. On the other hand, they want to control a suitable variant of the regret. 
\blue{
One motivation is that as in Section~\ref{sec:erm} we want to identify the observable $A_i$ from a collection of observables that has the maximal expectation. But in contrast to the previous setting, we cannot perform arbitrary manipulations on the experimental setup (because  $\mc{H}_P$ corresponds to a quantum system we study and not part of the computing device).
Instead, we can only  apply maps that transition between different $\omega$ values
from the fixed basis of $\mc{H}_P$ and probe the system through the oracle $O$.
As stated above, this is equivalent to having access to oracles of the form $O_{X_t}$
acting on $\mc{H}_A\otimes \mc{H}_R$.
While this setup might be not the physically most relevant model, it is nevertheless helpful as an intermediate setting that helps us understand when different speedups arise.
For this case we can only give partial results. 
}

We have the following result.
\begin{theorem}\label{th:reusable}
For confidence $\delta\in (0,1)$  there is a quantum algorithm that outputs
the best arm with probability $1-\delta$ using 
\begin{align}
T\leq \bigOt\left( \sqrt{\sum_{i\geq 1}{\Delta_i^{-4}}} \right)
\end{align}
queries to an oracle as in \eqref{eq:oracle_col} 
where $\sim$ indicates terms that are polynomial in 
$\ln(N/(\delta \Delta_2))$.
\end{theorem}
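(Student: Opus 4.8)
The plan is to trade the weak structure of the reusable oracle --- for a fixed batch index $t$ the string $X_t$ may be queried in superposition over the arms, but the internal sampling randomness may not --- for a purely \emph{cross-arm} quantum speed-up, paying the full classical sampling cost $\tilde{\Theta}(\Delta_i^{-2})$ per arm. Concretely, I would reduce best-arm identification to a logarithmic number of instances of \emph{variable-time quantum search} over the $N$ arms: for each precision scale $\eps_k=2^{-k}$ one runs a quantum search in which ``processing arm $i$'' means executing a classical sequential test that decides whether $p_i$ lies well below the current estimate of $p_1$, using fresh reusable oracles. Because no per-arm amplitude-estimation speed-up is available here (estimating a mean to precision $\eps$ from reusable oracles still needs $\Omega(\eps^{-2})$ queries, since turning the uniform superposition over $\eps^{-2}$ batches into a Grover-amplifiable event already costs $\eps^{-2}$ oracle calls per Grover step), the stopping time of the test on arm $i$ at scale $k$ is $\tilde{\Theta}(\min(\Delta_i^{-2},\eps_k^{-2}))$, and the variable-time framework replaces the naive product $\sqrt N\cdot\max_i\Delta_i^{-2}$ by the $\ell^2$-type quantity $\bigOt(\sqrt{\sum_i\Delta_i^{-4}})$.

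The building block is that $m$ distinct oracles $O_{X_1},\dots,O_{X_m}$ can be combined, with one query to each (and re-used arbitrarily afterwards), into a unitary $V_m\colon\ket i\ket 0\mapsto\ket i\ket{\hat p_i^{(m)}}$ with $\hat p_i^{(m)}=m^{-1}\sum_{t\le m}X_t^i$; by Hoeffding's inequality (Lemma~\ref{le:hoeffding}) and a union bound, with $m\asymp r^{-2}\ln(N/\delta')$ one has $|\hat p_i^{(m)}-p_i|\le r$ simultaneously for all $i$ with probability $1-\delta'$, and, crucially, $X_{1:m}$ is fixed, so $V_m$ may be invoked many times inside a coherent subroutine without injecting fresh noise. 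The per-scale procedure is then: maintain a high-confidence lower bound $\hat L_{k-1}$ for $p_1$ accurate to $O(\eps_{k-1})$, obtained at scale $k-1$ from a variable-time quantum maximum-finding among the not-yet-rejected arms (rejected branches returning $-\infty$), and initialised at scale $1$ by ordinary quantum maximum finding over all $N$ arms at cost $\bigOt(\sqrt N)$; the test on arm $i$ draws $t=1,2,4,\dots$ fresh batches, forms $\hat p_i^{(t)}$ via $V_t$, and \emph{rejects} arm $i$ as soon as $\hat p_i^{(t)}<\hat L_{k-1}-c\,\eps_{k-1}$, aborting once $t$ exceeds the scale cap $\asymp\eps_k^{-2}\ln(\cdot)$. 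Running variable-time quantum maximum finding over the arms with this self-clocking test, amplified to failure probability $\ll\delta/\mathrm{polylog}$, yields both the running reference $\hat L_k$ and, at the first scale with $\eps_k<\Delta_2/8$ (detected because then every suboptimal arm is rejected within budget and a follow-up variable-time search finds no further surviving arm), the output arm.

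For the query count, at scale $k$ the test spends $\tilde{\Theta}(\min(\Delta_i^{-2},\eps_k^{-2}))$ queries on arm $i$, so the variable-time search costs $\bigOt\big(\sqrt{\textstyle\sum_i\min(\Delta_i^{-4},\eps_k^{-4})}\big)$; summing the geometrically increasing caps over the $O(\ln(1/\Delta_2))$ scales is dominated by the finest scale $\eps_{k_{\max}}\asymp\Delta_2$, where $\min(\Delta_i^{-4},\eps_{k_{\max}}^{-4})=\Delta_i^{-4}$ for every $i$ (as $\Delta_i\ge\Delta_2$), giving $\bigOt\big(\sqrt{\sum_{i>1}\Delta_i^{-4}}\big)$, with the $\sqrt N$ start-up absorbed since $\sum_{i>1}\Delta_i^{-4}\ge N-1$. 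Correctness follows from a union bound over the polylogarithmically many scales and subroutine calls: on the good event all empirical means are within $\eps_k$ of the truth at scale $k$, so arm $1$ is never rejected, every arm with $\Delta_i\gtrsim\eps_k$ is rejected by scale $k$, and each amplified quantum search returns the correct answer; the reference then satisfies $\hat L_k\in[p_1-2\eps_k,p_1]$ by construction.

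The main obstacle is the interface between the quantum primitives and the classical randomness of the reusable oracles. First, the variable-time maximum-finding / search machinery must be instantiated so that its query bound holds for our \emph{data-dependent, self-clocking} stopping times rather than for fixed ones; this requires showing that truncating each branch at the scale cap alters neither the output (with high probability) nor the $\ell^2$-time bound, and that thresholding/maximum finding contributes only polylogarithmic overhead. Second --- and this is the genuinely delicate point --- within a single scale the same batches $X_{1:m_k}$ feed \emph{all} branches and are acted on repeatedly in superposition, so one cannot treat the queries as independent samples; the resolution is to couple the whole scale-$k$ execution to an idealised one in which $\hat p_i^{(m_k)}$ is replaced by $p_i$, and to bound the accumulated trace distance by (number of coherent stages)$\times$(per-stage perturbation), the latter controlled by the Hoeffding event --- precisely where quantum-information estimates meet the coupling argument. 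A minor additional point is the cold start of $\hat L$, dealt with by the $\bigOt(\sqrt N)$ coarse maximum finding; I expect the variable-time bookkeeping and this coupling, rather than any single quantum primitive, to be the technically heaviest part.
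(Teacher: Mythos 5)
Your proposal is correct and follows essentially the same route as the paper: both replace the gapped amplitude estimation of the ERM algorithm by a coherent Hoeffding-based threshold test on the empirical mean (exploiting that the sampled strings $X_{1:m}$ are fixed, so the combined map is a unitary that with probability $1-\delta'$ acts correctly on all arms simultaneously), at cost $\tilde{\bigO}(\Delta_i^{-2})$ per arm instead of $\Delta_i^{-1}$, and then feed this into variable-time amplitude amplification over a logarithmic number of threshold scales to obtain the $\ell^2$ combination $\bigOt(\sqrt{\sum_{i>1}\Delta_i^{-4}})$. The paper black-boxes the variable-time machinery of the prior work where you rebuild it via self-clocking sequential tests, but the key idea and the accounting are the same.
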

A sketch of the proof of this result is in Appendix~\ref{app:reusable}.
It relies on a small modification of the algorithm in \cite{multi_armed_quantum}. Their algorithm is based on a clever application of variable time algorithms 
\cite{var_time} to count the number of arms with reward bigger than a given threshold and to rotate on the corresponding subspace of arms. As this is the only upper bound in this work, its proof has no direct relation to the remaining results.
We conjecture that the  bound in Theorem~\ref{th:reusable} is also optimal. 
\begin{conjecture}\label{con:reusable}
Any quantum algorithm that identifies the best arm
for any reward vector $\pp\in [\eta,1-\eta]^n$
for some $\eta>0$ with probability $1-\delta$ for some $\delta<\frac12$
requires at least
\begin{align}
T \geq c\sqrt{\sum_{i\geq 1}\Delta_i^{-4}}
\end{align}
calls to an oracle as in \eqref{eq:oracle_col}.
\end{conjecture}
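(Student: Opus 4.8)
The plan is to merge the two lower-bound techniques already present in the paper: the coupling/channel-discrimination argument behind Theorem~\ref{th:main}, which yields a dependence quadratic in the gaps, and the query-weight bookkeeping behind Theorem~\ref{th:lower_emp}, which yields the $\sqrt N$ savings. Fix a reward vector $\pp\in[\eta,1-\eta]^N$ (placed suitably inside the box) with arm $1$ the unique best arm, and for each $j\ge 2$ let $\pp^{(j)}$ be obtained from $\pp$ by raising $p_j$ by $2\Delta_j$, so that arm $j$ becomes the unique best arm of $\pp^{(j)}$ while all other coordinates stay put. If an algorithm identifies the best arm with probability $1-\delta$, $\delta<\tfrac12$, on $\pp$ and on every $\pp^{(j)}$, then, inspecting the event ``the output equals $1$'', its final states $\rho_T^{(0)}$ (against $\pp$) and $\rho_T^{(j)}$ (against $\pp^{(j)}$) must be $\bigOlower(1)$-separated in trace distance for every $j\ge 2$; hence $\sum_{j\ge 2}\|\rho_T^{(0)}-\rho_T^{(j)}\|_1=\bigOlower(N)$. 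It thus suffices to upper bound this sum in terms of $T$, the number of queries to the oracle stream \eqref{eq:oracle_col}, and to read off a lower bound on $T$.

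For fixed $j$ I couple the two streams by the monotone coupling of the Bernoulli columns: the columns for arms $i\ne j$ have identical laws, while column $j$ is $\mathrm{Ber}(p_j)$ under $\pp$ and $\mathrm{Ber}(p_j+2\Delta_j)$ under $\pp^{(j)}$, so the coupled streams differ in column $j$ at time $t$ with probability $\asymp\Delta_j$, independently over $t$, and agree everywhere else. As each $O_{X_t}$ may be reused, flipping the single bit $X_t^j$ perturbs every application of $O_{X_t}$ simultaneously, and a hybrid argument telescoping over the oracle applications in time order gives
\begin{align}
\|\ket{\psi_T^{(0)}}-\ket{\psi_T^{(j)}}\|\ \le\ 2\sum_{t}\mathbf{1}\!\left[\text{coupled streams differ in column }j\text{ at time }t\right]Q_{t,j},
\end{align}
where $P_j$ projects onto arm $j$ and $Q_{t,j}=\sum_{a:\,t(a)=t}\|P_j\ket{\psi_a}\|$ is the total $L^1$ query weight the algorithm puts on arm $j$ over all of its uses of $O_{X_t}$. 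Taking expectations over the coupling yields $\|\rho_T^{(0)}-\rho_T^{(j)}\|_1\lesssim\Delta_j\sum_t\E[Q_{t,j}]$ (the correlation, under the monotone coupling, between the indicator and $Q_{t,j}$ is harmless since $Q_{t,j}$ is bounded), and from $\sum_i\|P_i\ket\psi\|\le\sqrt N$ for every state one gets the global budget $\sum_{j\ge 2}\sum_t\E[Q_{t,j}]\le\sqrt N\,T$. Combining, success forces $\sum_t\E[Q_{t,j}]=\bigOlower(\Delta_j^{-1})$ per arm and hence $T=\bigOlower(\tfrac1{\sqrt N}\sum_{j\ge2}\Delta_j^{-1})$. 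But this only reproduces (at best) the weaker bound $T=\bigOlower(\sqrt{H(\pp)})$ of Theorem~\ref{th:lower_emp}: the linear dependence on $\Delta_j$ is exactly the ERM scaling, and it fails to exploit that, since the sampling randomness cannot be averaged out, a single query to a reused $O_{X_t}$ reveals only a noisy bit of arm $j$, not its mean $p_j$.

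The crux is to upgrade this per-arm bound to be quadratic in the gap, $\|\rho_T^{(0)}-\rho_T^{(j)}\|_1\lesssim\Delta_j^2\sum_t\E[Q_{t,j}]$ up to lower-order terms. This is the reusable-oracle counterpart of the mechanism that forces the classical $\Delta_j^{-2}$ sample cost and that already underlies Theorem~\ref{th:main}; the difficulty is that it must now be established with a superposition over $N$ arms present and with each $O_{X_t}$ reusable. I expect this to require a genuinely second-order (chi-squared / Hellinger-type) accounting of the interaction between the quantum algorithm and the classical oracle randomness, tracking not just how much query weight lands on the flipped column but the fact that perturbations coming from distinct oracles $O_{X_t}$ do not add up coherently; this is the step I expect to be the main obstacle. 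Granting the quadratic bound, one obtains $\sum_t\E[Q_{t,j}]=\bigOlower(\Delta_j^{-2})$ per arm and so $T=\bigOlower(\tfrac1{\sqrt N}\sum_{j\ge2}\Delta_j^{-2})$. To close the remaining gap to $\sqrt{\sum_{j\ge2}\Delta_j^{-4}}$, which by Cauchy--Schwarz exceeds $\tfrac1{\sqrt N}\sum_{j\ge2}\Delta_j^{-2}$ unless all gaps coincide, the uniform budget $\sum_i\|P_i\ket\psi\|\le\sqrt N$ should be replaced by an adaptive, variable-time accounting: an algorithm running many rounds on the hardest arms must already have amplified away the easy ones and so no longer spreads its weight over all $N$ arms, which ought to convert the $\ell^1$-type budget into the $\ell^2$ combination of the per-arm costs $\Delta_j^{-2}$, matching the $\sqrt{\sum_i T_i^2}$ cost of the variable-time amplitude amplification used for the upper bound of Theorem~\ref{th:reusable}. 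Making this trade-off rigorous, essentially a variable-time search lower bound robust to a noisy oracle, is the second and, I suspect, deeper difficulty, and presumably the reason the statement currently stands only as a conjecture.
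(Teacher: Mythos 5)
The statement you were asked to prove is stated in the paper only as a conjecture (Conjecture~\ref{con:reusable}); the authors explicitly say that the reusability of the oracles $O_{X_t}$ breaks the monotone single-step fidelity-loss arguments underlying their other lower bounds, and that the techniques of the paper do not suffice here. So there is no proof in the paper to compare against, and your proposal does not supply one either: it is a plan with two holes that you yourself flag. Concretely, the only part you actually establish is the hybrid/coupling bound $\|\rho_T^{(0)}-\rho_T^{(j)}\|_1\lesssim\Delta_j\sum_t\E[Q_{t,j}]$ together with the budget $\sum_{j}\sum_t\E[Q_{t,j}]\le\sqrt{N}\,T$, which as you note only recovers (a weakening of) the $\Omega(\sqrt{H(\pp)})$ bound of Theorem~\ref{th:lower_emp}. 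The two steps that would actually prove the conjecture --- (i) upgrading the per-arm perturbation bound from linear to quadratic in $\Delta_j$ while the oracle $O_{X_t}$ can be invoked repeatedly and coherently, and (ii) replacing the uniform $\sqrt{N}$ query-weight budget by an adaptive accounting that yields the $\ell^2$ combination $\sqrt{\sum_j\Delta_j^{-4}}$ rather than $N^{-1/2}\sum_j\Delta_j^{-2}$ --- are exactly the obstructions the paper identifies, and you leave both as ``I expect'' / ``ought to'' statements. Step (i) is genuinely problematic: because a flipped bit $X_t^j$ perturbs \emph{every} reuse of $O_{X_t}$, an algorithm can in principle interfere these perturbations constructively (this is precisely how the $\Delta^{-1}$ amplitude-estimation scaling arises in the ERM setting), so any quadratic bound must rule out such coherent amplification across reuses; nothing in your sketch does so.

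To be clear about the verdict: your diagnosis of where the difficulty lies is accurate and consistent with the paper's own discussion in Section~\ref{sec:reuse}, but the proposal should not be read as a proof, partial or otherwise, of the claimed bound $T\geq c\sqrt{\sum_{i>1}\Delta_i^{-4}}$. Both of the missing steps are substantive open problems, not routine technical gaps, and at least the second (a variable-time search lower bound robust to per-query noise) appears to require machinery beyond the fidelity/coupling toolkit used anywhere in this paper.
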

%We will comment in Appendix~\ref{app:reusable} about the necessary steps to show such a bound.
The main difficulty to prove a lower bound is  that the applied oracles can be reused so that the fidelity between the quantum states obtained 
for different mean rewards $\pp$ and $\pp'$ is not necessarily monotonically decreasing. This makes it hard to extend our other proofs that rely on the loss of fidelity in a single step to this setting.
In general, standard techniques to obtain lower bounds and the techniques used in this work do not appear to be sufficient to address this problem.

\subsubsection{One-time oracles}
\blue{
Finally, we consider the oracle defined in \eqref{eq:oracle_channel}.
Let us first make the connection to the oracle $O$ in \eqref{eq:def_oracle_general} a bit more precise.
We assume that we cannot act on the space $\mc{H}_P$ and only extract information from the system through the oracle $O$ otherwise the state on $\mc{H}_P$ follows, e.g., 
a time evolution given through some Hamiltonian $H$.
Then it is reasonable to assume that the 
 initial state on $\mc{H}_P$ is the completely mixed state $\rho_P\propto \sum \ket{\omega}\bra{\omega}$ and then the reduced state on the system $\mc{H}_{A}\times \mc{H}_R$ after application of $O$ is given by 
\begin{align}
    \tr_P O (\rho_{AR}\otimes \rho_P) O^\dagger =
    \sum_{X\in \{0,1\}^N} p(X) O_X\rho_{AR}O_X^\dagger
\end{align}
where $p(X)=|\{\omega: r_i(\omega)=X^i\}/|\mc{H}_P|$
and $O_X$ is defined in \eqref{eq:def_O_X}.
In particular, we recover indeed the expression in \eqref{eq:oracle_channel}
if the rewards $\omega\to r_i(\omega)$ for different arms $i$ are independent.
If the dynamics $H$ is sufficiently mixing also future invocations will approximately act
as the quantum channel $\mc{E}$ on the system. 

Assume that the oracle $O$ corresponds to our running example of identifying the mean of a collection of observables $A_i$.
Then this setup is connected to the theory of quantum sensing \cite{sensing} which  refers to the general use of quantum phenomena to measure quantum observables. 
Using quantum devices to learn properties of quantum systems has recently emerged as a promising direction to achieve quantum advantage. In particular, it was shown that
learning with a quantum device interacting with the system of interest can have an  exponential advantage over simply performing measurements on the system \cite{huang_science}. 
We show that, in contrast, in our setting no speedup compared to the classical setting is possible as stated (slightly informally) in the following result.

\begin{theorem}\label{th:main}
Any algorithm that identifies the best arm for any reward vector
$\pp\in [\eta,1-\eta]^N$ for some $\eta>0$ with probability $1-\delta$ for 
some $\delta<1/2$ based on calls to an oracle as in \eqref{eq:oracle_channel}
requires at least 
\begin{align}
T\geq c(\delta, \eta)H(\pp)
\end{align}
calls to the oracle.
\end{theorem}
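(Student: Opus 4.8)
The plan is to mirror the classical lower bound of Theorem~\ref{th:classical_bandit_result}, replacing ``number of pulls of arm~$j$'' by an amplitude-weighted query count and replacing the classical divergence decomposition by a quantum estimate that is forced by the \emph{decohering} nature of the one-time oracle. Fix a reward vector $\pp\in[\eta,1-\eta]^N$ with $p_1$ strictly largest (we may assume strict inequalities and recover the boundary cases by continuity of $H$ at the end). For each $j\ge 2$ let $\pp^{(j)}$ agree with $\pp$ except that $p_j$ is raised to a value $p_j^{(j)}$ with $|p_j^{(j)}-p_j|=\Theta(\Delta_j)$, chosen so that arm $j$ becomes the unique best arm while $\pp^{(j)}$ stays in $[\eta',1-\eta']^N$ for some $\eta'=\eta'(\eta)>0$ (a routine perturbation, moving $p_1$ down as well when $p_j$ is close to $1-\eta$, handles the corners and only affects constants). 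Running the algorithm for $T$ steps on instance $\pp$ produces the mixed state $\rho_T^{\pp}=\E_{\vec X\sim \Pd_{\pp}^{\otimes T}}\ket{\psi_{\vec X}}\bra{\psi_{\vec X}}$, where $\ket{\psi_{\vec X}}$ is the pure state obtained when the oracle randomness is $\vec X$; crucially the map $\vec X\mapsto\ket{\psi_{\vec X}}$ is the same for all instances. Since the algorithm outputs arm~$1$ with probability $\ge 1-\delta$ under $\pp$ and with probability $\le\delta$ under $\pp^{(j)}$, we get $\tfrac12\|\rho_T^{\pp}-\rho_T^{\pp^{(j)}}\|_1\ge 1-2\delta>0$ for every $j\ge 2$.

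For $t=1,\dots,T$ let $\sigma_t$ be the state of the algorithm just before the $t$-th oracle call on instance $\pp$, let $P_j=\ket{j}\bra{j}_{\mc H_A}\otimes I$, and set $q_t^{(j)}=\TR(P_j\sigma_t)$ and $N_j=\sum_{t=1}^{T}q_t^{(j)}$, the expected amplitude-weighted number of queries made to arm~$j$. Because $\sum_{j=1}^{N}P_j=I$ we have the exact budget identity $\sum_{j=1}^{N}N_j=T$, which will play the role of ``total pulls $=T$''. The heart of the argument is the estimate
\begin{align}\label{eq:key_lemma_sketch}
\tfrac12\|\rho_T^{\pp}-\rho_T^{\pp^{(j)}}\|_1 \;\le\; C(\eta)\,\Delta_j\,\sqrt{N_j}\qquad\text{for all }j\ge 2,
\end{align}
i.e.\ that distinguishing $\pp$ from $\pp^{(j)}$ costs only $\Theta(\Delta_j^{2})$ units of distinguishability \emph{per unit of amplitude spent on arm~$j$}, not per query. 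Granting \eqref{eq:key_lemma_sketch}, combining it with the lower bound of the previous paragraph gives $N_j\ge c'(\delta,\eta)\,\Delta_j^{-2}$ for each $j\ge 2$, and summing over $j\ge 2$ with the budget identity yields $T=\sum_{j\ge 1}N_j\ge\sum_{j\ge 2}N_j\ge c'(\delta,\eta)\sum_{j\ge 2}\Delta_j^{-2}=c'(\delta,\eta)\,H(\pp)$, which is the claim.

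To prove \eqref{eq:key_lemma_sketch} I would first use joint convexity of (squared) fidelity to condition on the oracle bits of all arms other than $j$: writing $\vec X=(\vec X^{-j},\vec X^{j})$, only the law of the arm-$j$ bits $\vec X^{j}\in\{0,1\}^T$ differs between $\pp$ and $\pp^{(j)}$, so it suffices to bound, for each fixed $\vec X^{-j}$, the distance between the two states produced by a circuit whose only remaining randomness is a stream $X^j_1,\dots,X^j_T$ of i.i.d.\ $\mathrm{Ber}(p_j)$ resp.\ $\mathrm{Ber}(p_j^{(j)})$ coins fed one-by-one into the oracle. Here I would couple the two coin streams optimally (they agree except on a set $S$ of steps with $\Pd[t\in S]=|p_j-p_j^{(j)}|=\Theta(\Delta_j)$) and control the drift of the two trajectories. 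A \emph{single} disagreement at step $t$ replaces $\tilde O_j=I-P_j+P_j\sigma_x$ by the identity on the arm-$j$ subspace, hence changes the state by an amount governed by $q_t^{(j)}$ (not $\sqrt{q_t^{(j)}}$), since $\bra{\chi}\tilde O_j\ket{\chi}=1-\bra{\chi}P_j\ket{\chi}+\bra{\chi}P_j\sigma_x\ket{\chi}$ differs from $1$ by $O(\bra{\chi}P_j\ket{\chi})$.

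The delicate point --- and the main obstacle --- is that over the $\Theta(T\Delta_j)$ disagreement steps these single-step perturbations must be prevented from adding up \emph{coherently}: a naive hybrid bound gives $1-F\lesssim\big(\sum_{t\in S}\sqrt{q_t^{(j)}}\,\big)^{2}$, which only yields $T\gtrsim\sqrt{H(\pp)}$ --- exactly the Grover-type speed-up available for the ERM oracle --- whereas \eqref{eq:key_lemma_sketch} needs $1-F\lesssim\sum_{t\in S}q_t^{(j)}$, i.e.\ the errors must combine \emph{incoherently}. This is where the one-time (channel) structure is essential: because the oracle XORs a fresh unobserved bit into the reward register at every call, it dephases the reward qubit and damps the coherences between the arm-$j$ subspace and its complement by a factor at most $1-2\eta<1$ per step, so the perturbation directions at distinct steps are (almost) mutually orthogonal. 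I would formalise this by carrying a potential function combining $1-F(\rho_t^{\pp},\rho_t^{\pp^{(j)}})$ with the total weight of the arm-$j$ off-diagonal blocks of $\rho_t^{\pp^{(j)}}$, and showing it increases by at most $C(\eta)\,q_t^{(j)}\Delta_j^{2}$ at step $t$; summing over $t$ gives $1-F(\rho_T^{\pp},\rho_T^{\pp^{(j)}})\le C(\eta)\,N_j\Delta_j^{2}$ and hence \eqref{eq:key_lemma_sketch} via Fuchs--van de Graaf. Finally, a few routine matters must be dispatched: reducing a fixed-confidence algorithm to a fixed-budget one by truncation (arguing on the event $\{T\le 2\,\E T\}$), the boundary cases in the construction of $\pp^{(j)}$, and bookkeeping of the dependence of $C(\eta)$ and $c'(\delta,\eta)$ on $\eta$ and $\delta$.
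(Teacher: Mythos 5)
Your high-level skeleton coincides with the paper's: define the per-arm amplitude budget $N_j=\sum_{t}\tr(P_j\rho_t)$, show that the distinguishability of $\pp$ from the instance with arm $j$ boosted is at most $C(\eta)\,\Delta_j\sqrt{N_j}$, and combine this with the budget identity $\sum_j N_j=T$. You also correctly identify the central obstacle (a naive hybrid argument lets the per-step perturbations add coherently and only yields $\sqrt{H}$) and the correct physical mechanism that defeats it (the one-time channel decoheres the state). The gap is that essentially the entire technical content of the theorem lies in the step you leave as a sketch, and the formalisation you propose does not work as stated. First, your single-step claim that a disagreement costs $O(q_t^{(j)})$ rather than $O(\sqrt{q_t^{(j)}})$ is only valid for the diagonal overlap $\bra{\chi}O_j\ket{\chi}$; once the two coupled trajectories have drifted apart, the relevant quantity is the cross term $\bra{\chi'}(O_j-\mathrm{Id})\ket{\chi}$, which is in general only $O(\sqrt{q_t q_t'})$ with $q_t'$ the \emph{other} instance's weight on arm $j$ --- a weight the algorithm can drive to $1$ for every $j$ simultaneously across instances. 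This is exactly why the direct per-step fidelity bound of Lemma~\ref{le:fidelity} only yields the suboptimal Corollary~\ref{co:fid3}. Second, the potential you propose --- $1-F$ of the mixed states plus ``the total weight of the arm-$j$ off-diagonal blocks of $\rho_t^{\pp^{(j)}}$'' --- is not unitarily invariant: the interleaved unitaries $U_t$ can rotate population back into those blocks, so the claimed per-step increment $C(\eta)q_t^{(j)}\Delta_j^2$ does not telescope. The paper explicitly reports that no suitable potential on the mixed states themselves could be found.

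What closes the gap in the paper is a trajectory-wise construction: both states are decomposed over realizations of the reward stream, with the $\mc{E}_0$-state split into \emph{pure} states via the explicit rotation of Lemma~\ref{le:coupling_const} and the components of the $\mc{E}_j$-state artificially perturbed as in \eqref{eq:defperturb1}--\eqref{eq:defperturb2} so that they provably lose purity. The unitarily invariant purity $R(z_t)$ of each individual component then serves as the potential that pays for the fidelity loss (as in Theorem~\ref{th:faulty_grover}), and a Cauchy--Schwarz pairing of $\lVert P_j\tilde\psi(z_t)\rVert$ against $\sqrt{R(z_t)-R(z_{t+1})}$ produces the asymmetric bound $1-\sqrt{F}\leq C\eta^{-1}\Delta_j\bigl(\sum_t\tr(P_j\rho_t^0)\bigr)^{1/2}$, in which only the \emph{unperturbed} instance's arm-$j$ weight appears. (This is weaker than the $1-F\leq C\Delta_j^2 N_j$ you aim for, but it suffices for the conclusion.) Further machinery --- a decay factor $\de(z_t)$ as in the classical proof of Appendix~\ref{sec:classical} and stopping indicators for trajectories whose fidelity drops below $1/2$ --- is needed to control the second-order error terms. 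As it stands, your proposal is a correct plan whose decisive lemma remains unproved, and whose suggested route to it would fail.
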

 In particular, in our setting it is not advantageous (up to constant factors) to query the system in superpositions, but we can also instead use a classical algorithm to decide which arm $i$ to query, evaluate $\mc{E}(\ket{i}\ket{0}\bra{0}\bra{i})$, i.e., query arm $i$ and then measure the reward register containing a $\mathrm{Ber}(p_i)$ sample. 
In the context of our example, where we want to identify the observable $A_i$ with maximum mean, we can just directly measure $\langle A_i\rangle$.

Again, this result is not a consequence of the generality of reward vectors $\pp$ allowed, but even when the set of mean rewards is known no better result is possible.
Note that the assumption that the rewards are Bernoulli distributed and independent might be unrealistic for applications. However, our main result shows that even in this simplified case no improvement over classical algorithms is possible in terms of query complexity. 
A more precise and slightly stronger statement of the result above is given in Theorem~\ref{th:optimal}. 
Theorem~\ref{th:optimal} and its proof can be found in 
Appendix~\ref{app:main}. An overview of the proof techniques and related results will be given in Section~\ref{sec:overview}.

We also remark that on a technical level, our setting is similar to \cite{faulty_grover}, where they essentially consider the case $p_i=0$ for all $i \neq i_0$ and $p_{i_0}>0$. 
Their motivation is to study Grover search where the oracle has a certain failure probability $1-p_{i_0}$ and they also find that this impeded any quantum speedup.
We  will review their results in more detail in  Section~\ref{sec:channel_oracles} which also provides insight into the more general problem.

}

\subsection{Comparison of settings}
Let us further discuss these results to give a better intuition of the result. 
We remark that access to  the oracle \eqref{eq:def_oracle_general}
is strictly more powerful than access to oracles as in \eqref{eq:oracle_col}
which in turn is more powerful than access to the channel oracle \eqref{eq:oracle_channel}.
This is reflected in the following chain of inequalities for the query complexities
\begin{align}
\sqrt{\sum_{i \geq 1} \Delta_i^{-2}}
\leq
\sqrt{\sum_{i \geq 1} \Delta_i^{-4}}
\leq
{\sum_{i \geq 1} \Delta_i^{-2}}.
\end{align}
Here we used $\Delta_i<1$ for the first inequality.
On the other hand, 
we have the following reverse bound for the complexities
\begin{align}
\blue{\frac{1}{\Delta_1}\sqrt{\sum_{i\geq 1} \Delta_i^{-2}}}\geq \sqrt{\sum_{i\geq 1} \Delta_i^{-4}}
\geq
\frac{1}{\sqrt{N}}{\sum_{i\geq 1} \Delta_i^{-2}}.
\end{align}
\blue{
Here we used $\Delta_1 = p_0-p_1\leq p_0-p_i= \Delta_i$ for the first inequality
and the inequality between arithmetic and quadratic mean for the
second inequality.
Thus, the speedup of the empirical risk minimization setting compared to the (conjectured) complexity of the reusable oracles setting is at most
$\Delta_1^{-1}=(p_0-p_1)^{-1}$
while the speedup between the (conjectured) complexity of the reusable oracles setting and the quantum channel oracle is at most
$\sqrt{N}$ but both can be less, depending on $\pp$.}
To illustrate this further we consider as a prototypical example  a reward vector $\pp$ with $p_0>p_1=\ldots=p_{N-1}$
with $p_0-p_1=\varepsilon$ 
	Then the query complexities are 
	\begin{align}
	T_{\mathrm{erm}}\approx \sqrt{\frac{N}{\eps^2}}=\frac{\sqrt{N}}{\eps}<
	T_{\mathrm{reusable}} \approx \frac{\sqrt{N}}{\eps^2}
	<T_{\mathrm{classical}}\approx T_{\mathrm{one-time}}\approx\frac{N}{\eps^2}.
	\end{align}	
For this setting the two difficulties can be well separated: The expected reward of each arm needs to be estimated (statistical complexity) and the correct arm needs to be searched.
Classically the statistical complexity is $\varepsilon^{-2}$ 
but it can be reduced to $\varepsilon^{-1}$ 
in the empirical risk minimization setting (similar to quantum metrology \cite{metrology}).
Not having access to a superposition of basis states $\ket{\omega}$ prevents this speedup.
The complexity of the search of the best arm is $\sqrt{N}$ in a quantum setting compared
to the complexity of $N$ in the classical setting or a noisy quantum setting. This  gives rise to the difference of the query complexities for the one-time oracle and the classical setting or the reusable oracle.
Thus, even under the favorable assumptions we made, e.g., Bernoulli distributed rewards, quantum algorithms offer no improvement in query complexity when $\mc{H}_P$ is not part of the computing device.

 \section{Proof techniques and overview}\label{sec:overview}
 In this section we give an overview of the  techniques 
 used in the proof of our main result, Theorem~\ref{th:main}. The proof is rather involved and technical, so we collect and review the main ingredients here, along with some results that are of independent interest. We first introduce some additional notation.

 To clarify the setting and to cover general oracles we assume that the oracle acts on a Hilbert space given by $\mc{H}=\mc{H}_A\otimes \mc{H}_R$ where $\mc{H}_A =\langle \ket{i}, 1\leq i\leq N\rangle$ 
and $\mc{H}_R$ is the space where the output is written, which will typically be a single qubit space. We assume that we are given oracles $O_i$ acting by 
\begin{align}\label{eq:def_action_oracle}
 O_i\ket{i}\otimes \ket{w}= \ket{i}\otimes U \ket{w}, \quad O_i\ket{j}\otimes \ket{w}= \ket{j}\otimes \ket{w}\quad \text{for $j\neq i$}
\end{align}
where $U$ denotes a unitary map. This covers the case of the phase flip ($\ket{i}\otimes \ket{w}\to -\ket{i}\otimes \ket{w}$) and the bit flip oracle
($\ket{i}\otimes \ket{w}\to \ket{i}\otimes \ket{w\oplus 1}$). 
 A central role is played by the quantum channels $\mc{F}_i^p$ acting by
\begin{align}\label{eq:def_F_i}
\mathcal{F}_i^p(\rho) = (1-p)\rho + p O_i\rho O_i^\dagger. 
\end{align}
The oracle $\mc{F}_i^p$ implements the pull of arm $i$ with a $\mathrm{Ber}(p)$ distributed arm. Note that the channels $\mc{F}_i^{p_i}$ and $\mc{F}_j^{p_j}$ commute for $i\neq j$
because $O_i$ and $O_j$ commute for $i\neq j$.

To relate this to the setting of Theorem~\ref{th:main}
we consider probability vectors $\pp\in [0,1]^N$
and define further  
\begin{align}\label{eq:concat}
\mc{E}^\pp = \mc{F}_1^{\pp_1}\circ\ldots \circ \mc{F}_N^{\pp_N}.
\end{align}
When $O_i$ denotes the bit flip
of the reward register on arm $\ket{i}$
the channel $\mc{E}^\pp$ agrees with the definition in \eqref{eq:oracle_channel}
where the vector $\pp$ indicates the mean rewards, i.e., 
\begin{align}\label{eq:rel_oracles}
\mc{E}^\pp (\rho)
&= \sum_{x\in \{0,1\}^N}\Pd(x) O_x \rho O_x^\dagger
\quad\text{where}\quad
O_x = \prod_{i: x_i=1} O_i
\quad \text{and} \quad
\Pd(x)= \prod \pp_i^{x_i}(1-\pp_i)^{1-x_i}.
\end{align}
Let us now introduce a class of probability vectors $\pp$, which we will use to establish our lower bounds. Specifically, we adopt the setup previously employed in the proof of Theorem~\ref{th:classical_bandit_result} (see \cite{dar_best_arm,mannor_best_arm}), where the lower bound is demonstrated for this particular class of reward vectors. Extending these results to a setting as in Theorem~\ref{th:classical_bandit_result2} is a promising direction for future research.
 We consider a probability vector $\pp=(p_0,\ldots,p_N)\in [0,1]^{N+1}$ with
 $p_0>p_1>p_2\geq \ldots\geq p_N$.
As before, we denote $\Delta_i = p_0-p_i$.  
%We emphasize that this is sightly different from the earlier definition in Section~\ref{sec:setting}. 
Given $\pp$ we then define $N+1$ probability vectors
$\pp^i\in [0,1]^N$ given by $\pp^i_j=p_j$ for $i\neq j$ and $\pp^i_i=p_0$
and $\pp^0$ given by $\pp^0_i=p_i$.
In other words, $\pp^0=(p_1,\ldots, p_N)$ and $\pp^i$ is obtained from
$\pp^0$ by replacing the $i$-th reward by $p_0$.
In particular, for every $i\geq1$ the vectors $\pp^0$ and $\pp^i$ differ only in the entry $i$ and arm $i$ has the highest reward for reward vector $\pp^i$.
 We make the additional  assumption that $\Delta_1=p_0-p_1=p_1-p_2$. 
 This ensures (see Lemma~\ref{le:reward})
 \begin{align}
     \frac14 H(\pp^0)\leq H(\pp^i)\leq 2H(\pp^0)
 \end{align}
 for all $i$. We will also use the shorthand
 $\mc{E}^{\pp_j}=\mc{E}_j$. Our goal is then to show that 
 it is hard to distinguish which of the reward vectors $\pp^i$ 
 corresponds to a given oracle $O$. More precisely, we show that there is an index $i$ such that 
 at least $\Omega(H(\pp^0))$ oracle calls are necessary to identify arm $i$ when the true reward vector is $\pp^i$.

Now that we introduced the relevant notation, we can present the three main ingredients and techniques used in the proof.
First, we show general lower bounds for the
 query complexity to distinguish the
quantum channels $\mc{F}_i^p$ from the identity channel.
Then we carefully analyze the change in fidelity when applying
 $\mc{F}^p_i$ and $\mc{F}_i^q$ 
 to two states $\rho$ and $\sigma$ for $|p-q|$ small.
 Finally, we use decompositions of density matrices and coupling arguments that allow us to combine the previous two ingredients.
 The next three subsections cover those aspects in more detail. We also refer to Appendix~\ref{sec:classical} where we give a quantum inspired proof of the fixed confidence best arm identification problem for classical multi-armed bandits that shares many ideas with the proof of the quantum result.

\subsection{Quantum channel oracles}\label{sec:channel_oracles}
In this section we consider the query complexity for
the decision problem whether we are given 
an oracle as in \eqref{eq:def_F_i}
or the trivial quantum channel. This is a simplified setting of the more general bandit problem. In fact, it corresponds to the special case that all mean rewards are $0$, except for one arm with mean reward $p$.
Most of the work related to oracle query complexity has focused on oracles that act as a unitary map.  There is a large body of research work on quantum channels also with a focus on quantum channel discrimination and general 
lower and upper bounds were derived \cite{channel_discrimination_old,channel_discrimination,channel_disriminate_ultimate}. However, the application of these general bounds is mostly targeted towards rather simple  channels, in particular channels that implement error mechanisms present in quantum devices.
Those general results are not directly applicable here, and we will consider bounds targeted
at our specific setting. Phrased as above, the setting in this section essentially agrees with 
\cite{faulty_grover}
where they considered the quantum search problem
given an oracle with a certain failure probability 
$(1-p)$ (note that we exchanged $p$ and $1-p$
compared to their convention, which is more natural in our bandit setting). Their main result is that  $N/p$ queries are required to identify $i$. In particular, no speedup compared to  classical algorithms is possible.

Below, 
we will sketch a different proof of their result because we believe it makes the main ideas slightly clearer and because the proof strategy is an important ingredient used in the proof of Theorem~\ref{th:main}.
The intuition of the proof is that the progress we make is directly related to the decoherence of the state, as measured by its purity. 
As the purity is lower bounded, this gives us tight control of the progress.

We could make the output space explicit by decomposing the work space $\mc{H}_A$ but this is not necessary.
As above, we assume that we have oracle access to one of  the quantum channels $\mc{F}_i(\rho)$
as in \eqref{eq:def_F_i} which we seek to identify. For convenience we define $\mc{F}_0=\id$.
Then we have the following slightly extended version of Theorem~1 in \cite{faulty_grover}.

\begin{theorem}\label{th:faulty_grover}
Any algorithm that can decide whether $\mc{F}=\mc{F}_i^p$ for some $i>0$ or $\mc{F}=\mc{F}_0$ with probability $1-\delta$ requires at least 
\begin{align}\label{eq:lower_noisy_grover}
T\geq \frac{(1-p)(1-4\delta(1-\delta))^2}{p}N
\end{align}
calls to the channel.
\end{theorem}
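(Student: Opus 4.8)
The plan is to run a hybrid-style argument against the trivial reference oracle $\mc{F}_0=\mathrm{id}$, which keeps the algorithm's state pure and is oblivious to the hidden index, and to control a single potential. By deferring all measurements and purifying, I may assume the algorithm alternates fixed unitaries $V_1,\dots,V_T$ with calls to the unknown channel $\mc{F}$, acting on $\mc{H}_A\otimes\mc{H}_R$ tensored with a workspace on which $\mc{F}$ acts trivially. Let $\rho_t^{(i)}$ be the state after $t$ queries when $\mc{F}=\mc{F}_i$, let $|\phi_t\rangle$ be the (pure) state after $t$ queries when $\mc{F}=\mc{F}_0$ (so $\rho_t^{(0)}=|\phi_t\rangle\langle\phi_t|$), and set the potential $s_t^{(i)}:=\langle\phi_t|\rho_t^{(i)}|\phi_t\rangle$, with $s_0^{(i)}=1$. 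Since the $O_i$ in the statement are reflections $O_i=\mathbf{1}-2\Pi_i$ — with $\Pi_i=|i\rangle\langle i|\otimes\mathbf{1}_R$ for the phase flip and $\Pi_i=|i\rangle\langle i|\otimes|{-}\rangle\langle{-}|$ for the bit flip, and in particular $\sum_i\Pi_i\preceq\mathbf{1}$ — I would track, along the $\mc{F}_i$-run, the coherence $b_i(\rho):=\|\Pi_i\rho\Pi_i^\perp\|_2$ and the weight $q_i(\phi):=\langle\phi|\Pi_i|\phi\rangle$.

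Two computations drive everything. First, a purity identity: from $\mc{F}_i^p(\rho)=\rho-2p(\Pi_i\rho\Pi_i^\perp+\Pi_i^\perp\rho\Pi_i)$ one gets $\TR[\mc{F}_i^p(\rho)^2]=\TR[\rho^2]-8p(1-p)\,b_i(\rho)^2$. As unitaries preserve purity and $\TR[(\rho_T^{(i)})^2]\ge 0$, summing over the $T$ oracle calls yields $\sum_{t=1}^T b_i(\rho_t')^2\le \tfrac{1}{8p(1-p)}$, where $\rho_t'$ is the state just before the $t$-th call. Second, a per-step estimate: using $\langle\phi_t|\rho_t'|\phi_t\rangle=s_{t-1}^{(i)}$ (unitary invariance) together with the same identity, one finds $s_t^{(i)}=s_{t-1}^{(i)}-4p\,\mathrm{Re}\,\langle\phi_t|\Pi_i^\perp\rho_t'\Pi_i|\phi_t\rangle$, and then
\[
\big|\langle\phi_t|\Pi_i^\perp\rho_t'\Pi_i|\phi_t\rangle\big|=\big|\TR\big[(\Pi_i^\perp\rho_t'\Pi_i)(\Pi_i|\phi_t\rangle\langle\phi_t|\Pi_i^\perp)\big]\big|\le b_i(\rho_t')\,\sqrt{q_i(\phi_t)}
\]
by Cauchy--Schwarz for the Hilbert--Schmidt inner product (the second factor is rank one, with Hilbert--Schmidt norm $\sqrt{q_i(\phi_t)(1-q_i(\phi_t))}\le\sqrt{q_i(\phi_t)}$). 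Telescoping gives $1-s_T^{(i)}\le 4p\sum_{t=1}^T b_i(\rho_t')\sqrt{q_i(\phi_t)}$.

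Now I combine the two by Cauchy--Schwarz in $t$ and average over $i$. The two bounds give $1-s_T^{(i)}\le 4p\big(\sum_t b_i(\rho_t')^2\big)^{1/2}\big(\sum_t q_i(\phi_t)\big)^{1/2}\le \sqrt{\tfrac{2p}{1-p}}\big(\sum_t q_i(\phi_t)\big)^{1/2}$, hence
\[
\sum_{i=1}^N\big(1-s_T^{(i)}\big)\le \sqrt{\tfrac{2p}{1-p}}\,\sqrt{N}\,\Big(\sum_{t=1}^T\sum_{i=1}^N q_i(\phi_t)\Big)^{1/2}\le \sqrt{\tfrac{2p}{1-p}}\,\sqrt{NT},
\]
using $\sum_i q_i(\phi_t)=\langle\phi_t|(\sum_i\Pi_i)|\phi_t\rangle\le 1$. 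So some index $i^\ast$ satisfies $1-s_T^{(i^\ast)}\le\sqrt{2pT/((1-p)N)}$. On the other hand, if the algorithm decides $\mc{F}=\mc{F}_{i^\ast}$ versus $\mc{F}=\mc{F}_0$ correctly with probability $1-\delta$, the Holevo--Helstrom bound forces $\tfrac12\|\rho_T^{(i^\ast)}-\rho_T^{(0)}\|_1\ge 1-2\delta$; since $\rho_T^{(0)}$ is pure, Fuchs--van de Graaf gives $\tfrac12\|\rho_T^{(i^\ast)}-\rho_T^{(0)}\|_1\le\sqrt{1-s_T^{(i^\ast)}}$, so $1-s_T^{(i^\ast)}\ge(1-2\delta)^2$. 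Chaining the two estimates for $1-s_T^{(i^\ast)}$ and using $(1-2\delta)^4=(1-4\delta(1-\delta))^2$ yields $T\ge\tfrac{(1-p)(1-4\delta(1-\delta))^2}{2p}N$; a slightly tighter accounting of the constant (e.g.\ a sharper Fuchs--van de Graaf step) recovers \eqref{eq:lower_noisy_grover}.

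The \emph{crux} is the per-step estimate $|s_{t-1}^{(i)}-s_t^{(i)}|\le 4p\,b_i(\rho_t')\sqrt{q_i(\phi_t)}$: it exhibits the change as a product of a coherence factor $b_i$, whose time-sum is globally bounded by purity monotonicity (the intuition flagged in Section~\ref{sec:channel_oracles}), and a weight factor $\sqrt{q_i}$, whose sum over the $N$ candidate indices costs only $\sqrt{N}$. Neither ingredient alone suffices — the coherence bound by itself yields only $T\gtrsim 1/p$, while a naive hybrid estimate accumulates error at rate $p\sqrt{q_i}$ and gives $T\gtrsim\sqrt{N}/p$ (and a direct Gr\"onwall on $1-s_t^{(i)}$ would blow up exponentially); it is the product $b_i\sqrt{q_i}$ that recovers the classical $N/p$ scaling. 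I expect the only delicate points to be tracking the exact constant in \eqref{eq:lower_noisy_grover} and confirming that the reduction to the alternating fixed-unitary model (deferred measurement, workspace, purification) loses nothing.
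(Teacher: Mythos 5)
Your proposal is correct and follows essentially the same route as the paper's proof: your potential $s_t^{(i)}=\langle\phi_t|\rho_t^{(i)}|\phi_t\rangle$ is the paper's fidelity $F_t^i$ against the pure reference run, your purity identity is the paper's \eqref{eq:change_purity}, your per-step bound $|s_{t-1}^{(i)}-s_t^{(i)}|\le 4p\,b_i\sqrt{q_i}$ is the paper's \eqref{eq:fid_pur_relation} (obtained there via Lemma~\ref{le:projection}), and the double Cauchy--Schwarz over $t$ and $i$ followed by the Helstrom/Fuchs--van de Graaf endgame is identical. The only caveat is the constant: your chain yields $T\geq\frac{(1-p)(1-4\delta(1-\delta))^2}{2p}N$, and in fact the paper's own final implication also loses a factor (its displayed inequality gives $N/(4p)$ rather than $N/p$), so the gap to \eqref{eq:lower_noisy_grover} lies in the stated constant rather than in your argument.
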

\begin{remark}
\begin{enumerate}
\item
We emphasize again that the original proof   in \cite{faulty_grover}
immediately generalizes to the slightly different setting considered here.
\item
Note that the bound becomes vacuous for $p\to 1$. Then we recover the setting
of Grover's algorithm, where the well known lower bound scales as $\sqrt{N}$.
For $p=1-\sqrt{N}^{-1}$ the bound \eqref{eq:lower_noisy_grover} agrees
with the Grover lower bound, so it is possible that
a small error probability which depends on the number of arms does not impede
the quadratic speedup that is possible in the noiseless case.
Similar questions were investigated in \cite{random_noisy}.
\end{enumerate}
\end{remark}
Here we discuss the key elements of the proof, while
some calculations are delegated to Appendix~\ref{app:channel_oracles}.
\begin{proof}
Let us denote by $\Phi_U$ the quantum channel acting by the unitary $U$, i.e., $\Phi_U(\rho)=U\rho U^\dagger$. We consider an algorithm 
acting by $\Phi_{U_{T}}\circ \mc{F}\circ \Phi_{U_{T-1}}\circ\ldots \circ \Phi_{U_1}\circ \mc{F}\circ \Phi_{U_0}(\rho_0)$ on some initial state $\rho_0=\ket{\Omega}\bra{\Omega}$ for some pure state $\Omega$.
We define 
\begin{align}
\tilde\rho_t^i = \Phi_{U_t}( \rho_t^i), \quad \rho_t^i = \mc{F}_i( \tilde{\rho}_{t-1}^i),\quad \rho_0^i=\rho_0.
\end{align}
Note that for $i=0$ the state remains pure
during the entire algorithm and we denote it by $\psi_t$ and $\tilde\psi_t$.
We now define 
\begin{align}
R_t^i &=\tr(\rho_t^i)^2,
\\
F_t^i &= F(\rho^i_t, \rho^0_t),
\end{align}
i.e., the purity of the state and the fidelity (defined by $F(\sigma, \rho)=\left( \tr\sqrt{\sqrt{\rho}\sigma\sqrt{\rho}}\right)^2$) of the state with respect to the state corresponding to the trivial oracle. 
Let us remark on the notation, that we always write $\tr(\rho)^2=\tr(\rho \rho)$
for the trace of the square of an operator while $(\tr \rho)^2$ denotes the squared trace.
For a brief overview of properties of the fidelity, 
we refer to Appendix~\ref{app:distance}.
We show that the changes of the two quantities are directly related, i.e., for every increase in distance (loss in fidelity) we have to pay with a loss in purity, i.e., decoherence.

We define the projections $P_i = \ket{i}\bra{i}\otimes \id$.
Then it can be shown (see full proof in Appendix~\ref{app:channel_oracles})
that 
\begin{align}
\label{eq:fid_pur_relation}
F_{t-1}^i-F_t^i 
\leq 
2p \lVert P_i{\psi}_{t}\rVert\left( \tr(\tilde{\rho}^i_{t-1} - O_i  \tilde{\rho}^i_{t-1}O_i^\dagger)^2\right)^{\tfrac12}
\leq 
2 \lVert P_i{\psi}_{t}\rVert \sqrt{\frac{p}{1-p}} \sqrt{R_{t-1}^i - R_{t}^i }
\end{align}

Note that the initial values of $F$ and $R$ are $F_0^i=R_0^i=1$ and $R_t^i\geq 0$. Thus, we conclude using Cauchy-Schwarz
\begin{align}
\begin{split}
\sum_i (F_0^i - F_T^i)
&=
\sum_{i,t } (F_{t-1}^i-F_t^i)
\leq \sum_{i,t} 2 \lVert P_i{\psi}_{t}\rVert \sqrt{\frac{p}{1-p}} \sqrt{R_{t-1}^i - R_{t}^i }
\\
&\leq 2 \sqrt{\frac{p}{1-p}}
\left(\sum_{i,t} \lVert P_i{\psi}_{t}\rVert^2\right)^{\tfrac12}
\left(\sum_{i,t}R_{t-1}^i - R_{t}^i\right)^{\tfrac12}
\\
&\leq
2 \sqrt{\frac{p}{1-p}}
\left(\sum_{t} \lVert {\psi}_{t}\rVert^2\right)^{\tfrac12}
\left(\sum_{i}R_{0}^i - R_{T}^i\right)^{\tfrac12}
\leq
2 \sqrt{\frac{p}{1-p}}\sqrt{T}\sqrt{N}.
\end{split}
\end{align}
Finally we use our assumption that the algorithm can decide whether the oracle is trivial $\mc{E}=\mc{E}_0$ or not with probability $1-\delta$ for some $\delta<1/2$.  From this we can conclude that the output of the algorithm 
for oracles $\mc{F}^0$ and $\mc{F}^i$ must be sufficiently different. 
Formally, success of the algorithm implies (see \eqref{eq:trace_success} and \eqref{eq:trace_fidelity} in
Appendix~\ref{app:distance}) that for each
$i$
\begin{align}
1-2\delta \leq T( \rho_T^i , \rho_T^0)
\leq \sqrt{1-F(\rho_T^i, \rho_T^0)}
\end{align}
where $T(\rho, \sigma) = \tfrac12 \lVert \rho-\sigma\rVert_{\tr}$ denotes the trace distance.
This implies the bound 
%\begin{align}\label{eq:fidelity_delta}
$F_T^i \leq 4\delta(1-\delta)$.
%\end{align}
We conclude that 
\begin{align}
2 \sqrt{\frac{p}{1-p}}\sqrt{T}\sqrt{N}\geq N (1- 4\delta(1-\delta))
\Rightarrow T\geq \frac{N(1-p)(1-4\delta(1-\delta))^2}{p}.
\end{align}

\end{proof}
A natural question suggested by this result and also our main result is whether speedups
can be obtained for non-unitary oracles. This question was also posed in \cite{faulty_grover}.
We now show that this is not true (without making additional assumptions). The simplest example is a faulty oracle that indicates its own failure. To define this 
we consider  oracles $O_i$ acting by 
\begin{align}\label{eq:oracle_own_failure}
O_i\ket{i}\ket{c}= -\ket{i}\ket{c\oplus 1}, \quad O_i\ket{j}\ket{c}=\ket{j}\ket{c\oplus 1}\quad \text{for $j\neq i$},
\end{align}
i.e., the bit flip indicates the marked element $i$ and the change in the second register $\ket{c}\to\ket{c\oplus 1}$ is used to store  that the oracle worked.
We consider the faulty versions of these oracles given by
\begin{align}
\mc{F}_i(\rho) = p O_i\rho O_i^\dagger + (1-p)\rho.
\end{align}
Then we can obtain the same speedup as with the usual oracle except that we need to correct for the number of times the oracle is not working. 
This is not in contradiction to the previous result, as this oracle is not of the form defined in \eqref{eq:def_action_oracle}. In particular, the action of the oracle in \eqref{eq:oracle_own_failure} 
is not trivial on $\ket{j}\otimes \ket{c}$. 
\begin{theorem}\label{th:speedup}
The channel $i$ can be identified with probability at least $1/4$ using $\lfloor\pi / (2\theta p) \rfloor$ queries to the oracle where $\theta = 2\arcsin(\sqrt{N}^{-1})\approx 2\sqrt{N}^{-1}$.
\end{theorem}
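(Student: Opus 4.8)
The plan is to simulate Grover's algorithm with $\mc{F}_i$ playing the role of the marking oracle, using the ancilla qubit as a flag that records whether a given call fired. Concretely, prepare the arm register in the uniform superposition $\ket{s}=N^{-1/2}\sum_{j=1}^{N}\ket{j}$ and the ancilla in $\ket{0}$, set $T=\lfloor\pi/(2\theta p)\rfloor$, and repeat the following $T$ times: apply $\mc{F}_i$; measure the ancilla in the computational basis; if the outcome is $\ket{1}$, apply $X$ to the ancilla (returning it to $\ket{0}$) and apply the Grover diffusion $D=2\ket{s}\bra{s}-\mathrm{Id}$ to the arm register; if the outcome is $\ket{0}$, do nothing. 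Afterwards, measure the arm register and output the result. Only $\mc{F}_i$ is queried, exactly $T$ times.

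The key observation is that, conditioned on the ancilla measurement outcomes, the arm register is at every stage a pure state in the Grover plane $\mathrm{span}\{\ket{i},\ket{s}\}$, equal to the corresponding Grover iterate. Indeed, $\mc{F}_i$ applied to $\rho\otimes\ket{0}\bra{0}$ equals $p\,(O_i\rho O_i^{\dagger})\otimes\ket{1}\bra{1}+(1-p)\,\rho\otimes\ket{0}\bra{0}$, so conditioning on ancilla $=\ket{0}$ (probability $1-p$) leaves the arm register untouched, while conditioning on ancilla $=\ket{1}$ (probability $p$) applies to it the reflection $R_i=\mathrm{Id}-2\ket{i}\bra{i}$, which is precisely the action of $O_i$ on the arm register. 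Hence a ``fired'' round followed by reset and diffusion applies one Grover step $DR_i$, while a ``non-fired'' round does nothing; and since firing is an independent $\mathrm{Ber}(p)$ event, the number $K$ of fired rounds is $\mathrm{Bin}(T,p)$-distributed and, given $K$, the final arm state is $(DR_i)^{K}\ket{s}$. With $\theta/2=\arcsin(N^{-1/2})$ and the standard Grover identity $|\langle i|(DR_i)^{k}|s\rangle|^{2}=\sin^{2}((2k+1)\theta/2)$, the success probability of the final measurement is therefore
\begin{align}
\P(\text{success})=\E\!\left[\sin^{2}\!\big((2K+1)\tfrac{\theta}{2}\big)\right].
\end{align}

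It then remains to lower bound this expectation by $1/4$. Let $m_\ast=\pi/(2\theta)-\tfrac12$, the ideal (real) iteration count at which $\sin^{2}((2m_\ast+1)\theta/2)=1$, and note that $\E K=pT\in(\pi/(2\theta)-1,\pi/(2\theta)]$, so $|\E K-m_\ast|\le\tfrac12$. Writing $\sin^{2}((2K+1)\theta/2)=\cos^{2}((K-m_\ast)\theta)\ge 1-(K-m_\ast)^{2}\theta^{2}$ and taking expectations,
\begin{align}
\P(\text{success})\ge 1-\theta^{2}\big(\Var(K)+(\E K-m_\ast)^{2}\big)\ge 1-\theta^{2}\Big(\tfrac{\pi}{2\theta}+\tfrac14\Big)=1-\tfrac{\pi\theta}{2}-\tfrac{\theta^{2}}{4},
\end{align}
using $\Var(K)=Tp(1-p)\le pT\le\pi/(2\theta)$. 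Since $\theta=2\arcsin(N^{-1/2})\to 0$, this tends to $1$ and already exceeds $1/4$ once $N$ is past a small explicit constant; a slightly sharper two-sided concentration bound on $|K-m_\ast|$, together with direct inspection of the few remaining small $N$ (for which $\sin^{2}((2K+1)\theta/2)$ can be enumerated), yields the claim in all cases. I expect this last step — controlling how the binomial fluctuations of $K$ around $\pi/(2\theta)$ translate into over- or under-rotation of the Grover state — to be the only delicate point; indeed it is precisely because these fluctuations are present that the theorem claims success probability $1/4$ rather than the near-$1$ value obtained above. Finally, this is consistent with Theorem~\ref{th:faulty_grover}: the oracle \eqref{eq:oracle_own_failure} sends $\ket{j}\otimes\ket{0}\mapsto\ket{j}\otimes\ket{1}$ also for $j\neq i$, hence acts nontrivially off the marked arm and is not of the form \eqref{eq:def_action_oracle}, and it is exactly this self-announced failure that the algorithm exploits.
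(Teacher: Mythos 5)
Your proposal is correct, and the underlying algorithm is the same as the paper's: run Grover with $\mc{F}_i$ as the marking oracle, exploit the self-announcing ancilla to know which rounds fired, so that the arm register ends up as the Grover iterate $G^{K}\ket{s}$ with $K\sim\mathrm{Bin}(T,p)$, and the success probability is $\E[\sin^{2}((2K+1)\theta/2)]$. The algorithmic difference is cosmetic: you measure the ancilla after each call and classically control the diffusion, while the paper keeps everything coherent by swapping the flag qubit into a fresh ancilla each round; by deferred measurement the reduced state on the arm register is the identical mixture $\sum_{k}\binom{T}{k}p^{k}(1-p)^{T-k}\ket{G^{k}s}\bra{G^{k}s}$. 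The more substantive difference is in the concentration step. The paper applies Chebyshev to show that $K$ lands with probability at least $1/2$ in the window where $\sin^{2}((2k+1)\theta/2)\geq 1/2$, which yields exactly the stated $1/4$; you instead bound $\E[\sin^{2}]$ directly via $\cos^{2}((K-m_{*})\theta)\geq 1-(K-m_{*})^{2}\theta^{2}$ and the bias--variance identity, obtaining $1-\pi\theta/2-\theta^{2}/4$. Your bound is sharper --- it shows the success probability tends to $1$, not merely that it exceeds $1/4$ --- and the two arguments share the same caveat: each only closes for $N$ above an explicit constant (the paper writes ``for $N$ sufficiently large'' at precisely the analogous steps), so your deferral of small $N$ to direct inspection puts you on equal footing with the published proof.
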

\begin{remark}
Note that up to constant factors we need $\sqrt{N}/p$ queries of which
typically $\sqrt{N}$ queries work, this is the same scaling as 
the usual Grover algorithm. 
As usual, this bound can also  be obtained if $p$ is unknown by iteratively increasing the number of iterations in the algorithm described below.
\end{remark}
The proof can be found in Appendix~\ref{app:speedup}. While this result is simple and not very surprising, it underlines that it will
be difficult to obtain general results showing that no quantum speedup is possible.

\subsection{Optimal Fidelity Bounds and Implications for non-adaptive algorithms}
\label{sec:non_adaptive}
In this section we state optimal fidelity bounds 
when applying the channels $\mc{F}_i^p$ and
$\mc{F}_i^q$ to two states $\rho$ and $\sigma$.
Then we show how this allows us to derive bounds 
for non-adaptive algorithms and suboptimal bounds
for adaptive algorithms. While those results also directly follow from Theorem~\ref{th:main} 
we think it is nevertheless helpful to include those as they motivate the result and clarify the scaling.

We start with  a lemma that controls the fidelity between
applications of the oracle. This result provides a sharp bound that might be of independent interest.
%TODO check if self adjoint is necessary. SEEMS TO BE, YES!
\begin{lemma}\label{le:fidelity}
Assume that $O_i$ is self-adjoint and unitary and acts as in \eqref{eq:def_action_oracle}.
For density matrices $\rho, \sigma$ and $p,q\in [\eta, 1-\eta]$ the following bound holds 
\begin{align}
\sqrt{F}(\mc{F}_i^p(\rho), \mc{F}_i^{q}(\sigma))
\geq \sqrt{F}(\rho, \sigma)  - \frac{(p-q)^2}{\eta(1-\eta)} \sqrt{\tr (P_i\rho) \tr(P_i\sigma)}
\end{align}
where $P_i=\ket{i}\bra{i}\otimes \id$ denotes as before the projection on state $\ket{i}$.
\end{lemma}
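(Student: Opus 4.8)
The plan is to reduce the bound on the two-parameter fidelity $\sqrt{F}(\mc{F}_i^p(\rho),\mc{F}_i^q(\sigma))$ to a one-parameter statement and then to a purely geometric estimate. First I would note that by the triangle-type inequality for the Bures distance (equivalently, monotonicity of fidelity under the trace-preserving channel that mixes $\rho$ and the "oracle" branch), it suffices to handle the case $\rho=\sigma$: indeed
\begin{align}
\sqrt{F}(\mc{F}_i^p(\rho),\mc{F}_i^q(\sigma))
\geq \sqrt{F}(\mc{F}_i^q(\rho),\mc{F}_i^q(\sigma))-\big(1-\sqrt{F}(\mc{F}_i^p(\rho),\mc{F}_i^q(\rho))\big),
\end{align}
using that the Bures angle $A=\arccos\sqrt{F}$ satisfies the triangle inequality. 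The term $\sqrt{F}(\mc{F}_i^q(\rho),\mc{F}_i^q(\sigma))\geq\sqrt{F}(\rho,\sigma)$ by monotonicity under the channel $\mc{F}_i^q$, so the whole problem is to lower-bound $\sqrt{F}(\mc{F}_i^p(\rho),\mc{F}_i^q(\rho))$ for a single state.

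For the single-state estimate I would diagonalise the action of $O_i$. Since $O_i$ is self-adjoint and unitary it has eigenvalues $\pm1$; write $\rho = \rho_{++}+\rho_{+-}+\rho_{-+}+\rho_{--}$ in the corresponding block decomposition, and observe that $O_i$ acts trivially outside $P_i\mc H$, so only the part of $\rho$ supported on $P_i\mc H$ is affected — this is where the factor $\sqrt{\tr(P_i\rho)\tr(P_i\sigma)}$ will enter. Concretely $\mc{F}_i^p(\rho)=\rho - p\,(\rho_{+-}+\rho_{-+}) + p\cdot(-1)(\rho_{+-}+\rho_{-+})$... more carefully, $O_i\rho O_i^\dagger$ flips the sign of the off-diagonal blocks $\rho_{+-},\rho_{-+}$, so $\mc{F}_i^p(\rho) = \rho_{++}+\rho_{--}+(1-2p)(\rho_{+-}+\rho_{-+})$. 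Thus $\mc{F}_i^p(\rho)$ and $\mc{F}_i^q(\rho)$ agree except that the off-diagonal $P_i$-block is scaled by $(1-2p)$ versus $(1-2q)$. I would then apply a fidelity lower bound of the form $\sqrt{F}(\tau_0+\lambda B,\ \tau_0+\mu B)\geq 1 - c(\lambda-\mu)^2$ where $\tau_0$ collects the common part and $B$ is the (trace-zero, Hermitian) off-diagonal block; the constant $c$ comes out as $\|B\|$-type quantity bounded by $\sqrt{\tr(P_i\rho)^2}\le \tr(P_i\rho)$, and $(\lambda-\mu)=2(q-p)$, while the $\eta(1-\eta)$ in the denominator appears because $1-2p,1-2q\in[-(1-2\eta),1-2\eta]$ stay bounded away from $\pm1$, so the perturbed states stay uniformly positive-definite on the relevant block and one can use the explicit two-by-two-type fidelity formula or the bound $\sqrt F(\tau,\tau')\ge \tr\sqrt\tau\sqrt{\tau'} \ge 1-\tfrac12\|\sqrt\tau-\sqrt{\tau'}\|_2^2$ together with operator-monotonicity of the square root controlled on $[\eta(1-\eta),\ldots]$-bounded spectra.

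The main obstacle I expect is making the single-state perturbation bound sharp enough to get exactly $(p-q)^2/(\eta(1-\eta))$ with the clean geometric-mean factor $\sqrt{\tr(P_i\rho)\tr(P_i\sigma)}$ rather than, say, $\tr(P_i\rho)+\tr(P_i\sigma)$ or a worse constant. Getting the geometric mean likely requires not fully reducing to $\rho=\sigma$ but keeping both states and using a purification/Uhlmann argument: pick purifications $\ket{\psi},\ket{\phi}$ of $\rho,\sigma$ achieving the fidelity, apply the Stinespring dilation of $\mc{F}_i^p$ (a controlled-$O_i$ on an ancilla prepared in $\sqrt{1-p}\ket0+\sqrt p\ket1$), and directly estimate the overlap of the resulting purifications; the $\arcsin$/angle bookkeeping then produces the quadratic term with the symmetric $\sqrt{\tr(P_i\rho)\tr(P_i\sigma)}$ weight and the $\eta(1-\eta)$ denominator from $\tfrac{d}{dp}$ of the ancilla state's angle. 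I would carry out this Uhlmann-based computation as the core step, and relegate the elementary trigonometric inequality $\cos(a-b)\ge \cos a\cos b - c(a-b)^2$-type estimates to a short appendix lemma.
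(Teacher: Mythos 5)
Your proposal has a genuine gap at the very first step, and the step you would need to fill it in later is only gestured at. The reduction to $\rho=\sigma$ via the Bures angle is incorrect: the triangle inequality for $A=\arccos\sqrt{F}$ gives
$\sqrt{F}(\tau_1,\tau_3)\ge\cos\bigl(A(\tau_1,\tau_2)+A(\tau_2,\tau_3)\bigr)\ge\sqrt{F}(\tau_2,\tau_3)-A(\tau_1,\tau_2)$,
and $A(\tau_1,\tau_2)\sim\sqrt{1-\sqrt{F}(\tau_1,\tau_2)}$, not $1-\sqrt{F}(\tau_1,\tau_2)$. Your displayed inequality already fails for pure qubit states (take $\tau_2=\ket{0}$, $\tau_3$ nearly orthogonal to $\ket{0}$, and $\tau_1$ a small rotation of $\tau_2$ away from $\tau_3$: the left side drops by $\delta$ while the right side only drops by $\delta^2/2$). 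So this route converts your single-state bound of order $(p-q)^2$ into a loss of order $|p-q|$, which is exactly the suboptimal linear rate the lemma exists to avoid; the block-diagonalisation analysis of $\mc{F}_i^p(\rho)$ versus $\mc{F}_i^q(\rho)$ is therefore moot.

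Your fallback, the Uhlmann/Stinespring computation keeping both states, is indeed the paper's route, but the part you defer to "angle bookkeeping" is the entire content of the proof. If you dilate both channels canonically (ancilla $\sqrt{1-p}\ket{0}+\sqrt{p}\ket{1}$ acting as control for $O_i$), the overlap of the two purifications is $\bigl(\sqrt{(1-p)(1-q)}+\sqrt{pq}\bigr)\langle\p,\psi\rangle$: this gives the correct quadratic rate $(p-q)^2/(4\eta(1-\eta))$ but as a \emph{global} multiplicative loss, with no factor $\sqrt{\tr(P_i\rho)\tr(P_i\sigma)}$ — and that localisation is what the application (summing over arms) actually needs. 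The paper's key move is to purify $\mc{F}_i^q(\sigma)$ \emph{non-canonically}, rotating the ancilla branches by the angle $\alpha$ with $\cos\alpha=\sqrt{(1-p)(1-q)}+\sqrt{pq}$ and $\sin\alpha=\sqrt{(1-p)q}-\sqrt{(1-q)p}$; with that choice the overlap collapses to $\langle\p,\psi\rangle+\sin^2(\alpha)\,\langle(\bar{O}_i-\mathrm{Id})\p,\psi\rangle$, after which $|\langle(\bar{O}_i-\mathrm{Id})\p,\psi\rangle|\le 2\sqrt{\tr(P_i\rho)\tr(P_i\sigma)}$ (since $O_i-\mathrm{Id}$ is supported on the range of $P_i$) and $\sin^2\alpha\le(p-q)^2/(4\eta(1-\eta))$ finish the proof. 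Without identifying this rotated purification and the cancellation it produces, the proposal does not reach the stated bound.
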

The proof of this lemma can be found in Appendix~\ref{app:proof_fid_lemma}.  Note the quadratic scaling in $p-q$. The condition that $p$ and $q$ are away from 0 and 1 is necessary because otherwise the optimal bound only scales with $|p-q|$ (note that the bound in Lemma~\ref{le:fidelity} is vacuous as $\eta\to 0$). These results mirror the classical results about discerning $\mathrm{Ber}(p)$ and $\mathrm{Ber}(q)$ variables.
We state one direct consequence of the previous lemma.
\begin{corollary}\label{co:fid1}
Let $\pp, \pp' \in [\eta, 1-\eta]^N$. Then
\begin{align}
\sqrt{F}(\mc{E}^\pp(\rho),\mc{E}^{\pp'}(\sigma))\geq \sqrt{F}(\rho,\sigma)
-\sum_i \frac{(\pp_i-\pp_i')^2}{2\eta(1-\eta)}\sqrt{\tr(P_i\rho)\tr(P_i\sigma)}.
\end{align}
\end{corollary}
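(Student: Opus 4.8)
The plan is to iterate Lemma~\ref{le:fidelity} across the $N$ factors in the definition \eqref{eq:concat} of $\mc{E}^\pp$, using the telescoping trick that is standard for this kind of estimate. First I would recall that $\mc{E}^\pp = \mc{F}_1^{\pp_1}\circ\cdots\circ\mc{F}_N^{\pp_N}$ and likewise for $\pp'$, and introduce the hybrid states obtained by applying the first $k$ channels with parameters from $\pp$ (resp.\ $\pp'$) and leaving the rest. Concretely, set $\rho_0=\rho$, $\rho_k = \mc{F}_k^{\pp_k}(\rho_{k-1})$, and similarly $\sigma_0=\sigma$, $\sigma_k=\mc{F}_k^{\pp'_k}(\sigma_{k-1})$, so that $\rho_N=\mc{E}^\pp(\rho)$ and $\sigma_N=\mc{E}^{\pp'}(\sigma)$. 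Then I would write the total drop in $\sqrt{F}$ as a telescoping sum,
\begin{align}
\sqrt{F}(\rho,\sigma) - \sqrt{F}(\mc{E}^\pp(\rho),\mc{E}^{\pp'}(\sigma))
= \sum_{k=1}^N \Big( \sqrt{F}(\rho_{k-1},\sigma_{k-1}) - \sqrt{F}(\rho_k,\sigma_k) \Big),
\end{align}
and bound each summand by Lemma~\ref{le:fidelity} applied with $i=k$, $p=\pp_k$, $q=\pp'_k$, acting on the states $\rho_{k-1},\sigma_{k-1}$. This gives
\begin{align}
\sqrt{F}(\rho_{k-1},\sigma_{k-1}) - \sqrt{F}(\rho_k,\sigma_k)
\leq \frac{(\pp_k-\pp'_k)^2}{\eta(1-\eta)} \sqrt{\tr(P_k\rho_{k-1})\,\tr(P_k\sigma_{k-1})}.
\end{align}

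The one point that needs care — and which I expect to be the main (minor) obstacle — is replacing $\tr(P_k\rho_{k-1})$ by $\tr(P_k\rho)$, i.e.\ showing that the marginal weight on arm $\ket{k}$ in the $\mc{H}_A$ register is unaffected by the earlier channels $\mc{F}_1^{\pp_1},\ldots,\mc{F}_{k-1}^{\pp_{k-1}}$. This holds because each $\mc{F}_j^{\pp_j}$ is built from $O_j$, which by \eqref{eq:def_action_oracle} acts only on the $\mc{H}_R$ register and leaves the $\mc{H}_A$ label untouched; hence $O_j$ commutes with every projection $P_k = \ket{k}\bra{k}\otimes\mathrm{Id}$, and so does $\mc{F}_j^{\pp_j}$ as a convex combination of conjugation by $O_j$ and the identity. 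Therefore $\tr(P_k \mc{F}_j^{\pp_j}(\tau)) = \tr(P_k \tau)$ for any $\tau$, and inductively $\tr(P_k\rho_{k-1}) = \tr(P_k\rho)$, $\tr(P_k\sigma_{k-1}) = \tr(P_k\sigma)$. Substituting this into the per-step bound and summing over $k$ yields
\begin{align}
\sqrt{F}(\mc{E}^\pp(\rho),\mc{E}^{\pp'}(\sigma))
\geq \sqrt{F}(\rho,\sigma) - \sum_k \frac{(\pp_k-\pp'_k)^2}{\eta(1-\eta)}\sqrt{\tr(P_k\rho)\,\tr(P_k\sigma)},
\end{align}
which is the claimed inequality up to the factor $2$ in the denominator. (The stated constant $2\eta(1-\eta)$ is in fact weaker than what this argument gives, so the bound follows a fortiori; alternatively it matches if one is slightly more careful in Lemma~\ref{le:fidelity}, but that is immaterial here.) I would close by noting that the hypothesis $\pp,\pp'\in[\eta,1-\eta]^N$ is exactly what is needed to invoke Lemma~\ref{le:fidelity} at every step, since all parameters $\pp_k,\pp'_k$ lie in $[\eta,1-\eta]$.
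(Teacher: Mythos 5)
Your proposal is correct and follows essentially the same route as the paper: the paper also applies Lemma~\ref{le:fidelity} inductively to the composition \eqref{eq:concat}, after first checking that $\tr(P_i\,\mc{F}_j^p(\tau))=\tr(P_i\tau)$ via $[P_i,O_j]=0$ and unitarity of $O_j$, exactly as in your telescoping argument. One correction to your parenthetical remark about the constant: you have the comparison backwards. A denominator of $2\eta(1-\eta)$ makes the error term \emph{smaller}, so the corollary as stated is \emph{stronger} than what you get by iterating Lemma~\ref{le:fidelity} with the constant $\eta(1-\eta)$ quoted in its statement; it does not follow ``a fortiori.'' Your fallback is the right resolution, though: the proof of Lemma~\ref{le:fidelity} in Appendix~\ref{app:proof_fid_lemma} actually establishes the per-step loss
\begin{align}
\sqrt{F}(\mc{F}_i^p(\rho),\mc{F}_i^q(\sigma))\geq \sqrt{F}(\rho,\sigma)-\frac{(p-q)^2}{2\eta(1-\eta)}\sqrt{\tr(P_i\rho)\tr(P_i\sigma)},
\end{align}
i.e.\ with the sharper constant, and it is this version that the corollary implicitly uses. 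With that substitution your argument reproduces the claimed bound exactly.
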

\begin{proof}
We note that $[P_i,O_j]=0$ for $i\neq j$ and as $O_j$ is unitary we get
\begin{align}
\tr ( P_i \mc{F}_j^p(\rho))
=
\tr( P_i ((1-p)\rho+p O_j\rho O_j^\dagger))
=
\tr( P_i ((1-p)\rho+p \rho O_j^\dagger O_j))=
\tr(P_i \rho).
\end{align}
Then  Lemma~\ref{le:fidelity} can be applied inductively to the relation \eqref{eq:concat}
to obtain the claim. 
\end{proof}
From this result we conclude that any non-adaptive algorithm requires the same amount of oracle queries as the best classical algorithm. 
\begin{corollary}\label{co:fid2}
Assume that $\pp^j$ are as introduced at the beginning of
this section with $p_i\in [\eta,1-\eta]$ for
some $\eta>0$.
Fix a density matrix $\rho$.
We have access to $m$ copies of the state $\mc{E}(\rho)$
and it is known that $\mc{E}$ is as in \eqref{eq:oracle_channel}
where the mean reward vector is in 
$\{\pp^0,\ldots,\pp^N\}$. If the best arm of $\mc{E}$ can be identified
 with probability at least $1-\delta$  for some $\delta< 1/2$ 
then
\begin{align}
m\geq \frac{\eta(1-\eta)(1-2\sqrt{\delta(1-\delta)})}{16} H(\pp)
= \Omega(H(\pp)).
\end{align}
\end{corollary}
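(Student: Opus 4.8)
The plan is to combine the standard multiplicativity of the fidelity under tensor products with the sharp single-query bound of Corollary~\ref{co:fid1}. Write $\rho_m^j\defeq\mc{E}^{\pp^j}(\rho)^{\otimes m}$ for the state the algorithm receives when the underlying oracle has mean-reward vector $\pp^j$; by \eqref{eq:rel_oracles} the one-time oracle \eqref{eq:oracle_channel} with this vector acts as the channel $\mc{E}^{\pp^j}$, and the bit-flip $O_i$ is self-adjoint and unitary, so Lemma~\ref{le:fidelity} applies. Since $F$ is multiplicative on tensor products, $\sqrt{F}(\rho_m^0,\rho_m^i)=\big(\sqrt{F}(\mc{E}^{\pp^0}(\rho),\mc{E}^{\pp^i}(\rho))\big)^{m}$. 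The vectors $\pp^0$ and $\pp^i$ agree in every coordinate but the $i$-th, where they differ by $\Delta_i$, so Corollary~\ref{co:fid1} with $\sigma=\rho$ (using $\sqrt{F}(\rho,\rho)=1$) gives $\sqrt{F}(\mc{E}^{\pp^0}(\rho),\mc{E}^{\pp^i}(\rho))\geq 1-\tfrac{\Delta_i^2}{2\eta(1-\eta)}\tr(P_i\rho)$, and Bernoulli's inequality then yields
\begin{align}
\sqrt{F}(\rho_m^0,\rho_m^i)\ \geq\ 1-\frac{m\,\Delta_i^2}{2\eta(1-\eta)}\,\tr(P_i\rho)
\end{align}
(the degenerate case in which the single-query bound is already vacuous is handled separately and easily, since then $\tr(P_i\rho)$ is so large that the inequality below is automatic).

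The second ingredient turns the success guarantee into an \emph{upper} bound on the same quantity. For $i\geq 2$ the best arm of $\pp^0$ is arm $1$ whereas the best arm of $\pp^i$ is arm $i\neq 1$; hence the two-outcome measurement ``is the returned arm equal to $1$?'' accepts with probability $\geq 1-\delta$ on $\rho_m^0$ and with probability $\leq\delta$ on $\rho_m^i$, so $T(\rho_m^0,\rho_m^i)\geq 1-2\delta$. Combined with the Fuchs--van de Graaf bound $T\leq\sqrt{1-F}$ this gives $F(\rho_m^0,\rho_m^i)\leq 4\delta(1-\delta)$, i.e.\ $\sqrt{F}(\rho_m^0,\rho_m^i)\leq 2\sqrt{\delta(1-\delta)}$. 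Comparing the two bounds on $\sqrt{F}(\rho_m^0,\rho_m^i)$ yields, for each $i\in\{2,\dots,N\}$,
\begin{align}
\tr(P_i\rho)\ \geq\ \frac{2\eta(1-\eta)\,\big(1-2\sqrt{\delta(1-\delta)}\big)}{m\,\Delta_i^2}.
\end{align}

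To finish I would sum over $i=2,\dots,N$. The $P_i$ are orthogonal projections with $\sum_{i=1}^{N}P_i=\mathrm{Id}$, so $\sum_{i=2}^{N}\tr(P_i\rho)\leq\tr\rho=1$, whence $m\geq 2\eta(1-\eta)\big(1-2\sqrt{\delta(1-\delta)}\big)\sum_{i=2}^{N}\Delta_i^{-2}$. It remains to observe that $\sum_{i=2}^{N}\Delta_i^{-2}=\Omega(H(\pp))$: the assumption $p_0-p_1=p_1-p_2$ forces $\Delta_i$ to lie within a factor $2$ of $p_1-p_i$ for $i\geq2$, and together with Lemma~\ref{le:reward} this pins the sum to $H(\pp)$ up to an absolute constant. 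Collecting numerical factors gives $m=\Omega(H(\pp))$ with the stated explicit prefactor.

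The only step that needs genuine care is the opening reduction. One has to confirm that ``$m$ copies of $\mc{E}(\rho)$'' is indeed the product state $\mc{E}^{\pp^j}(\rho)^{\otimes m}$ (this is where non-adaptivity enters), and that an arbitrary best-arm-identification routine, free to measure the $m$ copies jointly and with extra ancillas, can do no better at separating $\pp^0$ from $\pp^i$ than the optimal distinguisher of those two states, so that the global trace-distance and fidelity inequalities above are legitimate. Everything afterwards is bookkeeping; the essential analytic input is the \emph{quadratic} dependence on $\Delta_i$ in Lemma~\ref{le:fidelity}/Corollary~\ref{co:fid1}, which is precisely what recovers the classical rate $H(\pp)$ rather than $\sqrt{H(\pp)}$.
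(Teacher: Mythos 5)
Your proposal is correct and follows essentially the same route as the paper's own proof: multiplicativity of the fidelity over the $m$ product copies, the single-query bound of Corollary~\ref{co:fid1} with $\sigma=\rho$ plus Bernoulli's inequality, the trace-distance/Fuchs--van de Graaf upper bound $2\sqrt{\delta(1-\delta)}$, summation over $i$ via $\sum_i\tr(P_i\rho)=1$, and finally $\Delta_i\leq 2(p_1-p_i)$ to recover $H(\pp)$. The only differences are bookkeeping of constants (you track the factor $2\eta(1-\eta)$ slightly more tightly than the paper does).
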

% Let us remark that it is not clear that this scaling is optimal in general, because any optimal classical algorithm
% needs to adaptively query arms of interest and discard arms with small mean reward. 
% TODO what did the previous mean? Seems to be not helpful
The short proof of this result can be found in Appendix~\ref{sec:co_fid2proof}.

We can similarly derive a suboptimal bound for adaptive algorithms.
Here we lose a $\sqrt{N}$ factor compared to the tight result stated in Theorem~\ref{th:main}.

The reason that the techniques used in the proof of Theorem~\ref{th:faulty_grover} do not provide optimal lower bounds in the more general setting is that 
the argument uses in an essential way that one of the density matrices is pure.
Indeed, we show that the state of the other oracle  decoheres with respect to this pure reference state.
 In the setting here, both density matrices are highly mixed, so it is more subtle to formalize their decoherence. 
\begin{corollary}\label{co:fid3}
Assume that $\pp^j$ are as introduced at the beginning of Section~\ref{sec:classical} with $p_i\in [\eta,1-\eta]$ for
some $\eta>0$.
Any quantum algorithm that identifies the best arm
when it is known that the reward vector is in 
$\{\pp^0,\ldots,\pp^N\}$ with probability at least $1-\delta$  for some $\delta< \tfrac12$ 
requires at least 
\begin{align}
T\geq \frac{\eta(1-\eta)(1-2\sqrt{\delta(1-\delta)})}{16\sqrt{N}}H(\pp)
= \Omega(H(\pp)/\sqrt{N})
\end{align}
calls to the oracle $\mc{E}_i(\rho)=\mc{E}^{\pp^i}(\rho)$. 
\end{corollary}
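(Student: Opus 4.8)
The plan is to run an adaptive version of the fidelity-tracking argument from Theorem~\ref{th:faulty_grover}, but now comparing the state generated under the reference oracle $\mc{E}_0=\mc{E}^{\pp^0}$ with the states generated under each alternative $\mc{E}_i=\mc{E}^{\pp^i}$, $i\geq 1$. Consider an adaptive algorithm making $T$ queries: write $\rho_t^j$ (resp.\ $\tilde\rho_t^j$) for the state after the $t$-th oracle call (resp.\ after the subsequent algorithm unitary) when the true oracle is $\mc{E}_j$, with $\rho_0^j=\rho$ the common initial state. Since the algorithm unitaries are fixed and common to all $j$, and since each call to $\mc{E}_i$ differs from the corresponding call to $\mc{E}_0$ only by replacing $\mc{F}_i^{p_i}$ by $\mc{F}_i^{p_0}$ in the composition \eqref{eq:concat} while leaving all other factors $\mc{F}_k^{p_k}$ untouched, I would like to telescope the loss of $\sqrt{F}(\rho_t^i,\rho_t^0)$ over the $T$ steps.

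First I would establish the per-step bound. At step $t$, applying $\mc{E}_i=\mc{F}_1^{p_1}\circ\cdots\circ\mc{F}_i^{p_0}\circ\cdots\circ\mc{F}_n^{p_n}$ to $\tilde\rho_{t-1}^i$ and $\mc{E}_0=\mc{F}_1^{p_1}\circ\cdots\circ\mc{F}_i^{p_i}\circ\cdots\circ\mc{F}_n^{p_n}$ to $\tilde\rho_{t-1}^0$: using that fidelity is non-decreasing under the common channels $\mc{F}_k^{p_k}$ ($k\neq i$), monotonicity of the trace marginals $\tr(P_i\,\cdot\,)$ under those channels (as in the proof of Corollary~\ref{co:fid1}), and Lemma~\ref{le:fidelity} for the single differing factor, I get
\begin{align}
\sqrt{F}(\rho_t^i,\rho_t^0)\geq \sqrt{F}(\tilde\rho_{t-1}^i,\tilde\rho_{t-1}^0)-\frac{\Delta_1^2}{\eta(1-\eta)}\sqrt{\tr(P_i\tilde\rho_{t-1}^i)\,\tr(P_i\tilde\rho_{t-1}^0)},
\end{align}
and $\sqrt{F}$ is invariant under the algorithm unitary $\Phi_{U_t}$. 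Summing over $t=1,\dots,T$ and using $\sqrt{F}(\rho_0^i,\rho_0^0)=1$ gives
\begin{align}
1-\sqrt{F}(\rho_T^i,\rho_T^0)\leq \frac{\Delta_1^2}{\eta(1-\eta)}\sum_{t=1}^T\sqrt{\tr(P_i\tilde\rho_{t-1}^i)\,\tr(P_i\tilde\rho_{t-1}^0)}.
\end{align}
Now I would sum over $i\geq 1$ and apply Cauchy–Schwarz twice (once in $i$, once in $t$) to bound $\sum_{i,t}\sqrt{\tr(P_i\tilde\rho_{t-1}^i)\tr(P_i\tilde\rho_{t-1}^0)}\leq \big(\sum_{i,t}\tr(P_i\tilde\rho_{t-1}^i)\big)^{1/2}\big(\sum_{i,t}\tr(P_i\tilde\rho_{t-1}^0)\big)^{1/2}\leq\sqrt{NT}\cdot\sqrt{T}=\sqrt{N}\,T$, since $\sum_i \tr(P_i\rho)\le 1$ for any state. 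Here, since $\Delta_1\le\Delta_i$ for the equally-spaced construction and $\Delta_1^{-2}\ge \tfrac12 H(\pp^0)/N\cdot(\text{const})$ — more precisely using $H(\pp)=\Theta(N\Delta_1^{-2})$ for this vector and $H(\pp^i)=\Theta(H(\pp^0))$ — I can replace the bound cleanly: I expect $\sum_{i\ge1}(1-\sqrt{F}(\rho_T^i,\rho_T^0))\leq \frac{\Delta_1^2}{\eta(1-\eta)}\sqrt{N}\,T$. On the other hand, success of the algorithm with probability $1-\delta$ forces, via the trace-distance/fidelity inequalities in Appendix~\ref{app:distance}, $\sqrt{F}(\rho_T^i,\rho_T^0)\le 2\sqrt{\delta(1-\delta)}$ for every $i\ge1$, so the left side is at least $N(1-2\sqrt{\delta(1-\delta)})$. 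Combining, $T\geq \frac{\eta(1-\eta)(1-2\sqrt{\delta(1-\delta)})}{\Delta_1^2\sqrt{N}}\cdot N\cdot\frac{1}{?}$; after substituting $N\Delta_1^{-2}\asymp H(\pp)$ and tracking the constants from Lemma~\ref{le:reward}, this yields $T=\Omega(H(\pp)/\sqrt{N})$ with the stated constant $\tfrac{\eta(1-\eta)(1-2\sqrt{\delta(1-\delta)})}{16\sqrt N}$.

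The main obstacle I anticipate is not the telescoping itself but the bookkeeping to land exactly on the constant $1/16$ and the factor $H(\pp)$ rather than $N\Delta_1^{-2}$: one must use that the equally-spaced construction gives $H(\pp^0)=\sum_{i\ge1}\Delta_i^{-2}\le N\Delta_1^{-2}$, so $\Delta_1^{-2}\ge H(\pp)/N$ in the direction needed, and invoke Lemma~\ref{le:reward} to pass between $H(\pp^0)$ and $H(\pp^i)$ — the same $\tfrac14,2$ factors that appear there are presumably what ultimately produces the $16$ in the denominator once combined with the factor of $2$ that may appear in the Cauchy–Schwarz step or in the $\sqrt F$-vs-$F$ conversion. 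A secondary subtlety worth stating carefully is the monotonicity of $\tr(P_i\,\cdot\,)$ along the composition: I would verify, exactly as in the proof of Corollary~\ref{co:fid1}, that $\tr(P_i\mc{F}_k^{p}(\rho))=\tr(P_i\rho)$ for $k\neq i$ and that applying $\mc{F}_i^{p}$ itself does not increase $\tr(P_i\cdot)$ beyond what is needed (in fact it preserves it too, since $O_i$ is unitary), so the marginal $\tr(P_i\tilde\rho_{t-1}^j)$ is the relevant controllable quantity at every step. With those two points in hand the proof is a direct adaptive analogue of Theorem~\ref{th:faulty_grover}, with the extra $\sqrt N$ loss coming precisely from the Cauchy–Schwarz step in the index $i$ — which is exactly the step one would need to avoid (and cannot, with this technique) to reach the tight bound of Theorem~\ref{th:main}.
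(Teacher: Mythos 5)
Your overall strategy is exactly the paper's: telescope the fidelity loss over the $T$ oracle calls using Lemma~\ref{le:fidelity} (in the form of Corollary~\ref{co:fid1}), note that $\tr(P_i\,\cdot\,)$ is preserved by the commuting factors $\mc{F}_k^{p_k}$, invoke the Helstrom/trace-distance inequality to force $\sqrt{F}(\rho_T^i,\rho_T^0)\le 2\sqrt{\delta(1-\delta)}$, and then apply Cauchy--Schwarz over $(i,t)$ with $\sum_i\tr(P_i\rho)\le 1$ to pick up the $\sqrt{NT}\cdot\sqrt{T}$ factor. That is the right skeleton, and the identification of the $\sqrt N$ loss with the Cauchy--Schwarz step in $i$ is also the paper's own diagnosis.

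There is, however, a genuine gap in how you aggregate over the arms. The vectors $\pp^i$ and $\pp^0$ differ only in entry $i$, where the gap is $p_0-p_i=\Delta_i$, so Lemma~\ref{le:fidelity} gives a per-step loss proportional to $\Delta_i^2$, \emph{not} $\Delta_1^2$. Since $\Delta_1\le\Delta_i$, writing $\Delta_1^2$ claims a smaller loss than the lemma actually guarantees, i.e.\ your per-step inequality is false for $i>1$ whenever the gaps are not all equal. With the correct coefficient, summing the raw inequalities over $i$ leaves you with $\sum_{i,t}\Delta_i^2\sqrt{\tr(P_i\tilde\rho_{t-1}^i)\tr(P_i\tilde\rho_{t-1}^0)}$, from which no common factor $\Delta_1^2$ can be extracted; bounding $\Delta_i^2$ by its maximum instead produces a lower bound on $T$ that is \emph{weaker} than $H(\pp)/\sqrt N$ in general. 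Your fallback $H(\pp)\le N\Delta_1^{-2}$ only closes the loop in the special case of (near-)equal gaps. The fix — and what the paper does — is to divide each arm's inequality
\begin{align}
1-2\sqrt{\delta(1-\delta)}\;\le\;\frac{4\Delta_i^2}{\eta(1-\eta)}\sum_{t}\sqrt{\tr(P_i\rho_t^i)\,\tr(P_i\rho_t^0)}
\end{align}
by its own $\Delta_i^2$ \emph{before} summing over $i$, so that $\sum_{i\ge 2}\Delta_i^{-2}=H(\pp)$ appears directly on the left-hand side and the right-hand side is exactly the quantity to which the double Cauchy--Schwarz applies. No detour through $N\Delta_1^{-2}$ (or through Lemma~\ref{le:reward} for the constant) is needed at this point.
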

\begin{proof}
The proof is close to the proof of Corollary~\ref{co:fid2}.
We assume we are given any algorithm 
$ (\mc{E}_i\otimes \id)\circ \mc{E}_{U_T}
\circ \ldots \circ (\mc{E}_i\otimes \id)\circ \mc{E}_{U_1}$
where $U_i$ are arbitrary unitary maps.
We denote the  state using the oracle $i$ before the 
$t$-th invocation of the oracle by $\rho_t^i$.
Using the invariance of the fidelity under unitary maps and Corollary~\ref{co:fid1}
we bound
\begin{align}
\sqrt{F}(\rho_T^i,\rho_T^0)
\geq 1 - \sum_t 
 \frac{4\Delta_i^2}{\eta(1-\eta)}\sqrt{\tr(P_i\rho_t^i)\tr(P_i\rho_t^0)}.
\end{align}

As the algorithm can discern the oracles $\mc{E}_i$ and $\mc{E}_0$ with probability at least $1-\delta$ we have the bound
\begin{align}
1-2\delta \leq
T( \rho^i_T ,\rho^0_T)
\leq \sqrt{1-F(\rho^i_T ,\rho^0_T)}.
\end{align}
We conclude that 
\begin{align}\label{eq:hellstroem_fidelity_main}
2\sqrt{\delta(1-\delta)}\geq 
\sqrt{F}(\rho^i_T ,\rho^0_T)\geq 1 - 
\sum_t 
 \frac{4\Delta_i^2}{\eta(1-\eta)}\sqrt{\tr(P_i\rho_t^i)\tr(P_i\rho_t^0)}.
\end{align}
Thus we get using $\sum_{i} \tr(P_i \rho_0^t)=1$ 
and $\tr(P_i \rho_t^i)\leq 1$
\begin{align}
\begin{split}
\frac{\eta(1-\eta)(1-2\sqrt{\delta(1-\delta)})}{4}\sum_{i\geq 2} \Delta_i^{-2}
\leq \sum_{t, i} \sqrt{\tr(P_i\rho_t^i)\tr(P_i\rho_t^0)}
\\
\qquad \qquad\leq \left( \sum_{t,i} \tr(P_i \rho_t^i)\right)^{\frac12}
\left( \sum_{t,i} \tr(P_i \rho_t^0)\right)^{\frac12}
\leq \sqrt{NT} \sqrt{T}.
\end{split}
\end{align}
Reorganizing this we obtain the claimed result.
\end{proof}

\subsection{Coupling Arguments and Density Matrix Decompositions}
\label{sec:main_proof}
Roughly, we have seen so far two ingredients to prove lower bounds. First, we considered in Section~\ref{sec:channel_oracles} the relation between 
purity and fidelity and used this to show that no $\sqrt{N}$
Grover type speedup is possible. Then, in Section~\ref{sec:non_adaptive}
we derived optimal bounds of the fidelity loss 
giving the right scaling in $\Delta_i^{-2}$ but which is not
sufficient to exclude the Grover search speedup, see Corollary~\ref{co:fid3}.
Now, we show how those two lines can be combined to give optimal bounds. 
Let us denote by $\rho_t^j$ the state of the algorithm after $t$ steps when the oracle $\mc{E}^j$ is used.
As in Section~\ref{sec:channel_oracles} we would like to 
use that $\rho_t^j$ decoheres with respect to $\rho_t^0$.
However, this is more difficult to formalize as $\rho_t^0$ now also is a highly mixed state, and
we did not succeed in finding a suitable quantity
that captures this. Thus, we decompose $\rho_t^0$
into a mixture of pure states (not its spectral decomposition, but we split each pure state in two whenever applying an oracle $\mc{F}_i^p$). We also define a corresponding decomposition of $\rho_t^j$ 
so that the reasoning of Section~\ref{sec:channel_oracles}
can be applied term by term.
Thus, our proof relies on a coupling argument, a technique that  is standard in the theory of stochastic processes but not so much in quantum information theory. For an overview of this technique, we refer to \cite{coupling}.

To implement this, we need a
strengthened version of Lemma~\ref{le:fidelity} which 
not only has the optimal scaling but in addition 
provides a decomposition $\rho = p_0\rho_0+p_1\rho_1$
and $\sigma=q_0\sigma_0 +q_1\sigma_1$
such that the lower bound also applies to 
$\sqrt{p_0q_0}\sqrt{F}(\rho_0,\sigma_0)+\sqrt{p_1q_1}\sqrt{F}(\rho_1,\sigma_1)$ (which lower bounds the fidelity of the
mixture). Let us state the key lemma.
\begin{lemma}\label{le:coupling_const}
Consider a density matrix $\rho$ and a pure state $\psi$. Consider a unitary and self-adjoint map 
$U$ and the channel $\mc{E}_U^p$ defined by 
$\mc{E}_U^p(\rho)= pU\rho U^\dagger+(1-p)\rho$.
Let $0<\eta<1/2$ and $p,q\in[\eta, 1-\eta]$.
Then there are density matrices $\rho_0=\rho$ and $\rho_1=U\rho U^\dagger$ and pure states $\psi_0$ and $\psi_1$ and a real number $q'$ such that
\begin{align}
    \mc{E}_U^p(\rho)=p\rho_1+(1-p)\rho_0,
    \quad 
    \mc{E}_U^q(\ket{\psi}\bra{\psi})=
q'\ket{\psi_1}\bra{\psi_1}+(1-q')\ket{\psi_0}\bra{\psi_0}
\end{align}
and the following bound holds.
Let $S= \sqrt{F}(\rho, \ket{\psi}\bra{\psi})=\sqrt{\bra{\psi}\rho\ket{\psi}}$
denote the initial fidelity.
Then we have
\begin{align}
\begin{split}
\sqrt{F}\Big(\mc{E}_U^p(\rho)&,\mc{E}_U^q(\ket{\psi}\bra{\psi})\Big)
\geq
\sqrt{(1-p)(1-q')}\sqrt{F}(\rho_0,\ket{\psi_0}\bra{\psi_0})
+ \sqrt{pq'}\sqrt{F}(\rho_1, \ket{\psi_1}\bra{\psi_1})
\\
&\geq S
- \frac{(p-q)^2 |(\psi,( U\rho U^\dagger - \rho)\psi)|}{2\eta S}
- \frac{(p-q)^2 |\mathrm{Re}(\psi, U\rho \psi)-(\psi,\rho\psi)|^2}{8\eta^2S^3}.
\end{split}
\end{align}
\end{lemma}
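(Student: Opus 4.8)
The plan is to prove the two displayed inequalities in turn. The first is a direct consequence of the strong concavity of the root fidelity: for density matrices $\rho_j,\sigma_j$ and probability weights $\{a_j\},\{b_j\}$ one has $\sqrt F\big(\sum_j a_j\rho_j,\sum_j b_j\sigma_j\big)\ge\sum_j\sqrt{a_jb_j}\,\sqrt F(\rho_j,\sigma_j)$. The channel $\mc{E}_U^p(\rho)=(1-p)\rho_0+p\rho_1$ already has this form with $\rho_0=\rho$ and $\rho_1=U\rho U^\dagger$, so once a decomposition $\mc{E}_U^q(\ket\psi\bra\psi)=(1-q')\ket{\psi_0}\bra{\psi_0}+q'\ket{\psi_1}\bra{\psi_1}$ into two pure states has been produced, strong concavity yields the first inequality, using $\sqrt F(\rho_j,\ket{\psi_j}\bra{\psi_j})=\sqrt{\bra{\psi_j}\rho_j\ket{\psi_j}}$. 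Hence everything rests on choosing this pure-state decomposition well and then estimating the resulting lower bound.

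Since $\mc{E}_U^q(\ket\psi\bra\psi)=(1-q)\ket\psi\bra\psi+q\,U\ket\psi\bra\psi U^\dagger$ is supported on the space $V$ spanned by $\psi$ and $U\psi$, which has dimension at most two and which the self-adjoint unitary $U$ maps to itself, the construction is effectively two-dimensional. I would pick an orthonormal basis $\{\ket0=\ket\psi,\ket1\}$ of $V$ with $U\ket\psi=c\ket0+s\ket1$, $c=(\psi,U\psi)\in\R$, $s=\sqrt{1-c^2}\ge0$ (the case $s=0$ is immediate, since then $\mc{E}_U^q(\ket\psi\bra\psi)=\ket\psi\bra\psi$), and record the eigenvalues $\lambda_\pm=\tfrac12\big(1\pm\sqrt{1-4q(1-q)s^2}\big)$ of $\tau:=\mc{E}_U^q(\ket\psi\bra\psi)$. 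The pure-state decompositions of $\tau$ are parametrised by the unit vector $\psi_0\in V$: it determines $q'$ via $1-q'=\big(\bra{\psi_0}\tau^{-1}\ket{\psi_0}\big)^{-1}$ and then a unique $\psi_1$, and as $\psi_0$ ranges over $V$ the weight $q'$ sweeps the interval $[\lambda_-,\lambda_+]$. The ``trivial'' choice $\psi_0=\psi$, $\psi_1=U\psi$, $q'=q$ makes the right-hand side of the first inequality equal to $\big(\sqrt{(1-p)(1-q)}+\sqrt{pq}\big)S$, which undershoots $S$ by about $\tfrac{(p-q)^2}{8p(1-p)}S$; the job of the construction is to recover this deficit.

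Concretely, I would keep $q'$ as close to $p$ as the constraint $q'\in[\lambda_-,\lambda_+]$ allows, and among the admissible $\psi_0$ realising that weight take one obtained from $\psi$ by a rotation inside $V$ of angle of order $|p-q|$, in the direction that increases $\bra{\psi_0}\rho\ket{\psi_0}$. The first-order coefficient of $\bra{\psi_0}\rho\ket{\psi_0}$ in that rotation is the off-diagonal element $\bra1\rho\ket0$, which is exactly what carries the dependence on $\mathrm{Re}(\psi,U\rho\psi)-(\psi,\rho\psi)$, while the quantity $(\psi,U\rho U^\dagger\psi)-(\psi,\rho\psi)=\bra{U\psi}\rho\ket{U\psi}-\bra\psi\rho\ket\psi$ measures the mismatch between the two diagonal contributions. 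Writing the right-hand side of the first inequality as a function $g(\epsilon)$ of the rotation angle $\epsilon$, I would Taylor-expand it to second order about $\epsilon=0$, bound $\sqrt{a+x}$ from below by $\sqrt a+\tfrac{x}{2\sqrt a}$ minus a quadratic correction controlled by $x^2 a^{-3/2}$, and replace every $p$- and $q$-dependent factor by a power of $\eta$ using $p(1-p),\,q(1-q)\ge\eta(1-\eta)\ge\tfrac\eta2$. Optimising the resulting quadratic lower bound over $\epsilon$ converts the base deficit into a gain of order $g'(0)^2/(2|g''(0)|)$, and after collecting terms the remaining loss is at most $\tfrac{(p-q)^2\,|(\psi,U\rho U^\dagger\psi)-(\psi,\rho\psi)|}{2\eta S}+\tfrac{(p-q)^2\,|\mathrm{Re}(\psi,U\rho\psi)-(\psi,\rho\psi)|^2}{8\eta^2S^3}$, the second term inheriting the $S^{-3}$ from the quadratic Taylor correction.

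The main obstacle is this perturbative estimate. Normalising $\psi_0$ reintroduces factors of $c$ that must be absorbed without destroying the clean $\mathrm{Re}(\psi,U\rho\psi)-(\psi,\rho\psi)$ dependence, so the rotation direction and the bookkeeping order of the expansion have to be chosen with care. More seriously, one must treat the regime where $p\notin[\lambda_-,\lambda_+]$ and $q'$ is clipped to an endpoint: here one first checks that clipping forces $s^2>2\eta$, and then argues that in that situation at least one of $(\psi,U\rho U^\dagger\psi)-(\psi,\rho\psi)$ and $\mathrm{Re}(\psi,U\rho\psi)-(\psi,\rho\psi)$ is already of the order (in $s$) needed for the stated error terms to absorb the full loss of the trivial decomposition. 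This is a quantitative case analysis, consistent with the test cases $\rho=\tfrac12\,\mathrm{Id}_V$ and $\rho$ pure, but one that genuinely needs the constants $\tfrac1{2\eta}$ and $\tfrac1{8\eta^2}$ to come out correctly.
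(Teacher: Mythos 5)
Your first inequality is handled exactly as in the paper: strong concavity of $\sqrt F$ applied to the decomposition $\mc{E}_U^p(\rho)=(1-p)\rho_0+p\rho_1$, $\mc{E}_U^q(\ket\psi\bra\psi)=(1-q')\ket{\psi_0}\bra{\psi_0}+q'\ket{\psi_1}\bra{\psi_1}$, together with $\sqrt F(\rho_j,\ket{\psi_j}\bra{\psi_j})=\sqrt{\bra{\psi_j}\rho_j\ket{\psi_j}}$. Your overall picture of the second inequality is also the right one: the trivial decomposition $\psi_0=\psi$, $\psi_1=U\psi$, $q'=q$ undershoots by $(1-\sqrt{(1-p)(1-q)}-\sqrt{pq})S=O((p-q)^2)S$, and the deficit is recovered by rotating the two-element pure decomposition inside $\mathrm{span}\{\psi,U\psi\}$ by an angle of order $|p-q|$. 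The paper does this with the \emph{explicit} angle $\cos\alpha=\sqrt{pq}+\sqrt{(1-p)(1-q)}$, $\sin\alpha=\sqrt{(1-p)q}-\sqrt{(1-q)p}$, i.e.\ $\bar\psi_0=\sqrt{1-q}\cos\alpha\,\psi+\sqrt q\sin\alpha\,U\psi$, $\bar\psi_1=\sqrt q\cos\alpha\,U\psi-\sqrt{1-q}\sin\alpha\,\psi$ and $q'=\lVert\bar\psi_1\rVert^2$. Because $q'$ is \emph{defined} as the norm of $\bar\psi_1$ rather than prescribed in advance, the decomposition is automatically valid; your entire clipping analysis for $p\notin[\lambda_-,\lambda_+]$ is an artifact of insisting on $q'\approx p$, which is neither necessary nor (as the example $\rho=\ket\psi\bra\psi$, $U\psi\perp\psi$ shows, where $q'=q$ is exactly optimal) the right target.

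The genuine gap is in your estimate of the optimum. Writing $W=\mathrm{Re}(\psi,U\rho\psi)=S^2+E$ and parametrizing by the rotation angle $\epsilon$, one computes $g'(0)=\frac{W}{S}\bigl(\sqrt{(1-p)q}-\sqrt{p(1-q)}\bigr)=\frac{W}{S}\sin\alpha$, while the deficit to be recovered is $(1-\cos\alpha)S\geq\tfrac12\sin^2\alpha\,S$. Your proposed lower bound $g(0)+g'(0)^2/(2|g''|)$ therefore falls short of $S$ by a quantity containing $\sin^2\alpha\,\bigl(S-\tfrac{(S^2+E)^2}{|g''|S^2}\bigr)$, which for $E<0$ and $|g''|\approx S$ is of order $\sin^2\alpha\,|E|/S$ --- \emph{linear} in $|E|=|\mathrm{Re}(\psi,U\rho\psi)-(\psi,\rho\psi)|$. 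The lemma requires this contribution to be \emph{quadratic} in $E$, and the paper's remark after the lemma makes clear that the extra square is exactly what makes the term controllable in the proof of Theorem~\ref{th:main}; a linear-in-$|E|$ version would not suffice there. The quadratic order arises from an exact cancellation of the linear-in-$E$ terms between the two branches of the decomposition, which holds only at the specific angle above thanks to the identities $\sqrt{1-p}(\sqrt{1-q}\cos\alpha+\sqrt q\sin\alpha)=1-p$ and $\sqrt p(\sqrt q\cos\alpha-\sqrt{1-q}\sin\alpha)=p$; this cancellation is invisible to a first-and-second-derivative model of the \emph{summed} objective at $\epsilon=0$. To close the argument you must either exhibit that specific decomposition and expand each branch separately (the paper does this with the elementary bound $\sqrt{1+s+t}\geq1-|s|+\tfrac t2-\tfrac{t^2}2$, putting the $D=(\psi,(U\rho U^\dagger-\rho)\psi)$ part into $s$ and the $E$ part into $t$ so that the linear $t$-terms cancel in the sum), or carry out your optimization while tracking the $E$-dependence of the optimizer and of the curvature --- the plan as written does not deliver the stated error terms.
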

\begin{remark}
Let us give some explanation regarding this lemma.
\begin{enumerate}
\item Note that for $p=q$ we recover the result that quantum channels can only increase the fidelity for our specific channel. 
\item The problem that this lemma solves is that we need a bound on the loss in fidelity that has, firstly, the optimal quadratic rate in $(p-q)$, secondly, the bound needs  to have a form that allows to exploit
the specific structure of the oracles $O_i$ which act non-trivially only on a small subspace, and, thirdly, we later want to use that the density matrices
$\rho$ decohere with respect to the state $\psi$. The last point will become clearer in the proof of Theorem~\ref{th:main} in Appendix~\ref{app:main}, but note that
the expression $(\psi,( U\rho U^\dagger - \rho)\psi)$ appeared already in the proof of Theorem~\ref{th:faulty_grover} (see Equation \eqref{eq:fid_pur_relation}) which indicates that similar arguments can be applied. We remark that Lemma~\ref{le:fidelity}
above already satisfied the first two requirements, but the lack of the third requirement allowed us to only show the suboptimal bound in Corollary~\ref{co:fid3}. It is also quite straightforward to satisfy the second and the third requirement with the suboptimal rate $|p-q|$. But this also gives only a suboptimal bound. 
\item The second error term does not require all desiderata outlined above, but it is of higher order (note the extra square) which is sufficient to control it. 
\end{enumerate}
\end{remark}
The proof of this lemma can be found in Appendix~\ref{app:coupling}, there we also state a slight generalization that is used in the proof of Theorem~\ref{th:main} in Appendix~\ref{app:main}.

\section{Discussion}
In this work we investigated quantum algorithms for multi-armed bandit problems.
It was shown earlier in \cite{multi_armed_quantum} that  quantum algorithms for best arm identification with fixed confidence can have 
 a quadratic speedup compared to their classical counterparts. 
 This result is based on the assumption that the arms and the randomness of the rewards of the arms can be both queried in superposition.
These assumptions are reasonable in the setting of empirical risk minimization, 
 where we can evaluate loss values in superposition.
 However,
 there are many settings, e.g., motivated by quantum sensing, where it might not be possible to query the internal randomness of the bandits in superposition. Instead, every pull of the lever returns a single random reward.
We then show that in this setting no speedup compared to classical algorithms is possible. 

This highlights that classical randomness pose a major challenge for quantum algorithms.
In our case the randomness of the rewards even prevent Grover type speedups of the search problem that one would naively expect to arise from the search part of the multi-armed bandit problem.  Note that such a speedup is possible in the intermediate regime that we considered. When we can select the state of the internal randomness of the oracle (but not query  it in superposition) the statistical complexity of the problem remains the same, but
we can search through the arms faster, providing some speedup.

There are many open questions related to this work and we will briefly mention two. Firstly, 
classical randomness appears frequently in different settings. This has been studied a lot in the context of noise channels but not so much in other contexts, e.g., machine learning.

Secondly, our proofs proceed by directly controlling the fidelity between the quantum states when invoking different oracles. From a methodological side, it
would be interesting to see to what degree 
the well-known strategies to lower bound the query complexity
 like the polynomial  \cite{polynomial_lower}  or the adversarial method \cite{adversary} extend to non-unitary oracles.

%\bibliographystyle{amsplain}
%\bibliography{/home/simon/Dropbox/PostdocProjects/ml.bib}
%\bibliography{./../../../ml.bib}
\bibliography{./ml.bib}

\appendix
\section{Overview of the Appendix}
In this appendix we collect all the proofs of the results in the paper. We start with a brief review of distance measure of quantum states in Section~\ref{app:distance} where we list some well-known results on distance measures of quantum states for reference. Then, in Section~\ref{app:auxiliary},  we collect some simple
auxiliary results that will be used in the proofs later on. 

The following sections of this appendix then roughly
follow the outline given in Section~\ref{sec:overview} in the main text and add the missing proofs.
In particular, in Appendix~\ref{app:channel_oracles}
we give the complete proof of Theorem~\ref{th:faulty_grover},
in Appendix~\ref{app:proof_fid_lemma} we first prove 
a simpler version of Lemma~\ref{le:fidelity} stated in Lemma~\ref{le:fidelity_simple}, 
and then give the proofs of
Lemma~\ref{le:fidelity} and Corollary~\ref{co:fid1}. In Appendix~\ref{app:coupling} we prove Lemma~\ref{le:coupling_const}
and provide a small technical extension that is used in the proof of the main result. 
In Appendix~\ref{sec:classical} we give a complete quantum inspired proof of the lower bound for classical bandits. This proof clarifies some of the choices in the proof of Theorem~\ref{th:main} which can be found along with the more precise statement of the result in Theorem~\ref{th:optimal} in Appendix~\ref{app:main}.
Finally, the Appendices~\ref{app:lower_emp}, \ref{app:speedup} and \ref{app:reusable}
contain the proofs of Theorem~\ref{th:lower_emp}, Theorem~\ref{th:speedup} and
Theorem~\ref{th:reusable} respectively. The proof of Theorem~\ref{th:lower_emp} shares some ideas with,
e.g., the quantum inspired proof for the classical bandits, the other two proofs use different ideas than the remaining results of this paper and are independent of the other sections.
\section{A Brief Review of Distance Measures for Quantum States}\label{app:distance}
For the convenience of the reader we give a brief review of distance measures for quantum states.
Textbooks on quantum computation, e.g., \cite{NC00} discuss this thoroughly.
For a review on fidelities we refer to \cite{review_fidelity}. 
We consider the trace distance which is defined by
\begin{align}
T(\rho, \sigma) = \frac12 \lVert \rho-\sigma\rVert_{\tr}
\end{align}
where the norm indicates the trace norm defined by $\lVert A\rVert_{\tr}=\tr(\sqrt{A^\dagger A})$.
It has the property that for any POVM $\{E_i\}$ the outcome probabilities
\begin{align}
p_i=\tr(E_i\rho),\quad q_i=\tr(E_i\sigma)
\end{align} 
the total variation distance between the probability vectors $p_i$ and $q_i$ satisfy
\begin{align}
\frac12 \sum_i |p_i-q_i|\leq T(\rho,\sigma).
\end{align}
The Helmstrom measurement gives the optimal discrimination probability of two states
and has success probability
\begin{align}\label{eq:trace_success}
p_{\mathrm{success}}=\frac12 + \frac12 T(\rho, \sigma). 
\end{align}
For many applications the fidelity is a more useful distance measure to obtain optimal 
bounds. It is defined
by
\begin{align}
\sqrt{F}(\rho,\sigma) = \tr(\sqrt{\rho^{\frac12}\sigma\rho^{\frac12}})
= \lVert\rho^{\frac12}\sigma^{\frac12}\rVert_{\tr}.
\end{align}
Some authors instead call the square of this expression the fidelity, and to clarify our convention we added the square root. As suggested by the notation we set $F=\sqrt{F}^2$. 
We collect some properties of the fidelity that we will use frequently.
\begin{enumerate}
\item For a density matrix $\rho$ and a pure state $\psi$ the fidelity is given by
\begin{align}
\sqrt{F}(\ket{\psi}\bra{\psi},\rho)=\sqrt{\langle\psi,\rho\psi\rangle}.
\end{align}
\item For any density matrices $\rho$, $\sigma$ and a quantum channel $\mc{E}$ 
the following bound holds
\begin{align}
\sqrt{F}(\rho,\sigma)\leq \sqrt{F}(\mc{E}(\rho),\mc{E}(\sigma)).
\end{align}
\item If the quantum channel $\mc{E}$ acts by a unitary matrix, i.e., 
$\mc{E}(\rho)=U\rho U^\dagger$ then
\begin{align}
\sqrt{F}(\rho,\sigma)  = \sqrt{F}(\mc{E}(\rho),\mc{E}(\sigma)).
\end{align}
\item The fidelity is strongly concave 
\begin{align}\label{eq:fidelity_strong_concave}
\sqrt{F}\left(\sum_i p_i\rho_i,\sum_i q_i\sigma_i\right)\geq 
\sum_i \sqrt{p_iq_i} \sqrt{F}(\rho_i, \sigma_i).	
\end{align}
This directly implies concavity
\begin{align}
\sqrt{F}\left(\sum_i p_i\rho_i,\sum_i p_i\sigma_i\right)\geq 
\sum_i p_i \sqrt{F}(\rho_i, \sigma_i).	
\end{align}
\item Fidelity and trace distance are related by 
\begin{align}\label{eq:trace_fidelity}
1-\sqrt{F}(\rho,\sigma)\leq T(\rho,\sigma)\leq \sqrt{1-F(\rho,\sigma)}.
\end{align}
\end{enumerate}
Those properties can be proved using Uhlmann's Theorem which states that
\begin{align}
\sqrt{F}(\rho,\sigma)=\max_{\p, \psi} \langle \p, \psi\rangle
\end{align}
where the maximum is over all purifications $\psi$ and $\p$ of $\rho$ and $\sigma$,
respectively. We use this result to bound the fidelity change of certain quantum operations (see Lemma~\ref{le:fidelity} and \ref{le:coupling_const}).

\section{Auxiliary Lemmas}\label{app:auxiliary}
Here we include simple, mostly algebraic lemmas that are used in
the proof of Theorem~\ref{th:faulty_grover} and in the proofs in
 Appendix~\ref{app:proof_fid_lemma}. 
 The first lemma is a simple Cauchy-Schwarz estimate that in addition exploits invariant subspaces of an operator $O$. It is used in the proof of Theorem~\ref{th:faulty_grover}.
 Recall our convention that $\tr(\rho)^2=\tr \rho\rho$.
\begin{lemma}\label{le:projection}
Let $O$ be a unitary operator and $P$  a self-adjoint orthogonal projections such that $(\id-P)O=\id-P$, i.e., $O$ acts trivially on the orthogonal complement of the projection  $P$. Then, for any vector $\ket{\p}$ and density matrix $\sigma$
the bound 
\begin{align}
\left|\bra{\p}\left(O\sigma O^\dagger-\sigma\right)\ket{ \p}\right|
\leq
2\lVert P\p\rVert  \lVert \p\rVert \left(\tr\left(O\sigma O^\dagger - \sigma\right)^2\right)^{\frac12}
\end{align}
holds.
\end{lemma}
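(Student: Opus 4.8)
The plan is to exploit that $O$ acts trivially on the range of $1-P$, which forces the $(1-P)$-block of the Hermitian operator $O\sigma O^\dagger - \sigma$ to vanish; after that, the estimate is just Cauchy--Schwarz together with the elementary bound (operator norm)~$\le$~(Hilbert--Schmidt norm).

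First I would set $M = O\sigma O^\dagger - \sigma$, which is self-adjoint because $O$ is unitary, and establish the block identity
\[
(1-P)\,M\,(1-P) = 0 .
\]
Indeed, the hypothesis $(1-P)O = 1-P$ gives, on taking adjoints and using $P = P^\dagger$, that also $O^\dagger(1-P) = 1-P$; hence
\[
(1-P)\,O\sigma O^\dagger\,(1-P) = (1-P)\,\sigma\,O^\dagger(1-P) = (1-P)\,\sigma\,(1-P),
\]
which exactly cancels the contribution of $-\sigma$ to $(1-P)M(1-P)$.

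Next I would split $\varphi = \varphi_1 + \varphi_0$ with $\varphi_1 = P\varphi$ and $\varphi_0 = (1-P)\varphi$. Since $\langle\varphi_0, M\varphi_0\rangle = \langle\varphi,(1-P)M(1-P)\varphi\rangle = 0$, expanding the quadratic form leaves only two terms,
\[
\langle\varphi, M\varphi\rangle = \langle\varphi_1, M\varphi\rangle + \langle\varphi_0, M\varphi_1\rangle .
\]
Applying Cauchy--Schwarz to each, together with $\lVert\varphi_0\rVert \le \lVert\varphi\rVert$ and $\lVert M\psi\rVert \le \lVert M\rVert_{\mathrm{op}}\lVert\psi\rVert$, yields
\[
\left|\langle\varphi, M\varphi\rangle\right| \le \lVert\varphi_1\rVert\,\lVert M\varphi\rVert + \lVert\varphi_0\rVert\,\lVert M\varphi_1\rVert \le 2\,\lVert\varphi_1\rVert\,\lVert\varphi\rVert\,\lVert M\rVert_{\mathrm{op}} .
\]
Finally, since $M$ is self-adjoint its operator norm is controlled by its Hilbert--Schmidt norm: if $\lambda_j$ are the eigenvalues of $M$ then $\lVert M\rVert_{\mathrm{op}}^2 = \max_j \lambda_j^2 \le \sum_j \lambda_j^2 = \tr\!\left(M^2\right)$, so $\lVert M\rVert_{\mathrm{op}} \le \left(\tr\left(O\sigma O^\dagger - \sigma\right)^2\right)^{1/2}$; recalling $\lVert\varphi_1\rVert = \lVert P\varphi\rVert$ gives the claim.

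I do not expect a genuine obstacle here: the whole content is the block identity $(1-P)M(1-P) = 0$, which is precisely the place where the invariant-subspace hypothesis on $O$ is used, and the remaining steps are routine manipulations with inner products and standard operator-norm inequalities.
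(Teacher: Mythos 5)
Your proof is correct and follows essentially the same route as the paper's: both arguments hinge on the vanishing of the $(1-P)$-block of $O\sigma O^\dagger-\sigma$ (the paper writes this as the operator identity $\sigma-O\sigma O^\dagger=(\sigma-O\sigma O^\dagger)P+P(\sigma-O\sigma O^\dagger)(1-P)$, you express it through the quadratic form) and then conclude by Cauchy--Schwarz with the Hilbert--Schmidt norm. Your detour through $\lVert M\rVert_{\mathrm{op}}\leq\lVert M\rVert_{\mathrm{HS}}$ is equivalent to the paper's direct use of Cauchy--Schwarz for the Hilbert--Schmidt inner product, so there is nothing to flag.
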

\begin{proof}
We define $Q=\id-P$. Then we have $QO=Q$.
By assumption, we can decompose
\begin{align}
\begin{split}\label{eq:projection_transform}
\sigma - O {\sigma}O^\dagger
&=
(P + Q)(\sigma- O\sigma O^\dagger)(P + Q)
\\
&=
(\sigma - O \sigma O^\dagger)P
+P (\sigma - O \sigma O^\dagger)Q
\end{split}
\end{align} 
where we used $QO\sigma O^\dagger Q= Q\sigma Q$.
Using Cauchy-Schwarz for the Hilbert-Schmidt scalar product we can bound for $M=M^\dagger$
\begin{align}\label{eq:scalar_product_by_HS}
\left|\bra{ \p_1} M \ket{\p_2}\rangle \right|
= \left|\tr( \ket{\p_2}\bra{\p_1} M)\right|
\leq \left( \tr (\ket{\p_2}\bra{\p_1}\ket{\p_1}\bra{\p_2})\tr M^2 \right)^{\tfrac12}
=\lVert \p_1\rVert \cdot\lVert \p_2\rVert \left(\tr M^2\right)^{\tfrac12}.
\end{align}
Using \eqref{eq:projection_transform} and \eqref{eq:scalar_product_by_HS} we can continue to estimate
\begin{align}
\begin{split}
\left|\bra{\p}\ket{\left(O\sigma O^\dagger-\sigma\right) \p}\right|
&\leq 
\left(\lVert P\p\rVert\,  \lVert \p\rVert
+ \lVert P\p\rVert \, \lVert Q\p\rVert\right)
 \left(\tr\left(O\sigma O^\dagger - \sigma\right)^2\right)^{\frac12}
\\
&\leq
 2 \lVert P\p\rVert\cdot \lVert \p \rVert\left(\tr\left(O\sigma O^\dagger - \sigma\right)^2\right)^{\frac12}.
\end{split}
\end{align}
This ends the proof.
\end{proof} 
 The next lemma states two simple algebraic bounds.
\begin{lemma}\label{le:trivial_bound}
For $p,q\in [c,1-c]$ the following bounds hold
\begin{align}\label{eq:bound_cos}
\sqrt{(1-p)(1-q)}+\sqrt{pq}&\geq 1-\frac{|p-q|^2}{4c(1-c)},
\\
\label{eq:bound_sin}
\left|\sqrt{(1-p)q}-\sqrt{(1-q)p}\right|&\leq 
\frac{|p-q|}
{2\sqrt{c(1-c)}}.
\end{align}
\end{lemma}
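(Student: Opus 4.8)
The plan is to reduce both estimates to elementary trigonometry. Since $p,q\in[c,1-c]\subseteq[0,1]$, first write $p=\sin^2\alpha$ and $q=\sin^2\beta$ with $\alpha,\beta\in[0,\pi/2]$. Because $\sin$ and $\cos$ are nonnegative on $[0,\pi/2]$, the angle–addition formulas give
\begin{align}
\sqrt{(1-p)(1-q)}+\sqrt{pq}&=\cos\alpha\cos\beta+\sin\alpha\sin\beta=\cos(\alpha-\beta),\\
\sqrt{(1-p)q}-\sqrt{(1-q)p}&=\cos\alpha\sin\beta-\sin\alpha\cos\beta=-\sin(\alpha-\beta).
\end{align}
Thus it suffices to control the quantity $\alpha-\beta$ in terms of $p-q$.

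The key step is the bound $|\alpha-\beta|\le \frac{|p-q|}{2\sqrt{c(1-c)}}$. Assume without loss of generality $\alpha\ge\beta$ (equality being trivial), so $p\ge q$ and, by monotonicity of $\sin^2$ on $[0,\pi/2]$, $\sin^2 t\in[q,p]\subseteq[c,1-c]$ for every $t\in[\beta,\alpha]$. Hence, setting $u=\sin^2 t$, we have $\sin^2 t\,\cos^2 t=u(1-u)\ge c(1-c)$, because $u\mapsto u(1-u)$ is concave on $[c,1-c]$ and therefore minimized at the endpoints. Consequently $2\sin t\cos t\ge 2\sqrt{c(1-c)}$ for $t\in[\beta,\alpha]$, and the mean value theorem gives a point $\xi\in(\beta,\alpha)$ with $p-q=\sin^2\alpha-\sin^2\beta=2\sin\xi\cos\xi\,(\alpha-\beta)\ge 2\sqrt{c(1-c)}\,(\alpha-\beta)$, which is the claimed estimate.

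Now \eqref{eq:bound_sin} follows at once from $|\sin x|\le|x|$, namely $|\sqrt{(1-p)q}-\sqrt{(1-q)p}|=|\sin(\alpha-\beta)|\le|\alpha-\beta|\le\frac{|p-q|}{2\sqrt{c(1-c)}}$; and \eqref{eq:bound_cos} follows from $1-\cos x=2\sin^2(x/2)\le x^2/2$, which yields $\cos(\alpha-\beta)\ge 1-\tfrac12(\alpha-\beta)^2\ge 1-\frac{(p-q)^2}{8c(1-c)}\ge 1-\frac{(p-q)^2}{4c(1-c)}$. There is no real obstacle here; the one substantive ingredient is the interval constraint $p,q\in[c,1-c]$, which is precisely what keeps $\sin t\cos t$ bounded away from zero along the path from $\beta$ to $\alpha$. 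Without it the factor $c(1-c)$ degenerates and the left-hand sides can only be bounded linearly in $|p-q|$, mirroring the comment made after Lemma~\ref{le:fidelity}.
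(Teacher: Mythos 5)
Your proof is correct, and it takes a genuinely different route from the paper's. The paper proves \eqref{eq:bound_sin} first, purely algebraically: it writes $\lvert\sqrt{x}-\sqrt{y}\rvert\le |x-y|/(\sqrt{x}+\sqrt{y})$ and lower-bounds the denominator by AM--GM; it then obtains \eqref{eq:bound_cos} from the exact identity
\begin{align}
\left(\sqrt{(1-p)(1-q)}+\sqrt{pq}\right)^2+\left(\sqrt{(1-p)q}-\sqrt{(1-q)p}\right)^2=1,
\end{align}
i.e.\ the fact that the two quantities are the cosine and sine of a common angle, combined with $\sqrt{1-s^2}\ge 1-s^2$ and the already-proved sine bound. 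You instead make the trigonometric substitution $p=\sin^2\alpha$, $q=\sin^2\beta$ explicit from the outset and control $|\alpha-\beta|$ by the mean value theorem, using that $\sin t\cos t=\sqrt{u(1-u)}\ge\sqrt{c(1-c)}$ along the whole path --- the same place where the interval hypothesis enters in the paper's AM--GM step. Your route requires a little calculus where the paper's is purely algebraic, but it makes the geometric picture (which the paper only mentions in passing in the remark after the restatement of the coupling lemma) the engine of the proof, and it even yields the slightly sharper constant $8c(1-c)$ in the denominator of \eqref{eq:bound_cos}, which of course implies the stated bound. All steps check out: the monotonicity of $\sin^2$ on $[0,\pi/2]$ guarantees $\sin^2 t\in[q,p]\subseteq[c,1-c]$ for $t$ between $\beta$ and $\alpha$, and the concavity argument for $u(1-u)$ is valid.
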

\begin{proof}%[Proof of Lemma~\ref{le:trivial_bound}]
We first consider the second inequality.
We note that $|\sqrt{x}-\sqrt{y}|\leq |x-y|/(\sqrt{x}+\sqrt{y})$ and
thus
\begin{align}
\left|\sqrt{(1-p)q}-\sqrt{(1-q)p}\right|\leq \frac{|p-q|}{\sqrt{p(1-q)}+\sqrt{q(1-p)}}
\leq \frac{|p-q|}{2\sqrt[4]{p(1-q)q(1-p)}}
\leq \frac{|p-q|}{2\sqrt{c(1-c)}}
\end{align}
where we used the arithmetic geometric mean inequality in the middle step.
To prove the first bound, we note that 
\begin{align}
\left(\sqrt{(1-p)(1-q)}+\sqrt{pq}\right)^2
+
\left(\sqrt{(1-p)q}-\sqrt{(1-q)p}\right)^2
=1
\end{align}
This implies
\begin{align}
\begin{split}
\sqrt{(1-p)(1-q)}+\sqrt{pq}&=
\sqrt{1-\left(\sqrt{(1-p)q}-\sqrt{(1-q)p}\right)^2}
\\
&\geq 1-\left(\sqrt{(1-p)q}-\sqrt{(1-q)p}\right)^2
\geq 1-\frac{|p-q|^2}{4c(1-c)}.
\end{split}
\end{align}
\end{proof}

The following simple lemma provides a lower bound on the square root
that is used in the proof of Lemma~\ref{le:coupling_const} below.
\begin{lemma}\label{le:sqrt}
Let $s, t $ be real numbers such that $s+t\geq -1$.
Then the bound 
\begin{align}
\sqrt{1+s+t}\geq 1-|s|+\frac{t}{2}-\frac{t^2}{2}
\end{align}
holds.
\end{lemma}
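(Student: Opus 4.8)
The plan is to reduce this two-variable estimate to one elementary one-variable inequality for the square root, and then to dispose of the cross term by a short algebraic comparison.

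The building block is: for every real $y\ge -1$,
\begin{align}
\sqrt{1+y}\ge 1+\tfrac{y}{2}-\tfrac{y^2}{2}. \tag{$\star$}
\end{align}
Indeed, if the right-hand side is $\le 0$ this is trivial since $\sqrt{1+y}\ge 0$; otherwise both sides are non-negative, and squaring and substituting $x=1+y\ge 0$ turns it (after rearranging) into $x^3-6x^2+9x-4\le 0$, i.e.\ $(x-1)^2(x-4)\le 0$, which holds on the range $x\le 3$ at which we have arrived (in fact for all $x\le 4$).

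The case $s\ge 0$ is then immediate: if $t<-1$ we have $1-|s|+\tfrac{t}{2}-\tfrac{t^2}{2}\le 1+\tfrac{t}{2}-\tfrac{t^2}{2}<0$ and there is nothing to prove, while if $t\ge -1$ then by $(\star)$ and monotonicity of the root $\sqrt{1+s+t}\ge\sqrt{1+t}\ge 1+\tfrac{t}{2}-\tfrac{t^2}{2}\ge 1-|s|+\tfrac{t}{2}-\tfrac{t^2}{2}$. So we may assume $s\le 0$, whence $-|s|=s$. Applying $(\star)$ with $y=s+t$ gives $\sqrt{1+s+t}\ge 1+\tfrac{s+t}{2}-\tfrac{(s+t)^2}{2}$, and the claim reduces to the purely algebraic inequality $1+\tfrac{s+t}{2}-\tfrac{(s+t)^2}{2}\ge 1+s+\tfrac{t}{2}-\tfrac{t^2}{2}$, which after simplification is $-s\,(1+s+2t)\ge 0$. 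Since $s\le 0$, this holds as soon as $s+2t\ge -1$, and when $t\ge 0$ the latter is immediate from the hypothesis $s+t\ge -1$; so the lemma is already proved whenever $t\ge 0$.

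The main obstacle is the remaining regime $s\le 0$, $t<0$, where $(\star)$ applied to $y=s+t$ is too crude (its quadratic correction $\tfrac{(s+t)^2}{2}$ overshoots the target correction $\tfrac{t^2}{2}$). Here I would square the original inequality directly, after once more disposing of the easy case where its right-hand side is $\le 0$. Writing $s=-u$ with $0\le u\le 1+t$, the squared inequality becomes, after cancellation, $u\,(1-u+t-t^2)+\tfrac{t^2}{4}(3+2t-t^2)\ge 0$. The second summand is non-negative since $3+2t-t^2=(1+t)(3-t)\ge 0$ on the relevant range of $t$, while the first must be controlled using $u\le 1+t$, i.e.\ $1+s+t\ge 0$; the delicate point is the behaviour near the boundary $s+t=-1$, where both sides of the lemma are small and the remaining margin is tight. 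Pinning down the sign of this low-degree polynomial on the admissible region is the technical heart of the argument: elementary, but it is precisely this boundary bookkeeping that the proof has to get right.
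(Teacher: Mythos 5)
Your cases $s\ge 0$ and $s\le 0,\ t\ge 0$ are correct and essentially coincide with the paper's argument (the paper likewise establishes $(\star)$ first and then splits on the signs of $s$ and $t$). The genuine gap is the case you leave open, $s\le 0,\ t<0$. Your reduction of the squared inequality to
\begin{align}
u\bigl(1-u+t-t^2\bigr)+\tfrac{t^2}{4}\bigl(3+2t-t^2\bigr)\;\ge\;0,
\qquad u=-s\in[0,1+t],
\end{align}
is algebraically correct, but this polynomial is \emph{not} non-negative on the admissible region, and no amount of boundary bookkeeping will close the case: at $s=t=-\tfrac12$ (so $s+t=-1$, $u=1+t=\tfrac12$) the left-hand side of the lemma is $\sqrt{0}=0$ while the right-hand side is $1-\tfrac12-\tfrac14-\tfrac18=\tfrac18>0$; equivalently your polynomial evaluates to $-\tfrac18+\tfrac{7}{64}=-\tfrac{1}{64}<0$. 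So the lemma as stated is false, and your instinct that the margin disappears near the boundary $s+t=-1$ with both variables negative was exactly right — that is where the inequality fails.

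For comparison, the paper's own treatment of this case contains an algebraic slip: it asserts $1+\tfrac{t+s}{2}-\tfrac{(t+s)^2}{2}=1+\tfrac{t}{2}-\tfrac{t^2}{2}+s\,\tfrac{1-t-s}{2}$, whereas expanding $(t+s)^2$ correctly gives the factor $\tfrac{1-2t-s}{2}$; the concluding step then requires $s+2t\ge-1$, which does not follow from $s+t\ge-1$ when $t<0$. A correct variant, provable by exactly your (and the paper's) route, is $\sqrt{1+s+t}\ge 1-2|s|+\tfrac{t}{2}-\tfrac{t^2}{2}$: in the problematic case one has $s+2t\ge -2-s\ge -3$, which is precisely what the doubled coefficient of $|s|$ demands, and the weaker constant only degrades the constants in Lemma~\ref{le:coupling_const} and hence $c(\delta,\eta)$, not the asymptotic statement. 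In short, your proof is correct and complete wherever the lemma is true, and the one case you could not finish is unfinishable as stated.
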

\begin{proof}
First, we note that elementary manipulations show that for all $t\in \R$
the bound 
\begin{align}\label{eq:sq_root}
\sqrt{\max(1+t, 0)}\geq 1+\frac{t}{2}-\frac{t^2}{2}
\end{align}
holds. We now consider $s>0$. In this case, we can conclude
\begin{align}
\sqrt{1+s+t}\geq \sqrt{\max(1+t, 0)}\geq 1+\frac{t}{2}-\frac{t^2}{2}
\geq 1+\frac{t}{2}-\frac{t^2}{2} - |s|.
\end{align}
Note that $x-y=(\sqrt{x}-\sqrt{y})(\sqrt{x}+\sqrt{y})>(\sqrt{x}-\sqrt{y})$
if $x>1$ and $x>y$.
This implies for $s<0$ and $t\geq 0$ that
\begin{align}
\sqrt{1+t+s}\geq \sqrt{1+t} - |s| \geq  1+\frac{t}{2}-\frac{t^2}{2} - |s|.
\end{align}
Finally, we consider the case $t,s<0$ where we get from \eqref{eq:sq_root}
\begin{align}
\sqrt{1+t+s}
\geq 1 +\frac{t+s}{2} -\frac{(t+s)^2}{2}
=1+\frac{t}{2} -\frac{t^2}{2} +s\left(\frac{1-t-s}{2}\right)
\geq 1+\frac{t}{2} -\frac{t^2}{2} -|s|
\end{align}
using $-t-s\leq 1$ in the last step.
\end{proof}
We also state a simple fact on the relation of partial trace and operators.
\begin{lemma}\label{le:partial_operator}
Let $\rho$ be an operator on the system $Q\otimes R$. Let $O$ be a linear operator on $Q$.
Then
\begin{align}
\tr_R ((O\otimes \id)\rho)= O \tr_R(\rho).
\end{align}
\end{lemma}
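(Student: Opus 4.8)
This is a standard property of the partial trace, so the plan is simply to reduce it to the case of product operators, where it is transparent, and then extend by linearity. In the finite-dimensional setting of the paper every linear operator on $Q\otimes R$ is a finite linear combination of operators of the form $A\otimes B$ with $A$ acting on $Q$ and $B$ acting on $R$; moreover both maps $\rho\mapsto\tr_R\big((O\otimes\mathrm{Id})\rho\big)$ and $\rho\mapsto O\,\tr_R(\rho)$ are linear in $\rho$. Hence it suffices to check the identity for $\rho=A\otimes B$.

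For such a $\rho$ the verification is immediate: $(O\otimes\mathrm{Id})(A\otimes B)=(OA)\otimes B$, and by the definition of the partial trace $\tr_R\big((OA)\otimes B\big)=\tr(B)\,(OA)=O\big(\tr(B)\,A\big)=O\,\tr_R(A\otimes B)$, which is exactly the right-hand side; summing over a product decomposition of a general $\rho$ gives the claim. Alternatively, to avoid invoking such a decomposition, one can compute matrix elements directly: fixing orthonormal bases $\{\ket{q}\}$ of $Q$ and $\{\ket{r}\}$ of $R$ and using $\tr_R(M)=\sum_r(\mathrm{Id}\otimes\bra{r})\,M\,(\mathrm{Id}\otimes\ket{r})$, one finds $\bra{q}\tr_R\big((O\otimes\mathrm{Id})\rho\big)\ket{q'}=\sum_{r}\bra{q,r}(O\otimes\mathrm{Id})\rho\ket{q',r}=\sum_{q'',r}\bra{q}O\ket{q''}\bra{q'',r}\rho\ket{q',r}=\sum_{q''}\bra{q}O\ket{q''}\bra{q''}\tr_R(\rho)\ket{q'}=\bra{q}O\,\tr_R(\rho)\ket{q'}$, the only nontrivial manipulation being the interchange of two finite sums (and the fact that $O\otimes\mathrm{Id}$ acts as the identity on the $R$ factor).

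There is essentially no obstacle here; the one point worth flagging is that nothing in the argument uses any property of $O$ beyond linearity — in particular $O$ need be neither self-adjoint nor a density matrix — so the statement holds for an arbitrary linear operator $O$, which is precisely what the later applications require (e.g.\ with $O$ a projection $P_i$, or a difference of unitarily conjugated density matrices).
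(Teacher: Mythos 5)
Your proposal is correct and follows exactly the paper's argument: reduce to product operators $\rho = A\otimes B$ by linearity and verify the identity there via $\tr_R((OA)\otimes B)=\tr(B)\,OA$. The alternative matrix-element computation you add is a fine sanity check but not needed; the paper stops at the product-operator case.
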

\begin{proof}
By linearity it is sufficient to consider $\rho=S\otimes T$. But then
\begin{align}
\tr_R ((O\otimes \id)\rho)= \tr_R (OS\otimes T)
= \tr(T) OS=O\tr_R(\rho).
\end{align}
\end{proof}
Finally, we state an elementary result about the reward vectors $\pp^j$.
\begin{lemma}\label{le:reward}
Let $\pp^j\in [0,1]^N$ be reward vectors as defined at the beginning of Section~\ref{sec:overview}, i.e., there is $\pp=(p_0,\ldots, p_N)\in [0,1]^{N+1}$
with $p_0>p_1>p_2\geq \ldots\geq p_N$ and then $\pp^j_i=p_i$ for $1\leq i\neq j$ and $\pp^j_j=p_0$. Assume that 
$p_0-p_1=p_1-p_2$ and let $\Delta_i=p_0-p_i$.
Then for all $1\leq j\leq N$
\begin{align}
   \frac14H(\pp^0)\leq  H(\pp^j)\leq 2H(\pp^0)
\end{align}
\end{lemma}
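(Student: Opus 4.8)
The plan is to rewrite both $H(\pp^0)$ and $H(\pp^j)$ entirely in terms of the gaps $\Delta_i=p_0-p_i$ and the auxiliary quantity $S:=\sum_{i\geq 2}\Delta_i^{-2}$, and then to trap every quantity between multiples of $S$. The key structural input is the hypothesis $p_0-p_1=p_1-p_2$, which forces $\Delta_2=2\Delta_1$, hence $\Delta_1^{-2}=4\Delta_2^{-2}\leq 4S$ (since $\Delta_2^{-2}$ is one of the summands of $S$). For $\pp^j$ with $j\geq 1$ the best arm is the one carrying reward $p_0$, so straight from the definition $H(\pp^1)=\sum_{i\geq 2}\Delta_i^{-2}=S$, while for $j\geq 2$ one has $H(\pp^j)=\Delta_1^{-2}+\sum_{i\geq 2,\,i\neq j}\Delta_i^{-2}=\Delta_1^{-2}+S-\Delta_j^{-2}$.

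The remaining ingredient is a two-sided comparison of $H(\pp^0)$ with $S$. For $\pp^0$ the best arm is arm $1$, so $H(\pp^0)=\sum_{i\geq 2}(p_1-p_i)^{-2}$, and for $i\geq 2$ I would write $p_1-p_i=(p_0-p_i)-(p_0-p_1)=\Delta_i-\Delta_1$. Since $p_i\leq p_2$ gives $\Delta_i\geq\Delta_2=2\Delta_1$, i.e.\ $\Delta_1\leq\Delta_i/2$, this yields $\tfrac{\Delta_i}{2}\leq p_1-p_i\leq\Delta_i$, so $\Delta_i^{-2}\leq(p_1-p_i)^{-2}\leq 4\Delta_i^{-2}$; summing over $i\geq 2$ gives $S\leq H(\pp^0)\leq 4S$. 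Keeping in addition only the $i=2$ term, where $p_1-p_2=\Delta_1$ exactly, gives the extra bound $H(\pp^0)\geq\Delta_1^{-2}$.

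Then I would just assemble the estimate. For the lower bound $\tfrac14 H(\pp^0)\leq H(\pp^j)$: from $H(\pp^0)\leq 4S$ we get $\tfrac14 H(\pp^0)\leq S$, and $H(\pp^j)\geq S$ in every case — with equality for $j=1$, and for $j\geq 2$ because $H(\pp^j)=S+(\Delta_1^{-2}-\Delta_j^{-2})$ with $\Delta_j\geq\Delta_2>\Delta_1$, making the correction nonnegative. For the upper bound $H(\pp^j)\leq 2H(\pp^0)$: for $j=1$, $H(\pp^1)=S\leq H(\pp^0)$; for $j\geq 2$, $H(\pp^j)\leq\Delta_1^{-2}+S\leq H(\pp^0)+H(\pp^0)=2H(\pp^0)$, using $H(\pp^0)\geq\Delta_1^{-2}$ and $H(\pp^0)\geq S$ respectively. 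There is no genuine obstacle; the one point that needs care is not to be wasteful in bounding $H(\pp^0)$ from below — using only the uniform pair $S\leq H(\pp^0)\leq 4S$ together with $\Delta_1^{-2}\leq 4S$ would only give $H(\pp^j)\leq 5S\leq 5H(\pp^0)$, losing the constant $2$, so one must also exploit the exact identity $p_1-p_2=\Delta_1$ to extract $H(\pp^0)\geq\Delta_1^{-2}$.
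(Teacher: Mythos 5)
Your proof is correct and follows essentially the same route as the paper's: both arguments reduce to the term-by-term comparison $\Delta_i/2\leq p_1-p_i\leq\Delta_i$ for $i\geq 2$ (equivalently, the paper's rewriting $p_0-p_i=p_1-p_i+\Delta_1$ with $\Delta_1\leq p_1-p_i$), use the exact identity $p_1-p_2=\Delta_1$ to absorb the extra $\Delta_1^{-2}$ term into $H(\pp^0)$, and use $\Delta_j\geq\Delta_1$ to handle the swap of the $i=1$ and $i=j$ indices. Your bookkeeping via the auxiliary sum $S=\sum_{i\geq 2}\Delta_i^{-2}$ is just a slightly more explicit organization of the same estimates.
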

\begin{proof}
We note  that  for $j\geq 1$
 \begin{align}
 \begin{split}\label{eq:Hp1}
 H(\pp^j)&=\sum_{i\neq j} (p_0-p_i)^{-2}\leq \sum_{i\geq 1} \Delta_i^{-2}
 = (p_0-p_1)^{-2} +\sum_{i > 1} (p_0-p_i)^{-2}
 \\
 &\leq 2\sum_{i > 1} (p_1-p_i)^2=2H(\pp^0)
 \end{split}
 \end{align}
 where we used $p_0-p_1=p_1-p_2$ in the last step.
 Similarly, we obtain
 \begin{align}
 \begin{split}\label{eq:Hp2}
 \sum_{i\geq 1} \Delta_i^{-2}&\geq H(\pp^j)=  \sum_{i\neq j} (p_0-p_i)^{-2} =
  \sum_{i\neq j} (p_1-p_i+\Delta_1)^{-2}
 \\
 &\geq \sum_{i>1} (p_1-p_i+\Delta_1)^{-2}
 \geq \frac14 \sum_{i>1} (p_1-p_i)^2=\frac14 H(\pp^0).
 \end{split}
 \end{align}
Here we used in the third step that $p_1-p_1+\Delta_1\leq p_1-p_i+\Delta_1$ for any $i\geq 1$ and $p_1-p_i+\Delta_1\leq 2(p_1-p_i)$ in the following inequality.
This ends the proof.
\end{proof}

\section{Proof of Theorem~\ref{th:faulty_grover}}\label{app:channel_oracles}
In this section we provide the missing parts of the proof 
of Theorem~\ref{th:faulty_grover}. 
To have a complete proof in one place, we repeat the 
parts that are already contained in the main part of the paper.

\begin{proof}[Proof of Theorem~\ref{th:faulty_grover}]
Denote by $\Phi_U$ the quantum channel acting by the unitary $U$, i.e., $\Phi_U(\rho)=U\rho U^\dagger$. We consider an algorithm
acting by $\Phi_{U_{T}}\circ \mc{F}\circ \Phi_{U_{T-1}}\circ\ldots \circ \Phi_{U_1}\circ \mc{F}\circ \Phi_{U_0}(\rho_0)$ on some initial state $\rho_0=\ket{\Omega}\bra{\Omega}$ for some pure state $\Omega$.
We define 
\begin{align}
\tilde\rho_t^i = \Phi_{U_t}( \rho_t^i), \quad \rho_t^i = \mc{F}_i( \tilde{\rho}_{t-1}^i),\quad \rho_0^i=\rho_0.
\end{align}
Note that for $i=0$ the state remains pure
during the entire algorithm, and we denote it by $\psi_t$ and $\tilde\psi_t$.
We now define 
\begin{align}
R_t^i &=\tr(\rho_t^i)^2,
\\
F_t^i &= F(\rho^i_t, \rho^0_t),
\end{align}
i.e., the purity of the state and the fidelity (defined by $F(\sigma, \rho)=\left( \tr\sqrt{\sqrt{\rho}\sigma\sqrt{\rho}}\right)^2$) of the state with respect to the state corresponding to the trivial oracle. 
As explained in the main part of this paper, we will now show that any decrease in the fidelity must be paid for with a decrease in purity of $\rho_t^j$.
Fidelity and purity are invariant under unitary maps and therefore $R_t^i = \tr(\tilde\rho_t^i)^2$ and $F_t^i = F(\tilde\rho^i_t, \tilde\rho^0_t)$.
We control, using that $O_i$ is unitary,
\begin{align}
\begin{split}\label{eq:change_purity}
R_{t-1}^i - R_{t}^i &=\tr(\tilde{\rho}_{t-1}^i)^2- \tr(\rho_t^i)^2
\\
&=
\tr(\tilde{\rho}_{t-1}^i)^2- \tr(pO_i\tilde{\rho}_{t-1}^iO_i^\dagger + (1-p)\tilde{\rho}_{t-1}^i)^2
\\
&=
\tr(\tilde{\rho}_{t-1}^i)^2- (p^2+(1-p)^2)\tr(\tilde{\rho}_{t-1}^i)^2 - 2p(1-p)  \tr(O_i\tilde{\rho}_{t-1}^iO_i^\dagger  \tilde{\rho}_{t-1})
\\
&=
2p(1-p)\left(\tr(\tilde{\rho}_{t-1}^i)^2 -  \tr(O_i\tilde{\rho}_{t-1}^iO_i^\dagger  \tilde{\rho}_{t-1})\right)
\\
&=
p(1-p) \tr(\tilde{\rho}_{t-1}^i-O_i\tilde{\rho}_{t-1}^iO_i^\dagger )^2
\end{split}\end{align}
Similarly we estimate the change in fidelity using that $\rho_0^t$ is a pure state and $\tilde{\psi}_{t-1}=\psi_t$ (because $\mc{E}_0=\id$)
\begin{align}
\begin{split}\label{eq:delta_fidelity}
F_{t-1}^i-F_t^i &= F(\tilde\rho^i_{t-1}, \tilde\rho^0_{t-1}) -F(\rho^i_t, \rho^0_t)
\\
&=
\langle \tilde{\psi}_{t-1}, \tilde{\rho}^i_{t-1}\tilde{\psi}_{t-1}\rangle - 
\langle{\psi}_{t}, {\rho}^i_{t}{\psi}_{t}\rangle
\\
&=
\langle {\psi}_{t}, (\tilde{\rho}^i_{t-1} - (1-p) \tilde{\rho}^i_{t-1} - p O_i  \tilde{\rho}^i_{t-1}O_i^\dagger){\psi}_{t}\rangle 
\\
&=
p\langle {\psi}_{t}, (\tilde{\rho}^i_{t-1} - O_i  \tilde{\rho}^i_{t-1}O_i^\dagger){\psi}_{t}\rangle .
\end{split}
\end{align}
We now relate the change of $R_t$ and $F_t$.
Suppose that there is an orthogonal projection $P$ and a unitary $O$ such that 
$(\id-P)O=\id-P$, i.e., $O$ acts trivially on the complement of the image of $P$.
Then Lemma~\ref{le:projection}
in Appendix~\ref{app:auxiliary} establishes the bound
\begin{align}\label{eq:le_projection_rewrite}
\left|\bra{\p}\left(O\sigma O^\dagger-\sigma\right) \ket{\p}\right|
\leq
2\lVert P\p\rVert \lVert \p\rVert  \left(\tr\left(O\sigma O^\dagger - \sigma\right)^2\right)^{\frac12}.
\end{align}

We define projections $P_i = \ket{i}\bra{i}\otimes \id$.
Note that, by definition of $O_i$ we have $(\id-P_{i})O_i = \id-P_i$.
Applying Lemma~\ref{le:projection}, i.e., 
\eqref{eq:le_projection_rewrite} (with $P=P_i$, $\sigma=\tilde{\rho}^i_{t-1}$, $\p=\psi_t$) we  can continue to estimate
\eqref{eq:delta_fidelity}  as follows
\begin{align}
\begin{split}\label{eq:fid_pur_relation2}
F_{t-1}^i-F_t^i 
&\leq 
2p \lVert P_i{\psi}_{t}\rVert\left( \tr(\tilde{\rho}^i_{t-1} - O_i  \tilde{\rho}^i_{t-1}O_i^\dagger)^2\right)^{\tfrac12}
\\
&\leq 
2 \lVert P_i{\psi}_{t}\rVert \sqrt{\frac{p}{1-p}} \sqrt{R_{t-1}^i - R_{t}^i }
\end{split}
\end{align}

Note that the initial values of $F$ and $R$ are $F_0^i=R_0^i=1$ and $R_t^i\geq 0$. Thus we can conclude
\begin{align}
\begin{split}
\sum_i (F_0^i - F_T^i)
&=
\sum_{i,t } (F_{t-1}^i-F_t^i)
\\
&\leq \sum_{i,t} 2 \lVert P_i{\psi}_{t}\rVert \sqrt{\frac{p}{1-p}} \sqrt{R_{t-1}^i - R_{t}^i }
\\
&\leq 2 \sqrt{\frac{p}{1-p}}
\left(\sum_{i,t} \lVert P_i{\psi}_{t}\rVert^2\right)^{\tfrac12}
\left(\sum_{i,t}R_{t-1}^i - R_{t}^i\right)^{\tfrac12}
\\
&\leq
2 \sqrt{\frac{p}{1-p}}
\left(\sum_{t} \lVert {\psi}_{t}\rVert^2\right)^{\tfrac12}
\left(\sum_{i}R_{0}^i - R_{T}^i\right)^{\tfrac12}
\\
&\leq
2 \sqrt{\frac{p}{1-p}}\sqrt{T}\sqrt{N}.
\end{split}
\end{align}
Finally we use our assumption that the algorithm is able to decide whether the oracle is trivial $\mc{E}=\mc{E}_0$ or not with probability $1-\delta$ for some $\delta<1/2$.  From here we can conclude that the output of the algorithm 
for oracles $\mc{F}^0$ and $\mc{F}^j$ must be sufficiently different. 
Formally success of the algorithm implies (see \eqref{eq:trace_success} and \eqref{eq:trace_fidelity} in
Appendix~\ref{app:distance}) that for each
$i$
\begin{align}
1-2\delta \leq T( \rho_T^i , \rho_T^0)
\leq \sqrt{1-F(\rho_T^i, \rho_T^0)}
\end{align}
where $T(\rho, \sigma) = \tfrac12 \lVert \rho-\sigma\rVert_{\tr}$ denotes the trace distance.
This implies the bound 
\begin{align}\label{eq:fidelity_delta}
F_T^i \leq 4\delta(1-\delta).
\end{align}
We conclude that 
\begin{align}
2 \sqrt{\frac{p}{1-p}}\sqrt{T}\sqrt{N}\geq N (1- 4\delta(1-\delta))
\Rightarrow T\geq \frac{N(1-p)(1-4\delta(1-\delta))^2}{p}.
\end{align}

\end{proof}

\section{Fidelity Loss of Oracle Calls}
\label{app:proof_fid_lemma}
In this section we prove Lemma~\ref{le:fidelity}, i.e.,
 the loss in fidelity 
when applying the oracles $\mc{F}_i^p$ and $\mc{F}_i^q$ to density matrices $\rho$ and $\sigma$ where we recall that $\mc{F}_i^p(\rho)=(1-p)\rho+pO_i\rho O_i^\dagger$.
We also prove the slight improvement stated in Corollary~\ref{co:fid2}.
First, however, we consider a simpler version of this result in
Lemma~\ref{le:fidelity_simple} below, which is 
required in the proof of Theorem~\ref{th:lower_emp}.
%The first Lemma considers the case $p=1$ and $q=0$ which we need in the proof of Theorem~\ref{th:lower_emp}. The other two results provide optimal scaling in $p-q$.
\subsection{Fidelity bound for invariant operators}
Here we discuss a lemma that is useful to bound the loss in fidelity $\sqrt{F}(\rho,\sigma)-\sqrt{F}(O\rho O^\dagger, \sigma)$ when it is known that $O\psi = \psi$ for many states $\psi$
(the eigenvalue 1 has large multiplicity).
Note that in terms of the oracles $\mc{F}_i^p$ this corresponds to the case $p=1$ and $q=0$.
\begin{lemma}\label{le:fidelity_simple}
Let $O$ be a unitary operator and $P$ a hermitian projection such that
\begin{align}
P(O-\id)=(O-\id),
\end{align}
i.e., $1-P$ projects on a subspace of the eigenspace of eigenvalue 1 of $O$
and $[P,O]=0$.
Let $\rho, \sigma$ be two density matrices. Then the bound
\begin{align}
\sqrt{F}(\rho,\sigma)-\sqrt{F}(\rho,O\sigma O^\dagger)
\leq 2\sqrt{\tr(P\rho)\tr(P\sigma)}
\end{align}
holds.
\end{lemma}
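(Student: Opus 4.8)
\textbf{Proof proposal for Lemma~\ref{le:fidelity_simple}.}

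The plan is to use Uhlmann's theorem to lift the fidelity to an inner product of purifications, and then exploit the fact that $O$ acts trivially on a large subspace. First I would pick purifications realising the two fidelities: by Uhlmann's theorem there exist purifications $\ket{\varphi}$ of $\rho$ and $\ket{\psi}$ of $\sigma$ (on some enlarged space $\mc{H}\otimes\mc{H}'$) with $\sqrt{F}(\rho,\sigma)=\langle\varphi,\psi\rangle$, and we may take this inner product to be real and nonnegative. Now $(O\otimes\mathrm{Id})\ket{\psi}$ is a purification of $O\sigma O^\dagger$, so by the variational characterisation $\sqrt{F}(\rho,O\sigma O^\dagger)\geq \mathrm{Re}\,\langle\varphi,(O\otimes\mathrm{Id})\psi\rangle$. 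Hence
\begin{align}
\sqrt{F}(\rho,\sigma)-\sqrt{F}(\rho,O\sigma O^\dagger)
\leq \langle\varphi,\psi\rangle - \mathrm{Re}\,\langle\varphi,(O\otimes\mathrm{Id})\psi\rangle
= \mathrm{Re}\,\langle\varphi,((\mathrm{Id}-O)\otimes\mathrm{Id})\psi\rangle.
\end{align}

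Next I would insert the projection. Writing $\tilde P = P\otimes\mathrm{Id}$ and using the hypothesis $(\mathrm{Id}-O)=P(\mathrm{Id}-O)$ (equivalently $(\mathrm{Id}-O)=(\mathrm{Id}-O)P$ by taking adjoints, since $O$ is unitary and $[P,O]=0$ so $(\mathrm{Id}-O)$ is ``supported'' on the range of $P$ on both sides), we get $(\mathrm{Id}-O)\otimes\mathrm{Id} = \tilde P\,((\mathrm{Id}-O)\otimes\mathrm{Id})\,\tilde P$. Therefore
\begin{align}
\mathrm{Re}\,\langle\varphi,((\mathrm{Id}-O)\otimes\mathrm{Id})\psi\rangle
= \mathrm{Re}\,\langle \tilde P\varphi,((\mathrm{Id}-O)\otimes\mathrm{Id})\tilde P\psi\rangle
\leq \lVert(\mathrm{Id}-O)\otimes\mathrm{Id}\rVert\,\lVert\tilde P\varphi\rVert\,\lVert\tilde P\psi\rVert.
\end{align}
Since $O$ is unitary, $\lVert\mathrm{Id}-O\rVert\leq 2$, and $\lVert\tilde P\varphi\rVert^2 = \langle\varphi,\tilde P\varphi\rangle = \tr(P\rho)$ (because $\tilde P = P\otimes\mathrm{Id}$ and $\ket{\varphi}$ purifies $\rho$, so $\mathrm{tr}_{\mc H'}\ket{\varphi}\bra{\varphi}=\rho$), and similarly $\lVert\tilde P\psi\rVert^2 = \tr(P\sigma)$. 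Combining these gives exactly
\begin{align}
\sqrt{F}(\rho,\sigma)-\sqrt{F}(\rho,O\sigma O^\dagger)\leq 2\sqrt{\tr(P\rho)\tr(P\sigma)},
\end{align}
as claimed.

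The only genuinely delicate point is the algebraic step $(\mathrm{Id}-O)\otimes\mathrm{Id} = \tilde P(\cdots)\tilde P$: the hypothesis as stated gives the left multiplication $P(O-\mathrm{Id})=O-\mathrm{Id}$ directly, and the right multiplication follows by taking the adjoint and using $O^\dagger = O^{-1}$ together with $[P,O]=0$ (so that $(O^\dagger-\mathrm{Id})P = O^\dagger-\mathrm{Id}$, hence after conjugating $P(\mathrm{Id}-O)=\mathrm{Id}-O$ on the right as well). Everything else is a routine application of Uhlmann's theorem and Cauchy--Schwarz, and the fact that $\langle\varphi,\psi\rangle$ can be chosen real and nonnegative means no loss is incurred in passing to the real part. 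I would also double-check that the purifications of $\rho$ and $\sigma$ can be taken on a common ancilla space (standard, enlarging $\mc{H}'$ if necessary) so that $(O\otimes\mathrm{Id})\ket\psi$ is a legitimate purification of $O\sigma O^\dagger$ on the same space as $\ket\varphi$.
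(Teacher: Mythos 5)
Your proposal is correct and follows essentially the same route as the paper's proof: Uhlmann purifications, the observation that $(O\otimes\mathrm{Id})\ket{\psi}$ purifies $O\sigma O^\dagger$, insertion of $P\otimes\mathrm{Id}$ on both sides of $(\mathrm{Id}-O)\otimes\mathrm{Id}$ via the hypothesis and $[P,O]=0$, and Cauchy--Schwarz with $\lVert \mathrm{Id}-O\rVert\leq 2$. Your explicit handling of the real part and of the two-sided projection identity is if anything slightly more careful than the paper's version, but there is no substantive difference.
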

\begin{proof}
Call the system on which  $\rho, \sigma$ act $Q$.
Let $R$ be a copy of $Q$.
  Let $\p$ and $\psi$ be purifications of $\rho$ and $\sigma$ on the system $QR$ such that 
$\sqrt{F}(\rho,\sigma) = \langle \p, \psi\rangle $. 
Then 
$(O\otimes \id)\psi$ is a purification of $\sigma$ and 
we get
\begin{align}
\begin{split}
\sqrt{F}(\rho,O\sigma O^\dagger)
&\geq \langle \p, (O\otimes \id)\psi\rangle
= \langle \p, \psi\rangle
- \langle \p, ((O-\id)\otimes \id)\psi\rangle
\\
&=\sqrt{F}(\rho,\sigma)
-\langle \p, (P(O-\id)\otimes \id)\psi\rangle
\end{split}
\end{align}
We now bound using $[P,O]=0$ and Lemma~\ref{le:partial_operator}
\begin{align}
\begin{split}
\langle \p, (P(O-\id)\otimes \id)\psi\rangle
&\leq 
\langle (P\otimes \id)\p, (P\otimes \id)((O-\id)\otimes \id)\psi\rangle
\\
&\leq \lVert (P\otimes \id)\p\rVert \cdot 
\lVert((O-\id)\otimes \id) (P\otimes \id)
\psi\rVert
\\
&\leq 
2\left(\tr\tr_R((P\otimes \id)\ket{\p}\bra{\p}(P\otimes \id)^\dagger)
\tr\tr_R((P\otimes \id)\ket{\psi}\bra{\psi}(P\otimes \id)^\dagger)\right)^\frac12
\\
&\leq 2\sqrt{\tr(P\rho)\tr(P\sigma)}.
\end{split}
\end{align}
\end{proof}

\subsection{Proof of Lemma~\ref{le:fidelity}}

\begin{proof}
Call the system on which  $\rho, \sigma$ act $Q$ and we denote
$\bar{\rho}= \mc{F}_i^p(\rho)$ and $\bar{\sigma}=\mc{F}_i^{q}(\sigma)$.
Let $R$ be a copy of $Q$.
  Let $\p$ and $\psi$ be purifications of $\rho$ and $\sigma$ on the system $QR$ such that 
$\sqrt{F}(\rho,\sigma) = \langle \p, \psi\rangle $. 
We use the shorthand $\bar{O}_i=O_i\otimes \id_R$ in the following.
Let $S$ be a system consisting of a single qubit. We consider the following state on the system $QRS$
\begin{align}
\omega = \sqrt{1-p}\ket{\p, 0} + \sqrt{p} \ket{\bar{O}_i\p, 1}.
\end{align}
 It is easy to check that $\omega$ is a purification of $\bar{\rho}$
\begin{align}
\tr_{QR} \ket{\omega}\bra{\omega} = \tr_Q \left((1-p)\ket{\p}\bra{\p}
+ p \ket{\bar{O}_i\p}\bra{\bar{O}_i\p}\right)=
(1-p)\rho + p O_i \rho O_i^\dagger=\bar\rho.
\end{align}
To obtain a purification of $\bar\sigma$ we define for an angle $\alpha$ 
the state
\begin{align}
\zeta = \sqrt{1-q}\cos(\alpha)\ket{\psi, 0}+
\sqrt{q}\sin(\alpha) \ket{\bar{O}_i\psi, 0}
-\sqrt{1-q}\sin(\alpha)\ket{\psi, 1}
+\sqrt{q}\cos(\alpha)\ket{\bar{O}_i\psi, 1}.
\end{align}
It is easy to check that $\lVert \zeta\rVert =1$. 
We now check that this is a purification of $\bar{\sigma}$.
Note that the cross terms $\ket{\psi}\bra{\bar{O}_i\psi}$
and  $\ket{\bar{O}_i\psi}\bra{\psi}$ cancel, and thus
\begin{align}
\begin{split}
\tr_S \ket{\zeta}\bra{\zeta}
&=(1-q)\cos^2(\alpha) \ket{\psi}\bra{\psi}
+ q\sin^2(\omega)\ket{\bar{O}_i\psi}\bra{\bar{O}_i\psi}
\\
&\qquad+(1-q)\sin^2(\alpha)
\ket{\psi}\bra{\psi}
+ q\cos^2(\alpha)\ket{\bar{O}_i\psi}\bra{\bar{O}_i\psi}.
\\
&=(1-q)\ket{\psi}\bra{\psi}
+ q\ket{\bar{O}_i\psi}\bra{\bar{O}_i\psi}.
\end{split}
\end{align}
We calculate
\begin{align}
\begin{split}
\bra{\omega}\ket{\zeta}
&=\sqrt{(1-p)(1-q)}\cos(\alpha)\bra{\p}\ket{\psi}
+ \sqrt{(1-p)q}\sin(\alpha)\bra{\p}\ket{\bar{O}_i\psi}
\\
& \qquad -
\sqrt{p(1-q)} \sin(\alpha)\bra{\bar{O}_i\p}\ket{\psi}
+
\sqrt{pq} \cos(\alpha) \bra{\bar{O}_i\p}\ket{\bar{O}_i\psi}.
\end{split}
\end{align}
Using that $O_i$ is self-adjoint and unitary we obtain
\begin{align}
\bra{\eta}\ket{\zeta}
&=\left(\sqrt{(1-p)(1-q)}+\sqrt{pq}\right)\cos(\alpha)\bra{\p}\ket{\psi}
+ \left(\sqrt{(1-p)q}-\sqrt{p(1-q)} \right) \sin(\alpha)\bra{\p}\ket{\bar{O}_i\psi}.
\end{align}
Now we set
\begin{align}
\sin(\alpha) = \sqrt{(1-p)q} - \sqrt{p(1-q)}	\\
\cos(\alpha) = \sqrt{(1-p)(1-q)} + \sqrt{pq}
\end{align}
and obtain 
\begin{align}
\begin{split}
\bra{\omega}\ket{\zeta}
&=\cos^2(\alpha)\bra{\p}\ket{\psi}
+ \sin(\alpha)^2\bra{\bar{O}_i\p}\ket{\psi}
\\
&=\bra{\p}\ket{\psi}
+ \sin(\alpha)^2 \bra{\bar{O}_i\p-\p}\ket{\psi}.
\end{split}
\end{align}
We conclude that 
\begin{align}\label{eq:le_fid1}
\sqrt{F}(\bar{\rho},\bar{\sigma})
\geq |\bra{\eta}\ket{\zeta}|\geq
\sqrt{F}({\rho},{\sigma})
-|\sin^2(\alpha) \bra{\bar{O}_i\p-\p}\ket{\psi}|.
\end{align}
As above we write $\bar{P}_i=P_i\otimes \id$ and $P_{-i}=\id-P_i$ and we get
\begin{align}
\bar{O}\p-\p=(\bar{P}_{i}+\bar{P}_{-i})(\bar{O}_i\p-\p)
= \bar{P}_{i}(\bar{O}_i\p-\p)+ 
\bar{P}_{-i}(\bar{O}_i\p-\p)
=
\bar{P}_{i}(\bar{O}_i\p-\p).
\end{align}
Then we control using Lemma~\ref{le:partial_operator}
\begin{align}
\begin{split}\label{eq:le_fid2}
\left|\bra{\bar{O}_i\p-\p}\ket{\psi}\right|^2
&= \left|\bra{(\bar{O}_i\p-\p)}\ket{\bar{P}_{i}\psi}\right|^2
\leq 
\lVert \bar{P}_i( \bar{O}_i\p-\p)\rVert^2 \, \lVert \bar{P}_{i}\psi\rVert^2
\\
&=
\tr \bar{P}_i\left(\ket{\bar{O}_i\p-\p}\bra{ \bar{O}_i\p-\p}\bar{P}_i^\dagger\right)
\tr \left(\ket{\bar{P}_{i}\psi}\bra{ \bar{P}_{i}\psi}\right)
\\
&=
\tr\tr_R \left((\bar{O}_i-\id)\bar{P}_i\ket{\p}\bra{\p}\bar{P}_i^\dagger(
 \bar{O}_i^\dagger
-\id)\right)
\tr \tr_R\left(\bar{P}_{i}\ket{\psi}\bra{ \psi}\bar{P}_{i}^\dagger\right)
\\
&\leq 4 \tr\left(P_i\rho\right)
\tr \left({P}_{i}\sigma\right).
\end{split}
\end{align}
Using the bound \eqref{eq:bound_sin} from Lemma~\ref{le:trivial_bound} 
implies
\begin{align}\label{eq:le_fid3}
|\sin(\alpha)|= \sqrt{(1-p)q} - \sqrt{p(1-q)}
\leq \frac{|p-q|}{2\sqrt{\eta(1-\eta)}}.
\end{align}
From \eqref{eq:le_fid1}, \eqref{eq:le_fid2}, and \eqref{eq:le_fid3}
we conclude that 
\begin{align}
\sqrt{F}(\bar{\rho},\bar{\sigma})
\geq |\bra{\eta}\ket{\zeta}|\geq
\sqrt{F}({\rho},{\sigma})
-(p-q)^2\frac{\sqrt{\tr\left(P_i\rho\right)
\tr \left({P}_{i}\sigma\right)}}{2\eta(1-\eta)}.
\end{align}
\end{proof}
\subsection{Proof of Corollary~\ref{co:fid2}}\label{sec:co_fid2proof}
\begin{proof}
%We consider the same setting as before with probability vectors $\pp^i$.
%Fix a vector $p$ with $p_1> p_2\geq \ldots \geq p_N$.
%We consider the modifications $p^k$ given by $p^k_i=p_i$ for $i\neq k$
%and $p^k_k = p_1 + \Delta_1$.
We write $\mc{E}_i = \mc{E}^{\pp^i}$.
We obtain using Corollary~\ref{co:fid1}  (which can be applied since 
the bit flip operation is self adjoint)
 for $m$ copies
\begin{align}
\begin{split}
\sqrt{F}(\mc{E}_i(\rho)^{\bigotimes  m},
\mc{E}_0(\rho)^{\bigotimes m})
&=
\sqrt{F}(\mc{E}_i(\rho),
\mc{E}_0(\rho))^m
\\
&\geq \left(1-\frac{(p_0 - p_i)^2}{c(1-c)}\tr(P_k\rho)\right)^m
\geq 1 - m \frac{4\Delta_k^2}{\eta(1-\eta)}\tr(P_k\rho)
\end{split}
\end{align}
where we used the Bernoulli inequality  in the last step.
From
\begin{align}
1-2\delta \leq
T( \mc{E}_i(\rho)^{\bigotimes  m} ,\mc{E}_0(\rho)^{\bigotimes m})
\leq \sqrt{1-F(\mc{E}_i(\rho)^{\bigotimes  m},
\mc{E}_0(\rho)^{\bigotimes m})}
\end{align}
we conclude that 
\begin{align}\label{eq:hellstroem_fidelity}
2\sqrt{\delta(1-\delta)}\geq 
\sqrt{F}(\mc{E}_i(\rho)^{\bigotimes  m},
\mc{E}_0(\rho)^{\bigotimes m})
\geq 1 - m \frac{4\Delta_i^2}{\eta(1-\eta)}\tr(P_i\rho).
\end{align}
Equivalently
\begin{align}
\tr(P_i\rho) \geq \frac{\eta(1-\eta)(1-2\sqrt{\delta(1-\delta)})}{4m \Delta_i^2}
\end{align}
Using $\tr(\sum_i P_i\rho)=1$ and summing over $i\geq 2$
we conclude
\begin{align}
1\geq \frac{\eta(1-\eta)(1-2\sqrt{\delta(1-\delta)})}{4m} \sum_{i\geq 2}^N \Delta_i^{-2}.
\end{align}
Using $\sum_{i\geq 2}^N \Delta_i^{-2}\geq \tfrac14 \sum_{i=2}^N (p_1-p_i)^{-2}$
ends the proof.
\end{proof}

\section{State Decompositions and Fidelity Bounds}\label{app:coupling}
In this section we prove Lemma~\ref{le:coupling_const}
and state a slight extension that will 
be used in the proof of Theorem~\ref{th:main}.
Let us restate the lemma including the definition
of $\psi_0$ and $\psi_1$ for the convenience of the reader.

\begin{lemma}[Lemma \ref{le:coupling_const} restated]\label{le:coupling_const2}
Consider a density matrix $\rho$ and a pure state $\psi$. Consider a unitary and self-adjoint map 
$U$ and the channel $\mc{E}_U^p$ defined by 
$\mc{E}_U^p(\rho)= pU\rho U^\dagger+(1-p)\rho$.
Let $0<\eta<1/2$ and $p,q\in[\eta, 1-\eta]$.
We define 
\begin{align}
\bar{\rho}_0 &= (1-p)\rho,\quad \rho_0=\rho, \qquad \bar{\rho}_1=pU\rho U^\dagger,
 \quad \rho_1= U\rho U^\dagger
\end{align}
and 
\begin{alignat}{2}
\bar\psi_0&= \sqrt{1-q}\cos(\alpha)\ket{\psi} + \sqrt{q}\sin(\alpha) \ket{U\psi} ,\quad 
&&\bar\psi_1 =
\sqrt{q}\cos(\alpha)\ket{U\psi}-\sqrt{1-q}\sin(\alpha)\ket{\psi}
\\
\psi_0 &= \bar{\psi}_0/\lVert \bar\psi_0\rVert,
\quad 
&&\psi_1 = \bar{\psi}_1/\lVert \bar\psi_1\rVert
\end{alignat}
where 
\begin{align}\label{eq:def_angle}
\cos(\alpha)= \sqrt{pq}+\sqrt{(1-p)(1-q)}, \quad \sin(\alpha)=\sqrt{(1-p)q}-\sqrt{(1-q)p}.
\end{align}
We define $q'=\lVert \bar\psi_1\rVert^2$.
Then
\begin{align}\label{eq:decomp}
\mc{E}_U^q(\ket{\psi}\bra{\psi})=
q'\ket{\psi_1}\bra{\psi_1}+(1-q')\ket{\psi_0}\bra{\psi_0}.
\end{align}
Let $S= \sqrt{F}(\rho, \ket{\psi}\bra{\psi})=\sqrt{\bra{\psi}\rho\ket{\psi}}$
denote the initial fidelity.
Then the following bound holds
\begin{align}
\begin{split}\label{eq:fidelity_lemma_bound}
\sqrt{F}\Big(\mc{E}_U^p(\rho),\mc{E}_U^q(\ket{\psi}\bra{\psi})\Big)
&\geq
\sqrt{(1-p)(1-q')}\sqrt{F}(\rho_0,\ket{\psi_0}\bra{\psi_0})
+ \sqrt{pq'}\sqrt{F}(\rho_1, \ket{\psi_1}\bra{\psi_1})
\\
&\geq S
- \frac{(p-q)^2 |(\psi,( U\rho U^\dagger - \rho)\psi)|}{2\eta S}
- \frac{(p-q)^2 |\mathrm{Re}(\psi, U\rho \psi)-(\psi,\rho\psi)|^2}{8\eta^2S^3}.
\end{split}
\end{align}
\end{lemma}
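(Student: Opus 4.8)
The plan is to establish the two assertions of the lemma separately: the decomposition \eqref{eq:decomp}, and then the chain \eqref{eq:fidelity_lemma_bound}, whose first line is a one-line consequence of strong concavity while the second line is the substantive estimate.

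\textbf{Decomposition and first inequality.} First I would record that $\alpha$ is well defined: expanding the two squares in \eqref{eq:def_angle} the mixed radicals cancel and $\cos^2\alpha+\sin^2\alpha=pq+(1-p)(1-q)+(1-p)q+p(1-q)=1$. For \eqref{eq:decomp} the key observation is that $\ket{\bar\psi_0}\otimes\ket{0}+\ket{\bar\psi_1}\otimes\ket{1}$ is exactly the purification $\zeta$ of $\mc E_U^q(\ket{\psi}\bra{\psi})$ already constructed in the proof of Lemma~\ref{le:fidelity} (with the self-adjoint unitary $U$ playing the role of $O_i$ there). From the computation carried out there the cross terms proportional to $\ket{\psi}\bra{U\psi}$ cancel, which yields simultaneously $\lVert\bar\psi_0\rVert^2+\lVert\bar\psi_1\rVert^2=1$ and $\ket{\bar\psi_0}\bra{\bar\psi_0}+\ket{\bar\psi_1}\bra{\bar\psi_1}=(1-q)\ket{\psi}\bra{\psi}+q\ket{U\psi}\bra{U\psi}=\mc E_U^q(\ket{\psi}\bra{\psi})$; since $q'=\lVert\bar\psi_1\rVert^2$ and hence $1-q'=\lVert\bar\psi_0\rVert^2$, normalising $\bar\psi_i\mapsto\psi_i$ gives \eqref{eq:decomp}. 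The first inequality in \eqref{eq:fidelity_lemma_bound} is then immediate from strong concavity of $\sqrt F$ (property~4 in Appendix~\ref{app:distance}), applied to the mixtures $\mc E_U^p(\rho)=(1-p)\rho_0+p\rho_1$ and $\mc E_U^q(\ket{\psi}\bra{\psi})=(1-q')\ket{\psi_0}\bra{\psi_0}+q'\ket{\psi_1}\bra{\psi_1}$.

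\textbf{Second inequality.} Using $\sqrt F(\rho_0,\ket{\psi_0}\bra{\psi_0})=\lVert\sqrt\rho\,\psi_0\rVert$ and, since $U$ is a self-adjoint unitary, $\sqrt F(\rho_1,\ket{\psi_1}\bra{\psi_1})=\lVert\sqrt\rho\,U\psi_1\rVert$, and absorbing the prefactors via $\bar\psi_i=\lVert\bar\psi_i\rVert\,\psi_i$, the right-hand side of the first inequality equals $\lVert\sqrt{1-p}\,\sqrt\rho\,\bar\psi_0\rVert+\lVert\sqrt p\,\sqrt\rho\,U\bar\psi_1\rVert$, which the triangle inequality bounds below by $\lVert\sqrt\rho\,\chi\rVert$ with $\ket{\chi}:=\sqrt{1-p}\,\bar\psi_0+\sqrt p\,U\bar\psi_1$. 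The whole point of the choice \eqref{eq:def_angle} is that the terms of order $\sin\alpha$ cancel in this sum, leaving
\[
\ket{\chi}=\cos^2\alpha\,\ket{\psi}+\sin^2\alpha\,\ket{U\psi}=\ket{\psi}+\sin^2\alpha\,\bigl(\ket{U\psi}-\ket{\psi}\bigr),
\]
so $\ket{\chi}$ deviates from $\ket{\psi}$ only at order $\sin^2\alpha=O((p-q)^2)$. A short expansion then gives
\[
(\chi,\rho\chi)=(\psi,\rho\psi)+2\sin^2\alpha\,\mathrm{Re}\bigl(\psi,\rho(U-\mathrm{Id})\psi\bigr)+\sin^4\alpha\,\bigl((U-\mathrm{Id})\psi,\rho(U-\mathrm{Id})\psi\bigr).
\]
From here I would (i) bound $\sin^2\alpha\le(p-q)^2/(4\eta(1-\eta))$ by Lemma~\ref{le:trivial_bound}, eq.~\eqref{eq:bound_sin}; (ii) rewrite the first-order term via the identity $2\,\mathrm{Re}(\psi,\rho(U-\mathrm{Id})\psi)=(\psi,(U\rho U^\dagger-\rho)\psi)-((U-\mathrm{Id})\psi,\rho(U-\mathrm{Id})\psi)$, so that the leading error is expressed through the very quantity $(\psi,(U\rho U^\dagger-\rho)\psi)$ that controlled decoherence in the proof of Theorem~\ref{th:faulty_grover}; (iii) control the residual terms with Cauchy--Schwarz for the positive form $\rho$, namely $|\mathrm{Re}(\psi,U\rho\psi)-(\psi,\rho\psi)|\le\lVert\sqrt\rho\,\psi\rVert\,\lVert\sqrt\rho\,(U-\mathrm{Id})\psi\rVert$; and (iv) factor out $S^2=(\psi,\rho\psi)$ and apply the elementary square-root bound Lemma~\ref{le:sqrt}, putting the $(\psi,(U\rho U^\dagger-\rho)\psi)$-contribution in the role of $s$ there (handled by $-|s|$) and the quadratic remainder in the role of $t$ (handled by $t/2-t^2/2$, which accounts for the $|\mathrm{Re}(\psi,U\rho\psi)-(\psi,\rho\psi)|^2$ term). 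The assumption $\eta<1/2$ is used only to keep the resulting constants (e.g.\ $\eta(1-\eta)\ge\eta/2$) away from degenerate values.

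\textbf{Main obstacle.} The purification and strong-concavity steps are routine; the delicate point is step~(iv) — steering the error into exactly the stated form, i.e.\ a first-order term carrying $(\psi,(U\rho U^\dagger-\rho)\psi)$ (which downstream, in the proof of Theorem~\ref{th:main}, can be paired against purity/decoherence losses exactly as in Theorem~\ref{th:faulty_grover}) plus a genuinely higher-order square, rather than a more immediate estimate that would leave a useless first-order term in $\mathrm{Re}(\psi,U\rho\psi)-(\psi,\rho\psi)$. This requires careful sign tracking and the right split fed into Lemma~\ref{le:sqrt}. A secondary, idea-free task is to state the mild generalisation actually invoked in the proof of Theorem~\ref{th:main} — an extra tensored work register, i.e.\ $U\mapsto U\otimes\mathrm{Id}_R$ with $\rho$ acting on the enlarged space.
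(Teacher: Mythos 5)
Your treatment of the decomposition \eqref{eq:decomp} and of the first inequality in \eqref{eq:fidelity_lemma_bound} is correct and matches the paper's proof (verify the cross terms cancel, read off $\lVert\bar\psi_0\rVert^2+\lVert\bar\psi_1\rVert^2=1$, apply strong concavity). The identity $\sqrt{1-p}\,\bar\psi_0+\sqrt{p}\,U\bar\psi_1=\cos^2\alpha\,\ket{\psi}+\sin^2\alpha\,\ket{U\psi}$ that you use is also correct, and it is essentially the same cancellation (\eqref{eq:add_thm0}, \eqref{eq:add_thm1}) that drives the paper's argument.

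However, the triangle-inequality reduction $\sqrt{1-p}\,\lVert\sqrt\rho\,\bar\psi_0\rVert+\sqrt{p}\,\lVert\sqrt\rho\,U\bar\psi_1\rVert\ge\lVert\sqrt\rho\,\chi\rVert$ is a genuine and unrepairable gap: the quantity $\lVert\sqrt\rho\,\chi\rVert$ is in general strictly \emph{below} the bound the lemma asserts, so no subsequent estimate can close the argument. Concretely, writing $A=(\psi,(U\rho U^\dagger-\rho)\psi)$, $R=\mathrm{Re}(\psi,\rho(U-\mathrm{Id})\psi)$ and $B=\lVert\sqrt\rho(U-\mathrm{Id})\psi\rVert^2=A-2R\ge0$, your expansion together with the identity in step (ii) gives
\begin{align*}
(\chi,\rho\chi)=S^2+\sin^4(\alpha)\,A+2\sin^2(\alpha)\cos^2(\alpha)\,R
=S^2+\sin^2(\alpha)\,A-\sin^2(\alpha)\cos^2(\alpha)\,B .
\end{align*}
Whichever form you feed into Lemma~\ref{le:sqrt}, you are left with an error term that is \emph{first order} either in $R$ (which may be negative) or in $B$; the target error term is $\propto R^2/S^2$, and Cauchy--Schwarz gives $R^2/S^2\le B$ — the wrong direction for step (iii), since you would need to dominate $B$ by $R^2/S^2$. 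A concrete failure: take $\rho=\ket{\phi}\bra{\phi}$ with $(\phi,\psi)=S$ real and $(\phi,(U-\mathrm{Id})\psi)=z$ chosen so that $A=2S\,\mathrm{Re}(z)+|z|^2=0$ and $R=S\,\mathrm{Re}(z)=-\eps$; then $\lVert\sqrt\rho\chi\rVert=\sqrt{S^2-2\sin^2(\alpha)\cos^2(\alpha)\eps}=S-\Theta(\eps)$, while the right-hand side of \eqref{eq:fidelity_lemma_bound} is $S-\Theta(\eps^2)$. The point the triangle inequality destroys is exactly the one the paper relies on: in the paper's proof the two square roots in \eqref{eq:coupling1} and \eqref{eq:coupling2} are expanded \emph{separately} to second order via Lemma~\ref{le:sqrt}, and the terms linear in $R$ enter with prefactors $+\sqrt{q(1-q)}\cos(\alpha)\sin(\alpha)/S$ and $-\sqrt{q(1-q)}\cos(\alpha)\sin(\alpha)/S$ and cancel exactly, leaving only the genuinely quadratic $-R^2$ remainders from the $-t^2/2$ terms. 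This cancellation happens at second order of the concavity of $x\mapsto\sqrt{x}$ and is invisible to the (first-order) inequality $\lVert a\rVert+\lVert b\rVert\ge\lVert a+b\rVert$. To fix your proof you must keep the two fidelities separate and perform the two Taylor expansions before combining, as in \eqref{eq:coupling1}--\eqref{eq:coupling4}.
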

To clarify the origin of the expressions for $\sin(\alpha)$ and $\cos(\alpha)$ we remark that if we choose $\beta,\gamma$ such that $\sqrt{p}=\sin(\beta)$,
$\sqrt{q}=\sin(\gamma)$, then $\alpha=\gamma-\beta$. 
This also explains the simplifications in the formula below, in particular
\eqref{eq:add_thm0} and \eqref{eq:add_thm1} below are just the
trigonometric identities for angle sums. 
\begin{proof}
First, simple algebra shows $\lVert \bar{\psi}_0\rVert^2 = 1-q'$ and
\begin{align}
\mc{E}_U^p(\rho) &=\bar{\rho}_0+\bar{\rho}_1=(1-p)\rho_0 +p\rho_1\\
\mc{E}_U^q(\ket{\psi}\bra{\psi})
&= 
\ket{\bar{\psi}_0}\bra{\bar{\psi}_0}
+
\ket{\bar{\psi}_1}\bra{\bar{\psi}_1}
= 
(1-q')\ket{{\psi}_0}\bra{{\psi}_0}+
q'\ket{{\psi}_1}\bra{{\psi}_1},
\end{align}
in particular \eqref{eq:decomp} holds. 
Strong concavity of the fidelity implies the first bound
of \eqref{eq:fidelity_lemma_bound}.
We now address the second estimate. We can express, using that $U=U^\dagger$ is self adjoint
\begin{align}
\begin{split}\label{eq:lefid1}
&\sqrt{(1-p)(1-q')}\sqrt{F}(\rho_0, \psi_0)
=
\sqrt{(1-p)(1-q')\bra{\psi_0}\rho\ket{\psi_0}}
= 
\sqrt{(1-p)\bra{\bar\psi_0}\rho\ket{\bar\psi_0}}
\\
&=
\sqrt{(1-p)}\Big((1-q)\cos^2(\alpha)\bra{\psi}\rho\ket{\psi}
+ q\sin^2(\alpha)\bra{\psi}U\rho U^\dagger\ket{\psi}
\\
&\qquad \qquad
+ 2\sqrt{q(1-q)}\cos(\alpha)\sin(\alpha)\mathrm{Re}(\bra{\psi}\rho U
\ket{\psi})
   \Big)^\frac12
   \\
  &=
\sqrt{(1-p)}\Big(\left((1-q)\cos^2(\alpha)+q\sin^2(\alpha)+2\sqrt{q(1-q)}\cos(\alpha)\sin(\alpha)\right)S^2
\\
&\qquad \qquad+ q\sin^2(\alpha)\bra{\psi}(U\rho U^\dagger- \rho)\ket{\psi}
\\
&\qquad \qquad
+ 2\sqrt{q(1-q)}\cos(\alpha)\sin(\alpha)\mathrm{Re}(\bra{\psi}
(\rho U - \rho)
\ket{\psi})
   \Big)^\frac12
\end{split}
\end{align}
Now we calculate using the definition of $\alpha$ 
\begin{align}\label{eq:add_thm0}
\begin{split}
\sqrt{ (1-p)}&\left(\sqrt{1-q}\cos(\alpha)+\sqrt{q}\sin(\alpha)\right)
\\
&=\sqrt{(1-p)(1-q)}(\sqrt{(1-p)(1-q)}+\sqrt{pq})
+\sqrt{(1-p)q}(\sqrt{(1-p)q}-\sqrt{(1-q)p})
\\
&=(1-p)(1-q)+(1-p)q=1-p
.
\end{split}
\end{align}
Then we obtain
\begin{align}\label{eq:coupling1}
\begin{split}
&\sqrt{(1-p)(1-q')}\sqrt{F}(\rho_0, \psi_0)
\\
&=(1-p)S\Big(1+ \frac{q\sin^2(\alpha)\bra{\psi}(U\rho U^\dagger- \rho)\ket{\psi}
+ 2\sqrt{q(1-q)}\cos(\alpha)\sin(\alpha)\mathrm{Re}(\bra{\psi}
(\rho U - \rho)
\ket{\psi})}{(1-p)S^2}
   \Big)^\frac12.
\end{split}
\end{align}
Similarly the second term can be expressed as
\begin{align}
\begin{split}\label{eq:lefid2}
&\sqrt{pq'}\sqrt{F}(\rho_1,\psi_1)=\sqrt{p}\Big(\left(q\cos^2(\alpha)+(1-q)\sin^2(\alpha)-2\sqrt{q(1-q)}\cos(\alpha)\sin(\alpha)\right)S^2
\\
&\qquad \qquad+ (1-q)\sin^2(\alpha)\bra{\psi}(U\rho U^\dagger- \rho)\ket{\psi}
%\\
%&\qquad \qquad
- 2\sqrt{q(1-q)}\cos(\alpha)\sin(\alpha)\mathrm{Re}(\bra{\psi}
(\rho U - \rho)
\ket{\psi})
   \Big)^\frac12.
\end{split}
\end{align}
We can calculate
\begin{align}\label{eq:add_thm1}
\begin{split}
\sqrt{p}&\left(\sqrt{q}\cos(\alpha)-\sqrt{1-q}\sin(\alpha)\right)
\\
&=\sqrt{p}\left(\sqrt{q}(\sqrt{(1-p)(1-q)}+\sqrt{pq})-\sqrt{1-q}(\sqrt{(1-p)q}-\sqrt{(1-q)p}\right)
\\
&= 
pq+(1-q)p=p.
\end{split}
\end{align}
we get
\begin{align}
\begin{split}\label{eq:coupling2}
&\sqrt{pq'}\sqrt{F}(\rho_1,\psi_1)=
\\
&=pS\Big(1+ \frac{(1-q)\sin^2(\alpha)\bra{\psi}(U\rho U^\dagger- \rho)\ket{\psi}
- 2\sqrt{q(1-q)}\cos(\alpha)\sin(\alpha)\mathrm{Re}(\bra{\psi}
(\rho U - \rho)
\ket{\psi})}{pS^2}
   \Big)^\frac12.
\end{split}
\end{align}
%Next we note that 
%\begin{align}
%\begin{split}
%A+B &= \sqrt{(1-p)(1-q)}\cos(\alpha)+\sqrt{(1-p)q}\sin(\alpha)
%%+\sqrt{pq}\cos(\alpha)-\sqrt{p(1-q)}\sin(\alpha)
%\\
%&=\cos^2(\alpha)+\sin^2(\alpha)=1.
%\end{split}
%\end{align}
We continue to estimate the square root terms. Note that by first order Taylor expansion the mixed last terms of the two expressions \eqref{eq:coupling1} and \eqref{eq:coupling2} cancel and only the first term  and higher order corrections remains. 
To make this rigorous 
we use Lemma~\ref{le:sqrt} in Appendix~\ref{app:auxiliary} which states
that for $s+t\geq -1$ the bound
\begin{align}\label{eq:square_bound_again}
\sqrt{1+s+t}\geq 1+\frac{t}{2}-\frac{t^2}{2}-|s|
\end{align}
holds. We apply this to \eqref{eq:coupling1} and \eqref{eq:coupling2}
where $s$ corresponds to the term involving $\bra{\psi}(U\rho U^\dagger- \rho)\ket{\psi}$ and $t$ corresponds to the term 
involving $\mathrm{Re}(\bra{\psi}
(\rho U - \rho)
\ket{\psi})^2$. 
Then we get from \eqref{eq:coupling1}
\begin{align}
\begin{split}\label{eq:lefid3}
&\sqrt{(1-p)(1-q')}\sqrt{F}(\rho_0, \psi_0)
\\
&\geq (1-p)S\Big(1+ \frac{-q\sin^2(\alpha)|\bra{\psi}(U\rho U^\dagger- \rho)\ket{\psi}|
+ \sqrt{q(1-q)}\cos(\alpha)\sin(\alpha)\mathrm{Re}(\bra{\psi}
(\rho U - \rho)
\ket{\psi})}{(1-p)S^2}
\\
&\qquad \qquad- \frac{q(1-q)\cos^2(\alpha)\sin^2(\alpha)\mathrm{Re}(\bra{\psi}
(\rho U - \rho)
\ket{\psi})^2}{2(1-p)^2S^4}
   \Big).
\end{split}
\end{align}
From \eqref{eq:coupling2} we get similarly
\begin{align}
\begin{split}\label{eq:coupling3}
&\sqrt{pq'}\sqrt{F}(\rho_1,\psi_1)=
\\
&\geq pS\Big(1+ \frac{-(1-q)\sin^2(\alpha)\bra{\psi}(U\rho U^\dagger- \rho)\ket{\psi}
- \sqrt{q(1-q)}\cos(\alpha)\sin(\alpha)\mathrm{Re}(\bra{\psi}
(\rho U - \rho)
\ket{\psi})}{pS^2}
\\
&\qquad\qquad
-
\frac{{q(1-q)}\cos^2(\alpha)\sin^2(\alpha)\mathrm{Re}(\bra{\psi}
(\rho U - \rho)
\ket{\psi})}{2p^2S^4}
   \Big).
\end{split}
\end{align}
We notice that the linear terms in $\mathrm{Re}(\bra{\psi}
(\rho U - \rho)
\ket{\psi})$  cancel, and we get
\begin{align}
\begin{split}\label{eq:coupling4}
&\sqrt{(1-p)(1-q')}\sqrt{F}(\rho_0, \psi_0)+\sqrt{pq'}\sqrt{F}(\rho_1,\psi_1)
\\
&\geq S - \frac{\sin^2(\alpha)|\bra{\psi}(U\rho U^\dagger- \rho)\ket{\psi}|}{S}
-\left(\frac{1}{p}+\frac{1}{1-p}\right)\frac{q(1-q)\cos^2(\alpha)\sin^2(\alpha)\mathrm{Re}(\bra{\psi}
(\rho U - \rho).
\ket{\psi})^2}{2S^3}.
\end{split}
\end{align}
Finally we can estimate using the assumption $p,q\in[\eta,1-\eta]$ and
Lemma~\ref{le:trivial_bound} 
\begin{align}
\begin{split}\label{eq:fidfinal}
&\sqrt{(1-p)(1-q')}\sqrt{F}(\rho_0, \psi_0)+\sqrt{pq'}\sqrt{F}(\rho_1,\psi_1)
\\
&\quad\geq S - \frac{|p-q|^2\cdot|\bra{\psi}(U\rho U^\dagger- \rho)\ket{\psi}|}{4\eta(1-\eta)S}
-\left(\frac{1}{\eta}+\frac{1}{1-\eta}\right)\frac{|p-q|^2\mathrm{Re}(\bra{\psi}
(\rho U - \rho).
\ket{\psi})^2}{32\eta(1-\eta)S^3}.
\end{split}
\end{align}
This ends the proof.
\end{proof}

The proof of Theorem~\ref{th:main} requires a slight extension of the lemma above.
	Unfortunately, we cannot directly derive the result, but we need to slightly modify the proof of Lemma~\ref{le:fidelity}.
	\begin{corollary}\label{co:fidelity}
Assume the same setting as in Lemma~\ref{le:fidelity}, with the following 
changes. 
We consider another self adjoint traceless operator $\sigma$ and we define
\begin{align}
\bar{\rho}_0=(1-p)(\rho+p\sigma)
\quad
\rho_0=\rho+p\sigma
\quad \bar{\rho}_1=pU\rho U^\dagger + p(1-p)\sigma
\quad 
\rho_1 = U\rho U^\dagger+(1-p)\sigma.
\end{align}
We assume that $\rho_0$ and $\rho_1$ are density matrices, i.e., 
non-negative.
Then the following bound holds
\begin{align}
\begin{split}
%&\sqrt{F}\Big(\mc{E}_U^p(\rho),\mc{E}_U^q(\ket{\psi}\bra{\psi})\Big)
%\geq
&\sqrt{(1-p)(1-q')}\sqrt{F}(\rho_0,\ket{\psi_0}\bra{\psi_0})
+ \sqrt{pq'}\sqrt{F}(\rho_1, \ket{\psi_1}\bra{\psi_1}) -S\geq
\\
&
- \frac{(p-q)^2 |(\psi,( U\rho U^\dagger - \rho)\psi)|}{2\eta S}
- \frac{(p-q)^2 |\mathrm{Re}(\psi, U\rho \psi)-(\psi,\rho\psi)|^2}{8\eta^2S^3}
- \frac{p|\langle \bar\psi_0,\sigma\bar\psi_0\rangle|
+(1-p)|\langle \bar\psi_1,\sigma\bar\psi_1\rangle|
}{S}.
\end{split}
\end{align}
\end{corollary}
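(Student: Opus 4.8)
The idea is to rerun the proof of Lemma~\ref{le:coupling_const2} essentially line by line, carrying along the two extra contributions that $\sigma$ introduces and checking that each of them enters the estimates only as an additive perturbation inside a square root. Note first that the pure states $\psi_0,\psi_1$, the angle $\alpha$, the number $q'$ and the reference fidelity $S=\sqrt{\langle\psi,\rho\psi\rangle}$ are defined exactly as in Lemma~\ref{le:coupling_const2} (they do not involve $\sigma$); only $\rho_0$ and $\rho_1$ change, by $+p\sigma$ and $+(1-p)\sigma$ respectively. Since $\bar\psi_0=\sqrt{1-q'}\,\psi_0$ and $\bar\psi_1=\sqrt{q'}\,\psi_1$, I would first rewrite the two weighted fidelities as quadratic forms,
\begin{align}
\sqrt{(1-p)(1-q')}\,\sqrt F(\rho_0,\psi_0)=\sqrt{(1-p)\langle\bar\psi_0,\rho_0\bar\psi_0\rangle},\qquad
\sqrt{pq'}\,\sqrt F(\rho_1,\psi_1)=\sqrt{p\langle\bar\psi_1,\rho_1\bar\psi_1\rangle},
\end{align}
and then substitute $\rho_0=\rho+p\sigma$, $\rho_1=U\rho U^\dagger+(1-p)\sigma$, so that each right‑hand side splits into the expression already treated in the proof of Lemma~\ref{le:coupling_const2} plus $(1-p)p\langle\bar\psi_0,\sigma\bar\psi_0\rangle$ and $p(1-p)\langle\bar\psi_1,\sigma\bar\psi_1\rangle$, respectively.

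Next I would expand $\bar\psi_0,\bar\psi_1$ in $\{\psi,U\psi\}$ exactly as in \eqref{eq:lefid1}--\eqref{eq:lefid2}; using the angle identities \eqref{eq:add_thm0} and \eqref{eq:add_thm1} the $S^2$‑coefficients are again $(1-p)^2$ and $p^2$, giving
\begin{align}
\sqrt{(1-p)\langle\bar\psi_0,\rho_0\bar\psi_0\rangle}=(1-p)S\Bigl(1+s+t+\tfrac{p\langle\bar\psi_0,\sigma\bar\psi_0\rangle}{(1-p)S^2}\Bigr)^{1/2},
\end{align}
where $s$ and $t$ are precisely the quantities from \eqref{eq:coupling1} ($s$ the term with $\langle\psi,(U\rho U^\dagger-\rho)\psi\rangle$, $t$ the term with $\mathrm{Re}\langle\psi,(\rho U-\rho)\psi\rangle$), and the analogous identity for the $\rho_1$‑term carries an extra summand $\tfrac{(1-p)\langle\bar\psi_1,\sigma\bar\psi_1\rangle}{pS^2}$, cf.\ \eqref{eq:coupling2}. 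I would then apply Lemma~\ref{le:sqrt} with the $\sigma$‑summand placed in the ``$s$''‑slot: its hypothesis $1+s+t\ge0$ is exactly $\langle\bar\psi_0,\rho_0\bar\psi_0\rangle\ge0$ (resp.\ $\langle\bar\psi_1,\rho_1\bar\psi_1\rangle\ge0$), which is the one and only place where the standing assumption that $\rho_0,\rho_1$ are genuine density matrices is used. By the triangle inequality the $\sigma$‑summand adds to the $-|s|$ penalty an amount bounded by $\tfrac{p|\langle\bar\psi_0,\sigma\bar\psi_0\rangle|}{(1-p)S^2}$ (resp.\ $\tfrac{(1-p)|\langle\bar\psi_1,\sigma\bar\psi_1\rangle|}{pS^2}$); multiplying by the prefactors $(1-p)S$ and $pS$ produces exactly the two new error terms $\tfrac{p|\langle\bar\psi_0,\sigma\bar\psi_0\rangle|}{S}$ and $\tfrac{(1-p)|\langle\bar\psi_1,\sigma\bar\psi_1\rangle|}{S}$. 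The ``$t$''‑slot is untouched by $\sigma$, so the terms linear in $\mathrm{Re}\langle\psi,(\rho U-\rho)\psi\rangle$ still cancel between the two summands exactly as in \eqref{eq:coupling4}, and the remaining estimates \eqref{eq:coupling4}--\eqref{eq:fidfinal} (which only invoke $p,q\in[\eta,1-\eta]$ and Lemma~\ref{le:trivial_bound}) go through verbatim. Summing the two bounds yields the claimed inequality.

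\textbf{Main obstacle.} There is no deep difficulty here; the corollary is a robustness statement for Lemma~\ref{le:coupling_const2}, and most of the work is bookkeeping. The point requiring care is that each $\sigma$‑contribution is genuinely first order — there is no reason for it to be $O((p-q)^2)$, and no cancellation across the two summands is available for it — so it must be routed into the $O(1)$‑error slot of Lemma~\ref{le:sqrt}, which is precisely why it appears in the final bound divided only by $S$ rather than by $\eta S$ or $\eta^2 S^3$. The non‑negativity of $\rho_0$ and $\rho_1$ is exactly what keeps the arguments of all the square roots non‑negative, i.e.\ what makes Lemma~\ref{le:sqrt} applicable in the modified setting.
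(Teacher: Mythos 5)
Your proposal is correct and follows essentially the same route as the paper: rerun the proof of Lemma~\ref{le:coupling_const2}, note that the $\sigma$-contributions enter \eqref{eq:lefid1} and \eqref{eq:lefid2} as the additive perturbations $p\langle\bar\psi_0,\sigma\bar\psi_0\rangle$ and $(1-p)\langle\bar\psi_1,\sigma\bar\psi_1\rangle$, and route them into the ``$s$''-slot of Lemma~\ref{le:sqrt} to obtain the two extra error terms divided by $S$. Your explicit remark that the non-negativity of $\rho_0,\rho_1$ is precisely what keeps the square-root arguments non-negative (making Lemma~\ref{le:sqrt} applicable) is a correct and useful clarification that the paper leaves implicit.
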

	\begin{proof}
	The proof proceeds exactly as the proof of Lemma~\ref{le:fidelity}
	with the following minor modifications. We have to insert an additional term
	$p\langle \bar\psi_0,\sigma\bar\psi_0\rangle$ in \eqref{eq:lefid1}
	and a term $-(1-p) \langle\bar\psi_1,\sigma\bar\psi_1\rangle$
	in \eqref{eq:lefid2}. We carry those terms and when 
	we estimate the square-root terms we add them to the $s$ part 
	in the bound \eqref{eq:square_bound_again}, thus we end up with an additional term $-p|\langle \bar\psi_0,\sigma\bar\psi_0\rangle|/S$
	in \eqref{eq:lefid3} and 
	$-(1-p)|\langle \bar\psi_1,\sigma\bar\psi_1\rangle|/S$
	in \eqref{eq:coupling3} and thus their sum appears in \eqref{eq:fidfinal}. 
%%%%	%Finally we bound by applying Cauchy Schwarz to the scalar product induced by $|\sigma|$ and the definition of $\bar\psi_0$ and $\bar{\psi}_1$
%%%%	%\begin{align}
%%%%	%\begin{split}
%%%%	%p|\langle \bar\psi_0,\sigma\bar\psi_0\rangle|
%%%%%	+(1-p)|\langle \bar\psi_1,\sigma\bar\psi_1\rangle|
%%%%	&\leq 
%%%%	p\langle \bar\psi_0,|\sigma|\bar\psi_0\rangle
%%%%	+(1-p)\langle \bar\psi_1,|\sigma|\bar\psi_1\rangle
%%%%	\\
%%%%	&\leq
%%%%	2p((1-q)\cos^2(\alpha) \langle\psi, |\sigma|\psi\rangle
%%%%	+q\sin^2(\alpha)\langle\psi,U |\sigma|U^\dagger \psi\rangle)
%%%%	\\
%%%%	&\quad
%%%%	+2(1-p)((1-q)\sin^2(\alpha) \langle\psi, |\sigma|\psi\rangle
%%%%	+ q \cos^2(\alpha)\langle\psi,U |\sigma|U^\dagger \psi\rangle)
%%%%	\\
%%%%	&\leq 2( \langle\psi, |\sigma|\psi\rangle
%%%%	+\langle\psi,U |\sigma|U^\dagger \psi\rangle).
%%%%	\end{split}
%%%%	\end{align}
	This ends the proof.
	\end{proof}

\section{Complexity bounds for classical bandits with a quantum perspective}
\label{sec:classical}
Before addressing the case of quantum bandits, we revisit the classical bandit problem and give a different proof for the required number of rounds in the fixed confidence setting.
This section serves two purposes.
It shows that the fidelity of probability distributions is a useful distance measure to analyze classical
bandit problems which offers the additional advantage that 
it readily generalizes to quantum states. 
Moreover, this section is a preparation that introduces some notation for the more involved proof in
the quantum setting in the next section. In fact, the proof for the quantum result shares some ideas, and it is essentially based on a combination of the proof given in this section with the optimal fidelity estimates discussed above.
Recall the setting introduced at the beginning of Section~\ref{sec:overview}, in particular the definition of $\pp^j$ (for reference, we recapitulate the setting at the beginning of the next section). 
 Then the following result implies Theorem~\ref{th:classical_bandit_result}.

\begin{theorem}\label{th:classical}
Let $\delta<1/2$. 
Assume that $\pp^j$ are as defined in Section~\ref{sec:overview} with $p_i\in [\eta,1-\eta]$ for
some $\eta>0$.
Any classical algorithm that identifies the best arm
when it is known that the reward vector is in 
$\{\pp^0,\ldots,\pp^n\}$ with probability at least $1-\delta$
requires at least 
\begin{align}
T\geq c H(\pp^1)=c\sum_{j=2}^n \Delta_j^{-2}
\end{align}
rounds where $c=c(\delta,\eta)>0$.
\end{theorem}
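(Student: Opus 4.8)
The plan is to carry over the argument behind Theorem~\ref{th:faulty_grover}, together with the density-matrix-decomposition idea used for Theorem~\ref{th:main}, to the classical setting, using the Bhattacharyya coefficient as the classical fidelity. First I would reduce to a per-arm estimate: it suffices to prove that for every $j\in\{2,\dots,n\}$ a $(1-\delta)$-correct algorithm satisfies $\E_{\pp^0}[N_j]\ge c(\delta,\eta)\,\Delta_j^{-2}$, where $N_j$ counts the pulls of arm $j$; summing and using $\sum_{j\ge2}N_j\le T$ then gives $T\ge c(\delta,\eta)\sum_{j\ge2}\Delta_j^{-2}=c(\delta,\eta)H(\pp^1)$. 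Here $\pp^0$ and $\pp^j$ differ only in coordinate $j$, where $\pp^0_j=p_j$ and $\pp^j_j=p_0$, so $p_0-p_j=\Delta_j$, and arm $1$ is optimal under $\pp^0$ while arm $j$ is optimal under $\pp^j$.

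Fix $j$, let $\nu_t^{\ell}$ be the law of the transcript $(a_1,r_1,\dots,a_t,r_t)$ after $t$ rounds under $\pp^{\ell}$, and write $\pi(\cdot\mid h)$ for the (measure-independent) arm-selection rule of the algorithm. I would track the Bhattacharyya coefficient $S_t:=\sqrt F(\nu_t^j,\nu_t^0)=\sum_h\sqrt{\nu_t^j(h)\nu_t^0(h)}$, which is monotone under channels, strongly concave, and satisfies $1-\sqrt F\le\mathrm{TV}\le\sqrt{1-F}$, exactly like the quantum fidelity in Appendix~\ref{app:distance}. Summing over the fresh reward first and using that the reward laws agree on every arm $i\neq j$ gives the one-step identity
\begin{align}
S_t = S_{t-1}\Big(1-(1-\beta_j)\,\P_{\tilde\mu_{t-1}}(a_t=j)\Big),\qquad
\beta_j=\sqrt{p_0p_j}+\sqrt{(1-p_0)(1-p_j)},
\end{align}
where $\tilde\mu_{t-1}$ is the normalisation of the measure $h\mapsto\sqrt{\nu_{t-1}^j(h)\nu_{t-1}^0(h)}$; by \eqref{eq:bound_cos} of Lemma~\ref{le:trivial_bound}, $0\le1-\beta_j\le\Delta_j^2/(4\eta(1-\eta))$. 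On the other hand, since the algorithm outputs arm $1$ under $\pp^0$ and arm $j$ under $\pp^j$, each with probability $\ge1-\delta$, we have $\mathrm{TV}(\nu_T^j,\nu_T^0)\ge1-2\delta$ and hence $S_T\le2\sqrt{\delta(1-\delta)}$. What remains is to convert the resulting lower bound on $\sum_{t,h}\sqrt{\nu_{t-1}^j(h)\nu_{t-1}^0(h)}\,\pi(j\mid h)$ into a lower bound on $\E_{\pp^0}[N_j]=\sum_t\P_{\pp^0}(a_t=j)$.

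This last step is the crux, and the place where the naive estimate is too weak: Cauchy--Schwarz only gives $\sum_{t,h}\sqrt{\nu_{t-1}^j\nu_{t-1}^0}\,\pi(j\mid h)\le\sqrt{\E_{\pp^0}[N_j]\,\E_{\pp^j}[N_j]}$, and because arm $j$ is optimal under $\pp^j$ the factor $\E_{\pp^j}[N_j]$ can be of order $T$, which only reproduces the $\sqrt n$-lossy (or $\Delta_j^{-4}$-type) conclusions in the spirit of Corollary~\ref{co:fid3}. To remove the dependence on $\E_{\pp^j}[N_j]$ I would use the classical shadow of the decomposition described for Theorem~\ref{th:main}: decompose the reference law $\nu_t^0$ into its pure components --- the point masses obtained by conditioning on all of the $\pp^0$-environment's internal randomness, organised as a tree that splits at each pull --- and carry the matched decomposition of $\nu_t^j$ furnished by the classical counterpart of Lemma~\ref{le:coupling_const}. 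Working branch by branch, the reference component remains a point mass $\delta_g$, while the matched alternative component is disturbed only at pulls of arm $j$, exactly as the state in the proof of Theorem~\ref{th:faulty_grover}; the classical analogue of combining Lemma~\ref{le:coupling_const} with the projection estimate used in \eqref{eq:fid_pur_relation} and Lemma~\ref{le:projection} then bounds the per-branch, per-round loss of $\sqrt F$ by $\lesssim\Delta_j^2\,\pi(j\mid g)/(\eta(1-\eta))$ --- \emph{linearly} in the reference component's probability $\pi(j\mid g)$ of pulling arm $j$, and with no appearance of the alternative's sampling behaviour. Summing against the reference weights $w_g$ of the branches and over the rounds yields $\sum_t\sum_g w_g\,\pi(j\mid g)=\sum_t\P_{\pp^0}(a_t=j)=\E_{\pp^0}[N_j]$, and since $\sum_j\pi(j\mid g)\le1$ no factor $n$ reappears. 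Combining with $S_T\le2\sqrt{\delta(1-\delta)}$ gives $1-2\sqrt{\delta(1-\delta)}\le 1-S_T\lesssim\Delta_j^2\,\E_{\pp^0}[N_j]/(\eta(1-\eta))$, i.e.\ $\E_{\pp^0}[N_j]\ge c(\delta,\eta)\Delta_j^{-2}$, which is what was needed.

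I expect the main obstacle to be precisely the construction of the matched decomposition together with the verification of the per-branch ``fidelity-loss versus selection-probability'' inequality at the optimal quadratic rate in $\Delta_j$: this needs the quadratic (rather than merely linear) dependence of the optimal fidelity bound of Lemma~\ref{le:coupling_const}, a control of how far the alternative branch has drifted from the reference branch so that the alternative's arm-$j$ weight can be charged back to $\pi(j\mid g)$, and a careful treatment of the normalising fidelities $S_g$ --- most cleanly by stopping each branch's account at the first round at which its fidelity drops below, say, $3/4$, and arguing separately that this still accumulates a constant amount of fidelity loss. The remaining ingredients --- the one-step identity, the strong-concavity bookkeeping, and the $\mathrm{TV}$/fidelity comparison --- are then routine and parallel the corresponding steps in the proof of Theorem~\ref{th:faulty_grover}.
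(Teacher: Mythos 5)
Your scaffolding coincides with the paper's: the paper's proof of Theorem~\ref{th:classical} also reduces to a per-arm bound on $\E_{\pp^0}[N_j]$, tracks the Bhattacharyya coefficient $S_t=\sum_{z_t}\sqrt{\Pd^j(z_t)\Pd^0(z_t)}$, derives exactly your one-step identity (fidelity drops by a factor $1-(1-\beta_j)$ only on geometric-mean weight placed on histories that pull arm $j$, with $1-\beta_j\le\Delta_j^2/(4\eta(1-\eta))$ from Lemma~\ref{le:trivial_bound}), and closes with $S_T\le 2\sqrt{\delta(1-\delta)}$. You have also correctly located the crux. The gap is in your resolution of it. Strong concavity, and your own one-step identity, deliver a lower bound on
\begin{align*}
\sum_{t}\sum_{z_{t-1}}\sqrt{\Pd^j(z_{t-1})\Pd^0(z_{t-1})}\,\bs{1}_{a_t(z_{t-1})=j},
\end{align*}
which is precisely the quantity in \eqref{eq:fidelity_final_classic} that the paper says is \emph{not} yet enough. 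Decomposing the reference law into point masses does not change this: the classical fidelity of two distributions is by definition a sum of geometric means of branch weights, so any matched decomposition (including the classical shadow of Lemma~\ref{le:coupling_const}) weights branch $g$ by $\sqrt{w_g m_g}$, not by the reference weight $w_g$. Your claim that ``summing against the reference weights $w_g$ yields $\E_{\pp^0}[N_j]$, with no appearance of the alternative's sampling behaviour'' silently replaces $\sqrt{\Pd^j\Pd^0}$ by $\Pd^0$; the only elementary comparison available, $\sqrt{\Pd^j\Pd^0}\le\tfrac12(\Pd^j+\Pd^0)$, reintroduces $\E_{\pp^j}[N_j]$, which is $\Theta(T)$ since arm $j$ is optimal under $\pp^j$. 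So the asserted per-branch bound does not give the claimed conclusion, and none of the obstacles you list at the end names this specific issue.

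The paper's missing ingredient is the artificial decay factor $\de^j(z_t)=\bigl(1-\tfrac{\Delta_j^2}{4\eta(1-\eta)}\bigr)^{n_j(z_t)}$, inserted into the fidelity recursion. After a Cauchy--Schwarz split of $\sqrt{\Pd^j\Pd^0}\,\de^j\,\bs{1}_{a_t=j}$ into a $\Pd^0$-weighted factor (which is exactly $\E_{\pp^0}[N_j]$) and a $\Pd^j$-weighted factor, the latter telescopes along each fixed trajectory into a geometric series bounded by $4\eta(1-\eta)/\Delta_j^2$ (see \eqref{eq:geom_sum}) \emph{regardless} of how often arm $j$ is pulled under $\Pd^j$; this yields $c(\delta,\eta)\Delta_j^{-2}\le\Delta_j^{-1}\bigl(\E_{\pp^0}[N_j]\bigr)^{1/2}$ and hence the per-arm bound, from which the theorem follows as you describe. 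Incidentally, your target inequality $1-S_T\lesssim\Delta_j^2\,\E_{\pp^0}[N_j]/(\eta(1-\eta))$ is in fact true, but the natural route to it is to switch from the fidelity to the KL divergence ($1-\sqrt{F}\le\tfrac12\mathrm{KL}$ by Jensen, plus the chain rule $\mathrm{KL}(\Pd^0\Vert\Pd^j)=\E_{\pp^0}[N_j]\,\mathrm{kl}(p_j,p_0)$); that is the textbook bandit argument, which the paper deliberately avoids because it does not survive the passage to the quantum setting that this section is meant to prepare.
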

Since this result is well known, the proof serves merely pedagogical purposes to illustrate our approach to the quantum setting. 
Therefore, we do not give the most concise presentation, but instead highlight the main difference to the standard proofs of this result.
\begin{proof}
Suppose we are given an algorithm $A$. In each step $t$ the algorithm picks an arm
$a_t$ depending on all earlier outcomes and receives a (binary) reward $r_t\in \{0,1\}$.
We introduce the  variables
$x_t=(a_t, r_t)\in [N]\times \{0,1\}$ encoding the path of the algorithm.
We denote by $z_t=(x_1,\ldots,x_t)$ the entire history of the exploration.
Note that $a_t$ only depends on the outcomes of the previous rounds and therefore is a deterministic function of $z_{t-1}$, i.e., $a_t=a_t(z_{t-1})$. 
When the rewards follow the distribution $\pp^j$ this
induces a distribution on $x_t$ and $z_t$ and we denote the corresponding 
random variables by $Z_t^j$ and $X_t^j$ and the distribution by $\Pd^j$.
 The main idea of the proof is to bound the fidelity of 
 random variables 
$Z_t^j$ and $Z_t^0$ for each $t$ from below. On the other hand, 
we can upper bound the fidelity because the algorithm can 
identify the best arm for the reward distributions $\pp^j$ and $\pp^0$ and 
those arms are different
for $j>1$. Together, those two bounds will imply the claim. 
The proof will rely on the fidelity $\sqrt{F}$ of two discrete probability distributions
$p_x$ and $q_x$ which is defined by 
\begin{align}
\sqrt{F}(p,q)=\sum_x \sqrt{p_xq_x}.
\end{align}
We refer to Appendix~\ref{app:distance} for a brief summary of distance measures, here we only need the definition and the bound by the total variation distance (defined by the first equality)
\begin{align}
\mathrm{d}_{\mathrm{TV}}(p, q) =  \frac12 \sum_x |p_x-q_x|\leq 
\sqrt{1 - F(p,q)}.
\end{align}

We now discuss the simple upper bound on the fidelity after the final round $T$ coming from the assumption that the algorithm succeeds with high probability.
 Let $M_j$ be the disjoint sets of outcomes $z_T$ such that 
arm $j$ is selected by the algorithm. By assumption $\Pd^j(M_j)>1-\delta$
and $\Pd^0(M_1)>1-\delta$ and therefore $\Pd^0(M_j)<\delta$. This implies
for $j>1$ (see Appendix~\ref{app:distance} for a brief summary of distance measures)
\begin{align}
1-2\delta < \Pd^j(M_j)-\Pd^0(M_j)\leq \mathrm{d}_{\mathrm{TV}}( Z_T^j, Z_T^0)
\leq \sqrt{1-F(Z_T^j, Z_T^0)}.
\end{align}
We conclude that 
\begin{align}\label{eq:fidelity_classic_difference}
\sqrt{F}(Z_T^j, Z_T^0)\leq 2\sqrt{\delta(1-\delta)}.
\end{align}
We now bound the fidelity from below. The first bound will not be sufficient to
conclude, but it is nevertheless instructive to understand the difficulties of the quantum setting and the relation to earlier proofs. We will later refine
the following estimates. 
We bound
\begin{align}
\begin{split}\label{eq:fidelity_bound_classic}
\sqrt{F}(Z_t^j, Z_t^0)
&=\sum_{z_t} \sqrt{\Pd^j(z_t) \Pd^0(z_t)}
\\
&= \sum_{r_t = 0}^1 \sum_{z_{t-1}} \sqrt{\Pd^j(r_t\vert a_t(z_{t-1}))\Pd^j(z_{t-1})
\Pd^0(r_t\vert a_t(z_{t-1})\Pd^0(z_{t-1}))}
\\
&=
\sum_{z_{t-1}}   \sqrt{\Pd^j(z_{t-1})\Pd^0(z_{t-1})}
\sum_{r_t=0}^1\sqrt{\Pd^j(r_t\vert a_t(z_{t-1}))
\Pd^0(r_t\vert a_t(z_{t-1}))}.
\end{split}
\end{align}

We now consider two cases $a_t(z_{t-1})=j$ and $a_t(z_{t-1})\neq j$.
In the latter case 
\begin{align}
\Pd^j(r_t\vert a_t(z_{t-1}))=\Pd^0(r_t\vert a_t(z_{t-1}))=p_j
\end{align}
and thus for $a_t(z_{t-1})\neq j$
\begin{align}
\sum_{r_t=0}^1\sqrt{\Pd^j(r_t\vert a_t(z_{t-1}))\Pd^0(r_t\vert a_t(z_{t-1}))}
= \sum_{r_t=0}^1 \Pd^0(r_t\vert a_t(z_{t-1}))=1.
\end{align}
For $a_t(z_{t-1})= j$ we use the simple bound \eqref{eq:bound_cos}
from Lemma~\ref{le:trivial_bound} in Appendix~\ref{app:auxiliary}.
bounding for $p,q\in [c,1-c]$
\begin{align}
\sqrt{F}(\mathrm{Ber}(p),\mathrm{Ber}(q))\geq 1-\frac{|p-q|^2}{4c(1-c)}.
\end{align}
This implies
\begin{align}
\sum_{r_t=0}^1\sqrt{\Pd^j(r_t\vert a_t=j)\Pd^0(r_t\vert a_t=j)}
\geq 1 - \frac{|p_j-p_0|^2}{4\eta(1-\eta)}.
\end{align}
We can now use the last two displays to continue to estimate \eqref{eq:fidelity_bound_classic}
\begin{align}
\begin{split}\label{eq:fidelity_classic_key}
\sqrt{F}(Z_t^j, Z_t^0)
&\geq \sum_{z_{t-1}}   \sqrt{\Pd^j(z_{t-1})\Pd^0(z_{t-1})}
-\sum_{z_{t-1}} \sqrt{\Pd^j(z_{t-1})\Pd^0(z_{t-1})} \bs{1}_{a_t(z_{t-1})=j}
\frac{\Delta_j^2}{4\eta(1-\eta)}
\\
&=\sqrt{F}(Z_{t-1}^j, Z_{t-1}^0)
-\sum_{z_{t-1}} \sqrt{\Pd^j(z_{t-1})\Pd^0(z_{t-1})} \bs{1}_{a_t(z_{t-1})=j}
\frac{\Delta_j^2}{4\eta(1-\eta)}.
\end{split}
\end{align}
Using this iteratively we obtain (using $\sqrt{F}(Z_0^j, Z_0^0)=1$) the bound
\begin{align}\label{eq:fidelity_final_classic}
2\sqrt{\delta(1-\delta)}\geq \sqrt{F}(Z_0^j, Z_0^0)\geq
1-\sum_{t=1}^{T}\sum_{z_{t-1}} \sqrt{\Pd^j(z_{t-1})\Pd^0(z_{t-1})} \bs{1}_{a_t(z_{t-1})=j}
\frac{\Delta_j^2}{4\eta(1-\eta)}.
\end{align}
Now the standard way to proceed from here is to
show that with high probability $\Pd^j(z_{t-1})$ and $\Pd^0(z_{t-1})$ 
are similar using tail bounds for random variables (note that we already control the fidelity).
Suppose that up to small errors we could replace $\Pd^j$ by $\Pd^0$. Then we
could conclude from \eqref{eq:fidelity_final_classic} that 
\begin{align}
\frac{\Delta_j^{2}}{4\eta(1-\eta)}\sum_{t=1}^{T}\sum_{z_{t-1}}\Pd^0(z_{t-1})  \bs{1}_{a_t(z_{t-1})=j}
\geq 1 - 2\sqrt{\delta(1-\delta)}>0.
\end{align}
Dividing by $\Delta_j^2$ and summing over $j$ this would imply
that there is a constant $c>0$ such that
\begin{align}
\sum_j\Delta_j^{-2}\leq c\sum_{t=1}^{T}\sum_{z_{t-1}}\Pd^0(z_{t-1}) \sum_j \bs{1}_{a_t(z_{t-1})=j}=cT.
\end{align}
This approach cannot be extended to the quantum setting. The reason is that
the tail bounds rely on the fact that when we 
use a total of $\mc{O}(H)$ queries
then we 
 cannot query all arms more often than 
$\Delta_j^{-2}$. On the other hand, in the quantum setting
we query in superposition so that we cannot simply count the number of pulls on an arm. 
We now show how the tail bounds can be avoided in a way that can similarly be generalized to the quantum setting. 
Let us denote by $n_j(z_t)=|\{s: a_s(z_t)=j\}$ the number of times we queried the $j$-th arm.
We introduce the decay factor 
\begin{align}
\de^j(z_t)= \left(1-\frac{\Delta_j^2}{4\eta(1-\eta)}\right)^{n_j(z_t)}.
\end{align}
Then we will be interested in bounding from below the following (the lower bound on the fidelity of $\P^j$ and $\P^0$)
\begin{align}
    \sum_{z_t} \sqrt{\Pd^j(z_t) \Pd^0(z_t)}\de^j(z_t)
\end{align}
Introducing this decay factor artificially will allow us to derive stronger bounds. 
Note that 
\begin{align}\label{eq:recursion_classical_decay}
%r^j(z_{t-1})-r^j(z_{t})=\Delta_j^2r_j(z_{t-1})\bs{1}_{a_t(z_{t-1})=j}.
\de_j(z_t)=\de_j(z_{t-1})\left(1-\frac{\Delta_j^2}{4\eta(1-\eta)} \bs{1}_{a_t(z_{t-1})=j}\right).
\end{align}
Now we can bound using the same reasoning as in \eqref{eq:fidelity_bound_classic} and \eqref{eq:fidelity_classic_key} and the display above
\begin{align}
\begin{split}
&\sum_{z_t} \sqrt{\Pd^j(z_t) \Pd^0(z_t)}\de^j(z_t)
=
\sum_{z_t} \sqrt{\Pd^j(z_t) \Pd^0(z_t)}\de^{j}(z_{t-1})\left(1-\tfrac{\Delta_j^2}{4\eta(1-\eta)} \bs{1}_{a_t(z_{t-1})=j}\right)
\\
&\geq 
\sum_{z_{t-1}}   \sqrt{\Pd^j(z_{t-1})\Pd^0(z_{t-1})}\de^{j}(z_{t-1})
\left(1-\tfrac{\Delta_j^2}{4\eta(1-\eta)} \bs{1}_{a_t(z_{t-1})=j}\right)
\\
&\quad  -\sum_{z_{t-1}} \sqrt{\Pd^j(z_{t-1})\Pd^0(z_{t-1})}\de^j(z_{t-1})\left(1-\tfrac{\Delta_j^2}{4\eta(1-\eta)} \bs{1}_{a_t(z_{t-1})=j}\right) \bs{1}_{a_t(z_{t-1})=j}
\tfrac{\Delta_j^2}{4\eta(1-\eta)}
\\ &
\geq \sum_{z_{t-1}}   \sqrt{\Pd^j(z_{t-1})\Pd^0(z_{t-1})}\de^{j}(z_{t-1})
 -\sum_{z_{t-1}}   \sqrt{\Pd^j(z_{t-1})\Pd^0(z_{t-1})}\de^{j}(z_{t-1})\tfrac{\Delta_j^2}{2\eta(1-\eta)} \bs{1}_{a_t(z_{t-1})=j}
\\
&\geq \sum_{z_{t-1}}   \sqrt{\Pd^j(z_{t-1})\Pd^0(z_{t-1})}\de^{j}(z_{t-1})
  -\tfrac{\Delta_j^2}{2\eta(1-\eta)} \sum_{z_{t-1}} \sqrt{\Pd^j(z_{t-1})\Pd^0(z_{t-1})}\de^j(z_{t-1}) \bs{1}_{a_t(z_{t-1})=j}.
\end{split}
\end{align}
In particular,  we find that the decay factor is chosen such that up to a constant the same bound before holds (see \eqref{eq:fidelity_classic_key}),
except that we introduced the terms $d^j(z_t)$ in the loss terms which makes it easier to control them.

Using \eqref{eq:fidelity_classic_difference} and a telescopic series we conclude that
\begin{align}
\begin{split}
2&\sqrt{\delta(1-\delta)}\geq \sqrt{F}(Z_T^j,Z_T^0)
\geq \sum_{z_T} \sqrt{\Pd^j(z_T) \Pd^0(z_T)}\de^j(z_T)
\\
&\geq 1 - \tfrac{\Delta_j^2}{2\eta(1-\eta)}\sum_{t=1}^T\sum_{z_{t-1}} \sqrt{\Pd^j(z_{t-1})\Pd^0(z_{t-1})}\de^j(z_{t-1}) \bs{1}_{a_t(z_{t-1})=j}.
\end{split}
\end{align}
Equivalently, this can be rewritten as
\begin{align}
\begin{split}\label{eq:bound_classical_fid}
(1-2\sqrt{\delta(1-\delta)})\frac{2\eta(1-\eta)}{\Delta_j^2}
\leq 
\sum_{t=1}^T\sum_{z_{t-1}} \sqrt{\Pd^j(z_{t-1})\Pd^0(z_{t-1})}\de^j(z_{t-1}) \bs{1}_{a_t(z_{t-1})=j}.
\end{split}
\end{align}
It remains to bound the right-hand side of this inequality.
For $t<T$ we write  $z_{T|t}=(x_1,\ldots,x_t)$ where
$z_T=(x_1,\ldots, x_t,\ldots, x_T)$, i.e., $z_{T|t}$ denotes the restriction of the history $z_T$ to the first $t$ steps. Then we have by definition 
of $\Pd$
\begin{align}\label{eq:formal_nonsense}
\sum_{z_T} \Pd^j(z_T)f(z_{T|t})=
\sum_{z_t} \Pd^j(z_t) f(z_t).
\end{align}
Moreover, we note that for all $z_T$ we can bound
\begin{align}\label{eq:geom_sum}
\begin{split}
\sum_{t=1}^T\de^j(z_{T|(t-1)})^2 \bs{1}\left[a_t(z_{T|(t-1)})=j\right]
&=
\sum_{n=0}^{n_j(z_T)-1} \left(1-\frac{\Delta_j^2}{4\eta(1-\eta)}\right)^{2n}
\\
&\leq \sum_{n=0}^\infty  \left(1-\frac{\Delta_j^2}{4\eta(1-\eta)}\right)^{n}
= \frac{4\eta(1-\eta)}{\Delta_j^2}.
\end{split}
\end{align}
We can bound using the Cauchy-Schwarz estimate, \eqref{eq:formal_nonsense},
and
\begin{align}
\begin{split}
\sum_{t=1}^T\sum_{z_{t-1}}& \sqrt{\Pd^j(z_{t-1})\Pd^0(z_{t-1})}\de^j(z_{t-1}) \bs{1}_{a_t(z_{t-1})=j}^2
\\
&\leq 
\left(\sum_{t=1}^T\sum_{z_{t-1}} \Pd^0(z_{t-1}) \bs{1}_{a_t(z_{t-1})=j}  \right)^{\frac12}
\left(\sum_{t=1}^T\sum_{z_{t-1}} \Pd^j(z_{t-1})\de^j(z_{t-1})^2 \bs{1}_{a_t(z_{t-1})=j}\right)^{\frac12}
\\
&\leq 
\left(\sum_{t=1}^T \Pd^0(A_t=j)  \right)^{\frac12}
\left(\sum_{z_{T}} \Pd^j(z_{T}) \sum_{t=1}^T\de^j(z_{T|(t-1)})^2 \bs{1}_{a_t(z_{T|(t-1)})=j}\right)^{\frac12}
\\
&\leq 
\left(\sum_{t=1}^T \Pd^0(A_t=j)  \right)^{\frac12}
\left(\sum_{z_{T}} \Pd^j(z_{T}) \frac{4\eta(1-\eta)}{\Delta_j^2}\right)^{\frac12}
\\
&\leq 
\left(\sum_{t=1}^T \Pd^0(A_t=j)  \right)^{\frac12}\frac{2\sqrt{\eta(1-\eta)}}{\Delta_j}.
\end{split}
\end{align}
Plugging this in \eqref{eq:bound_classical_fid}, dividing by 
$2\sqrt{\eta(1-\eta)}$ and summing over $j>1$ gives
\begin{align}
\begin{split}
(1-2\sqrt{\delta(1-\delta)})\sqrt{\eta(1-\eta)}\sum_{j=2}^N\Delta_j^{-2}
&\leq  \sum_{j=2}^N \left[\Delta_j^{-1}\left(\sum_{t=1}^T \Pd^0(A_t=j)  \right)^{\frac12}\right]
\\
&\leq \left(\sum_{j=2}^N \Delta_j^{-2}\right)^{\frac12}
\left(\sum_{j=2}^n\sum_{t=1}^T \Pd^0(A_t=j)  \right)^{\frac12}
\\
&=\left(\sum_{j=2}^n \Delta_j^{-2}\right)^{\frac12}
\left(\sum_{t=1}^T \sum_{j=1}^n \Pd^0(A_t=j)  \right)^{\frac12}
\\
&= \sqrt{H(\pp^1)}\sqrt{ T}.
\end{split}
\end{align}
Squaring this relation ends the proof.
\end{proof}

\section{Proof of Theorem~\ref{th:main}}\label{app:main}
In this section, we finally provide a proof of our main result.
\blue{
Let us for reference first summarize the setting that we introduced at the beginning of Section~\ref{sec:overview}.
We consider  vectors $\pp=(p_0,\ldots,p_N)\in [\eta, 1-\eta]^{N+1}$
with $p_i$ decreasing and then we let $\pp^i\in [\eta,1-\eta]^N$ be the reward vectors
with $\pp^i_j=\pp_j$ if  $i\neq j$ and $\pp^i_i=\pp_0$.
Moreover, $\pp^0_i=\pp_i$ for $1\leq i\leq N$ and note that $\pp^i$ and $\pp^0$ only differ in entry $i$ and recall that $\Delta_i=p_0-p_i$. 
We introduced the oracles $\mc{F}^p_i(\rho)=(1-p)\rho + pO_i\rho O_i^\dagger$ (where $O_i$ acts on $\ket{i}$ only) and then defined
\begin{align}
    \mc{E}_i = \mc{F}_1^{\pp^i_1}\circ\ldots \circ\mc{F}_N^{\pp^i_N}.
\end{align}
We then consider any algorithm start with an initial state $\rho_0$
and whose output is given by a POVM measurement of the state
\begin{align}
    \mc{T}_i\rho_0=(\mc{E}_i\otimes \id)\circ \mc{E}_{U_T}
\circ \ldots \circ (\mc{E}_i\otimes \id)\circ \mc{E}_{U_1}\rho_0
\end{align}
where $U_i$ are unitary maps. We denote the intermediate states of the algorithm by 
\begin{align}
    \tilde{\rho}^i_{t} = \mc{E}_{U_t}\rho^i_{t} 
    \quad 
    \rho^i_{t+1} = (\mc{E}_i\otimes \id) \tilde{\rho}^i_t,
\end{align}
i.e., $\rho^i_t$ denotes the state after $t$ oracle invocations and $
\tilde{\rho}^i_t$ the state before the $(t+1)$-th oracle call.

Let us now state a formal version of the main result, Theorem~\ref{th:main}.
}
\begin{theorem}\label{th:optimal}
Let $\delta<1/2$. 
Assume that $\pp^j$ are reward vectors as introduced  at the beginning of this section, where $p_i\in [\eta,1-\eta]$ for
some $\eta>0$.
Any quantum algorithm that identifies the best arm
when it is known that the reward vector is in 
$\{\pp^0,\ldots,\pp^n\}$ with probability at least $1-\delta$
requires at least 
\begin{align}
T\geq c H(\pp^1)=c\sum_{i=2}^N \Delta_i^{-2}
\end{align}
calls to the oracle $\mc{E}_i$ where $c=c(\delta,\eta)$ where an 
explicit expression under the condition $\Delta_N^2/\eta <1/2$ is given by
\begin{align}
c(\delta,\eta)=\left(\frac{\eta(1-2\sqrt{\delta(1-\delta))}}{20}\right)^2.
\end{align}
%Any algorithm that is able to identify $p_{max} = \max p_i$ for any probability
%vector $p\in [c,1-c]^N$ with probability at least $1-\delta$
%for some $\delta< \tfrac12$ requires at least 
%\begin{align}
%T\geq \frac{(1-2\sqrt{\delta(1-\delta)})^2}{c_1^2}\sum_k \Delta_k^{-2}.
%\end{align}
%calls to the oracle $\mc{E}^k(\rho)$ where $c_1=c_1(c)$ is a constant depending %only on $c$.
\end{theorem}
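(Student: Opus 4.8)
The plan is to run the branch–decomposition/coupling strategy outlined in Section~\ref{sec:main_proof} on top of the decay–factor bookkeeping of the classical proof in Appendix~\ref{sec:classical}, feeding each branching step with the optimal one–step fidelity estimate of Lemma~\ref{le:coupling_const} together with its extension Corollary~\ref{co:fidelity}; the role that decoherence (purity) plays in the proof of Theorem~\ref{th:faulty_grover} will here be taken over by an artificial decay factor attached to each branch, exactly mirroring the decay factor $\de^j(z_t)$ of Appendix~\ref{sec:classical}.

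I would start from the reduction common to all the lower bounds in the paper: fix an arbitrary algorithm, write it as $\mc{E}_{U_T}\circ\mc{E}_k\circ\cdots\circ\mc{E}_k\circ\mc{E}_{U_0}$ on a pure initial state, let $\rho_t^j$ be the state after $t$ oracle calls when $\mc{E}_k=\mc{E}_j=\mc{E}^{\pp^j}$, and note that, as in \eqref{eq:hellstroem_fidelity_main}, success with probability $1-\delta$ together with the Helstrom bound and \eqref{eq:trace_fidelity} forces $\sqrt{F}(\rho_T^j,\rho_T^0)\le 2\sqrt{\delta(1-\delta)}$ for every $j\ge 1$. Hence it suffices to establish, for each fixed $j\ge 2$, a lower bound of the shape $\sqrt{F}(\rho_T^j,\rho_T^0)\ge 1-\tfrac{C}{\sqrt\eta}\,\Delta_j\big(\sum_{t}\tr(P_j\rho_t^0)\big)^{1/2}$; squaring, summing over $j\ge 2$ and using $\sum_j\tr(P_j\rho_t^0)=1$ then gives $\sqrt\eta\,(1-2\sqrt{\delta(1-\delta)})\,H(\pp^1)\lesssim \sqrt{H(\pp^1)}\sqrt T$, i.e.\ $T\gtrsim \eta\,H(\pp^1)$, after which Lemma~\ref{le:reward} transfers the statement from $\pp^1$ to $\pp^0$, and propagating all constants yields the claimed $c(\delta,\eta)=\big(\eta(1-2\sqrt{\delta(1-\delta)})/20\big)^2$ under $\Delta_n^2/\eta<1/2$.

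The core is the per-$j$ bound. Writing each oracle call as $\mc{E}^\pp=\mc{F}_1^{\pp_1}\circ\cdots\circ\mc{F}_n^{\pp_n}$, every factor $\mc{F}_i^{p}$ splits a pure state $\psi$ into two branches, so after $t$ calls $\rho_t^0$ admits a decomposition $\rho_t^0=\sum_\gamma \mu_\gamma\,\ket{\psi_\gamma}\bra{\psi_\gamma}$ indexed by a branch history $\gamma$; I would build a coupled decomposition $\rho_t^j=\sum_\gamma \mu_\gamma\,\rho_\gamma$ with the same labels, inductively, applying Lemma~\ref{le:coupling_const}/Corollary~\ref{co:fidelity} at each sub-step. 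On coordinates $i\ne j$ the two worlds have equal parameters, so $p=q$ and the coupled fidelity $\Phi_t^j:=\sum_\gamma \mu_\gamma\,\sqrt{F}(\rho_\gamma,\ket{\psi_\gamma}\bra{\psi_\gamma})$ is preserved exactly, since every error term carries a factor $(p-q)^2$. On coordinate $j$ the parameters are $p_j$ and $p_0=p_j+\Delta_j$, and the twisted split of the lemma makes the coupled fidelity drop by at most $\tfrac{c\Delta_j^2}{\eta S_\gamma}\lVert P_j\psi_\gamma\rVert^2$ plus a strictly higher-order term, where $S_\gamma=\sqrt{F}(\rho_\gamma,\ket{\psi_\gamma}\bra{\psi_\gamma})$ is the local fidelity and I have used Lemma~\ref{le:projection} — exactly as in \eqref{eq:fid_pur_relation} — to bound $|(\psi_\gamma,(O_j\rho_\gamma O_j^\dagger-\rho_\gamma)\psi_\gamma)|$ by $\lVert P_j\psi_\gamma\rVert$ times the decoherence it induces in $\rho_\gamma$ (which, while the local fidelity is not too small, keeps $\tr(P_j\rho_\gamma)$ comparable to $\lVert P_j\psi_\gamma\rVert^2$). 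To keep the $1/S_\gamma$ and $1/S_\gamma^3$ denominators harmless and to make the accumulated losses telescope, I would attach to each branch a decay factor $d_\gamma^j\in(0,1]$ that is multiplied by $1-\tfrac{c\Delta_j^2}{\eta}\lVert P_j\psi_\gamma\rVert^2$ whenever coordinate $j$ is touched (it stays in $(0,1)$ thanks to $\Delta_n^2/\eta<1/2$), and track $G_t^j:=\sum_\gamma \mu_\gamma\,d_\gamma^j\,\sqrt{F}(\rho_\gamma,\ket{\psi_\gamma}\bra{\psi_\gamma})$ in place of $\Phi_t^j$, mirroring \eqref{eq:bound_classical_fid}. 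Along every branch the decay gives $\sum_{\text{sub-steps}}(d_\gamma^j)^2\lVert P_j\psi_\gamma\rVert^2\le \tfrac{c'\eta}{\Delta_j^2}$ by the geometric/telescoping estimate of \eqref{eq:geom_sum}, so $2\sqrt{\delta(1-\delta)}\ge G_T^j\ge 1-\tfrac{c\Delta_j^2}{\eta}\sum_{t,\gamma}\mu_\gamma\,d_\gamma^j\,\lVert P_j\psi_\gamma\rVert^2$; one Cauchy–Schwarz separates the un-decayed factor $\lVert P_j\psi_\gamma\rVert^2$ (which sums against $\mu_\gamma$ to $\tr(P_j\rho_t^0)$) from the geometrically-summable $(d_\gamma^j)^2\lVert P_j\psi_\gamma\rVert^2$, producing the desired per-$j$ inequality, and bounding $\tr(P_j\rho_\gamma)\le 1$ avoids the lossy pairing that costs the factor $\sqrt n$ in Corollary~\ref{co:fid3}.

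I expect the main obstacle to be the consistency and positivity of the coupled decomposition together with the control of the auxiliary error terms. Unlike the classical case, the $\pp^0$-side branches are the \emph{twisted} combinations $\psi_0,\psi_1$ of Lemma~\ref{le:coupling_const} rather than $\psi,O_j\psi$, and the $\pp^j$-side pieces $\rho_\gamma$ pick up the traceless corrections $\sigma$ of Corollary~\ref{co:fidelity}; one must check simultaneously that these reassemble to $\rho_t^0$ and $\rho_t^j$, that every $\rho_\gamma$ stays non-negative (so the local fidelities are defined and the $1/S_\gamma$ denominators remain under control, which is exactly what the decay factor is designed to ensure), and that the second-order term of Lemma~\ref{le:coupling_const} and the $\sigma$-correction term of Corollary~\ref{co:fidelity} are both dominated by the main $\Delta_j^2$ loss. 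Propagating this bookkeeping through the $n$ coordinates of a single oracle call and then through all $T$ calls is the delicate part; the remaining ingredients — the Helstrom reduction, the geometric-series bound, the two Cauchy–Schwarz steps, and Lemma~\ref{le:reward} — are routine once the decomposition is in place.
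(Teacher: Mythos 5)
Your overall architecture matches the paper's: the Helstrom/fidelity reduction, the branch decomposition with pure states on the $\pp^0$ side and coupled mixed states on the $\pp^j$ side built from Lemma~\ref{le:coupling_const} and Corollary~\ref{co:fidelity}, the decay factor imported from the classical proof of Appendix~\ref{sec:classical}, and the final Cauchy--Schwarz over $j$. The gap is in the one quantitative step everything hinges on: you assert that each touch of coordinate $j$ costs at most $\tfrac{c\Delta_j^2}{\eta S_\gamma}\lVert P_j\psi_\gamma\rVert^2$ in coupled fidelity. Lemma~\ref{le:coupling_const} plus Lemma~\ref{le:projection} do not give this. The dominant error term is $\tfrac{\Delta_j^2}{\eta S_\gamma}\lvert(\psi_\gamma,(O_j\rho_\gamma O_j^\dagger-\rho_\gamma)\psi_\gamma)\rvert$, and Lemma~\ref{le:projection} bounds it by $\lVert P_j\psi_\gamma\rVert\cdot\big(\tr(O_j\rho_\gamma O_j^\dagger-\rho_\gamma)^2\big)^{1/2}$ --- a single power of $\lVert P_j\psi_\gamma\rVert$ times a Hilbert--Schmidt factor that is \emph{not} bounded by $C\lVert P_j\psi_\gamma\rVert$. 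The cross term $P_j(O_j\rho_\gamma O_j^\dagger-\rho_\gamma)(\mathrm{Id}-P_j)$ is linear, not quadratic, in $P_j$; your parenthetical ``$\tr(P_j\rho_\gamma)$ comparable to $\lVert P_j\psi_\gamma\rVert^2$'' is only an additive statement of order $\sqrt{1-F}$, useless when $\lVert P_j\psi_\gamma\rVert^2$ is small (as it must be for most $j$, since these quantities sum to one), and even granting it one only obtains $(\tr(O_j\rho_\gamma O_j^\dagger-\rho_\gamma)^2)^{1/2}\lesssim(\tr (P_j\rho_\gamma))^{1/4}$, which still leaves a polynomial-in-$N$ loss after summing over $j$. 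If you instead use the trivial bound on the Hilbert--Schmidt factor you recover exactly the $\sqrt N$-lossy Corollary~\ref{co:fid3}, so the loss is not an artifact of sloppy constants.

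The paper's way out is precisely the ingredient you discard when you say the decay factor ``takes over'' the role that purity plays in Theorem~\ref{th:faulty_grover}: the traceless perturbation $\sigma$ in Corollary~\ref{co:fidelity} is chosen so that the branches $\tilde\rho_{0/1}(z_t)$ of \eqref{eq:defperturb1}--\eqref{eq:defperturb2} are forced to lose purity at rate comparable to $\Delta_j^2\,\tr(O_j\tilde\rho O_j^\dagger-\tilde\rho)^2$ at every touch of arm $j$ (see \eqref{eq:trace_purity}); the Hilbert--Schmidt factor is then telescoped against the total purity budget by a \emph{separate} Cauchy--Schwarz (the $E^3$ term, \eqref{eq:boundE3}). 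Both mechanisms are needed --- the decay factor handles the $\lVert P_j\psi_\gamma\rVert^2$ terms, the purity telescoping handles the $\lVert P_j\psi_\gamma\rVert\cdot(\tr(\cdot)^2)^{1/2}$ term --- and neither subsumes the other. A second, smaller omission: the decay factor does not in fact tame the $1/S_\gamma$ and $1/S_\gamma^3$ denominators; the paper needs the auxiliary indicators $s(z_t)$, $h(z_t)$ and the $E^1$ argument (that the stopped-branch masses form probability vectors of total mass one under both measures, so their classical fidelity is at most one) to dispose of branches whose local fidelity falls below $1/2$.
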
 
%The proof roughly combines the ideas of Theorem~\ref{th:classical_bandit_result} 
%and Theorem~\ref{th:faulty_grover} but is a bit more involved.
\begin{remark}
We only do the proof under the condition that $\Delta_N^2/\eta\leq 1/2$ 
(the behaviour for small $\Delta_i$ is the main interest anyway). If this does not hold, some definitions in the proof of Proposition~\ref{prop:key} below need to be slightly adjusted (starting with \eqref{eq:defperturb1} and \eqref{eq:defperturb2}) but the final result will be the same except that the constant has a poorer dependence on $\eta$.
\end{remark}
\blue{

The key ingredient in the proof is a lower bound on the fidelity between the states obtained when applying the different oracles. Let us state this as a separate proposition

\begin{proposition}\label{prop:key}
    Let $\delta<1/2$. 
Assume that $\pp^j$ are reward vectors as introduced  at the beginning of this section, where $p_i\in [\eta,1-\eta]$ for
some $\eta>0$.
Then the following lower bound on the fidelity holds for $1\leq i\leq N$
\begin{align}\label{eq:fid_bound_final}
    \fid{\rho_T^i,\rho_T^0}\geq 1-\frac{20}{\eta}\Delta_i 
    \left(\sum_{t=1}^T \tr P_i \tilde{\rho}^0_t\right)^{\frac12}
\end{align}
for $\Delta_i^2/\eta<1/2$ and otherwise the bound holds for some constant $c(\eta)$ instead of $20/\eta$.
\end{proposition}
Once this proposition is proved, the proof of the main result is straightforward.
\begin{proof}[Proof of Theorem~\ref{th:optimal}]
As in the proof of Theorem~\ref{th:faulty_grover}
we have that success of the algorithm implies that 
\begin{align}
   \fid{\rho_T^i,\rho_T^0}\geq 2\sqrt{\delta(1-\delta)}
\end{align}
holds for $i\geq 2$ ( the first arm has the highest mean reward  for $\pp^1$ and $\pp^0$ and thus no bound can be derived for $i=1$).
Combining this with \eqref{eq:fid_bound_final} we obtain
\begin{align}
\begin{split}
\frac{\eta(1-2\sqrt{\delta(1-\delta))}}{20}
\sum_{i\geq 2}
\Delta_i^{-2} &\leq \sum_{i\geq 2}
\Delta_i^{-1} \left(\sum_{t=1}^T \tr P_i \rho_t^0\right)^{\frac12}
\\
&\leq 
\left(\sum_{i\geq 2}
\Delta_i^{-2} \right)^{\frac12}
\left(
\sum_{i\geq 2}
\sum_{t=1}^T \tr P_i \rho_t^0\right)^{\frac12}
\leq \left(\sum_{i\geq 2}
\Delta_i^{-2} \right)^{\frac12}\sqrt{T}.
\end{split}
\end{align}
This  ends the proof.
\end{proof}
It remains to prove Proposition~\ref{prop:key}.
}
\begin{proof}[Proof of Proposition~\ref{prop:key}]
\blue{
The proof is a bit technical and lengthy, and we therefore split it into 
several steps and provide intermediate results. 
\paragraph{Overview and notation.}
Let us first introduce some additional notation and give a high-level overview of the proof.
Note that we reviewed the general setting at the beginning of the section.

Our general  strategy is to first decompose the corresponding density matrices $\rho^i$ and $\rho^0$ into a sum of density matrices and then apply strong concavity to lower bound $\fid{\rho^i_T,\rho^0_T}$. To control the loss of fidelity that we incur during a single oracle call we will rely on Corollary~\ref{co:fidelity}. Aggregating the loss terms this  gives rise to a lower bound on the fidelity which, however, involves several error terms that are not straightforward to control. Bounding those error terms will rely on strategies that are similar to the one used in  the proof of Theorem~\ref{th:classical}.

Let us now sketch how we decompose the density matrices before we give the actual definitions.
To do this we introduce some notation. To denote the rewards in step $t$ we consider $x_t\in \{0,1\}^N$ and
  we collect those rewards in the vector $z_t=(x_1,\ldots, x_t)$.
  As before we consider the measure $\Pd^i$ on sequences $z_T$ which has the property that $\Pd^i((x_t)_j=1)= \pp^i_j$ and
  $\Pd^i((x_t)_j=0)=1-\pp_j^i$ and those variables are independent.

}

% We use the same notation as before in the proofs of Corollary~\ref{co:fid2}, and Corollary~\ref{co:fid3}. Recall in particular the definition of 
%  $\mc{F}_i^{p}$ and $\mc{E}_i$  in \eqref{eq:def_F_i} and \eqref{eq:concat}
%  and the definition of $\pp^j$ (in particular $\pp^j_i=p_i$ for $i\neq j$
%  and $\pp_i^i=p_0$).
   
We generally split all states in two after each invocation of an oracle
$\mc{F}_j^{p_j}$ depending on the realization of the randomness.
For $j\neq i$ this will be the natural separation in reward 0 and 1 respectively, but for $j=i$ more complex decompositions need to be used to obtain optimal bounds. Indeed, we have seen in Lemma~\ref{le:coupling_const} that
the optimal loss in fidelity when applying $\mc{E}_i$ and $\mc{E}_0$ is of the order
$\Delta_i^2$ and we essentially
use the decomposition constructed there.

For $\rho_t^i$ we consider a decomposition given by 
\begin{align}\label{eq:decomp_i}
\rho_t^i = \sum_{z_t} \Pd^i(z_t) \rho(z_t).
\end{align}
For $\rho_t^0$ we consider a decomposition into pure states depending on
$i$, i.e., we construct a distribution $\Qd^i$ on sequences $z_T$ and states $\psi(z_t)$ such that
\begin{align}\label{eq:decomp_0}
\rho_t^0 = \sum_{z_t}\Qd^i(z_t) \ket{\psi(z_t)}\bra{\psi(z_t)}.
\end{align}
We emphasize again that while $\rho_t^0$ does not depend on $i$ the decomposition does depend on $i$ because it is used to bound the distance to $\rho^i_t$.
To simplify the notation, we drop the $i$ dependence of the decomposition into $\rho(z_t)$ 
and $\psi(z_t)$.
Note that the decomposition for $\rho^0$ involves the complexity, while
the decomposition of $\rho^i$ will be relatively straightforward.
Let us now define those decompositions formally.

\paragraph{Decomposition of $\rho^i_t$.}
We decompose $\rho^i_t$ roughly as 
\begin{align}
\rho_t(z_t) \approx\ket{\p(z_t)}\bra{\p(z_t)},
\qquad 
\p(z_t=(x_1,\ldots, x_t))= (O_{x_t}\otimes \id)U_t
\ldots (O_{x_1}\otimes \id)U_1,
\end{align}
i.e., we just decompose it according to the realizations of the rewards.
However, we in addition need to ensure that the density matrices
$\rho_t(z_t)$ decohere with respect to
$\psi(z_t)$.
For the definition we introduce the notation $\hat{x}_t\in \{0,1\}^{N}$ for
the vector $x_t$ with the $i$-th entry set to 0, i.e.,
$(\hat{x}_t)_j=(x_t)_j$ for $j\neq i$ and $(x_t)_i=0$.
With this notation we get the following decomposition.
\blue{
\begin{lemma}\label{le:decomp_rho_i}
Assume $\Delta_i^2/\eta <1/2$.
We define (recall that $\rho_0$ denotes the initial state of the algorithm)
\begin{align}
\rho(z_0)&=\rho_0,
\\
\tilde{\rho}(z_t) &= \mc{E}_{U_{t+1}}(\rho(z_t)),
\\
 \label{eq:defperturb1}
\tilde{\rho}_0(z_t) &=  \left(1 - \frac{p_0 \Delta_i^2}{\eta}\right) \; \tilde{\rho}(z_t)
+ \frac{p_0\Delta_i^2}{\eta}\; O_i\tilde{\rho}(z_t)O_i^\dagger,
\\ 
\label{eq:defperturb2}
\tilde{\rho}_1(z_t) &=\left(1 - \frac{(1-p_0) \Delta_i^2}{\eta}\right)\; O_i\tilde{\rho}(z_t)O_i^\dagger
+  \frac{(1-p_0)}{\eta}\Delta_i^2 \; \tilde{\rho}(z_t),
\\
\rho(z_{t+1}) & = \mc{E}_{O_{\hat{x}_{t+1}}}(\tilde{\rho}_{(x_{t+1})_i}(z_{t})).
\end{align}
Then \eqref{eq:decomp_i} holds, i.e.,
\begin{align}
\tilde{\rho}_t^i =
\sum_{z_t} \Pd(z_t) \tilde{\rho}(z_t).
\end{align}
\end{lemma}
We emphasize that here the notation $\rho_0$ and $\rho_1$ corresponds (roughly) to reward 0 or 1 on arm $i$.
Note that the  reason to slightly perturb $\tilde{\rho}_0(z_t)$ and
$\tilde{\rho}_1$ from their natural definitions $\tilde{\rho}(z_t)$
and $O_i\tilde{\rho}(z_t)O_i^\dagger$ is that our definition ensures up to a constant the same loss in fidelity of order $\Delta_i^2$ but in addition we induce decoherence of $\tilde{\rho}$ so that we can argue as in the proof of 
Theorem~\ref{th:faulty_grover}. 
The factor $\eta$ leads to slightly simpler expressions and tighter bounds, but is not strictly necessary. Indeed, if $\Delta_i^2/\eta>1$ the definition needs to be adapted by removing the $\eta$ which results in a weaker $\eta$-dependence.
}
\begin{proof}
We argue by induction. We have 
\begin{align}
\tilde{\rho}_t^i=\mc{E}_{U_{t+1}}(\rho_t^i)
=\mc{E}_{U_{t+1}}\left(\sum_{z_t} \Pd^i(z_t)\rho(z_t)\right)
= \sum_{z_t} \Pd^i(z_t)\mc{E}_{U_{t+1}}(\rho(z_t))
= \sum_{z_t} \Pd^i(z_t)\tilde{\rho}(z_t).
\end{align}
Next we note that
\begin{align}
\begin{split}
p_0 \tilde{\rho}_1(z_t)
+ (1-p_0) \tilde{\rho}_0(z_t)
&=  p_0\; O_i\tilde{\rho}(z_t)O_i^\dagger
+  (1-p_0) \;\tilde{\rho}(z_t)
= \mc{F}_i^{p_0}(\tilde{\rho}(z_t)).
\end{split}
\end{align}
This implies together with the definition of $\Pd^i$ that
\begin{align}
\begin{split}
\sum_{x_{t+1}}
\Pd^i(x_{t+1}) \rho((z_t,x_{t+1}))
&=
\sum_{x_{t+1}}
\Pd^i(x_{t+1}) \mc{E}_{O_{\hat{x}_{t+1}}}(\tilde{\rho}_{(x_{t+1})_i}(z_t))
%=
%\sum_{x_{t+1}}
%\Pd^i(x_{t+1})
%\mc{E}_{O_{\hat{x}_{t+1}}}\left(
%\sum_{s=0}^1 p_0^s(1-p_0)^(1-s)\tilde{\rho}_{s}(z_t)
=
\sum_{x_{t+1}}
\Pd^i(x_{t+1})
\mc{E}_{O_{\hat{x}_{t+1}}}\circ\mc{F}^{p_0}_i(\tilde{\rho}(z_t))
\\
&=
\sum_{x_{t+1}}
\Pd^i(x_{t+1})
\mc{E}_{O_{{x}_{t+1}}}(\tilde{\rho}(z_t))
= \mc{E}_i (\tilde{\rho}(z_t)).
\end{split}
\end{align}
Here we used that $\Pd^i((x_{t+1})_i=1)=\pp^i_i= p_0$.
Using the induction hypothesis, we conclude that 
\begin{align}
\sum_{z_{t+1}}
\Pd^i(z_{t+1}) \rho(z_{t+1})
=
\sum_{z_{t}}\Pd^i(z_{t}) \sum_{x_{t+1}}
\Pd(x_{t+1})
\rho((z_{t},x_{t+1}))
=\sum_{z_{t}}\Pd^i(z_{t}) \mc{E}_i (\tilde{\rho}(z_t))
=\mc{E}_i(\tilde{\rho}_t^i)=\rho_{t+1}^i.
\end{align}
\end{proof}

\paragraph{Decomposition of $\rho^0_t$.}
To define the decomposition of $\rho^0_t$ we first define several quantum states.
Let
\begin{align}
\tilde{\psi}(z_t)
= U_t \big(\psi(z_{t})\big).
\end{align}
Define (essentially as in Lemma~\ref{le:coupling_const})
\begin{align}
%\begin{split}
\label{eq:barpsi1}
\bar\psi_1(z_t) &= \sqrt{1-p_i}\cos(\alpha)\tilde{\psi}(z_t)
+\sqrt{p_i}\sin(\alpha)O_i\tilde{\psi}(z_t)
\\
\tilde{\psi}_1(z_t)&=\bar\psi_1(z_t) /\lVert \bar\psi_1(z_t)\rVert
\\
\label{eq:barpsi0}
\bar\psi_0(z_t)&= -\sqrt{1-p_i}\sin(\alpha)\tilde{\psi}(z_t)
+\sqrt{p_i}\cos(\alpha)O_i\tilde{\psi}(z_t)
\\
\tilde{\psi}_0(z_t)&=\bar\psi_0(z_t) /\lVert \bar\psi_0(z_t)\rVert
%\end{split}
\end{align}
where $\alpha\in [0,\pi)$ is defined through
$\cos(\alpha)= \sqrt{p_ip_0}+\sqrt{(1-p_i)(1-p_0)}$ (i.e., as in\eqref{eq:def_angle} with $p$ and $q$ replaced by $p_i$ and $p_0$).
Finally, let 
\begin{align}
\psi(z_{t+1}) =O_{\hat{x}_{t+1}} \tilde{\psi}_{(x_{t+1})_i} (z_t).
\end{align}
Next we consider 
 a probability distribution $\Qd^i$ on sequences $z_t=(x_1, \ldots, x_t)$  as before. It will factorize according to 
\begin{align}
\Qd^i(z_{t+1})
=\Qd^i(z_{t})\Qd^i(x_{t+1}|z_{t})
= \Qd^i(z_{t}) \Qd^i((x_{t+1})_i|z_{t}) \prod_{j\neq i} p_j^{(x_{t+1})_j}
(1-p_j)^{1 - (x_{t+1})_j}.
\end{align} 
In other words 
$\Qd^i$ is the unique distribution on $z_t$ such that the variables $(x_t)_j$ for $i\neq j$ are independent
of everything
and distributed according to $\mathrm{Ber}(\pp^0_j)=\mathrm{Ber}(p_j)$
and, moreover, we require
\begin{align}\label{eq:def_of_Q}
\begin{split}
\Qd^i\Big((x_{t+1})_i=1\;|\; z_t\Big)&=\lVert \bar\psi_1(z_t) \rVert^2
\\
\Qd^i\Big((x_{t+1})_i=0\;|\;z_t\Big)&=\lVert \bar\psi_0(z_t) \rVert^2
= 1- \lVert \bar\psi_1(z_t) \rVert^2.
\end{split}
\end{align}
Clearly this defines uniquely a probability distribution. 
\blue{
With this definition we can state the following lemma.
\begin{lemma}\label{le:decomp_rho_0}
The following identity holds
\begin{align}\label{eq:trace0}
\rho_t^0 &=\sum_{z_t} \Qd^i(z_t) \ket{\psi(z_t)}\bra{\psi(z_t)},
\\ \label{eq:trac0tilde}
\tilde\rho_t^0 &=\sum_{z_t} \Qd^i(z_t) \ket{\tilde{\psi}(z_t)}\bra{\tilde\psi(z_t)}.
\end{align}
\end{lemma}
}
\begin{proof}
To show this, we argue by induction. The first step is simple
\begin{align}
\begin{split}
\sum_{z_t} \Qd^i(z_t) \ket{\tilde{\psi}(z_t)}\bra{\tilde\psi(z_t)}
&= 
\sum_{z_t} \Qd^i(z_t) \ket{U_{t+1}{\psi}(z_t)}\bra{U\psi(z_t)}
\\
&=
U_{t+1}\left(\sum_{z_t} \Qd^i(z_t) \ket{{\psi}(z_t)}\bra{\psi(z_t)}\right)U^\dagger_{t+1}
=
\mc{E}_{U_{t+1}}( \rho_t^0)=\tilde{\rho}^0_t.
\end{split}
\end{align}
By \eqref{eq:decomp} in the proof of  Lemma~\ref{le:coupling_const}
the following relation holds
\begin{align}
\mc{F}_i^{p_i}(\tilde{\psi}(z_t))
= \Qd^i\Big((x_{t+1})_i=1\;|\; z_t\Big) \,\ket{\tilde{\psi}_1(z_t)}
\bra{\tilde{\psi}_1(z_t)}+
\Qd^i\Big((x_{t+1})_0=1\;|\; z_t\Big) \,\ket{\tilde{\psi}_0(z_t)}
\bra{\tilde{\psi}_0(z_t)}.
\end{align}
Using this relation we conclude that
\begin{align}
\begin{split}
&\sum_{z_{t+1}}
\Qd^i(z_{t+1}) \ket{{\psi}_0(z_{t+1})}
\bra{{\psi}_0(z_{t+1})}
=
\sum_{z_{t+1}}
\Qd^i(z_{t+1}) \ket{O_{\hat{x}_{t+1}} \tilde{\psi}_{(x_{t+1})_i} (z_t)}
\bra{O_{\hat{x}_{t+1}} \tilde{\psi}_{(x_{t+1})_i} (z_t)}
\\
&=
\mc{F}_1^{p_1}
\circ\ldots
\circ 
\mc{F}_{i-1}^{p_{i-1}}
\circ
\mc{F}_{i+1}^{p_{i+1}}
\circ\ldots\circ
\mc{F}_N^{p_N}
\left(
\sum_{z_t}\Qd^i(z_t)\sum_{s=0}^1
\Qd^i((x_{t+1})_i=s|z_t) \ket{ \tilde{\psi}_{s} (z_t)}
\bra{\tilde{\psi}_{s} (z_t)}\right)
\\
&=
\mc{F}_1^{p_1}
\circ\ldots
\circ 
\mc{F}_{i-1}^{p_{i-1}}
\circ
\mc{F}_{i+1}^{p_{i+1}}
\circ\ldots\circ
\mc{F}_N^{p_N}\circ \mc{F}_i^{p_i}\left(\sum_{z_t} \Qd^i(z_t)
\ket{ \tilde{\psi} (z_t)}
\bra{\tilde{\psi} (z_t)}\right)
\\
&= \mc{E}_0(\tilde{\rho}_t^0)=\rho_{t+1}^0.
\end{split}
\end{align}
\end{proof}

\paragraph{Bounding the fidelity loss in a single step.}
We now start to estimate the fidelity of $\rho_t^k$ and $\rho_t^0$.
The first step is to bound the fidelity when making a single oracle call
on a single term in the decomposition. The loss only occurs when passing
from $\tilde{\rho}(z_t)$
and $\tilde{\psi}(z_t)$ to $\tilde{\rho}_{0/1}(z_t)$
and $\tilde{\psi}_{0/1}(z_t)$.
\blue{
We can show the following bound.

\begin{lemma}\label{le:fidelity_split2}
    Let $z_t\in \{0,1\}^{Nt}$ and set
    \begin{align}
        S=\sqrt{F}(\tilde{\rho}(z_t),\tilde{\psi}(z_t))
    \end{align}
    and assume that $S\geq \tfrac12$.
    Then the following bound holds  with $R(z_t)=\tr(\tilde{\rho}(z_t)^2)$
    \begin{align}
\begin{split}\label{eq:fidloss_one_step_lemma}
&\sqrt{   \Pd^i((x_{t+1})_i=0)
\Qd^i((x_{t+1})_i=0|z_t)}
\sqrt{F}(\tilde\rho_0(z_t),\tilde{\psi}_0(z_t))
\\
&\qquad +
\sqrt{ \Pd^i((x_{t+1})_i=1)
\Qd^i((x_{t+1})_i=1|z_t)}
\sqrt{F}(\tilde\rho_1(z_t),\tilde{\psi}_1(z_t))
\\
&\qquad\qquad\geq S - \frac{5\Delta_i}{\sqrt{2}\eta}
\lVert P_i \tilde{\psi}(z_t)\rVert \; \sqrt{R(z_{t}) -\max_{x_{t+1}} R((z_{t},x_{t+1})}
- \frac{4\Delta_i^2}{\eta^2}\lVert P_i \tilde{\psi}(z_t)\rVert^2.
\end{split}
\end{align}
\end{lemma}
}
\begin{proof}
We will bound this loss using Corollary~\ref{co:fidelity} above.
The additional flexibility of the $\sigma$ term in this corollary allows us to apply this to our setting where $\tilde{\rho}_{0/1}$
are defined as in \eqref{eq:defperturb1} and \eqref{eq:defperturb2}. Specifically, 
we apply this Corollary with $\rho=\tilde\rho(z_t)$, $\psi=\tilde\psi(z_t)$, 
$p=p_0$, $q=p_i$, $U=O_i$ and $\sigma=\Delta_i^2(O_i\tilde{\rho}(z_t)O_i^\dagger
- \tilde{\rho}(z_t))$.
Then we note that the definition
of $\tilde{\psi}_{0/1}$ agrees with the definition of $\psi_{0/1}$
in Lemma~\ref{le:fidelity} and $\tilde{\rho}_{0/1}(z_t)$ agrees with
$\rho_{0/1}$ and $q'=\Qd^i((x_{t+1})_i=1|z_t)$, $p=\Pd^i(x_i=1)$. 
We conclude from Corollary~\ref{co:fidelity} that
\begin{align}
\begin{split}\label{eq:init}
&\sqrt{ (1-p_0)   %\Pd^i((x_{t+1})_i=0)
\Qd^i((x_{t+1})_i=0|z_t)}
\sqrt{F}(\tilde\rho_0(z_t),\tilde{\psi}_0(z_t))
+
\sqrt{ p_0 %\Pd^i((x_{t+1})_i=1)
\Qd^i((x_{t+1})_i=1|z_t)}
\sqrt{F}(\tilde\rho_1(z_t),\tilde{\psi}_1(z_t))
\\
&\geq S - \Delta_i^2
\bigg(\frac{|(\tilde{\psi}(z_t), (O_i\tilde{\rho}(z_t)O_i^\dagger - \tilde{\rho}(z_t))\tilde{\psi}(z_t))|}{\eta}
+\frac{\Big|\mathrm{Re}\Big((\tilde{\psi}(z_t), (O_i\tilde{\rho}(z_t) - \tilde{\rho}(z_t))\tilde{\psi}(z_t))\Big)\Big|^2}{\eta^2}
\\
&\qquad
+ \frac{2(1-p)}{\eta}\left|(\bar{\psi}_1(z_t), \left(O_i\tilde{\rho}(z_t)O_i^\dagger - \tilde{\rho}(z_t)\right)\bar{\psi}_1(z_t))\right|
+ \frac{2p}{\eta}\left|(\bar{\psi}_0(z_t), \left(O_i\tilde{\rho}(z_t)O_i^\dagger - \tilde{\rho}(z_t)\right)\bar{\psi}_0(z_t))\right|
\bigg).
\end{split}
\end{align}
We control the right-hand side of this expression by exploiting the specific structure of the oracle $O_i$. As in the proof of Theorem~\ref{th:faulty_grover} we
use that $P_i = \ket{i}\bra{i}\otimes \id$ satisfies
$(1-P_i)O_i=(1-P_i)$ and apply 
 Lemma~\ref{le:projection}. We get 
 \begin{align}\label{eq:cont1}
 |(\tilde{\psi}(z_t), (O_i\tilde{\rho}(z_t)O_i^\dagger - \tilde{\rho}(z_t))\tilde{\psi}(z_t))|\leq 2\lVert P_i\tilde{\psi}(z_t)\rVert
 \left(\tr (O_i\tilde{\rho}(z_t)O_i^\dagger - \tilde{\rho}(z_t))^2\right)^{\frac12}.
 \end{align}
For the second term we use that
$O_i-\id=(\id-P_i + P_i)(O_i-\id)
=P_i(O_i-\id)$ which implies after an application of Cauchy-Schwarz
\begin{align}\label{eq:cont2}
\left|\mathrm{Re}\Big((\tilde{\psi}(z_t), (O_i\tilde{\rho}(z_t) - \tilde{\rho}(z_t))\tilde{\psi}(z_t))\Big)\right|^2
\leq \lVert P_i \tilde{\psi}(z_t)\rVert^2
\cdot \lVert (O_i-\id)\tilde{\rho}(z_t)\tilde{\psi}(z_t)\rVert^2
\leq 4 \lVert P_i \tilde{\psi}(z_t)\rVert^2.
\end{align}
For the third term we use again Lemma~\ref{le:projection}, the definition 
\eqref{eq:barpsi1}, and $[O_i, P_i]=0$  and bound 
\begin{align}
\begin{split}\label{eq:cont3}
&\left|(\bar{\psi}_1(z_t), \left(O_i\tilde{\rho}(z_t)O_i^\dagger - \tilde{\rho}(z_t)\right)\bar{\psi}_1(z_t))\right|
\leq 
\lVert P_i \bar{\psi}_1(z_t)\rVert \cdot
\lVert \bar{\psi}_1(z_t)\rVert  \cdot
\left(\tr (O_i\tilde{\rho}(z_t)O_i^\dagger - \tilde{\rho}(z_t))^2\right)^{\frac12}
\\
&\leq 
\left(\sqrt{1-p_i}\cos(\alpha) \lVert P_i \tilde{\psi}(z_t)\rVert
+\sqrt{p_i} \sin(\alpha) \lVert P_iO_i \tilde{\psi}(z_t)\rVert
\right)
\lVert \bar{\psi}_1(z_t)\rVert  \cdot
\left(\tr (O_i\tilde{\rho}(z_t)O_i^\dagger - \tilde{\rho}(z_t))^2\right)^{\frac12}
\\
&\leq 
2\lVert P_i \tilde{\psi}(z_t)\rVert \; \left(\tr (O_i\tilde{\rho}(z_t)O_i^\dagger - \tilde{\rho}(z_t))^2\right)^{\frac12}.
\end{split}
\end{align}
The same reasoning implies
\begin{align}\label{eq:cont4}
\left|(\bar{\psi}_0(z_t), \left(O_i\tilde{\rho}(z_t)O_i^\dagger - \tilde{\rho}(z_t)\right)\bar{\psi}_0(z_t))\right|
\leq 
2\lVert P_i \tilde{\psi}(z_t)\rVert \; \left(\tr (O_i\tilde{\rho}(z_t)O_i^\dagger - \tilde{\rho}(z_t))^2\right)^{\frac12}.
\end{align}
Plugging \eqref{eq:cont1}, \eqref{eq:cont2}, \eqref{eq:cont3}, and \eqref{eq:cont4} in \eqref{eq:init} we obtain
\begin{align}
\begin{split}\label{eq:fidloss_one_step}
&\sqrt{ (1-p_0)   %\Pd^i((x_{t+1})_i=0)
\Qd^i((x_{t+1})_i=0|z_t)}
\sqrt{F}(\tilde\rho_0(z_t),\tilde{\psi}_0(z_t))
+
\sqrt{ p_0 %\Pd^i((x_{t+1})_i=1)
\Qd^i((x_{t+1})_i=1|z_t)}
\sqrt{F}(\tilde\rho_1(z_t),\tilde{\psi}_1(z_t))
\\
&\qquad\geq S - \frac{5\Delta_i^2}{\eta}
\lVert P_i \tilde{\psi}(z_t)\rVert \; \left(\tr (O_i\tilde{\rho}(z_t)O_i^\dagger - \tilde{\rho}(z_t))^2\right)^{\frac12}
- \frac{4\Delta_i^2}{\eta^2}\lVert P_i \tilde{\psi}(z_t)\rVert^2.
\end{split}
\end{align}
It remains to bound the trace term,
which will be very similar to the  proof of Theorem~\ref{th:faulty_grover}. We define (again not indicating the $i$ dependence)
\begin{align}
R(z_t)=\tr(\tilde{\rho}(z_t)^2).
\end{align}
Invariance of the purity under unitary operations implies that
\begin{align}
R(z_{t+1}) = \tr(\tilde{\rho}_{(x_{t+1})_i}(z_t)^2).
\end{align}
Calculations as in \eqref{eq:change_purity} give us for $(x_{t+1})_i=0$
\begin{align}
R(z_{t}) - R(z_{t+1})
= \left(1-\frac{p_0\Delta_i^2}{\eta}\right)\frac{p_0\Delta_i^2}{\eta} \cdot \tr (\tilde\rho(z_t)
- O_i\tilde\rho(z_t)O_i^\dagger)^2.
\end{align}
A similar identity for $(x_t)_i=1$ 
together with the assumption $\Delta_i^2/\eta \leq 1/2$ and $\min(p_0,1-p_0)\geq \eta$ 
imply
\begin{align}\label{eq:trace_purity-1}
\tr (\tilde\rho(z_t)
- O_i\tilde\rho(z_t)O_i^\dagger)^2
\leq \frac{1}{2 \Delta_i^2}(R(z_{t}) - R(z_{t+1})).
\end{align}
Note that the left-hand side only depends on $z_t$ so we conclude
\begin{align}\label{eq:trace_purity}
\tr (\tilde\rho(z_t)
- O_i\tilde\rho(z_t))O_i^\dagger)^2
\leq \frac{1}{2 \Delta_i^2}(R(z_{t}) -\max_{x_{t+1}} R((z_{t},x_{t+1})).
\end{align}
Plugging this bound in \eqref{eq:fidloss_one_step} ends the proof of the lemma.
\end{proof}
The previous Lemma is   the key relation that allows us to control the fidelity loss between
$\rho_t^0$ and $\rho_t^i$.
\blue{
We will use the following corollary.
\begin{corollary}\label{co:fid_loss_one_step}
     Let $z_t\in \{0,1\}^{Nt}$ and let as before
    \begin{align}
        S=\sqrt{F}(\tilde{\rho}(z_t),\tilde{\psi}(z_t))
    \end{align}
    and assume that $S\geq \tfrac12$.
    Then the following bound holds  (recall $R(z_t)=\tr(\tilde{\rho}(z_t)^2)$)
\begin{align}
\begin{split}\label{eq:fid_loss_summed_co}
&\sum_{x_{t+1}} \sqrt{ \Pd^i(x_{t+1})\Qd^i(x_{t+1}|z_t)}
\sqrt{F}\Big(\tilde\rho((z_t,x_{t+1})), \tilde{\psi}((z_t,x_{t+1}))\Big)
\\
&\geq 
\sqrt{F}(\tilde{\rho}(z_t),\tilde{\psi}(z_t))- \frac{5\Delta_i}{\sqrt{2}\eta}
\lVert P_i \tilde{\psi}(z_t)\rVert \; \sqrt{R(z_{t}) -\max_{x_{t+1}} R((z_{t},x_{t+1})}
- \frac{4\Delta_i^2}{\eta^2}\lVert P_i \tilde{\psi}(z_t)\rVert^2.
\end{split}
\end{align}
\end{corollary}

}
\begin{proof}
We first remark that invariance of the fidelity under unitary maps implies
\begin{align}\label{eq:fid_inv1}
\sqrt{F}\big(\rho((z_t,x_{t+1})), \ket{\psi((z_t,x_{t+1}))}\big)
&=
\sqrt{F}\big(\tilde{\rho}_{(x_{t+1})_i}(z_t)
, \ket{\psi_{(x_{t+1})_i}(z_t)}\big),
\\
\label{eq:fid_inv2}
\sqrt{F}\big(\rho(z_{t+1}), \ket{\psi(z_{t+1})}\big)
&=
\sqrt{F}\big(\tilde{\rho}(z_{t+1}), \ket{\tilde{\psi}(z_{t+1})}\big).
\end{align}

We now sum the bound in Lemma~\ref{le:fidelity_split2} over all possible values of $x_{t+1}$ to 
move from step  $(t+1)$ back to step $t$. 
We denote $\hat{x}_{t+1}$ the vector $x_{t+1}$ with entry $i$ removed.
Note that under $\Pd^i$ and $\Qd^i$ this vector is independent of $z_t, (x_{t+1})_i$ and $\Pd^i(\hat{x}_{t+1})=\Qd^i(\hat{x}_{t+1})$.
We also introduce the notation $\hat{x}_{t+1}^c$ for the vector that has entry $i$ equal to $c$.
Then we get using \eqref{eq:fid_inv1} and \eqref{eq:fid_inv2}
 for any $z_t$ and $c\in \{0,1\}$
\begin{align}
\begin{split}
\sum_{\hat{x}_{t+1}}
\sqrt{\Pd^i(\hat{x}_{t+1})\Qd^i(\hat{x}_{t+1})}
\sqrt{F}\Big(\tilde\rho((z_t,\hat{x}_{t+1}^c)), \tilde{\psi}((z_t,\hat{x}_{t+1}^c))\Big)
&=
\sum_{\hat{x}_{t+1}}
\sqrt{\Pd^i(\hat{x}_{t+1})\Pd^i(\hat{x}_{t+1})}
\sqrt{F}\Big(\tilde{\rho}_c(z_t), \tilde{\psi}_c(z_t)\Big)
\\
&=
\sqrt{F}\Big(\tilde{\rho}_c(z_t), \tilde{\psi}_c(z_t)\Big).
\end{split}
\end{align}
\blue{
Summing this over $c=0,1$ and using \eqref{eq:fidloss_one_step}
we get for all $z_t$ such that $\sqrt{F}(\tilde{\rho}(z_t),\tilde{\psi(z_t)})\geq 1/2$ the bound
\begin{align}
\begin{split}\label{eq:fid_loss_summed}
&\sum_{x_{t+1}} \sqrt{ \Pd^i(x_{t+1})\Qd^i(x_{t+1}|z_t)}
\sqrt{F}\Big(\tilde\rho((z_t,x_{t+1})), \tilde{\psi}((z_t,x_{t+1}))\Big)
\\
&
=\sum_{c=0}^1 \sqrt{ \Pd^i((x_{t+1})_i=c)
\Qd^i((x_{t+1})_i=c|z_t)}
\sum_{\hat{x}_{t+1}}
\sqrt{\Pd^i(\hat{x}_{t+1})\Qd^i(\hat{x}_{t+1})}
\sqrt{F}\Big(\tilde\rho((z_t,\hat{x}_{t+1}^c)), \tilde{\psi}((z_t,\hat{x}_{t+1}^c))\Big)
\\
&
=\sum_{c=0}^1  \sqrt{ \Pd^i((x_{t+1})_i=c)
\Qd^i((x_{t+1})_i=c|z_t)} \sqrt{F}\Big(\tilde{\rho}_c(z_t), \tilde{\psi}_c(z_t)\Big)
\\
&\geq 
\sqrt{F}(\tilde{\rho}(z_t),\tilde{\psi}(z_t))- \frac{5\Delta_i}{\sqrt{2}\eta}
\lVert P_i \tilde{\psi}(z_t)\rVert \; \sqrt{R(z_{t}) -\max_{x_{t+1}} R((z_{t},x_{t+1})}
- \frac{4\Delta_i^2}{\eta^2}\lVert P_i \tilde{\psi}(z_t)\rVert^2.
\end{split}
\end{align}
}
\end{proof}

\paragraph{Deriving an error decomposition for the fidelity.}
Equipped with this corollary, we now move on to control the total loss in fidelity.

The general strategy is now to use joint concavity of the fidelity 
and then inductive application of the estimate \eqref{eq:fidloss_one_step} above to lower bound the fidelity.
There are two technical difficulties: 
When directly applying the corollary above, we do not get the optimal bound
because it is difficult to control the difference between $\P^i$ and $\Qd^i$. This is the same problem as in the proof of Theorem~\ref{th:classical}, and we can address this with a similar idea.
The second difficulty is that the change in fidelity in Lemma~\ref{le:coupling_const} involves the inverse
of the initial fidelity, and so we derived \eqref{eq:fid_loss_summed}
only for fidelity $S\geq 1/2$.  
The high-level argument that this is sufficient is that if we know that the weighted mean of the fidelities is large, then there cannot be too many small terms in the mixture, so the condition will mostly hold.

Technically, we address both 
difficulties by introducing additional sequences $\de(z_t)$, $s(z_t)$, and $h(z_t)$ that we smuggle into the sum.
%this issue by defining
% two additional sequences $r(z_t)$ $s(z_t)$.
We define $\de(z_0)=1$ and then recursively for $z_t=(z_{t-1},x_t)$ 
\begin{align}\label{eq:rel_r}
\de(z_t)=\de(z_{t-1})\left(1-\frac{\Delta_i^2\lVert P_i\tilde\psi(z_{t-1})\rVert^2}{\eta^2}\right)
= \de(z_{t-1})- \de(z_{t-1})\frac{\Delta_i^2\lVert P_i\tilde\psi(z_{t-1})\rVert^2}{\eta^2}.
\end{align}
This is the quantum analogue of the classical definition 
in \eqref{eq:recursion_classical_decay} and has the same motivation.
Note that the $\eta$ factor was introduced for convenience, it could be dropped at the price of a slightly worse $\eta$ dependence.

  Next, we define 
  \begin{align}
      s(z_t)=1 \quad \text{iff for all $t'\leq t$ the bound
      $\sqrt{F}(\tilde\rho(z_{t'}), \tilde\psi(z_{t'}))\geq 1/2$ holds.}
  \end{align}
  In other words $s(z_t)=0$ for $z_t=(x_1,\ldots,x_t)$ if there is 
  $t'\leq t$ such that $z_{t'}=(x_1,\ldots, x_{t'})$
  satisfies $\sqrt{F}(\tilde\rho(z_t), \tilde\psi(z_t))<1/2$.
  Moreover, we define $h(z_t)=1$ if $s(z_t)=0$ but
  $s(z_{t'})=1$ for all $z_{t'}$ as above. Put differently for $z_t=(z_{t-1},x_t)$ the relation 
  \begin{align}
  h(z_t)=s(z_{t-1})-s(z_t)
  \end{align}
  holds, 
  i.e., $h(z_t)$ keeps track when the fidelity  $\sqrt{F}(\rho(z_t), \psi(z_t))$ falls below $1/2$ for the first time.

We can now bound (loosely speaking) the change in fidelity 
$  \sqrt{F}(\rho^k_{T-1},\rho_{T-1}^0) - \sqrt{F}(\rho^k_T,\rho_T^0)$
(actually, we only bound the difference on the lower bounds).
This will be achieved by combining the estimates above and the introduction of
various error terms.
We first note that
Lemma~\ref{le:decomp_rho_0} and Lemma~\ref{le:decomp_rho_i}
together with strong concavity of the fidelity (see \eqref{eq:fidelity_strong_concave}) and the upper bounds 
 $s\leq $ and $\de\leq 1$ imply
\begin{align}
\begin{split}
\sqrt{F}(\tilde{\rho}^k_T,\tilde\rho_T^0)
&\geq \sum_{z_T} 
\sqrt{\Pd^i(z_T)\Qd^i(z_T)}
\sqrt{F}(\tilde\rho(z_T), \tilde{\psi}(z_T))
\\
&\geq \sum_{z_T} 
\sqrt{\Pd^i(z_T)\Qd^i(z_T)}
\sqrt{F}(\tilde\rho(z_T), \tilde{\psi}(z_T))\de(z_T)s(z_T).
\end{split}
\end{align}
Now we first extract the error terms that we get from the change in the sequences $\de$ and $s$. 
We again use the notation $z_{T|t}=(x_1,\ldots, x_t)$ for $t\leq T$ and
$z_T=(x_1,\ldots, x_T)$ and then we obtain
\begin{align}
\begin{split}
\sqrt{F}(\tilde{\rho}^k_T,\tilde\rho_T^0)
&\geq \sum_{z_T} 
\sqrt{\Pd^i(z_T)\Qd^i(z_T)}
\sqrt{F}(\tilde\rho(z_T), \tilde{\psi}(z_T))\de(z_T)s(z_T)
\\
&=
 \sum_{z_T} 
\sqrt{\Pd^i(z_T)\Qd^i(z_T)}
\sqrt{F}(\tilde\rho(z_T), \tilde\psi(z_T))\de(z_T)( s(z_{T|T-1})-h(z_T)  )
\\
&=
 \sum_{z_T} 
\sqrt{\Pd^i(z_T)\Qd^i(z_T)}
\sqrt{F}(\tilde\rho(z_T), \tilde\psi(z_T))\de(z_T)s(z_{T|T-1})  - E^1_T.
\\
&=-E^1_T+
 \sum_{z_T} 
\sqrt{\Pd^i(z_T)\Qd^i(z_T)}
\sqrt{F}(\tilde\rho(z_T), \tilde\psi(z_T))\de(z_{T|T-1})s(z_{T|T-1})  
\\
&\qquad - \frac{\Delta_i^2}{\eta^2} \sum_{z_T} 
\sqrt{\Pd^i(z_T)\Qd^i(z_T)}
\sqrt{F}(\tilde\rho(z_T), \tilde\psi(z_T))\lVert P_i\tilde\psi(z_{T|T-1})\rVert^2 \de(z_{T|T-1})s(z_{T|T-1})
\\
&= -E^1_T-E^2_T +
\sum_{z_T} 
\sqrt{\Pd^i(z_T)\Qd^i(z_T)}
\sqrt{F}(\tilde\rho(z_T), \tilde\psi(z_T))\de(z_{T|T-1})s(z_{T|T-1}) .
\end{split}
\end{align}
 Where the error terms $E_T^1$ and $E_T^2$ are defined by those equations, i.e.,
 \blue{
 \begin{align}
     E^1_T& =
      \sum_{z_T} 
\sqrt{\Pd^i(z_T)\Qd^i(z_T)}
\sqrt{F}(\tilde\rho(z_T), \tilde\psi(z_T))\de(z_T)h(z_T),
\\
E_2^T &= \frac{\Delta_i^2}{\eta^2} \sum_{z_T} 
\sqrt{\Pd^i(z_T)\Qd^i(z_T)}
\sqrt{F}(\tilde\rho(z_T), \tilde\psi(z_T))\lVert P_i\tilde\psi(z_{T|T-1})\rVert^2 \de(z_{T|T-1})s(z_{T|T-1}).
 \end{align}
 }
We continue to estimate the remaining term using \eqref{eq:fid_loss_summed_co} in Corollary~\ref{co:fid_loss_one_step}.
Here we use that either the factor $s(z_{T|T-1})=0$  vanishes 
and the inequality below is trivially true (both sides are zero)  or the bound
$\sqrt{F}(\tilde{\rho}(z_{T|T-1}),\tilde{\psi}(z_{T|T-1}))\geq 1/2$ holds
so that Corollary~\ref{co:fid_loss_one_step} can be applied
\begin{align}
\begin{split}
 &\sum_{z_T} 
\sqrt{\Pd^i(z_T)\Qd^i(z_T)}
\sqrt{F}(\tilde\rho(z_T), \tilde{\psi}_T(z_T))\de(z_{T|T-1})s(z_{T|T-1})
\\
&\geq 
\sum_{z_{T-1}} 
\sqrt{\Pd^i(z_{T-1})\Qd^i(z_{T-1})}\de(z_{T-1})s(z_{T-1})\Bigg(
\sqrt{F}(\tilde{\rho}(z_{T-1}), \tilde{\psi}(z_{T-1}))
\\
&\qquad - \frac{5\Delta_i}{\sqrt{2}\eta}  
\lVert P_i\tilde{\psi}(z_{T-1})\rVert \left( R(z_{T-1})-\max_{x_t} R((z_{T-1}, x_t))\right)^{\frac12}
-  \frac{4\Delta_i^2}{\eta^2}\lVert P_i \tilde{\psi}(z_{T-1})\rVert^2\Bigg)
 \\
 &\geq 
 \sum_{z_{T-1}} 
\sqrt{\Pd^i(z_{T-1})\Qd^i(z_{T-1})}
\sqrt{F}(\tilde{\rho}(z_{T-1}), \tilde{\psi}(z_{T-1}))\de(z_{T-1})s(z_{T-1})
- E_{T}^3 -E_{T}^4
\end{split}
\end{align}
where we again define the error terms $E_{T}^3$ and $E_{T}^4$ implicitly through these equations, i.e.,
\blue{
\begin{align}
    E^3_T&=\frac{5\Delta_i}{\sqrt{2}\eta} \sum_{z_{T-1}} 
\sqrt{\Pd^i(z_{T-1})\Qd^i(z_{T-1})}\de(z_{T-1})s(z_{T-1})
\lVert P_i\tilde{\psi}(z_{T-1})\rVert \left( R(z_{T-1})-\max_{x_t} R((z_{T-1}, x_t))\right)^{\frac12},
\\
E^4_T &= \frac{4\Delta_i^2}{\eta^2} \sum_{z_{T-1}} 
\sqrt{\Pd^i(z_{T-1})\Qd^i(z_{T-1})}\de(z_{T-1})s(z_{T-1})
\lVert P_i \tilde{\psi}(z_{T-1})\rVert^2\Bigg).
\end{align}
}
 Applying this inductively together with $\sqrt{F}(\tilde{\rho}_0^i,\tilde{\rho}_0^0)=1$ we get
 \begin{align}\label{eq:error_ineq}
\sqrt{F}(\tilde\rho^i_T,\tilde\rho_T^0)
\geq  \sum_{z_T} 
\sqrt{\Pd^i(z_T)\Qd^i(z_T)}
\sqrt{F}(\tilde\rho(z_T), \tilde{\psi}(z_T))\de(z_T)s(z_T)
\geq 1 - \sum_{t=1}^T
E_t^1+E_t^2 + E_t^3+E_t^4
 \end{align}

\paragraph{Bounding the error terms $E^2_t$ and $E^4_t$.}
\blue{
\begin{lemma}\label{le:bound_e2_e4}
The error terms $E^2_t$ and $E^4_t$ satisfy the bound
\begin{align}
\begin{split}\label{eq:boundE2_E4}
\sum_{t=1}^T E^2_t+E^4_t 
&\leq \frac{5\Delta_i}{\eta}
\left(\sum_{t=1}^T \tr(P_i \tilde{\rho}^0_t)
\right)^{\frac12}.
\end{split}
\end{align} 
\end{lemma}
}
\begin{proof}
We start to bound $E^2_t$. We find
 using $\sqrt{F}\leq 1$, $s(z_t)\leq 1$
 \begin{align}
 \begin{split}
 E^2_t&\leq 
 \frac{\Delta_i^2}{\eta^2} \sum_{z_{t-1}} 
\sqrt{\Pd^i(z_{t-1})\Qd^i(z_{t-1})}\lVert P_i\tilde\psi(z_{t-1})\rVert^2 \de(z_{t-1})\sum_{x_t}
\sqrt{\Pd^i(x_t|z_{t-1})\Qd^i(x_t|z_{t-1})}
\\
&\leq 
 \frac{\Delta_i^2}{\eta^2} \sum_{z_{t-1}} 
\sqrt{\Pd^i(z_{t-1})\Qd^i(z_{t-1})}\lVert P_i\tilde\psi(z_{t-1})\rVert^2 \de(z_{t-1})
\end{split}
 \end{align}
Combining this with the definition of $E^4_t$ 
and using again $s(z_{t-1})\leq1$ we obtain
 using Cauchy Schwarz 
\begin{align}
\begin{split}\label{eq:boundE2a}
\sum_{t=1}^T E^2_t+E^4_t &\leq \frac{5 \Delta_i^2}{\eta^2}
\sum_{t=1}^T \sum_{z_{t-1}} \sqrt{\Pd^i(z_{t-1})\Qd^i(z_{t-1})} \lVert P_i\tilde\psi(z_{t-1})\rVert^2 \de(z_{t-1}) 
\\
&\leq \frac{5\Delta_i}{\eta}
\left(\sum_{t=1}^T \sum_{z_{t-1}}\Qd^i(z_{t-1})\lVert P_i\tilde\psi(z_{t-1})\rVert^2 
\right)^{\frac12}
\left(\sum_{t=1}^T \sum_{z_{t-1}}\Pd^i(z_{t-1})
\frac{\Delta_i^2}{\eta^2}\lVert P_i\tilde\psi(z_{t-1})\rVert^2 
\de(z_{t-1})
\right)^{\frac12}.
\end{split}
\end{align} 
For the second factor we apply the definition \eqref{eq:rel_r}
combined with $\sum \P^i(x_t)=1$ to rewrite the expression, and we find
\begin{align}
\begin{split}\label{eq:boundE2b}
\sum_{t=1}^T E^2_t+E^4_t 
&\leq 
 \frac{5\Delta_i}{\eta}
\left(\sum_{t=1}^T \sum_{z_{t-1}}\Qd^i(z_{t-1}) \tr(P_i \ket{\tilde{\psi}(z_{t-1})}
\bra{\tilde{\psi}(z_{t-1})}P_i)
\right)^{\frac12}
\\
&
\hspace{10em}
\left(\sum_{t=1}^T \sum_{z_{t-1}}\Pd^i(z_{t-1})
\left(\de(z_{t-1}) -\sum_{x_t} \Pd^i(x_t)\de((z_{t-1},x_t))\right)
\right)^{\frac12}
\\
&\leq 
 \frac{5\Delta_i}{\eta}
\left(\sum_{t=1}^T \tr(P_i \tilde{\rho}^0_t)
\right)^{\frac12}
\left(\sum_{t=1}^T \sum_{z_{t-1}}\Pd^i(z_{t-1})\de(z_{t-1})
-\sum_{t=1}^T \sum_{z_{t}}\Pd^i(z_{t})\de(z_{t})
\right)^{\frac12}
\\
&\leq \frac{5\Delta_i}{\eta}
\left(\sum_{t=1}^T \tr(P_i \tilde{\rho}^0_t)
\right)^{\frac12}\left( \de(z_0)-\sum_{z_T} \Pd^i(z_{T})\de(z_{T})\right)^{\frac12}
\\
&\leq \frac{5\Delta_i}{\eta}
\left(\sum_{t=1}^T \tr(P_i \tilde{\rho}^0_t)
\right)^{\frac12}
\end{split}
\end{align} 
 where we used Lemma~\ref{le:decomp_rho_0} for the first factor and \eqref{eq:trac0tilde} and the telescopic sum together with $\de(z_0)=1$ and $0\leq \de(z_T)\leq1$ in the last steps for the second factor.
\end{proof}

 \paragraph{Bounding the error term $E^t_3$.}
 \blue{
 \begin{lemma}\label{le:e3}
     The error term $E^3_t$ is bounded by
\begin{align}
\begin{split}\label{eq:boundE3_le}
\sum_{t=1}^{T} E^3_t
&\leq \frac{5\Delta_i}{\sqrt{2}\eta}
\left(\sum_{t=0}^{T-1} \tr(P_i \tilde{\rho}^0_t)
\right)^{\frac12}.
\end{split}
\end{align} 
 \end{lemma}
 }
 We bound, using again Cauchy Schwarz and $s(z_t)d(z_t)\leq 1$,
\begin{align}
\begin{split}\label{eq:boundE3}
\sum_{t=1}^{T} E^3_t
&= \frac{5\Delta_i}{\sqrt{2}\eta}
\sum_{t=0}^{T-1}\sum_{z_{t}} 
\sqrt{\Pd^i(z_{t})\Qd^i(z_{t})}\de(z_{t})s(z_{t})  
\lVert P_i\tilde{\psi}(z_{t})\rVert \left( R(z_{t})-\max_{x_{t+1}} R((z_{t}, x_{t+1}))\right)^{\frac12}
\\
&\leq 
\frac{5\Delta_i}{\sqrt{2}\eta}
\left(\sum_{t=0}^{T-1}\sum_{z_{t}} 
\Qd^i(z_{t}) 
\lVert P_i\tilde{\psi}(z_{t})\rVert^2\right)^{\frac12}
\left(\sum_{t=0}^{T-1}\sum_{z_{t}} \Pd^i(z_t) \left( R(z_{t})-\max_{x_{t+1}} R((z_{t}, x_{t+1}))\right)\right)^{\frac12}
\\
&\leq
\frac{5\Delta_i}{\sqrt{2}\eta}
\left(\sum_{t=0}^{T-1} \tr(P_i \tilde{\rho}^0_t)
\right)^{\frac12}
\left(\sum_{t=0}^{T-1}\sum_{z_{t}} \Pd^i(z_t) R(z_{t})-
\sum_{t=0}^{T-1}\sum_{z_{t}, x_{t+1}} \Pd^i(z_t)\Pd^i(x_{t_1}) R((z_{t},x_{t_1})\right)^{\frac12}
\\
&\leq
\frac{5\Delta_i}{\sqrt{2}\eta}
\left(\sum_{t=0}^{T-1} \tr(P_i \tilde{\rho}^0_t)
\right)^{\frac12}
\left(\sum_{t=0}^{T-1}\sum_{z_{t}} \Pd^i(z_t) R(z_{t})-
\sum_{t=1}^{T}\sum_{z_{t}} \Pd^i(z_t) R(z_{t})\right)^{\frac12}
\\
&\leq \frac{5\Delta_i}{\sqrt{2}\eta}
\left(\sum_{t=0}^{T-1} \tr(P_i \tilde{\rho}^0_t)
\right)^{\frac12}.
\end{split}
\end{align} 
 Here we applied Lemma~\ref{le:decomp_rho_0} similar to the previous lemma, and we once more used the telescopic sum and the fact that $0\leq R(z_t)\leq 1$.

\paragraph{Bounding the error term $E^1_t$.}
It remains to bound the last remaining error term $E^1_t$ where we can prove the following bound.
\blue{
\begin{lemma}\label{le:e1}
    The error term $E^1_t$ can be bounded as follows
        \begin{align}
\sum_{t=1}^TE^1_t\leq  \sum_{t=1}^T
E_t^2 + E_t^3+E_t^4.
\end{align}
\end{lemma}
}
\begin{proof}
The key ingredient to bound $E^1_t$ is the observation
that by definition of $s(z_t)$ and $h(z_t)$ we have $\sqrt{F}(\tilde{\rho}(z_t), \tilde{\psi}(z_t))<1/2$ if $h(z_t)=1$ 
which implies 
\begin{align}
\sqrt{F}(\tilde{\rho}(z_t), \tilde{\psi}(z_t))<1 - \sqrt{F}(\tilde{\rho}(z_t), \tilde{\psi}(z_t)).
\end{align}
From here we conclude using the definition of $E_t^1$
(and $h(z_t)=0$ if $h(z_t)\neq 1$)
\begin{align}
\begin{split}
\sum_{t+1}^T E^1_t
&=\sum_{t=1}^T\sum_{z_t} 
\sqrt{\Pd^i(z_t)\Qd^i(z_t)}
\sqrt{F}(\tilde\rho(z_t), \tilde\psi(z_t))\de(z_t)h(z_t)
\\
&
\leq\sum_{t=1}^T\sum_{z_t} 
\sqrt{\Pd^i(z_t)\Qd^i(z_t)}
\left(1-\sqrt{F}(\tilde\rho(z_t), \tilde\psi(z_t))\right)\de(z_t)h(z_t)
\\
&\leq 
\sum_{t=1}^T\sum_{z_t} 
\sqrt{\Pd^i(z_t)\Qd^i(z_t)}h(z_t)
-\sum_{t=1}^T E_t^1.
\end{split}
\end{align}
We now find using \eqref{eq:error_ineq} and the last display that
\begin{align}
\begin{split}\label{eq:error_reversed}
\sum_{z_T} 
\sqrt{\Pd^i(z_T)\Qd^i(z_T)}s(z_T)
&\geq 
\sum_{z_T} 
\sqrt{\Pd^i(z_T)\Qd^i(z_T)}
\sqrt{F}(\tilde\rho(z_T), \tilde{\psi}(z_T))\de(z_T)s(z_T)
\\
&\geq 1 - \sum_{t=1}^T
\left(E_t^1+E_t^2 + E_t^3+E_t^4\right)
\\
&\geq 1 - \sum_{t=1}^T
\left(E_t^2 + E_t^3+E_t^4\right)+\sum_{t=1}^TE^1_t-
\sum_{t=1}^T\sum_{z_t} 
\sqrt{\Pd^i(z_t)\Qd^i(z_t)}h(z_t).
\end{split}
\end{align}
Now it is easy to see that 
\begin{align}
\sum_{t=1}^T\sum_{z_t} 
{\Pd^i(z_t)}h(z_t)
+
\sum_{z_T} 
{\Pd^i(z_T)}s(z_T)= 1
\end{align}
because this is the probability that the fidelity drops below $1/2$ under
the measure $\Pd^i$ plus the probability that it is at least $1/2$.
Clearly, a similar relation holds for $\Qd^i$.
Now we apply $\sqrt{\Pd^i(z_T)\Qd^i(z_T)}\leq (\Pd^i(z_T)+\Qd^i(z_T))/2$
and find
\begin{align}
\begin{split}
\sum_{z_T}& 
\sqrt{\Pd^i(z_T)\Qd^i(z_T)}s(z_T)+\sum_{t=1}^T\sum_{z_t} 
\sqrt{\Pd^i(z_t)\Qd^i(z_t)}h(z_t)
\\
&\leq 
\frac12\left(\sum_{z_T} 
\Pd^i(z_T)s(z_T)+\sum_{t=1}^T\sum_{z_t} 
\Pd^i(z_t)h(z_t)+\sum_{z_T} 
\Qd^i(z_T)s(z_T)+\sum_{t=1}^T\sum_{z_t} 
\Qd^i(z_t)h(z_t)\right)=
1.
\end{split}
\end{align}

Using this relation in \eqref{eq:error_reversed} we find
\begin{align}
1\geq 1 - \sum_{t=1}^T\left(
E_t^2 + E_t^3+E_t^4\right)+\sum_{t=1}^TE^1_t
\end{align}
from which 
\begin{align}\label{eq:E1_final}
\sum_{t=1}^TE^1_t\leq  \sum_{t=1}^T
\left(E_t^2 + E_t^3+E_t^4\right)
\end{align}
follows.
\end{proof}
\paragraph{Conclusion.}
To finish the proof of Proposition~\ref{prop:key} we only have to plug all the derived bounds in \eqref{eq:error_ineq}.
Indeed, applying Lemma~\ref{le:bound_e2_e4}, Lemma~\ref{le:e3},
and Lemma~\ref{le:e1} we get
\begin{align}
\begin{split}
\sqrt{F}(\rho^i_T,\rho_T^0)
&\geq 1 - \sum_{t=1}^T \left(E_t^1
+E_t^2+E_t^3+E_t^4\right)
\geq 1 - 2 \sum_{t=1}^T \left(E_t^2+E_t^3+E_t^4\right)
\\
&\geq 1 - 20\eta^{-1}\Delta_i\left(\sum_{t=1}^T \tr P_i \tilde{\rho}_t^0\right)^{\frac12}.
\end{split}
\end{align}
This ends the proof.
\end{proof}

\section{Proof of Theorem~\ref{th:lower_emp}}\label{app:lower_emp}
Here we prove the lower bound in Theorem~\ref{th:lower_emp}.
Let us first give a precise statement of the result.
We consider the same probability vectors $\pp_i$ as introduced at the beginning of 
Section~\ref{sec:overview}.
First, we note that the result does not hold for fixed oracles $O_i$ with reward
vector $\pp_i$ because the algorithm could exploit the specific structure of the oracles. 
There could be, e.g., a state $\omega_0$ such that
$O(\pp_i)\ket{j}\ket{\omega_0}\ket{0}=\ket{j}\ket{\omega_0}\ket{\delta_{ij}}$.
Then the problem reduces to the unstructured search problem when $\omega_0$ is known.
Thus, the result only holds when we assume that $O_i$ is a random oracle
with the fixed reward vector $\pp_i$, emulating the situation where we have no additional information about the oracles. 
We consider for a reward vector $r\in \{0,1\}^{|\mc{H}_A|\cdot |\mc{H}_P|}$ the oracle
$O^r$ acting as in \eqref{eq:def_oracle_general}, i.e., 
\begin{align}
O^r\ket{i}\ket{\omega}\ket{c}=\ket{i}\ket{\omega}\ket{c+r_i(\omega)}.
\end{align}
We consider a random reward distribution $r(i)$ where $r(i)$ is the uniform distribution over
all reward vectors with mean reward vector $\pp_i$, i.e., $|\mc{H}_P|^{-1}\sum_\omega r(i)_j(\omega)=(\pp_i)_j$.
 Then a more precise version of Theorem~\ref{th:lower_emp} reads as follows.
\begin{theorem}
Let $\delta<1/2$. 
Assume that $\pp^j$ are as before with $p_i\in [\eta,1-\eta]$ for
some $\eta>0$.
Any algorithm that identifies the best arm with probability at least $1-\delta$ 
given an oracle $O^{r(i)}$
where $r(i)$ is distributed as above 
requires at least 
\begin{align}
T\geq\frac{1}{2} \eta (1-2\sqrt{\delta(1-\delta)}) \left(\sum_{i=2}^n \Delta_i^{-2}\right)^{\frac12}
 \geq c(\delta,\eta)\sqrt{H(\pp^1)}
\end{align}
calls to the oracle.
This is still true even when 
 it is known that the  vector of mean rewards is
$\{\pp^0,\ldots,\pp^n\}$.
\end{theorem}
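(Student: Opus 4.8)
The plan is to run the argument of Theorem~\ref{th:faulty_grover} and Corollary~\ref{co:fid1} in a ``purified'' form, exploiting that once a reward realisation is fixed the ERM oracle is \emph{unitary}, so the only source of indistinguishability is the classical randomness of the realisation; the metrology-type speed-up ($\sqrt{H}$ instead of $H$) will come from a randomised coupling of these realisations. Without loss of generality the algorithm consists of unitaries $U_1,U_2,\dots$ interleaved with oracle calls followed by a POVM measurement, and the initial state is pure. Write $m=|\mc{H}_P|$ (large enough that $p_jm\in\mathbb{N}$ for all $j$) and fix $i\in\{2,\dots,n\}$; since $\pp^i$ and $\pp^0$ differ only in coordinate $i$, a reward realisation splits as $r=(R,c)$, with $R$ collecting the columns $r_j(\cdot)$, $j\ne i$ (same law in both cases) and $c=r_i(\cdot)\in\{0,1\}^m$ uniform among strings with $p_0m$ ones under $\pp^i$ and with $p_im$ ones under $\pp^0$. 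Running the algorithm against the fixed unitary oracle $O^{(R,c)}$ from the pure initial state produces a pure state; write $\psi_t(R,c)$ and $\tilde\psi_t(R,c)$ for the states after $t$ oracle calls and just before the $(t+1)$-st call, so that $\tilde\rho_t^i=\E_{R,c}\ket{\tilde\psi_t(R,c)}\bra{\tilde\psi_t(R,c)}$ with $c$ uniform over strings with $p_0m$ ones, and analogously for $\tilde\rho_t^0$.

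I would then couple the two reward laws as follows: sample $R$; sample $c'$ uniform among strings with $p_im$ ones; sample a uniformly random subset $D$ of size $(p_0-p_i)m=\Delta_im$ of the zero set of $c'$; and let $c$ be $c'$ with the coordinates in $D$ flipped to $1$ (a short counting identity shows $c$ is then uniform among strings with $p_0m$ ones, so this is a genuine coupling). One then builds joint purifications of $\rho_T^i$ and $\rho_T^0$ on a common ancilla, using the \emph{same} ancilla vector $\ket{R,c',D}$ but carrying $\ket{\psi_T(R,c)}$ respectively $\ket{\psi_T(R,c')}$ on the system, so that Uhlmann's theorem gives $\sqrt{F}(\rho_T^i,\rho_T^0)\geq\E_{R,c',D}\big[\mathrm{Re}\,\langle\psi_T(R,c),\psi_T(R,c')\rangle\big]$. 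It remains to lower bound the overlap of a coupled pair. The oracles $O^{(R,c)}$ and $O^{(R,c')}$ coincide off the range of $P_{i,D}\defeq\ket{i}\bra{i}\otimes\big(\textstyle\sum_{\omega\in D}\ket{\omega}\bra{\omega}\big)\otimes\mathrm{Id}$, and $\big(O^{(R,c)}\big)^\dagger O^{(R,c')}-\mathrm{Id}$ is supported there with operator norm $2$; since both runs apply the same unitaries $U_t$, a step-by-step Cauchy--Schwarz estimate in the spirit of Lemma~\ref{le:projection} yields
\begin{align}
1-\mathrm{Re}\,\langle\psi_T(R,c),\psi_T(R,c')\rangle\leq 2\sum_{t=1}^T\lVert P_{i,D}\tilde\psi_t(R,c)\rVert\,\lVert P_{i,D}\tilde\psi_t(R,c')\rVert.
\end{align}

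The crucial step -- the source of the quadratic gain over the classical bound -- is that averaging over the random subset $D$ contracts the relevant weight by a factor $\Delta_i$: conditionally on $c$ the set $D$ is a uniformly random $\Delta_im$-subset of the $p_0m$ ones of $c$, and conditionally on $c'$ a uniformly random $\Delta_im$-subset of the $(1-p_i)m$ zeros of $c'$, so for any fixed vector $\phi$ one has $\E_D\lVert P_{i,D}\phi\rVert^2\leq\eta^{-1}\Delta_i\lVert P_i\phi\rVert^2$ with $P_i=\ket{i}\bra{i}\otimes\mathrm{Id}$. Applying Cauchy--Schwarz over all the randomness and using $\E_{R,c}\lVert P_i\tilde\psi_t(R,c)\rVert^2=\tr(P_i\tilde\rho_t^i)$ (resp.\ $\tr(P_i\tilde\rho_t^0)$) gives
\begin{align}
\sqrt{F}(\rho_T^i,\rho_T^0)\geq 1-\frac{2\Delta_i}{\eta}\sum_{t=1}^T\sqrt{\tr(P_i\tilde\rho_t^i)\,\tr(P_i\tilde\rho_t^0)}\geq 1-\frac{2\Delta_i}{\eta}\sum_{t=1}^T\sqrt{\tr(P_i\tilde\rho_t^0)}.
\end{align}
From here the argument is standard: since the algorithm distinguishes the best arms of $\pp^i$ and $\pp^0$ with probability $1-\delta$, \eqref{eq:trace_success}--\eqref{eq:trace_fidelity} give $\sqrt{F}(\rho_T^i,\rho_T^0)\leq 2\sqrt{\delta(1-\delta)}$, hence $\tfrac{\eta}{2}(1-2\sqrt{\delta(1-\delta)})\,\Delta_i^{-1}\leq\sum_{t=1}^T\sqrt{\tr(P_i\tilde\rho_t^0)}$; squaring with Cauchy--Schwarz in $t$, summing over $i=2,\dots,n$, and using $\sum_i\tr(P_i\tilde\rho_t^0)=1$ yields $T^2\geq\big(\tfrac{\eta}{2}(1-2\sqrt{\delta(1-\delta)})\big)^2\sum_{i=2}^n\Delta_i^{-2}$, i.e.\ the claimed bound.

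I expect the main obstacle to be the construction and use of the randomised coupling: verifying that it genuinely couples the two reward laws and, more importantly, arranging it so that the ``bad'' subspace $P_{i,D}$ on which the oracles disagree averages down to a $\Delta_i$-fraction of $P_i$ -- this is precisely the metrology-type gain that turns the classical $H(\pp)$ into $\sqrt{H(\pp)}$ -- while keeping the nested expectations and Cauchy--Schwarz steps honest. In contrast to the one-time-oracle proof of Theorem~\ref{th:main}, no decay-factor device is required here, since unitarity of the fixed oracle means no inverse-fidelity factors enter and the per-step loss is already linear in $\Delta_i$, so the bound scales as $\sqrt{H(\pp^1)}$ without further refinement.
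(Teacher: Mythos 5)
Your proposal is correct and follows essentially the same route as the paper's proof in Appendix~\ref{app:lower_emp}: the same monotone coupling of the reward realisations for arm $i$ (your flipped set $D$ is exactly the support of the paper's projection $P^{r(i),r(0)}$), the same per-step overlap estimate on the subspace where the two oracles disagree (the paper packages this as Lemma~\ref{le:fidelity_simple} for mixed states, you run it on pure trajectories via Uhlmann), and the same averaging step $\E_D\lVert P_{i,D}\phi\rVert^2\leq \eta^{-1}\Delta_i\lVert P_i\phi\rVert^2$ that produces the linear-in-$\Delta_i$ factor and hence the $\sqrt{H(\pp^1)}$ scaling, with the identical final Cauchy--Schwarz bookkeeping and constant.
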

\begin{proof}

The proof is close to the proof of Theorem~\ref{th:faulty_grover}.
We assume we are given any algorithm  acting by
$ (\mc{E}_O \otimes \id)\circ \mc{E}_{U_T}
\circ \ldots \circ (\mc{E}_O\otimes \id)\circ \mc{E}_{U_1}$
where $U_t$ are arbitrary unitary maps where $O$ denotes the given oracle.
Assume that the initial state is a fixed density matrix $\rho$.
We denote the  state using the oracle $O^{r(i)}$ before the 
$t+1$-th invocation of the oracle by $\rho_t^{r(i)}$, i.e., $\rho_0^{r(i)}=\mc{E}_{U_1}(\rho)$.
We introduce the notation $\mathbb{E}_i$ when we average over the reward distribution
$r(i)$ and we write 
\begin{align}\label{eq:mean_density}
\rho_t^{i}=\mathbb{E}_i \rho_t^{r(i)}.
\end{align}
By assumption we can identify $i$ given $\rho_T^i$ with probability at least $1-\delta$.
This implies, as in \eqref{eq:fidelity_classic_difference}, \eqref{eq:fidelity_delta} for $i>1$
\begin{align}\label{eq:upper_bound_fidelity}
2\sqrt{\delta(1-\delta)}\geq 
\sqrt{F}(\rho_T^{i},\rho_T^{0}).
\end{align}
One main ingredient of the proof is to define a suitable coupling 
of the random variables $r(0)$ and $r(i)$. Note that its reward vectors are
$\pp_0$ and $\pp_i$ such that $(\pp_i)_j=p_j=(\pp_0)_j$ for $j\neq i$
and $(\pp_i)_i=p_0$, $(\pp_0)_i=p_i$. We now consider a coupling where 
$r(0)_j=r(i)_j$ for $i\neq j$ and $r(i)_i\geq r(0)_i$ and 
the distribution of $r(i)_i\in \{0,1\}^{|\mc{H}_P|}$ is uniform over all 
rewards under this constraint for a fixed $r(0)$.
Note that this entails that the distribution of $r(0)$ for a fixed $r(i)$
is also uniform over all rewards with the right mean reward satisfying $r(0)\leq r(i)$.
 It is straightforward to see that such a 
coupling exists (a possible explicit construction is to draw i.i.d.\ random numbers $u_i(\omega)$ and set $r(0)_i(\omega)=1$ iff $u_i(\omega)$ is in the $p_i$-th quantile
of the numbers $u_i(\cdot)$ and similarly for $r(i)$). We denote this coupling by $p_i(r(i), r(0))$. 
Observe that for this coupling satisfies, for all $\omega$,
\begin{align}\label{eq:cond1}
\mathbb{P}(r(i)_j(\omega)=1| r(0)_j(\omega)=0)=\mathbb{P}(r(i)_j=1|r(0)_j=0)
= \begin{cases}
0\quad &\text{for $j\neq i$}
\\
\frac{p_0-p_i}{1-p_i}=\frac{\Delta_i}{1-p_i}\quad &\text{for $j=i$}.
\end{cases}
\end{align}
Similarly, we get for all $\omega$
\begin{align}\label{eq:cond2}
\mathbb{P}(r(0)_j(\omega)=0| r(i)_j(\omega)=1)=\mathbb{P}(r(0)_j=0|r(i)_j=1)
= \begin{cases}
0\quad &\text{for $j\neq i$}
\\
\frac{p_0-p_i}{p_0}=\frac{\Delta_i}{p_0}\quad &\text{for $j=i$}.
\end{cases}
\end{align}
We can bound using the concavity of the fidelity
\begin{align}
\sqrt{F}(\rho_t^{i},\rho_t^{0})
\geq 
\sum_{r(i),r(0)} p_i(r(i),r(0)) \sqrt{F}( \rho_t^{r(i)}, \rho_t^{r(0)}).
\end{align}
Now we lower bound the fidelity terms on the right-hand side based on our construction of the coupling.
By construction we have $r(i)\geq r(0)$ which implies that 
$r(i)-r(0)\in \{0,1\}^{|\mc{H}_A|\cdot |\mc{H}_P|}$. Note that
\begin{align}\label{eq:rel_oracles_Or}
\begin{split}
O^{r(i)}O^{r(0)}\ket{j, \omega, c}
&= \ket{j, \omega, c+(r(i))_j(\omega)+(r(0))_j(\omega)}
= \ket{j, \omega, c+(r(i))_j(\omega)-(r(0))_j(\omega)}
\\
&=
O^{r(i)-r(0)}\ket{j, \omega, c}.
\end{split}
\end{align}
Let us introduce the self adjoint projection $P^{r(i),r(0)}$ given by
\begin{align}
P^{r(i),r(0)}\ket{j, \omega, c}=\bs{1}_{(r(i))_j(\omega)\neq (r(0))_j(\omega)} \ket{j, \omega, c}.
\end{align}
Note that then 
\begin{align}\label{eq:proj_inv}
(\id - P^{r(i),r(0)})O^{r(i)-r(0)}\ket{j, \omega, c} 
=(\id - P^{r(i),r(0)})\ket{j, \omega, c} .
\end{align}
We get, using the invariance of the fidelity under unitary transformations,
$(O^{r(i)})^2=\id$,
and \eqref{eq:rel_oracles_Or}
\begin{align}
\begin{split}
\sqrt{F}( \rho_{t+1}^{r(i)},\rho_{t+1}^{r(0)})
&=
\sqrt{F}(\mc{E}_{O^{r(i)}}(  \rho_t^{r(i)}), \mc{E}_{O^{r(0)}}(\rho_t^{r(0)}))
=
\sqrt{F}(  \rho_t^{r(i)}, \mc{E}_{O^{r(i)}}\circ \mc{E}_{O^{r(0)}}(\rho_t^{r(0)}))
\\
&=
\sqrt{F}( \rho_t^{r(i)}, \mc{E}_{O^{r(i)-r(0)}}(\rho_t^{r(0)})).
\end{split}
\end{align}
Next we apply Lemma~\ref{le:fidelity_simple} to $O^{r(i)-r(0)}$ and
$P^{r(i),r(0)}$. The assumptions of the lemma are satisfied because 
\eqref{eq:proj_inv} and all $P^{r(i),r(0)}$ and $O^r$ commute.
Lemma~\ref{le:fidelity_simple} together with the last display imply
\begin{align}
\sqrt{F}( \rho_{t+1}^{r(i)},\rho_{t+1}^{r(0)})
\geq \sqrt{F}  (\rho_t^{r(i)},\rho_t^{r(0)})
-2\sqrt{\tr(P^{r(i),r(0)}\rho_t^{r(i)}) \tr(P^{r(i),r(0)}\rho_t^{r(0)})}
\end{align}
We obtain
\begin{align}
\sqrt{F}( \rho_{T}^{i},\rho_T^{0})
\geq 1 - 2\sum_{r(i),r(0)}\sum_t p_i(r(i),r(0)) \sqrt{\tr(P^{r(i),r(0)}\rho_t^{r(i)}) \tr(P^{r(i),r(0)}\rho_t^{r(0)})}.
\end{align}
Using \eqref{eq:upper_bound_fidelity} followed by Cauchy-Schwarz we get
\begin{align}
\begin{split}\label{eq:upper2}
&\frac12-\sqrt{\delta(1-\delta)}\leq \sum_{r(i),r(0)}\sum_t p_i(r(i),r(0)) \sqrt{\tr(P^{r(i),r(0)}\rho_t^{r(i)}) \tr(P^{r(i),r(0)}\rho_t^{r(0)})}
\\
&\leq
\left(\sum_{t,r(i),r(0)} p_i(r(i),r(0)) \tr(P^{r(i),r(0)}\rho_t^{r(i)})\right)^{\frac12} \left(\sum_{t,r(i),r(0)} p_i(r(i),r(0)) \tr(P^{r(i),r(0)}\rho_t^{r(0)})\right)^{\frac12}.
\end{split}
\end{align}
We observe that by construction of the coupling and \eqref{eq:cond1} we have 
\begin{align}
\begin{split}
\sum_{r(i)} p_i(r(i), r(0)) P^{r(i),r(0)}\ket{j, \omega, c}
&=
\sum_{r(i)}
p_i(r(i), r(0)) 
\bs{1}_{(r(i))_j(\omega)\neq (r(0))_j(\omega)} \ket{j, \omega, c}
\\
&= p(r(0)) \mathbb{P}((r(i))_j(\omega)=1| r(0))\ket{j, \omega, c}
\\
&=
p(r(0)) \bs{1}_{i=j}\bs{1}_{r(0)_i(\omega)=0} \frac{\Delta_i}{1-p_i}\ket{j, \omega, c}.
\end{split}
\end{align}
And similarly, using \eqref{eq:cond2}
\begin{align}
\begin{split}
\sum_{r(0)} p_i(r(i), r(0)) P^{r(i),r(0)}\ket{j, \omega, c}
&=
p(r(i)) \bs{1}_{i=j}\bs{1}_{r(i)_i(\omega)=1} \frac{\Delta_i}{p_0}\ket{j, \omega, c}.
\end{split}
\end{align}
As before, we define $P_i$ to be the projection on arm $i$, i.e., 
$P_i\ket{j, \omega,c}=\delta_{ij}\ket{j, \omega,c}$.
We conclude that 
\begin{align}
\sum_{r(i)} p_i(r(i),r(0)) \tr(P^{r(i),r(0)}\rho_t^{r(0)})
&\leq  p(r(0))\frac{\Delta_i}{\eta} \tr( P_i\rho_t^{r(0)}),
\\
\sum_{r(0)} p_i(r(i),r(0)) \tr(P^{r(i),r(0)}\rho_t^{r(i)})
&\leq  p(r(i))\frac{\Delta_i}{\eta} \tr( P_i\rho_t^{r(i)}).
\end{align}
Combining this with \eqref{eq:upper2} and \eqref{eq:mean_density} we obtain
\begin{align}
\begin{split}
\frac12-\sqrt{\delta(1-\delta)}&\leq 
\left(\frac{\Delta_i}{\eta}\sum_{t,r(i)} p(r(i))  \tr( P_i\rho_t^{r(i)})\right)^{\frac12} \left(\frac{\Delta_i}{\eta}\sum_{t,r(0)} p(r(0)) \tr( P_i\rho_t^{r(0)}))\right)^{\frac12}
\\
&= \frac{\Delta_i}{\eta}
\left(\sum_{t}  \tr( P_i\rho_t^{i})\right)^{\frac12} \left(\sum_{t} \tr( P_i\rho_t^{0})\right)^{\frac12}
\leq \frac{\Delta_i\sqrt{T}}{\eta}
\left(\sum_{t} \tr( P_i\rho_t^{0})\right)^{\frac12}.
\end{split}
\end{align}
%We divide by $\Delta_i^2$ sum over $i>1$ and apply Cauchy-Schwarz to get
%\begin{align}
%\begin{split}
%\left(\frac12-\sqrt{\delta(1-\delta)}\right)\sum_{i>1} \Delta_i^{-2}
%&\leq \frac{\sqrt{T}}{\eta}\sum_{i>1}\Delta_i^{-1}
%\left(\sum_{t} \tr( P_i\rho_t^{0}))\right)^{\frac12}
%\\
%&\leq 
%\frac{\sqrt{T}}{\eta}
%\left(\sum_{i>1} \Delta_i^{-2}\right)^{\frac12}
%\left(\sum_{i>1} \sum_t \tr( P_i\rho_t^{0})\right)^{\frac12}
%\\
%&=\frac{\sqrt{T}}{\eta}
%\left(\sum_{i>1} \Delta_i^{-2}\right)^{\frac12}\sqrt{T}.
%\end{split}
%\end{align}
We square this relation divide by $\Delta_i^2$ and sum over $i>1$ and get
\begin{align}
\begin{split}
\left(\frac12-\sqrt{\delta(1-\delta)}\right)^2\sum_{i>1}\Delta_i^{-2}&\leq \frac{T}{\eta^2}
\sum_i\sum_{t} \tr( P_i\rho_t^{0})=\frac{T}{\eta^2}\sum_t \tr(\rho_t^0)=\frac{T^2}{\eta^2}.
\end{split}
\end{align}
This implies 
\begin{align}
T\geq \frac{1}{2}\eta (1-2\sqrt{\delta(1-\delta)})\sqrt{\sum_{i>1} \Delta_i^{-2}}.
\end{align}
\end{proof}

\section{Proof of Theorem~\ref{th:speedup}}\label{app:speedup}
\begin{proof}%[Proof of Theorem~\ref{th:speedup}]
We assume that we work on a Hilbert space $\mc{H}=\mc{H}_I\otimes \mc{H}_S \otimes \mc{H}_A$
where $\mc{H}_I$ is the input space, $\mc{H}_S$ is a single qubit state space and $\mc{H}_A$ consists of $T$ ancilla qubits where $T=\mc{O}(\sqrt{N}/p)$ will be defined below in \eqref{eq:defT}. 
We assume that the initial state of $\mc{H}_I$ is $\ket{s}=\sqrt{N}^{-1} \sum \ket{i}$ and all remaining qubits are in state $\ket{0}$.
The algorithm consists of applying in turn the three operators $\mc{F}$, a controlled Grover-diffusion
\begin{align}
U_\omega = \id\otimes \ket{0}\bra{0}\otimes \id + (2\ket{s}\bra{s}-\id)\otimes \ket{1}\bra{1}\otimes \id, 
\end{align}
and a swap operation $S_t$ that swaps the state qubit with the $t$-th ancilla qubit.
We can rewrite this concisely as
\begin{align}
(\mc{E}_{S_T}\circ \mc{E}_{U_\omega} \circ \mc{F}) \circ \ldots \circ (\mc{E}_{S_1}\circ \mc{E}_{U_\omega} \circ \mc{F})(\rho).
\end{align}
We denote by $\tilde{O}_i$ the standard oracle on $\mc{H}_I$ and denote by 
 $G= (2\ket{s}\bra{s}-\id)\tilde{O}_i$ the map from Grover's algorithm acting on  $\mc{H}_I$.
 We claim that the final state of the algorithm is 
 \begin{align}\label{eq:state_rho_T}
 \rho_T = \sum_{x\in \{0,1\}^T} p^{|x|_1}(1-p)^{T-|x|_1} \ket{G^{|x|_1}s, 0, x, 0^{A-T}}\bra{G^{|x|_1}s, 0, x, 0^{A-T}}.
 \end{align} 
 The proof of this is by induction, indeed, we note
 \begin{align}
 \begin{split}
 &(\mc{E}_{S_T}\circ \mc{E}_{U_\omega} \circ \mc{E}) (  \ket{G^{|x|_1}s,0,x,0^{A-T}}\bra{G^{|x|_1}s,0,x,0^{A-T}}) 
 \\
 &=
 (\mc{E}_{S_T}\circ \mc{E}_{U_\omega} ) \left( p \ket{\tilde{O}G^{|x|_1}s, 1, x, 0^{A-T}}\bra{\tilde{O}G^{|x|_1}s, 1, x, 0^{A-T}}\right.
 \\
 &\hspace{3cm}\left.+(1-p) \ket{G^{|x|_1}s, 0, x, 0^{A-T}}\bra{G^{|x|_1}s, 0, x, 0^{A-T}}\right)
 \\
 &=
 \mc{E}_{S_T} \left( p \ket{GG^{|x|_1}s, 1, x, 0^{A-T}}\bra{GG^{|x|_1}s, 1, x, 0^{A-T}}\right.
 \\
 &\hspace{3cm}
 \left.+(1-p) \ket{G^{|x|_1}s, 0, x, 0^{A-T}}\bra{G^{|x|_1}s, 0, x, 0^{A-T}}\right)
 \\
  &=
  p \ket{GG^{|x|_1}s, 0, x,1, 0^{A-T-1}}\bra{GG^{|x|_1}s, 0, x, 1,0^{A-T-1}}
 \\
 &\hspace{3cm}
 +(1-p) \ket{G^{|x|_1}s, 0, x, 0^{A-T}}\bra{G^{|x|_1}s, 0, x, 0^{A-T}}
 \end{split}
 \end{align}
 which implies \eqref{eq:state_rho_T}.
 The reduced density matrix on $\mc{H}_I$ is 
 \begin{align}
 \rho^I_T = 
 \sum_{k=0}^T p^k(1-p)^{T-k} \ket{G^k s}\bra{G^k s}. 
 \end{align}
The classical analysis of the Grover algorithm proves
\begin{align}
G^k \ket{s} = \cos\left(\frac{2k+1}{2}\theta\right)\ket{s'} + \sin\left(\frac{2k+1}{2}\theta\right)\ket{i} 
\end{align}
where $s'=\sqrt{N-1}^{-\tfrac12} \sum_{j\neq i}\ket{i}$ and 
\begin{align}
\theta=2\arccos(\sqrt{(N-1)/N})=2\arcsin(\sqrt{N}^{-1}).
\end{align}
We remark that $\theta/2 \approx \sqrt{N}^{-1}$ as $N\to \infty$. 
Note that for 
\begin{align}
k\in I= (\pi/(4\theta)-1/2, 3\pi/(4\theta)-1/2)
\end{align}
we have $\sin\left(\frac{2k+1}{2}\theta\right)^2 \geq 1/2$.
Let 
\begin{align}\label{eq:defT}
T=\lfloor  \pi / (2\theta p) \rfloor.
\end{align}
 It remains to be shown that with probability at least $1/2$ a $\mathrm{Bin}(T,p)$ distributed variable is
contained in  the interval $I$.

Using that the variance of  $X\sim \mathrm{Bin}(T, p)$ is $Tp(1-p)$ we can bound
\begin{align}
\mathbb{P}(|X-pT|>pT/8)\leq \frac{64\mathbb{E}(|X-pT|^2}{p^2T^2}
\leq \frac{64}{pT}\leq \frac{256\theta}{\pi }\leq \frac12
\end{align}
for $N$ sufficiently large.
Moreover, $|X-pT|<pT/8$ implies 
\begin{align}
X\in \left(\frac{3pT}{8}, \frac{5pT}{8}\right)\subset
\left(\frac{3\pi}{8\theta}-1 , \frac{5\pi}{8\theta}\right)
\subset 
\left( \frac{\pi}{4\theta}-1/2, \frac{3\pi}{4\theta}-1/2\right)
\end{align}
 for $N$ sufficiently large. This ends the proof.

\end{proof}
%\section{Proofs of Lemma~\ref{le:fidelity} and \ref{le:coupling_const}}

\section{Analysis of reusable oracles}\label{app:reusable}
Here we sketch a proof of Theorem~\ref{th:reusable}.
The proof essentially relies on the algorithm given in \cite{multi_armed_quantum}. 
The only building block that needs to be changed is the gapped amplitude estimation 
(Corollary~2 in \cite{multi_armed_quantum}).
Let us explain this algorithm along with the replacement based on oracles as
in \eqref{eq:oracle_col}.
Gapped amplitude estimation assumes we have access to
an oracle $O_p$ and its adjoint acting via
\begin{align}
O_p\ket{0}=\sqrt{p}\ket{1}+\sqrt{1-p}\ket{0}=\ket{\mathrm{coin}_p}.
\end{align}
Then the following result holds.
\begin{lemma}[Corollary~2 in \cite{multi_armed_quantum}]
For $\eps>0$, $l\in [0,1]$ and $\delta>0$ there is a unitary procedure with $\bigO(\varepsilon^{-1}\ln(\delta))$ queries to $O_p$ that prepares the state
\begin{align}
\ket{\mathrm{coin}_p}(\alpha_0\ket{0}\ket{\psi_1}+\alpha_1\ket{1}\ket{\psi_2}
\end{align}
with $\alpha_0,\alpha_1\in [0,1]$ and  $\alpha_1\leq \delta$ if $p\leq l-2\varepsilon$
and $\alpha_0\leq \delta$ if $p\geq l-\varepsilon$.
\end{lemma}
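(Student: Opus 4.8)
The plan is to derive this as the ``gapped'' refinement of standard quantum amplitude estimation. The oracle $O_p$ prepares $\ket{\mathrm{coin}_p}$, whose amplitude on $\ket{1}$ is $\sqrt p$; writing $p=\sin^2\theta_p$ with $\theta_p=\arcsin\sqrt p$, the task is to decide on which side of the threshold $l$ the parameter $p$ lies, flag the verdict coherently in an output qubit, and keep the whole procedure unitary. First I would observe that $O_p$ and $O_p^\dagger$ implement the reflection $S_{\mathrm{coin}}=I-2\ket{\mathrm{coin}_p}\bra{\mathrm{coin}_p}=O_p(I-2\ket{0}\bra{0})O_p^\dagger$, while $S_1=I-2\ket{1}\bra{1}$ is free of charge; their product is the Grover operator $Q=-S_{\mathrm{coin}}S_1$, whose nontrivial eigenphases are $\pm 2\theta_p$.

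Next I would run textbook phase estimation on $Q$ with an $\bigO(\ln(1/\varepsilon))$-qubit control register and $M=\Theta(\varepsilon^{-1})$ controlled applications of powers of $Q$, i.e.\ $\bigO(\varepsilon^{-1})$ calls to $O_p$ in total. The Brassard--H{\o}yer--Mosca--Tapp analysis yields an estimate $\tilde p=\sin^2\tilde\theta$ with $|\tilde p-p|\le 2\pi\sqrt{p(1-p)}/M+\pi^2/M^2\le\varepsilon/4$ with probability at least $8/\pi^2>3/4$; the key point is that $\sqrt{p(1-p)}\le 1/2$, so this additive error is $\bigO(1/M)$ uniformly in $p$, with no degradation near the threshold. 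Comparing $\tilde p$ to $l-\tfrac{3}{2}\varepsilon$ and writing the resulting bit reversibly into an output qubit then produces the correct ``side of $l$'' verdict with probability $>3/4$ whenever $p\notin(l-2\varepsilon,l-\varepsilon)$.

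To upgrade this constant confidence to $1-\delta$ I would invoke the standard powering (median) lemma for amplitude estimation: run $k=\bigO(\ln(1/\delta))$ independent copies of the above on fresh ancilla registers, obtain the verdict bits $b_1,\dots,b_k$ coherently, and compute their majority into the final qubit; by a Chernoff bound the majority is wrong with probability at most $\delta^2$ in the non-gap region. Since the QFT, the controlled powers of $Q$, and the majority computation are all unitary, the composite is a unitary on the system together with its ancillas; the accumulated phase-estimation registers are simply retained as the garbage states $\ket{\psi_1},\ket{\psi_2}$, and a final query $O_p$ on a clean register supplies the leading tensor factor $\ket{\mathrm{coin}_p}$. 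The total query count is $\bigO(\varepsilon^{-1}\ln(1/\delta))$; and since the two branches of the output qubit are orthogonal, $|\alpha_0|^2+|\alpha_1|^2=1$ with $|\alpha_1|^2$ equal to the probability that the vote is $1$, which gives $\alpha_1\le\delta$ for $p\le l-2\varepsilon$ and $\alpha_0\le\delta$ for $p\ge l-\varepsilon$.

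The main obstacle is bookkeeping rather than conceptual: one must pin down the constants in the BHMT error bound and in the Chernoff step so that the gap of width $\varepsilon$ between $l-2\varepsilon$ and $l-\varepsilon$ is genuinely resolved by $\bigO(\varepsilon^{-1})$ queries uniformly in $l$, and one must verify that running the $k$ copies in parallel on disjoint fresh registers (rather than sequentially with uncomputation) keeps the output qubit's branch amplitudes literally equal to the vote probabilities, so that no hidden garbage-overlap terms contaminate the bounds on $\alpha_0$ and $\alpha_1$.
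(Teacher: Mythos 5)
This lemma is not proved in the paper at all: it is imported verbatim as Corollary~2 of the cited quantum multi-armed-bandit paper, and it appears in Appendix~\ref{app:reusable} only so that it can be \emph{replaced} by the classical estimator of Corollary~\ref{co:algo}. So there is no in-paper argument to compare against; your proposal has to be judged as a reconstruction of the cited result, and as such it is the standard and correct one. Building the Grover iterate $Q=-S_{\mathrm{coin}}S_1$ from $O_p$ and $O_p^\dagger$, running phase estimation with $M=\Theta(\varepsilon^{-1})$ controlled powers to get $|\tilde p-p|\le\varepsilon/4$ with probability $\ge 8/\pi^2$, comparing to the midpoint $l-\tfrac32\varepsilon$, and then taking a coherent majority over $k=\bigO(\ln(1/\delta))$ independent parallel copies is exactly how gapped amplitude estimation is obtained from BHMT. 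You also correctly handle the one step where such proofs usually slip: the conclusion is a bound on the \emph{amplitude} $\alpha_1$, so the majority vote must fail with probability at most $\delta^2$ (not $\delta$), which your Chernoff step delivers; and because the $k$ copies act on disjoint registers starting from a product state, $|\alpha_1|^2$ really is the classical probability that the majority of independent verdicts is wrong, with no garbage-overlap cross terms. Two cosmetic points: the $\ln(\delta)$ in the statement should be read as $\ln(1/\delta)$, and each application of $Q$ costs two oracle calls ($O_p$ and $O_p^\dagger$), which only affects constants. Your reconstruction is sound; the only thing it cannot be compared to is a proof the paper never gives.
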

We replace this by classical estimation of the mean. 
Using tail bounds of random variables, we obtain the following simple variant of Hoeffding's inequality.
\begin{lemma}\label{le:hoeffding}
Let $S_k$ be the sum of $k$ independent random variables with distribution $\mathrm{Ber}(p)$.
For  $\eps>0$ the bounds
\begin{align}
\P(S_k>k(p+\eps))\leq e^{-2\eps^2k}, \quad  \P(S_k<k(p-\eps))\leq e^{-2\eps^2k}
\end{align}
hold.
\end{lemma}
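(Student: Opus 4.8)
The plan is to use the standard Chernoff--Cram\'er exponential moment method. Fix $\lambda>0$ and write $S_k=\sum_{j=1}^k X_j$ with the $X_j$ i.i.d.\ $\mathrm{Ber}(p)$. By Markov's inequality applied to $e^{\lambda S_k}$, together with independence,
\begin{align}
\P(S_k>k(p+\eps))\leq e^{-\lambda k(p+\eps)}\,\E[e^{\lambda S_k}]
= e^{-\lambda k(p+\eps)}\left(\E[e^{\lambda X_1}]\right)^k,
\end{align}
so the whole problem reduces to bounding the moment generating function of a single centred Bernoulli variable.

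The key ingredient I would record is Hoeffding's lemma: for any random variable $Y$ with $\E Y=0$ and $Y\in[a,b]$ almost surely one has $\E[e^{\lambda Y}]\leq e^{\lambda^2(b-a)^2/8}$. This is proved by setting $\psi(\lambda)=\log\E[e^{\lambda Y}]$, checking $\psi(0)=\psi'(0)=0$, and noting that $\psi''(\lambda)$ is the variance of the exponentially tilted law, which, being supported in $[a,b]$, has variance at most $(b-a)^2/4$; a second-order Taylor expansion then gives the claim. (Alternatively one could simply cite it.) Applying this to $Y_j=X_j-p\in[-p,\,1-p]$, which has range $b-a=1$, yields $\E[e^{\lambda X_1}]=e^{\lambda p}\,\E[e^{\lambda Y_1}]\leq e^{\lambda p}e^{\lambda^2/8}$.

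Substituting back gives $\P(S_k>k(p+\eps))\leq \exp\!\big(-\lambda k\eps+k\lambda^2/8\big)$ for every $\lambda>0$, and I would then optimise: the exponent $-\lambda\eps+\lambda^2/8$ is minimised at $\lambda=4\eps$, with value $-2\eps^2$, so $\P(S_k>k(p+\eps))\leq e^{-2\eps^2 k}$. The lower tail is immediate by symmetry: applying the upper-tail bound to the i.i.d.\ $\mathrm{Ber}(1-p)$ variables $1-X_j$, whose partial sum is $k-S_k$, gives $\P(S_k<k(p-\eps))=\P\big(k-S_k>k((1-p)+\eps)\big)\leq e^{-2\eps^2 k}$. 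There is no genuine obstacle here — this is textbook Hoeffding — and the only step worth stating with any care is the moment generating function estimate above.
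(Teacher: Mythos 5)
Your proof is correct and is simply the standard derivation of the result that the paper invokes without argument: the paper's proof consists of the single line ``This is just Hoeffding's inequality,'' whereas you supply the usual Chernoff bound via Hoeffding's lemma on the moment generating function, optimise at $\lambda=4\eps$, and obtain the lower tail by applying the upper tail to $1-X_j$. All steps check out, so there is nothing to add beyond noting that you proved what the paper merely cited.
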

\begin{proof}
This is just Hoeffding's inequality.
\end{proof}
We then have the following corollary.
\begin{corollary}\label{co:algo}
Given access to oracles as in \eqref{eq:oracle_col} we can construct for any $\eps,\delta>0$ and $l\in [0,1]$ an algorithm $\mc{A}$ that maps with probability 
at least $1-\delta$ for all $1\leq i\leq n$
\begin{align}
\mc{A}\ket{i}\ket{0}=\ket{i}\ket{c}
\end{align}
where $c=1$ if $p_i<l-2\eps$  and $c=0$ if $p_i>l-\varepsilon$
and $\mc{A}$ requires $\varepsilon^{-2}$ oracle calls.
\end{corollary}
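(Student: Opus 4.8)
The plan is to realise $\mc{A}$ as a coherent version of the classical empirical-mean test, using that in the reusable model each sampling oracle $O_{X_t}$ in \eqref{eq:oracle_col} may be invoked more than once — exactly what is needed to uncompute ancillas. Fix $k=\lceil 8\varepsilon^{-2}\ln(n/\delta)\rceil=\bigOt(\varepsilon^{-2})$, adjoin a counter register of $\lceil\log_2(k+1)\rceil$ qubits and a one-qubit output register, all initialised to $\ket{0}$, and keep the oracle's write register $\mc{H}_R$ as a scratch bit. The block ``add the $t$-th sample to the counter'' is: apply $O_{X_t}$ to send $\ket{i}\ket{0}_{\mc{H}_R}\mapsto\ket{i}\ket{X_t^i}_{\mc{H}_R}$, apply a controlled increment of the counter by the $\mc{H}_R$ bit, and apply $O_{X_t}$ once more; because $O_{X_t}^2=\mathrm{Id}$ this costs two queries, restores $\mc{H}_R$ to $\ket{0}$, and adds $X_t^i$ to the counter. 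Running this for $t=1,\dots,k$ leaves the counter holding $S_k^i=\sum_{t\le k}X_t^i$; then write $c_i=\mathbf 1[\,S_k^i<k(l-\tfrac32\varepsilon)\,]$ into the output qubit by an $X$-independent reversible comparison, and finally uncompute the counter by rerunning the $k$ blocks with decrements. Each $O_{X_t}$ is used $O(1)$ times, so $\mc{A}$ makes $\bigOt(\varepsilon^{-2})$ queries (the suppressed factor linear in $\ln(n/\delta)$, matching Theorem~\ref{th:reusable}), and for a fixed realisation of $X_1,\dots,X_k$ it is a bona fide unitary returning precisely $\ket{i}\ket{c_i}$ on each basis vector $\ket{i}\ket{0}$.

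For the error analysis, observe that all the randomness sits in the draw of $X_1,\dots,X_k$, and conditionally on that draw $\mc{A}$ is deterministic; hence it is enough to bound $\P(\exists\, i:\ c_i\text{ is wrong})$. For fixed $i$, $S_k^i\sim\mathrm{Bin}(k,p_i)$. If $p_i<l-2\varepsilon$ then $k(l-\tfrac32\varepsilon)\ge k(p_i+\tfrac\varepsilon2)$, so by Lemma~\ref{le:hoeffding} (with margin $\varepsilon/4$, after passing to the strict inequality)
\[
\P(c_i\neq 1)=\P\big(S_k^i\ge k(l-\tfrac32\varepsilon)\big)\le\P\big(S_k^i\ge k(p_i+\tfrac\varepsilon2)\big)\le e^{-\varepsilon^2 k/8}\le\delta/n ,
\]
and symmetrically $\P(c_i\neq 0)\le\delta/n$ when $p_i>l-\varepsilon$. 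A union bound over the $n$ arms then shows that with probability at least $1-\delta$ every comparison is correct; on that event $\mc{A}\ket{i}\ket{0}=\ket{i}\ket{c}$ with the prescribed $c$ for all $i$, hence by linearity of $\mc{A}$ also for arbitrary superpositions over the arm register, which is the claim.

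The point I expect to require the most care is the insistence that $\mc{A}$ be a \emph{clean} unitary — output $\ket{i}\ket{c}$ with no residual ancilla garbage entangled with $\ket{i}$ — since $\mc{A}$ must serve as a drop-in replacement for gapped amplitude estimation inside the algorithm of \cite{multi_armed_quantum}. This is what forces the explicit uncomputation of the scratch bit and of the counter, and it is the only place where the reusability of \eqref{eq:oracle_col} is used; the same construction fails for the one-time channel oracle \eqref{eq:oracle_channel}, consistently with Theorem~\ref{th:main}. Everything else — reversibility and $O(1)$ query cost of the count-up/count-down blocks, and the observation that the $\ln(n/\delta)$ hidden inside $\bigOt$ is precisely the price of having the estimate simultaneously good for all $n$ arms — is routine.
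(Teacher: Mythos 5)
Your proposal is correct and follows essentially the same route as the paper's proof: accumulate the rewards $\sum_t X_t^i$ into a counter via repeated invocations of the reusable oracles, compare against the threshold $k(l-\tfrac32\eps)$ to set the flag, uncompute the work registers, and conclude with Hoeffding (Lemma~\ref{le:hoeffding}) plus a union bound over the $n$ arms. Your treatment is somewhat more explicit about the per-sample scratch-bit uncomputation and the requirement that $\mc{A}$ be a clean unitary, but the construction and the error analysis coincide with the paper's.
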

\begin{proof}
Set $k=\lceil 2\eps^{-2}\ln(n/\delta)\rceil$. Then we consider the sequence
\begin{align}
\ket{i}\ket{0}\ket{0}
\to \ket{i}\ket{\sum_{j=0}
^k X_i^t} \ket{0}\to 
\ket{i}\ket{\sum_{j=0}
^k X_i^t} \ket{\bs{1}_{\sum_{j=0}
^k X_i^t< k(l-3\eps/2)}}
\to \ket{i}\ket{0} \ket{\bs{1}_{\sum_{j=0}
^k X_i^t< k(l-3\eps/2)}}.
\end{align}
Where we in the first step sum the rewards using the oracles 
$O_{X^t}$, then set the flag register conditional on the sum and then uncompute the work register. This requires $2k$ oracle calls.
Using Lemma~\ref{le:hoeffding} we conclude that for $p_i>l-\eps$
\begin{align}
\P\left(\sum_{j=0}
^k X_i^t< l-3\eps/2\right)\leq e^{-2k(\eps/2)^2}\leq e^{-\ln(n/\delta)}=\frac{\delta}{n}
\end{align}
and a similar statement holds for $p_i<l-2\eps$.
The union bound implies the statement.
\end{proof}
Now we consider Theorem~\ref{th:reusable}. Giving a full proof would require a large amount of notation that is not worth the effort. Thus, we give a very brief sketch of the argument and leave the details to the reader. Their proof relies on variable time algorithms
\cite{var_time} which we also do not introduce here. For readers not familiar with them, the following proof shall merely serve as a heuristic.
\begin{proof}[Sketch of the proof of Theorem~\ref{th:reusable}]
We apply the algorithm constructed in \cite{multi_armed_quantum} but replace the gapped amplitude estimation by the algorithm constructed in Corollary~\ref{co:algo}.
The main strategy used in their proof is to construct a variable time algorithm  based on the gapped amplitude estimate that flags all arms whose reward is smaller than a given threshold. This allows to construct algorithms to count the number of flagged arms 
and rotate on the subspace of flagged arms using variable time amplitude amplification
\cite{var_time}. Those sub-routines can be used to first estimate 
$p_1$ and $p_2$ and then identify the best arm.

Their variable time algorithm  based on the gapped amplitude estimate
has query complexity on arm $i$ is at most $\Delta_i^{-1}\log(1/a)$ 
with $a$ polynomial in $\Delta_1 n$
and the $l^2$ averaged run-time
thus amounts to 
\begin{align}
t_{\mathrm{av}}^2
\leq C \frac{1}{n}\sum_{i=2}^n \Delta_i^{-2}\ln^2(1/a).
\end{align}
When relying on the algorithm from Corollary~\ref{co:algo}
we obtain the query complexity $\Delta_i^{-2}\log(n/\delta')$ for arm $i$
where $\delta'$ denotes the bound on the failure probability for the algorithm and the $l^2$ averaged run-time then amounts to 
\begin{align}
t_{\mathrm{av}}^2
\leq C \frac{1}{n}\sum_{i=2}^n \Delta_i^{-4}\ln^2(n\/\delta').
\end{align}
Then the variable time amplitude amplification gives an algorithm with success probability more than $1/2$ and query complexity
$t_{\mathrm{av}}/\sqrt{p_{\mathrm{succ}}}\ln(t_{\mathrm{max}})$ where $p_{\mathrm{succ}}$ denotes the success
probability and $t_{\mathrm{max}}$ the maximal complexity of the initial variable algorithm (there is another term $t_{\mathrm{max}}\ln(t_{\mathrm{max}})$ which is smaller in our case).

Since the initial success probability is $n^{-1}$ we obtain the query complexity bound
\begin{align}
T\leq {t_{\mathrm{av}}}{\sqrt{n}}\ln(t_m)
\leq C \left(\sum_{i=2}^n \Delta_i^{-4}\right)^{\frac12}\ln(n/\delta')\ln(t_{\mathrm{max}}).
\end{align}
We need to apply
a total of $\mc{O}(\ln(\Delta_1^{-1})$ such amplified variable time algorithms 
(essentially we perform binary search 
to find $l_L<l_R$ such that $p_2<l_L-\Delta_2/4$, $l_R+\Delta_2/4<p_1$, and $l_L+\Delta_2/4<l_R$).
To ensure that all constructed oracles as in Corollary~\ref{co:algo}
succeed with probability more than $1-\delta$ it is thus sufficient to pick $\delta'=c\delta/\ln(\Delta_1^{-1})$.
% to ensure 
%that all applications of Corollary~\ref{co:algo} succeed simultaneously with probability at least $1-\delta$.
This together with  $t_{\mathrm{max}}=\bigOt(\Delta_1^{-2})$  ends the proof sketch.
\end{proof}
%Let us now discuss the difficulties related to lower bounds in this setting. 
%Suppose we apply the oracle $O_{X^t}$ for the first time to
%the state $\rho$ and then obtain 
%\begin{align}
%\mc{E}_{O^{X^t}}(\rho)=\rho_0+\rho_1.
%\end{align}

\end{document}